\renewcommand{\vec}[1]{\boldsymbol{#1}}
\begin{document}

\title{Tractable and Intractable Entailment Problems 
in Separation Logic with Inductively Defined Predicates}

\author{Mnacho Echenim\thanks{This work has been partially funded by the 
the French National Research Agency ({\tt ANR-21-CE48-0011}).} \and Nicolas Peltier\thanksas{1}\\ 
Univ. Grenoble Alpes, CNRS, Grenoble INP, LIG, 38000 Grenoble, France}

\runninghead{M. Echenim, N. Peltier}{Tractable and Intractable Entailment Problems 
in Separation Logic}

\newcommand{\set}[1]{\left\{#1\right\}}
\newcommand{\sorts}{\mathfrak{S}}
\newcommand{\asort}{{\mathtt s}}
\newcommand{\vars}{\mathcal{V}}
\newcommand{\csts}{\mathcal{C}}
\newcommand{\terms}{\mathcal{T}}
\newcommand{\addr}{\mathtt{loc}}
\newcommand{\true}{\top}
\newcommand{\card}[1]{\mathit{card}(#1)}

\newcommand{\len}[1]{\mathit{len}(#1)}
\newcommand{\bigO}[1]{O(#1)}

\newcommand{\NP}{{\sc Np}\xspace}
\newcommand{\PSPACE}{{\sc PSpace}\xspace}
\newcommand{\EXPTIME}{{\sc ExpTime}\xspace}
\newcommand{\PTIME}{{\sc PTime}\xspace}
\newcommand{\SAT}{{\sc Sat}}

\newcommand{\bigast}{\scaleobj{1.5}{*}}

%
\newcommand{\mn}[1]{\mycomment[me]{#1}}

\newcommand{\subseteqm}{\subseteq_m}

\newcommand{\included}{\sqsubseteq}

\newcommand{\isdef}{\overset{\text{\tiny \it def}}{=}}

\maketitle

\newcommand{\progtype}{\mathtt{P}}
\newcommand{\ls}{{\mathtt{ls}}}
\newcommand{\lsB}{p}
\newcommand{\lsC}{q}
\newcommand{\als}{{\mathtt{als}}}
\newcommand{\tree}{{\mathtt{tree}}}
\newcommand{\treeB}{{\mathtt{tll}}}
\newcommand{\treeC}{{\mathtt{tptr}}}
\newcommand{\dll}{{\mathtt{dll}}}
\newcommand{\dllseg}{{\mathtt{dllseg}}}
\newcommand{\hdeterministic}{$\addr$-deterministic\xspace}
\newcommand{\hDeterministic}{$\addr$-Deterministic\xspace}
\newcommand{\prule}{$\progtype$-rule\xspace}
\newcommand{\prules}{$\progtype$-rules\xspace}
\newcommand{\pRulesname}{$\progtype$-Rules\xspace}
\newcommand{\prulesname}{$\progtype$-rules\xspace}
\newcommand{\prulename}{$\progtype$-rule\xspace}

\vspace*{-2ex}

\begin{abstract}
We establish various complexity results for the entailment problem between
formulas in Separation Logic (SL) with user-defined predicates denoting recursive data structures.
The considered fragments are characterized by syntactic conditions on the inductive rules 
that define the semantics of the predicates. We focus on so-called {\em {\prule}s}, which are 
 similar to  (but simpler than)  
 the so-called {\em bounded treewidth fragment} of SL studied by Iosif et al.~in 2013. 
In particular, for a specific fragment where predicates are defined by so-called \emph{\hdeterministic} inductive rules, we devise a sound and complete 
cyclic proof procedure running  in polynomial time. Several complexity lower bounds are provided, showing that any relaxing of the 
provided conditions makes the problem intractable.
\end{abstract}

\begin{keywords}
Separation logic, Inductive reasoning, Decision procedures, Cyclic proofs
\end{keywords}

\paragraph{ACM Computing Classification System:} --Theory of computation, Logic, Automated reasoning; --Theory of computation, Logic, Separation logic

\section{Introduction}

Separation Logic \cite{IshtiaqOHearn01,Reynolds02} (SL) is widely used in verification to reason on 
programs manipulating pointer-based data structures. 
It forms the basis of several automated static program analyzers such as Smallfoot \cite{DBLP:conf/fmco/BerdineCO05},
Infer \cite{DBLP:conf/nfm/CalcagnoD11} (Facebook) or SLAyer \cite{DBLP:conf/cav/BerdineCI11} (Microsoft Research) and several correctness proofs were carried out by embedding SL in interactive 
theorem provers such as Coq \cite{COQ_REF}, see for instance \cite{DBLP:journals/jfp/JungKJBBD18}.
SL uses a special connective $*$, called {\em separating conjunction}, modeling heap compositions and allowing for concise and natural specifications. 
More precisely, atoms in SL are expressions of the forms 
$x \mapsto (y_1,\dots,y_n)$, where $x,y_1,\dots,y_n$ are variables denoting locations (i.e., memory addresses), asserting that location $x$ is allocated and refers to the tuple (record) $(y_1,\dots,y_n)$.
The special connective $\phi * \psi$ asserts that formulas $\phi$ and $\psi$ hold on disjoint parts of the memory.
Recursive data structures may then be described  by considering predicates associated with inductive rules, such as:
\[
\begin{array}{llllll}
\ls(x,y) & \Leftarrow & x \mapsto y & 
\tree(x) & \Leftarrow & x \mapsto () \\
\ls(x,y) & \Leftarrow & x \mapsto z * \ls(z,y) &
\tree(x) & \Leftarrow & x \mapsto (y,z) * \tree(y) * \tree(z)
\end{array}
\]
where $\ls(x,y)$ denotes a nonempty list segment and $\tree(x)$ denotes a tree.
For the sake of genericity, such rules are not built-in but may be provided by the user. 
Due to the expressive power of such inductive definitions, the input language is usually restricted in this context to so-called {\em symbolic heaps}. These are existentially quantified conjunctions and separating conjunctions of atoms, including inductively defined predicates and equational atoms but dismissing for instance universal quantifications, negations and separating implications.
Many problems in verification require to solve entailment problems between such SL formulas, for instance 
when these formulas denote pre- or post-conditions of programs.
Unfortunately, the entailment problem between symbolic heaps is undecidable in general \cite{DBLP:journals/tocl/MathejaPZ23}, but it is decidable if the considered inductive rules satisfy the so-called {\em PCE conditions} (standing for {\bf P}rogress, {\bf C}onnectivity and {\bf E}stablishment)  \cite{IosifRogalewiczSimacek13}.
However even for the PCE fragment the complexity of the entailment problem is still very high; more precisely, this problem is $2$-\EXPTIME-complete \cite{DBLP:conf/lpar/EchenimIP20,EIP21a,DBLP:conf/lpar/KatelaanZ20}.
Less expressive fragments have thus been considered, for which more efficient algorithms were developed.
In \cite{DBLP:conf/atva/IosifRV14} a strict subclass of PCE entailments is identified with an \EXPTIME complexity based on a reduction to the language inclusion problem for tree automata \cite{tata2007}. 
In \cite{DBLP:journals/fmsd/EneaLSV17}, an algorithm is developed 
 to handle various kinds of (possibly nested) singly-linked lists based on a reduction to the membership problem for tree automata. The complexity of the procedure is dominated by the boolean satisfiability and unsatisfiability tests, that are NP and co-NP complete, respectively. A polynomial proof procedure has been devised for the specific case of singly-linked lists \cite{cook-haase-ouaknine-parkinson-worell11}. 
In \cite{DBLP:conf/concur/ChenSW17}, the tractability result is extended to more expressive fragments, with formulas defined on some unique nonlinear compositional  inductive predicate with distinguished  source, destination, and static parameters. 
The compositional properties satisfied by the considered predicate (as originally introduced in \cite{spen}) ensure that the entailment
problem can be solved efficiently.
Recently \cite{DBLP:conf/fossacs/LeL23} introduced  a polynomial-time cyclic proof system to 
solve entailment problem efficiently, under some conditions on the inductive rules.



In the present paper, we study the complexity of the entailment problem for a specific fragment that is similar to the 
PCE fragment, but simpler.
The fragment inherits most of the conditions given in \cite{IosifRogalewiczSimacek13} and 
admits an additional restriction that is meant to ensure that entailment problems can be solved in a more efficient way\footnote{At the cost, of course, of a loss of expressivity.}:  every predicate is bound to allocate {\em exactly} one of its parameters (forbidding for instance predicates denoting doubly-linked list segments from $x$ to $y$, as both $x$ and $y$ would be allocated). 
This means that the rules do not allow for multiple pointers {\em into} a data structure (whereas multiple pointers {\em out} of the structure are allowed).
We first show that this additional restriction is actually not sufficient to ensure tractability. More precisely, we establish several lower-bound complexity results for the entailment problem
under various additional hypotheses.
Second, we define a new class of inductive definitions for which the entailment problem can be solved in polynomial time, based mainly on the two following additional restrictions:
(i) the arity of the predicates is bounded; 
and (ii) the rules defining the same predicate do not overlap, in a sense that will be formally defined below. Both conditions are rather natural restrictions in the context of programming. 
Indeed, the number of parameters is usually small in this context. Also, data structures are typically defined using a finite set of free constructors, which yields inductive definitions that are trivially non-overlapping.

If Condition (i) is not satisfied, then the complexity is simply exponential.
In contrast with other polynomial-time algorithms, the formulas we consider  may contain several inductive predicates, and these predicates are possibly non-compositional 
(in the sense of \cite{DBLP:journals/fmsd/EneaLSV17}).  
The algorithm for testing entailment is defined as a sequent-like cyclic proof procedure, with standard unfolding and decomposition rules, together with a specific strategy ensuring efficiency and additional syntactic criteria to detect and dismiss non-provable sequents.
Our approach is close to that of \cite{DBLP:conf/fossacs/LeL23}, in the sense that the two procedures
use cyclic proof procedures with non-disjunctive consequents. 
However, the conditions on the rules
are completely different: our definition allows for multiple inductive rules with mutually recursive definitions, yielding richer recursive data structures. 
On the other hand, the SHLIDe rules in \cite{DBLP:conf/fossacs/LeL23} support ordering and equality relations on non-addressable values, whereas the predicate we consider are purely spatial. 
Moreover, the base cases of the rules in \cite{DBLP:conf/fossacs/LeL23} correspond to empty heaps, which are forbidden in our approach.

\newcommand{\simple}{simple\xspace}
\newcommand{\mypred}{\mathtt{P}}
\newcommand{\acstL}{\mathtt{list}}
\newcommand{\acstT}{\mathtt{tree}}
\newcommand{\acstP}{\mathtt{loop}}
\newcommand{\swedge}{\curlywedge}
\newcommand{\iseq}{\approx}


To provide some intuition on what can and cannot be expressed in the  fragment we consider, we provide some examples (formal definitions will be given later); consider the predicate $\mypred$ defined by the following rules, which encode a combination of lists and trees, possibly looping on an initial element $y$, and ending with an empty tuple:
\[
\begin{tabular}{lll}
$\mypred(x,y)$ & $\Leftarrow$ & $x \mapsto (\acstL,u) * \mypred(u,y)$ \\
$\mypred(x,y)$ & $\Leftarrow$ & $x \mapsto (\acstT,u_1,u_2) * \mypred(u_1,y) * \mypred(u_2,y)$ \\
$\mypred(x,y)$ & $\Leftarrow$ & $x \mapsto (\acstP,y)$ \\
$\mypred(x,y)$ & $\Leftarrow$ & $x \mapsto ()$ \\
\end{tabular}
\]
All variables are implicitly universally quantified at the root level in every rule\footnote{Alternatively, the variables $u,u_1,u_2$, which do not occur in the left-hand side of the rules, may be viewed as being quantified existentially on the right-hand side of $\Leftarrow$.}. The constants $\acstL,\acstT$ and $\acstP$ may be viewed as constructors for the data structure. This predicate does not fall in the scope of the fragment considered in \cite{DBLP:conf/fossacs/LeL23} since it involves a definition with several inductive rules, but it falls in the scope of the fragment considered in the present paper. Our restrictions require that the definition must be deterministic, in the sense that there can be no overlap between distinct rules. This is the case here, as the tuples $(\acstL,u)$, $(\acstT,u_1,u_2)$ and $(\acstP,y)$ are pairwise distinct (not unifiable), but replacing for instance the constant $\acstP$ by $\acstL$ in the third rule would not be possible, as the resulting rule would overlap with the first one (both rules could allocate the same heap cell).
As explained above, a key limitation of our fragment (compared to that of \cite{IosifRogalewiczSimacek13}) is that it does not allow predicates allocating several parameters, such as the following predicate $\dllseg(x,y,z,u)$ defining a 
doubly-linked list segment from $x$ to $z$ (each cell points to a pair containing the previous and next element and $y$ and $u$ denote the previous and next element in the list, respectively):
\[
\begin{tabular}{lll}
$\dllseg(x,y,z,u)$ & $\Leftarrow$ & $(x \mapsto (y,x') * \dllseg(x',x,z,u)) \wedge x \not \iseq z$ \\
$\dllseg(x,y,z,u)$ & $\Leftarrow$ & $x \mapsto (y,u) \wedge x \iseq z$  \end{tabular}
\]
Other definitions of $\dllseg$ are possible, but none would fit in with our restrictions: in every case, both $x$ (the beginning of the list) and $z$ (its end) must be eventually allocated, which is not permitted in the fragment we consider.
On the other hand, the following predicate, defining a doubly-linked list, ending with $()$, can be defined ($y$ denotes the previous element in the list):
\[
\begin{tabular}{lll}
$\dll(x,y)$ & $\Leftarrow$ & $x \mapsto (y,z) * \dll(z,x)$ \\
$\dll(x,y)$ & $\Leftarrow$ & $x \mapsto ()$  \\
\end{tabular}
\]
The rest of the paper is organised as follows.
In Section~\ref{sect:prel}, the syntax and semantics of the logic are defined.
The definitions are mostly standard, although we consider a multisorted framework, with a special sort $\addr$ denoting memory locations and additional 
sorts for data or constructors.
We then introduce a class of  inductive definitions called {\em \prules}.
In Section \ref{sect:limits}, various lower bounds on the complexity of the entailment problem for SL formulas with \prulesname are established which allow one to motivate additional restrictions on the inductive rules.
These lower bounds show that all the restrictions are necessary to ensure that the entailment problem is tractable.
This leads to the definition of the notion of a {\em \hdeterministic} set of rules, that is a subset of \prulesname for which entailment can be decided in polynomial time. 
The proof procedure is defined in Section~\ref{sect:proof}. 
For the sake of readability and generality we first define generic 
inference rules and establish their correctness, before introducing a specific strategy to further restrict the application of the rules that is both complete
and efficient.
Section~\ref{sect:prop} contains all soundness, completeness and complexity results and Section~\ref{sect:conc} concludes the paper.

\section{Definitions}

\label{sect:prel}
\subsection{Syntax}

\label{sect:syntax}

\newcommand{\size}[1]{|#1|}

We use a multisorted framework, which is essentially useful to distinguish locations from data.
Let 
$\sorts$ be a set of {\em sorts}, containing a special sort $\addr$, denoting memory locations. Let $\vars_{\asort \in \sorts}$ be a family of countably infinite
disjoint sets of {\em variables of sort $\asort$}, with $\vars \isdef \bigcup_{\asort\in \sorts} \vars_\asort$. 
Let $\csts_{\asort \in \sorts}$ be a family of disjoint sets of {\em constant symbols of sort $\asort$}, also disjoint from $\vars$, with $\csts \isdef \bigcup_{\asort\in \sorts} \csts_\asort$.
The set of {\em terms of sort $\asort$} is $\terms_\asort \isdef \vars_\asort \cup \csts_\asort$, and we let $\terms \isdef \bigcup_{\asort\in \sorts}\terms_\asort$.
Constants are especially useful in our framework to denote constructors in data structures. To simplify technicalities, we assume that there is no constant of sort $\addr$, i.e., 
$\csts_{\addr} = \emptyset$.

\newcommand{\pure}{{\cal F}_P}
\newcommand{\spatial}{{\cal F}_S}
\newcommand{\sheaps}{{\cal F}_H}

\newcommand{\apform}{\xi}

An equation (resp.\ a disequation) is an expression of the form 
$t \iseq s$ (resp.\ $t \not \iseq s$) where $t,s \in \terms_\asort$ for some $\asort\in \sorts$.
The set of {\em pure formulas} $\pure$ is the set of formulas of the form
$e_1 \wedge \dots \wedge e_n$, where every expression $e_i$ is either an equation or a disequation. Such formulas are considered modulo contraction, e.g., a pure formula  $\apform \wedge \apform$ is considered identical to $\apform$, 
and also modulo associativity and commutativity of conjunction.
We denote by $\bot$ (false) any  formula of the form $t\not \iseq t$. 
If $n = 0$, then $\bigwedge_{i=1}^n e_i$ may be denoted by $\true$ 
(true).
If $(t_1,\dots,t_n)$ and $(s_1,\dots,s_m)$ are vectors of terms, then 
$(t_1,\dots,t_n) \iseq (s_1,\dots,s_m)$ denotes the formula $\bot$ if either $n \not = m$ or $n = m$ and there exists $i \in \{ 1,\dots,n \}$ such that $s_i$ and $t_i$ are of different sorts;
and denotes $\bigwedge_{i=1}^n t_i \iseq s_i$ otherwise.

\newcommand{\preds}{\mathcal{P}}
\newcommand{\emp}{\mathit{emp}}

\newcommand{\asform}{\phi}
\newcommand{\asformB}{\psi}
\newcommand{\asformC}{\eta}

\newcommand{\apformB}{\zeta}
\newcommand{\apformC}{\chi}

\newcommand{\asheap}{\lambda}
\newcommand{\asheapB}{\gamma}

\newcommand{\aexp}{\beta}
\newcommand{\anatom}{\alpha}
\newcommand{\anatomB}{\beta}
\newcommand{\rootof}[1]{\mathit{root}(#1)}
 \newcommand{\dom}[1]{\mathit{dom}(#1)}
\newcommand{\codom}[1]{\mathit{codom}(#1)}

Let $\preds$ be a set of {\em predicate symbols}. Each symbol in $\preds$ is associated with a unique {\em profile} of the form $(\asort_1,\dots,\asort_n)$ with $n \geq 1$, $\asort_1 = \addr$ and $\asort_i \in \sorts$, for all $i \in \{ 2,\dots,n\}$. 
We write $p: \asort_1,\dots,\asort_n$ to denote a symbol with  profile 
$(\asort_1,\dots,\asort_n)$ and 
we write $p: \asort_1,\dots,\asort_n \in \preds$ to state that $p$ is a predicate symbol of profile $\asort_1,\dots,\asort_n$ in $\preds$.
A {\em spatial atom} $\anatom$ is either  a {\em points-to atom} 
$x \mapsto (t_1,\dots,t_n)$ with $x \in \vars_{\addr}$ and $t_1,\dots,t_n \in \terms$, 
or a {\em predicate atom} of the form $p(x,t_1,\dots,t_n)$, where 
$p$ is a predicate of profile $\addr,\asort_1,\dots,\asort_n$ in $\preds$, the term $x$ is a variable in $\vars_{\addr}$ 
and $t_i \in \terms_{\asort_i}$  for all $i \in \{ 1,\dots, n\}$.
In both cases, the variable $x$ is called the {\em root} of $\anatom$ and is denoted by $\rootof{\anatom}$.

The set of {\em spatial formulas} $\spatial$ is the set of formulas of the form
$\aexp_1 * \dots * \aexp_n$, where every expression $\aexp_i$ is a spatial atom.
If $n = 0$ then $\aexp_1 * \dots * \aexp_n$ is denoted by $\emp$.
The number $n$ of occurrences of spatial atoms in a spatial formula $\asform = \aexp_1 * \dots * \aexp_n$ is denoted by 
$\len{\asform}$.
We write $\asform \included \asformB$ if 
$\asformB$ is of the form $\asform * \asform'$, modulo associativity and commutativity of $*$. 
The set of (non quantified) {\em symbolic heaps} $\sheaps$ is the set of expressions of the form 
$\asform \swedge \apform$, where $\asform \in \spatial$ and $\apform \in \pure$.  
Note that for clarity we use $\swedge$ to denote conjunctions between spatial and pure formulas and $\wedge$ to denote conjunctions occurring within pure formulas.
If $\apform = \true$, then $\asform \swedge \apform$ may be written $\asform$ (i.e., any 
spatial formula may be viewed as a symbolic heap). 
For any formula $\asheap$, $\size{\asheap}$ denotes the size of $\asheap$ (which is defined inductively as usual).
Note that $\top$ is not a symbolic heap (but $\emp \swedge \top$ is a symbolic heap). 

\newcommand{\repl}[3]{#1{\{ #2 \leftarrow #3\}}}
\newcommand{\replall}[4]{#1\{ #2 \leftarrow #3 \mid #4 \}}
\newcommand{\id}{\mathit{id}}
We denote by $\vars(\aexp)$ (resp.\ $\vars_{\asort}(\aexp)$) the set of variables (resp.\ of variables of sort $\asort$) occurring in a variable or formula
$\aexp$. 
A {\em substitution} is a sort-preserving total mapping from $\vars$ to $\terms$.
 We denote by $\dom{\sigma}$ the set of variables such that $\sigma(x) \not = x$, and by 
$\codom{x}$ the set $\{ \sigma(x) \mid x \in \dom{\sigma} \}$.
The substitution $\sigma$ such that $\sigma(x_i) = y_i$ for all $i = 1,\dots,n$ and $\dom{\sigma} \subseteq \{ x_1,\dots,x_n \}$ is denoted by $\replall{}{x_i}{y_i}{i = 1,\dots,n}$.
For any expression $\aexp$ and substitution $\sigma$, we denote by $\aexp\sigma$ the expression 
obtained from $\aexp$ by replacing every variable $x$ by $\sigma(x)$.
A {\em unifier} of two expressions or tuples of expressions $\aexp$ and $\aexp'$ is a substitution $\sigma$ such that 
$\aexp\sigma = \aexp'\sigma$.

\newcommand{\unfold}{\Leftarrow_{\asid}}
\newcommand{\asid}{\mathfrak{R}}
\newcommand{\unfoldall}{\rightsquigarrow_{\asid}}

An {\em inductive rule} is an expression of the form 
$p(x_1,\dots,x_n) \Leftarrow \asheap$, where 
$p: \asort_1,\dots,\asort_n \in \preds$, $x_1,\dots,x_n$ are
pairwise distinct variables of sorts $\asort_1,\dots,\asort_n$ respectively and $\asheap \in \sheaps$.
The set of variables in $\vars(\asheap) \setminus \{ x_1,\dots,x_n \}$
are the {\em existential variables} of the rule.
Let $\asid$ be a set of inductive rules. 
We write $p(t_1,\dots,t_n) \unfold \asheap$ iff
$\asid$ contains (up to a renaming,
and modulo AC) a rule of the form $p(y_1,\dots,y_n) \Leftarrow \asheapB$
and $\asheap = \replall{\asheapB}{y_i}{t_i}{i = 1,\dots,n}$.
We assume by renaming that $\asheapB$ contains no variable in $\set{t_1,\dots,t_n}$.
We write $p(t_1,\dots,t_n) \unfoldall E$ if $E$ is the set of symbolic heaps $\asheap$ such that 
$p(t_1,\dots,t_n) \unfold \asheap$. Note that if $\asid$ is finite then $E$ is finite  up to a renaming of variables not occurring in $\{ t_1,\dots,t_n\}$.
Note also that the considered logic does not allow for negations (hence entailment is not reducible to satisfiability) or separating implications, as this would make satisfiability undecidable (see for instance \cite{DBLP:journals/tocl/MathejaPZ23}).



The symbol $\subseteqm$ denotes the inclusion relation between multisets. 
With a slight abuse of notations, we will sometimes identify sequences with sets when the order and number of repetitions is not important, for instance we may write $\vec{x} \subseteq \vec{y}$ to state that every element of $\vec{x}$ occurs in $\vec{y}$.

In the present paper, we shall consider  entailment problems between symbolic heaps.

\subsection{Semantics}

\newcommand{\astore}{\mathfrak{s}}
\newcommand{\aheap}{\mathfrak{h}}
\newcommand{\aheapB}{\widehat{\aheap}}
\newcommand{\univ}{\mathfrak{U}}
 \newcommand{\loc}{{\ell}}
 
 \newcommand{\locs}[1]{\mathit{ref}(#1)}

\newcommand{\union}{\uplus}
\newcommand{\connect}[1]{\rightarrow_{#1}}

\newcommand{\valueof}[1]{\dot{#1}}

We assume for technical convenience that formulas are interpreted over a fixed universe and that constants are interpreted as pairwise distinct elements.
Let $\univ_{\asort\in \sorts}$ be pairwise disjoint 
countably infinite sets and let $\univ \isdef \bigcup_{\asort \in \sorts} \univ_\asort$.
We assume that an injective function is given, mapping every constant $c \in \csts_\asort$ to 
an element of $\univ_{\asort}$, denoted by $\valueof{c}$.

A {\em heap} is a partial finite function 
from $\univ_\addr$ to 
$\univ^*$, 
where $\univ^*$ denotes as usual the set of finite sequences of elements of $\univ$.
An element $\loc \in \univ_{\addr}$ is {\em allocated} 
in a heap $\aheap$ 
if  $\loc \in \dom{\aheap}$.
Two heaps $\aheap$ and $\aheap'$ are {\em disjoint} if $\dom{\aheap} \cap \dom{\aheap'} = \emptyset$, in which case
$\aheap \union \aheap'$ denotes their disjoint union.  We write $\aheap \subseteq \aheap'$ if there is a heap $\aheap''$ such that $\aheap' = \aheap \union \aheap''$.
 For every heap $\aheap$, we denote by $\locs{\aheap}$ the set of elements $\ell \in \univ_{\addr}$ such that there exists $\ell_0,\dots,\ell_n$ with $\ell_0 \in \dom{\aheap}$, $\aheap(\ell_0) = (\ell_1,\dots,\ell_n)$ and $\ell = \ell_i$ for some $i = 0,\dots,n$.
We write $\ell \connect{\aheap} \ell'$ if $(\ell,\ell') \in \univ_{\addr}^2$,  $\ell \in \dom{\aheap}$, $\aheap(\ell) = (\ell_1,\dots,\ell_n)$ 
and $\ell' = \ell_i$, for some $i = 1,\dots,n$.

\newcommand{\extension}[2]{associate of $#1$ w.r.t.\ $#2$}
\newcommand{\namedextension}[3]{associate $#1$ of $#2$ w.r.t.\ $#3$}

\begin{proposition}
\label{prop:subheap_connect}
Let $\aheap, \aheap'$ be two heaps such that $\aheap \subseteq \aheap'$. For all $\ell,\ell' \in \univ_{\addr}$, if $\ell \connect{\aheap}^* \ell'$ then $\ell \connect{\aheap'}^* \ell'$.
\end{proposition}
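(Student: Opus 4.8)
The plan is to prove the single-step version first and then extend to the reflexive-transitive closure by a routine induction. Concretely, I would first establish the auxiliary claim: for all $\ell,\ell'\in\univ_\addr$, if $\ell \connect{\aheap} \ell'$ then $\ell \connect{\aheap'} \ell'$. This follows directly by unfolding the definition of $\connect{\aheap}$: from $\ell \connect{\aheap}\ell'$ we get $\ell\in\dom{\aheap}$, $\aheap(\ell)=(\ell_1,\dots,\ell_n)$, and $\ell'=\ell_i$ for some $i$. Since $\aheap\subseteq\aheap'$, there is a heap $\aheap''$ disjoint from $\aheap$ with $\aheap'=\aheap\union\aheap''$; hence $\ell\in\dom{\aheap}\subseteq\dom{\aheap'}$ and, because the union is functional and agrees with $\aheap$ on $\dom{\aheap}$, we have $\aheap'(\ell)=\aheap(\ell)=(\ell_1,\dots,\ell_n)$, so $\ell'=\ell_i$ witnesses $\ell\connect{\aheap'}\ell'$.

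Next I would lift this to $\connect{\aheap}^*$. Recall $\connect{\aheap}^*$ is the reflexive-transitive closure of $\connect{\aheap}$, so by definition $\ell\connect{\aheap}^*\ell'$ means there is a finite chain $\ell=m_0\connect{\aheap}m_1\connect{\aheap}\cdots\connect{\aheap}m_k=\ell'$ for some $k\geq 0$. I would then argue by induction on $k$ (equivalently, on the structure of the derivation of $\connect{\aheap}^*$). The base case $k=0$ gives $\ell=\ell'$, and reflexivity yields $\ell\connect{\aheap'}^*\ell'$ immediately. For the inductive step, split the chain as $\ell\connect{\aheap}^* m_{k-1}\connect{\aheap}\ell'$; the induction hypothesis gives $\ell\connect{\aheap'}^* m_{k-1}$, the auxiliary claim gives $m_{k-1}\connect{\aheap'}\ell'$, and transitivity of $\connect{\aheap'}^*$ closes the case.

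I do not anticipate any real obstacle here: the statement is a routine monotonicity property, and the only thing to be careful about is the bookkeeping in the definition of the disjoint union, namely that $\aheap'$ genuinely restricts to $\aheap$ on $\dom\aheap$, so that the successor tuple of an allocated location is unchanged when passing from $\aheap$ to $\aheap'$. Once that observation is made explicit, both the single-step lemma and the induction are mechanical.
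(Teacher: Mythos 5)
Your proposal is correct and follows essentially the same route as the paper: establish the single-step inclusion $\connect{\aheap}\subseteq\connect{\aheap'}$ from the definition of disjoint union, then lift it to the reflexive-transitive closure (the paper states this lift directly as monotonicity of $(\cdot)^*$, whereas you spell out the induction on chain length). No gaps.
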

\begin{proof}
By definition of the relation $\connect{\aheap}$ we have $\connect{\aheap} \subseteq \connect{\aheap'}$, thus
$\connect{\aheap}^* \subseteq \connect{\aheap'}^*$
\end{proof}

A {\em store} $\astore$ is a total function mapping every term in $\terms_\asort$ 
to an element in $\univ_\asort$ such that $\astore(c) = \valueof{c}$, for all $c \in \csts$ (note that this entails that $\astore$ is injective on $\csts$).
A store $\astore$ is {\em injective on a multiset of variables} $V$ if $\{ x,y \} \subseteqm V \implies \astore(x) \not = \astore(y)$. 
When a store is injective on a multiset of variables, this entails that the latter is a set, i.e., that contains at most one occurrence of each variable.
For any $V \subseteq \vars$, and for any store $\astore$, a store $\astore'$ is an {\em \extension{\astore}{V}} if $\astore(x) = \astore'(x)$ holds for all 
$x \not \in V$.

\begin{definition}
An {\em SL-structure} is a pair $(\astore,\aheap)$, where 
$\astore$ is a store and
$\aheap$ is a heap. 
\end{definition}

The satisfiability relation on SL-formulas is defined inductively as follows:

\newcommand{\modelsr}{\models_{\asid}}
\newcommand{\rmodel}{$\asid$-model\xspace}

\begin{definition}
\label{def:rmodel}
An SL-structure $(\astore,\aheap)$ {\em validates a formula 
	(pure formula, spatial formula, or symbolic heap) $\asheap$ modulo a set of inductive rules $\asid$}, written $(\astore,\aheap) \models_{\asid} \asheap$, if one of the following conditions holds:
\begin{itemize}
	\item{$\asheap = \emp$ and $\aheap = \emptyset$;}
\item{$\asheap = (t \iseq s)$ and $\astore(t) = \astore(s)$;}
\item{$\asheap = (t \not\iseq s)$ and $\astore(t) \not = \astore(s)$;}
\item{$\asheap = x \mapsto (t_1,\dots,t_k)$ and $\aheap = \{ (\astore(x),\astore(t_1),\dots,\astore(t_k))\}$;}
\item{either $\asheap = \asheap_1 \wedge \asheap_2$ or $\asheap = \asheap_1 \swedge \asheap_2$, and 
	$(\astore,\aheap) \models_{\asid} \asheap_i$ for all $i = 1,2$;}
\item{$\asheap = \asheap_1 * \asheap_2$ and there exist disjoint heaps $\aheap_1$ and $\aheap_2$ such that
$\aheap = \aheap_1 \union \aheap_2$ and 
$(\astore,\aheap_i) \models_{\asid} \asheap_i$ for all $i = 1,2$;}
\item{$\asheap = p(t_1,\dots,t_n)$ and
$p(t_1,\dots,t_n) \unfold \asheapB$, where
there exists an  \namedextension{\astore'}{\astore}{\text{the set $\vars(\asheapB) \setminus \vars(\asheap)$}}  such that $(\astore',\aheap) \models_{\asid} \asheapB$.}
\end{itemize}
An {\em \rmodel} of  a formula $\asheap$ is a structure 
$(\astore,\aheap)$ such that $(\astore,\aheap) \modelsr \asheap$.
A formula is {\em satisfiable} (w.r.t.\ $\asid$) if it admits at least one 
\rmodel.
\end{definition}


\begin{remark}
 Note that a formula $x \mapsto (t_1,\dots,t_k)$ asserts  not only that $x$ refers to $(t_1,\dots,t_k)$ but also that $x$ is the only allocated location. This fits  with usual definitions (see, e.g., \cite{IshtiaqOHearn01}). The assertions are meant to describe elementary heaps, which can be combined afterwards using the connective $*$. Simply asserting that 
$x$ refers to $(t_1,\dots,t_k)$  could be done in full SL using the following formula:
$x \mapsto (t_1,\dots,t_k) * \top$, but such a formula is not a symbolic heap and is thus outside of the fragment we consider in the present paper.
\end{remark}

We write $(\astore,\aheap) \models \asheap$ instead of $(\astore,\aheap) \modelsr \asheap$ if $\asheap$ contains no predicate symbol, since the relation is independent of $\asid$ in this case. Similarly, if $\asheap$ is a pure formula then the relation does not depend on the heap, thus we may simply write $\astore \models \asheap$. 


\begin{remark}
Note that there is no symbolic heap that is true in every structure. For instance $\emp \swedge \top$ is true only in structures with an empty heap.
\end{remark}



\begin{proposition}
	\label{prop:subst_store}
	Let $(\astore,\aheap)$ be a structure and let $\sigma$ be a substitution.
	If $(\astore,\aheap) \modelsr \asheap\sigma$ then we have
	${(\astore \circ \sigma,\aheap) \modelsr \asheap}$.
\end{proposition}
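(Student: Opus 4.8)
The plan is to prove Proposition~\ref{prop:subst_store} by structural induction on the symbolic heap (or, more precisely, on the formula) $\asheap$, following exactly the case split in Definition~\ref{def:rmodel}. The key observation that makes the induction go through is the elementary compositional fact that for any term $t$ and any substitution $\sigma$, one has $(\astore \circ \sigma)(t) = \astore(\sigma(t)) = \astore(t\sigma)$; this is immediate for variables by definition of composition, and for constants it follows since $\sigma$ is sort-preserving and fixes all constants while $\astore$ interprets every constant as its fixed value $\valueof{c}$. This lifts to tuples of terms componentwise.

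First I would dispatch the base cases. For $\asheap = \emp$ there is nothing to do, since $\asheap\sigma = \emp$ and the condition $\aheap = \emptyset$ is unchanged. For $\asheap = (t \iseq s)$ we have $\asheap\sigma = (t\sigma \iseq s\sigma)$, and $(\astore,\aheap) \modelsr \asheap\sigma$ means $\astore(t\sigma) = \astore(s\sigma)$, i.e.\ $(\astore\circ\sigma)(t) = (\astore\circ\sigma)(s)$, which is exactly $(\astore\circ\sigma,\aheap) \modelsr t \iseq s$; the disequation case is identical with $=$ replaced by $\neq$. For $\asheap = x \mapsto (t_1,\dots,t_k)$, the formula $\asheap\sigma$ is $x\sigma \mapsto (t_1\sigma,\dots,t_k\sigma)$ and validity says $\aheap = \{(\astore(x\sigma),\astore(t_1\sigma),\dots,\astore(t_k\sigma))\}$, which by the term identity above equals $\{((\astore\circ\sigma)(x),(\astore\circ\sigma)(t_1),\dots,(\astore\circ\sigma)(t_k))\}$, as required. (Note $x\sigma$ is again a variable of sort $\addr$ since there are no constants of that sort and $\sigma$ is sort-preserving, so this is well typed.)

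Then the inductive cases. For conjunctions $\asheap = \asheap_1 \wedge \asheap_2$ and $\asheap = \asheap_1 \swedge \asheap_2$, and for separating conjunction $\asheap = \asheap_1 * \asheap_2$, one simply notes that $(\asheap_1 \star \asheap_2)\sigma = \asheap_1\sigma \star \asheap_2\sigma$ for $\star \in \{\wedge,\swedge,*\}$, and applies the induction hypothesis to each conjunct (using the same heap decomposition $\aheap = \aheap_1 \union \aheap_2$ in the $*$ case, since the heaps are untouched by $\sigma$). The genuinely delicate case, and the one I expect to be the main obstacle, is the predicate case $\asheap = p(t_1,\dots,t_n)$: here $\asheap\sigma = p(t_1\sigma,\dots,t_n\sigma)$, and unfolding gives some $\asheapB$ with $p(t_1\sigma,\dots,t_n\sigma) \unfold \asheapB$ and an associate $\astore'$ of $\astore$ w.r.t.\ $\vars(\asheapB)\setminus\vars(\asheap\sigma)$ such that $(\astore',\aheap) \modelsr \asheapB$. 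The issue is bookkeeping with the existential/fresh variables introduced by the rule and with how $\sigma$ interacts with the unfolding. Concretely, writing the rule as $p(y_1,\dots,y_n) \Leftarrow \asheapC$ with $\asheapC$ chosen (by the renaming convention) to share no variable with $\{t_1,\dots,t_n\} \cup \dom{\sigma} \cup \codom{\sigma}$, we have $\asheapB = \asheapC\{y_i \leftarrow t_i\sigma\}$, and also $p(t_1,\dots,t_n) \unfold \asheapC\{y_i \leftarrow t_i\}$. One checks that $\asheapC\{y_i \leftarrow t_i\sigma\} = (\asheapC\{y_i \leftarrow t_i\})\sigma$ because $\sigma$ does not touch the $y_i$ nor any variable of $\asheapC$; hence $\asheapB = (\asheapC\{y_i \leftarrow t_i\})\sigma$. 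Applying the induction hypothesis to $\asheapC\{y_i \leftarrow t_i\}$ with the structure $(\astore',\aheap)$ yields $(\astore'\circ\sigma,\aheap) \modelsr \asheapC\{y_i \leftarrow t_i\}$. It then remains to verify that $\astore'\circ\sigma$ is an associate of $\astore\circ\sigma$ w.r.t.\ the set of fresh variables $\vars(\asheapC\{y_i\leftarrow t_i\})\setminus\vars(p(t_1,\dots,t_n))$: outside that set a variable $x$ either equals some $y_i$-image, in which case both stores agree because $\astore$ and $\astore'$ agree there and $\sigma$ acts identically, or is a variable untouched by the renaming, where one uses that $\astore'$ agrees with $\astore$ off the fresh set and that $\sigma$ maps non-fresh variables to non-fresh variables. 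This gives $(\astore\circ\sigma,\aheap) \modelsr p(t_1,\dots,t_n)$ and closes the induction. The only real care needed throughout is the choice of renaming so that $\sigma$ and the unfolding substitution commute and so that freshness is preserved under $\sigma$; everything else is routine.
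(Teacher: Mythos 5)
Your proof is correct and takes essentially the same route as the paper, whose entire proof is ``by an immediate induction on the satisfiability relation''; your write-up simply supplies the details, and the delicate bookkeeping in the predicate case (choosing the renaming of the rule body disjoint from $\dom{\sigma}\cup\codom{\sigma}$ so that instantiation and $\sigma$ commute, then checking that $\astore'\circ\sigma$ is an associate of $\astore\circ\sigma$) is handled correctly. One small point of phrasing: what you actually perform is induction on the derivation of $\modelsr$ rather than structural induction on the formula $\asheap$ --- as your own predicate case shows, since the induction hypothesis is there applied to the unfolded rule body, which is not a subformula of $p(t_1,\dots,t_n)$ --- and that is precisely the induction the paper intends.
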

\begin{proof}
	By an immediate induction on the satisfiability relation.
\end{proof}

In the present paper, we shall consider inductive rules of a particular form, defined below.


\begin{definition}
\label{def:simple}
An inductive rule is  a {\em \prulename} 
if it is of the form \[p(x_1,\dots,x_n) \Leftarrow x_1 \mapsto (y_1,\dots,y_k) * q_1(z_1,\vec{u}_1) * \dots 
q_m(z_m,\vec{u}_m) \swedge \apform\] possibly with $m = 0$, where:
\begin{enumerate}
\item{
$\apform$ is a conjunction of disequations of the form $u \not \iseq v$, where  $u \in \{ x_1,\dots,x_n, y_1,\dots,y_k \}$  and
 $v  \in \{ y_1,\dots,y_k\} \setminus \{ x_1,\dots,x_n \}$;
 }
 \item{
 $\{ z_1,\dots,z_m \} = (\{ y_1,\dots,y_k\}  \setminus \{ x_1,\dots,x_n \}) \cap \vars_\addr$, and $z_1,\dots,z_m$ are pairwise distinct;\label{it:prog2}}
 \item{All the elements of $\vec{u}_i$ occur in $\{ x_1,\dots,x_n \} \cup \{ y_1,\dots,y_k \} \cup \csts$.} 
 \end{enumerate}
The predicate symbol $p$ is called the {\em head} of the rule.
\end{definition}

\newcommand{\nil}{\mathtt{nil}}


\begin{example}
\label{ex:als}
The rules associated with $\ls$ and $\tree$ in the introduction are \prules, as well as  
the following rules.
Intuitively $\als(x,y) \swedge x \not \iseq y$ denotes an acyclic list 
($\als(x,y)$ is thus ``quasi-acyclic'', in the sense that it may loop only on the first element). Note that the constraint $x\not \iseq y$ cannot be added to the right-hand side of the rules because the obtained rule would not be a \prule, hence it must be added in the formula.
The atom $\dll(x,y)$ denotes a doubly-linked list starting at $x$, with the convention that each element of the list points to a pair containing the previous and next elements. The parameter $y$ denotes the element before $x$ (if any) and the last element points to the empty tuple $()$.
The structures validating $\dll(x,y)$ are of the form (see Figure \ref{fig:dll})
$(\astore_n,\aheap_n)$ with $n \geq 0$, $\dom{\aheap} = \{ \ell_1,\dots,\ell_{n+1} \}$, 
$\aheap(\ell_i) = (\ell_{i-1},\ell_{i+1})$ for all 
$i \in \{ 1,\dots,n\}$, $\aheap(\ell_{n+1}) = ()$, $\astore(x) = \ell_1$ and $\astore(y) = \ell_0$. Locations $\ell_1,\dots,\ell_{n+1}$ must be pairwise distinct and $\ell_0$ is arbitrary. The atom $\treeC(x,y,z)$ denotes a binary tree in which every node refers to its two successors and to its parent and sibling nodes.  The parameters $y$ and $z$ denote the sibling and parent nodes, respectively.
Leaves point to $()$. See Figure \ref{fig:tree} for an example with $5$ allocated locations.
\[
\begin{array}{llll}
\als(x,y) & \Leftarrow & (x \mapsto (z) * \als(z,y)) \swedge y \not \iseq z & \text{\tt \% (quasi-)acyclic list} \\
\als(x,y) & \Leftarrow & x \mapsto (y)  \\
\treeB(x,y) & \Leftarrow & x \mapsto (y,z) * \tree(z) & \text{\tt \% binary trees with} \\
\treeB(x,y) & \Leftarrow & (x \mapsto (z,u) * \treeB(z,y) * \tree(u))  & \text{\tt \% leftmost leaf $y$} \\
& & \qquad \swedge  (y \not \iseq z) & \\
\dll(x,y) & \Leftarrow & x \mapsto (y,z) * \dll(z,x) & \text{\tt \% doubly-linked lists} \\
\dll(x,y) & \Leftarrow & x \mapsto ()  \\
\treeC(x,y,z) & \Leftarrow & x \mapsto (u,v,y,z) * \treeC(u,v,x) & \text{\tt \% binary trees with} \\
& & \qquad * \treeC(v,u,x) & \text{\tt \% pointers to brother} \\
\treeC(x,y,z) & \Leftarrow & x \mapsto () & \text{\tt \% and parent nodes} \\
\end{array}
\]
The following rules are not \prules (if all variables are of sort $\addr$): 
\[
\begin{array}{llll}
\lsB(x) & \Leftarrow & x \mapsto (z) & \text{Condition 2 violated} \\
\lsB(x) & \Leftarrow & \ls(x,z) * \lsB(z) & \text{No points-to atom} \\
\lsC(x,y) & \Leftarrow & x \mapsto (z) \swedge y \iseq z & \text{\text{Condition 2 violated}} \\
\lsC(x,y) & \Leftarrow & \ls(x,y) & \text{No points-to atom} \\
\als(x,y) & \Leftarrow & (x \mapsto (z) * \als(z,y)) \swedge x \not \iseq y & \text{Condition $1$ violated} \\
\als(x,y) & \Leftarrow & x \mapsto (y) \swedge x \not \iseq y &  \text{Condition $1$ violated}  \\
\end{array}
\]


\begin{figure}
\begin{center}
\begin{tikzpicture}[
    node distance = 2cm and 1.5cm,
    dllnode/.style = {draw, rectangle, minimum width=1cm, minimum height=0.6cm, text centered},
    arrow/.style = {->,>=stealth, thick},
]

\node (prev) {$y$};
\node[dllnode, right of= prev] (first) {$\ell_1$};
\node[dllnode, right of= first] (second) {$\ell_2$};
\node[dllnode, right of= second] (dots) {$\dots$};
\node[dllnode, right of= dots] (last) {$\ell_n$};
\node[dllnode, right of= last] (next) {$\ell_{n+1}$};

\draw[arrow] (first.east) -- (second.west);
\draw[arrow] (second.east) -- (dots.west);
\draw[arrow] (dots.east) -- (last.west);
\draw[arrow] (last.east) -- (next.west);

\draw[arrow] (second.west) -- (first.east);
\draw[arrow] (dots.west) -- (second.east);
\draw[arrow] (last.west) -- (dots.east);
\draw[arrow] (first.west) -- (prev.east);

\end{tikzpicture}
\end{center}
\caption{Doubly-linked list ending with $()$ ($\ell_{n+1}$ is allocated but has no successor) \label{fig:dll} }
\end{figure}

\begin{figure}

\begin{center}
\begin{tikzpicture}[
    node distance = 2cm and 2cm,
    treenode/.style = {draw, rectangle, minimum width=1cm, minimum height=0.6cm, text centered},
    arrow/.style = {->,>=stealth, thick},
]

\node[treenode] (root) {$x$};
\node[above of= root] (extra) {$z$};
\node[left of= root] (extra2) {$y$};
\node[treenode, below left of = root] (leftchild) {$\ell_1$};
\node[treenode, below right of= root] (rightchild) {$\ell_2$};
\node[treenode, below left of= leftchild] (leftgrandchild) {$\ell_3$};
\node[treenode, below right of= leftchild] (rightgrandchild) {$\ell_4$};

\draw[arrow] (root) -- (leftchild);
\draw[arrow] (root) -- (rightchild);
\draw[arrow] (leftchild) -- (leftgrandchild);
\draw[arrow] (leftchild) -- (rightgrandchild);
\draw[arrow]  (leftchild) -- (root);
\draw[arrow] (root) -- (extra);
\draw[arrow] (root) -- (extra2);
\draw[arrow] (leftchild) -- (rightchild);
\draw[arrow] (root) -- (extra);
\draw[arrow] (root) -- (extra2);

\end{tikzpicture}
\end{center}

\caption{Binary tree with pointers to parent and sibling ($\ell_2,\ell_3,\ell_4$ are allocated but have no successor) \label{fig:tree}}
\end{figure}

\end{example}

\begin{remark}
As evidenced by the rules in Example \ref{ex:als}, the tuple $()$ is frequently used as a base case, to end a data structure. This departs from standard conventions in SL in which a non-allocated constant $\nil$ is frequently used instead. We avoid considering constants of sort $\addr$ in our framework because this would complicate definitions: one would have to keep track of allocated and non-allocated constants and/or to add syntactic conditions on the formulas and rules to ensure that such constants are never allocated. Note that this convention introduces an asymmetry in the structure. For instance, in doubly-linked lists, the ``next'' field of the last element points to $()$ while the ``previous'' field of the first element contains an arbitrary location $y$. To achieve symmetry in the structure, one could introduce an atom $y \mapsto ()$. For example, the formula $y \mapsto () * \dll(x,y)$ represents a doubly-linked list ending with $()$ in both directions.
\end{remark}

Note that \prulesname are progressing and connected (in the sense of \cite{IosifRogalewiczSimacek13}): every rule allocates exactly 
one location --the first parameter of the predicate-- and 
the first parameter of every predicate in the body of the rule occurs in the right-hand side of the (necessarily unique) points-to atom of the rule.
They are not necessarily established (again in the sense of\cite{IosifRogalewiczSimacek13}) as non-allocated existential variables are allowed provided they are not of sort $\addr$.
\begin{example}
The following (non established, in the sense of \cite{IosifRogalewiczSimacek13}) rules, denoting list segments with unallocated elements are \prules iff $u \not \in \vars_{\addr}$:
\[ \begin{array}{lllll}
\ls(x,y) & \Leftarrow x \mapsto (u,y) \quad &
\quad \ls(x,y) & \Leftarrow x \mapsto (u,z) * \ls(z,y) 
\end{array}
\]
The heap of any model of $\ls(x,y)$ is of the form $\{ \ell_i \mapsto (u_i,\ell_{i+1}) \mid i \in \{ 1,\dots,n\} \}$, where $u_1,\dots,u_n$ denote arbitrary elements (of a sort distinct from $\addr$).
\end{example}
\pRulesname containing no variable of a sort distinct from $\addr$ are established.
\pRulesname also differ from PCE rules  in that every predicate
allocates {\em exactly} one of its parameters, namely the first one (the other allocated locations are associated with existential variables). 
In other words, there may be only one ``entry point'' to the structure allocated by a predicate, namely its root. For instance the rule $p(x,y) \Leftarrow x \mapsto (y) * q(y)$ (along with another rule for symbol $q$, e.g., $q(y) \Leftarrow y\mapsto ()$) is PCE but it is not a \prulename, whereas $p(x) \Leftarrow x \mapsto (y) * q(y)$ is a \prule.
 Such a restriction makes the entailment problem easier to solve because it rules out data structures that can be constructed in different orders (for instance doubly-linked lists with a reference to the end of the list). 


\newcommand{\maxar}[1]{\mathit{ar}_{\mathit{max}}(#1)}
\newcommand{\maxk}[1]{\mathit{record}_{\mathit{max}}(#1)}
\newcommand{\maxB}[1]{\mathit{E}_{\mathit{max}}(#1)}
\newcommand{\maxr}[1]{\mathit{width}(#1)}

We introduce some useful notations and measures on sets of \prulesname.
For every set of \prulesname $\asid$, 
we denote by $\preds(\asid)$ the set of predicate symbols
occurring in a rule in $\asid$.
We define: \[\maxar{\asid} = \max \{ n \mid p: \asort_1,\dots,\asort_n \in \preds(\asid) \}\] and \[\maxk{\asid} = \max \{ k \mid p(x_1,\dots,x_n) \Leftarrow (x \mapsto (t_1,\dots,t_k) * \asform) \swedge \apform \in \asid \}\]
The numbers $\maxar{\asid}$ and $\maxk{\asid}$ respectively denote the maximum arity of the predicate symbols in 
$\asid$ and 
 the maximum number of record fields in a points-to atom occurring in $\asid$.
The {\em width} of $\asid$ is defined as follows: $\maxr{\asid} \isdef \max(\maxar{\asid},\maxk{\asid})$.

We make two additional assumptions about the considered set of rules: we assume that every predicate is productive (Assumption \ref{assume:productive}) and that no parameter is useless (Assumption \ref{assume:no_useless_variable}).
More precisely, the set of {\em productive} predicate symbols is inductively defined as follows:
 $p\in \preds$ is productive w.r.t.\ a set of inductive rules $\asid$ if 
 $\asid$ contains a rule $p(\vec{x}) \Leftarrow \asheap$ such that 
 all the predicate symbols occurring in $\asheap$ are productive. 
 In particular, a rule with no predicate in its right-hand side is always productive (base case). Productive rules can easily be computed using a straightforward least fixpoint algorithm.
 

\begin{example}
Let $\asid = \{ p(x) \Leftarrow q(x), q(x) \Leftarrow p(x), r(x) \Leftarrow x \mapsto (y) * p(y) \}$.
The predicates $p,q,r$  are not productive.
\end{example}

It is easy to check that every formula containing at least one non-productive predicate symbol 
is unsatisfiable. 
Indeed, if a predicate symbol $p$ is non-productive, then an atom $p(\vec{x})$ cannot be unfolded into a formula containing no predicate symbol.
 This justifies the following:
 
\begin{assumption}
\label{assume:productive}
For all sets of \prulesname $\asid$, we assume that  all the predicate symbols are productive w.r.t.\ $\asid$.
\end{assumption}

\newcommand{\outparam}[1]{\mathit{out}_{\asid}(#1)}

For all predicates $p: \asort_1,\dots,\asort_n$, the set $\outparam{p}$ denotes the least set of indices $i$ in $\{ 1,\dots,n\}$ such that $\asort_i = \addr$ and there exists a rule $p(x_1,\dots,x_n) \Leftarrow \asheap$ in $\asid$ such that $\asheap$ contains either a points-to atom $x_1 \mapsto (t_1,\dots,t_k)$ where  $x_i \in \{ t_1,\dots,t_k\}$ or 
 a predicate atom $q(t_1,\dots,t_m)$ with $t_j = x_i$, for some $j \in \outparam{q}$.
 Intuitively, $\outparam{p}$ denote the set of ``out-going'' nodes of the structures corresponding to $p$, i.e., the set of parameters corresponding to locations that can be referred to but not necessarily allocated.
 

 \begin{example}
 Consider the following rules:
 \[
 \begin{tabular}{ccc}
 $p(x,y,z) \Leftarrow x \mapsto (x,y)$ & 
 $p(x,y,z) \Leftarrow x \mapsto (x,u) * q(u,z,z)$ & $q(x,y,z) \Leftarrow x \mapsto (y)$
 \end{tabular}
 \]
 Then $\outparam{p} = \{  1, 2 \}$ and $\outparam{q} = \{  2 \}$.
 \end{example}

 \begin{proposition}
 \label{prop:outparam}
 Let $\asid$ be a set of \prulesname.
 If $(\astore,\aheap) \modelsr p(t_1,\dots,t_n)$ and the index $i\neq 1$ is such that
 $i \not \in \outparam{p}$ 
 (i.e., $i$ is not an outgoing parameter of $p$) and $\asort_i = \addr$, then the entailment
 $(\astore,\aheap) \modelsr p(t_1,\dots,t_{i-1},s,t_{i+1}, t_n)$ holds for all terms  $s$.
 \end{proposition}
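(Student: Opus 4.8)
The plan is to argue by well-founded induction on the finite heap $\aheap$ (ordered by strict inclusion), the induction hypothesis being the proposition itself for all strictly smaller heaps. Suppose $(\astore,\aheap)\modelsr p(t_1,\dots,t_n)$ with $i\neq 1$, $i\notin\outparam{p}$ and $\asort_i=\addr$. Unfolding $p(t_1,\dots,t_n)$, there is a \prulename $p(x_1,\dots,x_n)\Leftarrow\asform\swedge\apform$, where $\asform=x_1\mapsto(y_1,\dots,y_k)*q_1(z_1,\vec{u}_1)*\dots*q_m(z_m,\vec{u}_m)$ and the existential variables are renamed apart from $t_1,\dots,t_n$ and from $s$, together with a store $\astore'$ that agrees with $\astore$ outside those existential variables and satisfies $(\astore',\aheap)\modelsr(\asform\swedge\apform)\sigma$, where $\sigma=\replall{}{x_j}{t_j}{j=1,\dots,n}$. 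Let $\sigma'$ agree with $\sigma$ except that $\sigma'(x_i)=s$. Since $\astore'$ and $\sigma'$ witness the unfolding of $p(t_1,\dots,t_{i-1},s,t_{i+1},\dots,t_n)$ in exactly the same way, it suffices to prove $(\astore',\aheap)\modelsr(\asform\swedge\apform)\sigma'$.

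Decompose $\aheap=\aheap_0\uplus\aheap_1\uplus\dots\uplus\aheap_m$ along the separating conjunction, so that $(\astore',\aheap_0)\modelsr(x_1\mapsto(y_1,\dots,y_k))\sigma$, $(\astore',\aheap_l)\modelsr q_l(z_l,\vec{u}_l)\sigma$ for $1\le l\le m$, and $(\astore',\aheap)\modelsr\apform\sigma$; here $\aheap_0$ is a non-empty singleton, hence each $\aheap_l$ is strictly included in $\aheap$. Because $i\notin\outparam{p}$, the variable $x_i$ does not occur among $y_1,\dots,y_k$, and since $i\neq 1$ this gives $(x_1\mapsto(y_1,\dots,y_k))\sigma=(x_1\mapsto(y_1,\dots,y_k))\sigma'$. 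For each $l$, the root $z_l$ is an existential variable, hence left fixed by $\sigma$ and $\sigma'$; consequently $q_l(z_l,\vec{u}_l)\sigma$ and $q_l(z_l,\vec{u}_l)\sigma'$ differ only at those argument positions $\pi\ge 2$ of $q_l$ at which $\vec{u}_l$ carries $x_i$, and the definition of the out-going parameters together with $i\notin\outparam{p}$ guarantees that each such $\pi$ satisfies $\pi\notin\outparam{q_l}$ and, by well-sortedness, has sort $\addr$. Applying the induction hypothesis (legitimate since $\aheap_l\subsetneq\aheap$) once for each such position, each time to replace $t_i$ by $s$, yields $(\astore',\aheap_l)\modelsr q_l(z_l,\vec{u}_l)\sigma'$.

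Finally one checks $(\astore',\aheap)\modelsr\apform\sigma'$: by condition~1 of Definition~\ref{def:simple} the right member of every disequation of $\apform$ is one of the existential variables $y_j$, which $\sigma$ and $\sigma'$ leave unchanged, and---since $i\notin\outparam{p}$---Assumption~\ref{assume:no_useless_variable} ensures that $x_i$ does not occur in $\apform$ at all, so $\apform\sigma'=\apform\sigma$. Combining the three pieces gives $(\astore',\aheap)\modelsr(\asform\swedge\apform)\sigma'$, which closes the induction. The delicate point is precisely this last step: were $x_i$ allowed to occur in a disequation $x_i\not\iseq z_l$, then---$z_l$ being an existential root whose value is already pinned down by the fixed heap $\aheap$---a choice of $s$ with $\astore(s)=\astore'(z_l)$ would falsify the disequation, and it is exactly the absence of useless parameters (equivalently, a notion of out-going parameter that also records occurrences in the pure part) that rules this out. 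Everything else---keeping track of the renaming of existential variables and propagating the change of the $i$-th argument into the recursive atoms---is routine bookkeeping.
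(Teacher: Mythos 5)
Your argument follows essentially the same route as the paper's proof: unfold the atom once, observe that $i \not\in \outparam{p}$ together with $i \neq 1$ forces $x_i$ out of the points-to tuple, note that any occurrence of $x_i$ in a recursive atom $q_l(z_l,\vec{u}_l)$ must then sit at a non-root position that is not in $\outparam{q_l}$ (by the least-fixed-point definition of $\outparam{\cdot}$), and close by induction on the sub-atoms. Your well-founded induction on strict subheaps is interchangeable with the paper's induction on the satisfiability relation, and your explicit iteration over the several positions of $\vec{u}_l$ at which $x_i$ may occur is a welcome precision over the paper's terser phrasing.

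The genuine problem is the last step, concerning the pure part $\apform$. Appealing to Assumption~\ref{assume:no_useless_variable} there is circular: that assumption is introduced \emph{after} this proposition and is justified \emph{by} it, so it cannot be used in its proof; worse, under that assumption every index $i \neq 1$ of sort $\addr$ already lies in $\outparam{p}$, so the hypotheses of the proposition become unsatisfiable and the statement vacuous. Moreover, the assumption asserts nothing about occurrences of $x_i$ in $\apform$: Condition 1 of Definition~\ref{def:simple} does allow a parameter to appear as the left member of a disequation, while $\outparam{p}$ only records occurrences in spatial atoms. You have in fact put your finger on a real gap --- for the \prulename $p(x,y) \Leftarrow (x \mapsto (z) * q(z)) \swedge y \not\iseq z$ (with $q(z) \Leftarrow z \mapsto ()$) one has $2 \notin \outparam{p}$, yet the claimed entailment fails when $\astore(s)$ is the location that the heap pins down for $z$ --- and the paper's own proof silently drops $\apform$ and never verifies it. Your parenthetical remark, that the notion of out-going parameter should also record occurrences in the pure part, is the correct repair; but as written the step ``$x_i$ does not occur in $\apform$'' is unjustified, and without it the proof does not go through.
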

\begin{proof}
By an induction on the satisfiability relation. 
By hypothesis there exists a formula $\asheapB$ such that $p(t_1,\dots,t_n) \unfold \asheapB$, where and $\asheapB$ is of the form $t_1\mapsto (t_1', \ldots, t_k') * \asheapB'$ and
there exists an  \namedextension{\astore'}{\astore}{\vars(\asheapB) \setminus \vars(\asheap)}  such that $(\astore',\aheap) \models_{\asid} \asheapB$. 
By hypothesis $i\neq 1$, and since $i \not \in \outparam{p}$, $t_i$ cannot occur in $\set{t_1', \ldots, t_k'}$.  This entails 
that $p(t_1,\dots,t_{i-1},s,t_{i+1}, t_n) \unfold t_1\mapsto (t_1', \ldots, t_k') * \asheapB''$. If $\asheapB'$ contains a predicate $q(s_1, \ldots, s_m)$ and there exists an index $j$ such that $s_j = t_i$, then we cannot have $j = 1$ because $t_i \notin \set{t_1',\ldots, t_k'}$ and the rule under consideration is a \prule. Since $i \not \in \outparam{p}$ by hypothesis, $j$ cannot belong to $\outparam{q}$ and by induction, we deduce that $(\astore',\aheap) \models_{\asid} t_1\mapsto (t_1', \ldots, t_k') * \asheapB''$, hence the result.
\end{proof}
 
 Proposition~\ref{prop:outparam} states that if  $i \not \in \outparam{p} \cup \{ 1 \}$ and $\asort_i = \addr$, then
 the semantics of $p(t_1,\dots,t_n)$ does not depend on $t_i$, thus the $i$-th argument of $p$ is redundant and can be removed.
  This justifies the following: 
  \begin{assumption}
 \label{assume:no_useless_variable}
For all sets of \prulesname $\asid$ and for all predicate symbols $p: \addr,\asort_1,\dots,\asort_n\in \preds$, we assume that 
$\outparam{p} \supseteq \{ 2\leq i \leq n \mid \asort_i = \addr \}$. 
 \end{assumption}
 
 \newcommand{\pathto}[1]{\rightarrow_{#1}}

 
 \newcommand{\pathcompatible}{$\pathto{}$-compatible\xspace}

 \begin{definition}
\label{def:pathcompatible}
For any formula $\asheap$, we write $x \pathto{\asheap} y$ if 
$x,y \in \vars_{\addr}$ 
and $\asheap$ contains an atom $p(t_1,\dots,t_n)$ (resp.\ $t_1 \mapsto (t_2,\dots,t_n)$) such that 
$t_1 = x$ and $t_i = y$, for some $i \in \outparam{p}$ (resp.\ for some $i \in \{ 2,\dots,n\}$).

A structure $(\astore,\aheap)$ is called a {\em \pathcompatible model} of a formula $\asheap$
iff $(\astore,\aheap) \modelsr \asheap$ and for every $x,y \in \vars_{\addr}$,
$\astore(x) \connect{\aheap}^* \astore(y) \implies x \pathto{\asheap}^* y$.
 \end{definition}

Intuitively,  $x \pathto{\asheap} y$ states that the formula $\asheap$ allocates an edge from $x$ to $y$. 

\newcommand{\vdashr}{\vdash_{\asid}}
\newcommand{\vdashrV}[1]{\vdash_{\asid}^{#1}}
\newcommand{\aseq}{{\cal S}}
\newcommand{\eqfree}{equality-free\xspace}

\begin{definition}
\label{def:cmodel}
A {\em sequent} is an expression of the form $\asheap \vdashrV{V} \asheapB$, where $\asheap,\asheapB$ are symbolic heaps, $V$ is a multiset of variables of sort $\addr$ 
and $\asid$ is a finite set of inductive rules.
If $V = \emptyset$ then the sequent is written $\asheap \vdashr \asheapB$.
A sequent is {\em \eqfree} if $\asheap$ and $\asheapB$ contain no atoms of the form $u \iseq v$.
A {\em counter-model} of a sequent $\asheap \vdashrV{V} \asheapB$ is a structure $(\astore,\aheap)$ such that:
\begin{itemize}
	\item $(\astore,\aheap) \modelsr \asheap$ and
	$(\astore,\aheap) \not \modelsr \asheapB$,
	\item $\forall x \in V,\, \astore(x) \not \in \dom{\aheap}$,
	\item $\astore$ is injective on the multiset $V$.
\end{itemize} 
A sequent is {\em valid} iff it has no counter-model.
\end{definition}

\section{Lower Bounds}

\label{sect:limits}

We establish various lower bounds for the validity problems 
for sequents $\asheap \vdashr \asheapB$, where $\asid$ satisfies some additional conditions. These lower bounds will motivate the additional restrictions  that are imposed to devise a polynomial-time proof procedure.

Checking the validity of sequents $\asheap \vdashr \asheapB$ where $\asid$ is a set of \prulesname is actually undecidable in general.  
This result can be established by an argument similar to the one used in \cite{EP22} to prove the undecidability of PCE entailments modulo theories; it is not given here for the sake of conciseness, and because the goal of this paper is to investigate tractable cases.
The undecidability proof relies on the existence of variables of a sort distinct from $\addr$.  If such variables are forbidden, then the rules are PCE hence entailment is decidable \cite{IosifRogalewiczSimacek13}, but we still get an  \EXPTIME lower bound: 

\begin{proposition}
Checking the validity of sequents $\asheap \vdashr \asheapB$ is \EXPTIME-hard, even if $\asid$ is a set of \prulesname and all the  variables occurring in $\asheap \vdashr \asheapB$ are of sort $\addr$.
\end{proposition}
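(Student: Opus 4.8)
The plan is to prove \EXPTIME-hardness by a polynomial reduction from a known \EXPTIME-complete problem that has a natural ``succinct'' flavor, most conveniently the acceptance problem for alternating polynomial-space Turing machines (equivalently, the membership problem for a linear-space alternating machine, since $\mathrm{APSPACE}=\EXPTIME$), or — perhaps more directly in this setting — the emptiness/membership problem for the intersection of a polynomial number of tree automata, or the non-inclusion problem for nondeterministic tree automata, which is already known to be \EXPTIME-complete \cite{tata2007}. Given the machinery the paper inherits from \cite{IosifRogalewiczSimacek13,DBLP:conf/atva/IosifRV14}, where PCE entailment is reduced to (and from) tree-automata language inclusion, the cleanest route is to encode a tree automaton inclusion instance $L(\mathcal{A}) \subseteq L(\mathcal{B})$ as a sequent $\asheap \vdashr \asheapB$.

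First I would fix a ranked alphabet and represent a tree $t$ over it as a heap: each internal node becomes a points-to atom $x \mapsto (x_1,\dots,x_r)$ pointing to the heap cells of its children, and leaves point to $()$. A run of a bottom-up tree automaton with state set $Q$ is then simulated by introducing, for each state $q \in Q$, a predicate $p_q(x)$ whose \prulesname encode exactly the transitions that can label node $x$ with state $q$: for a transition $\delta(q_1,\dots,q_r) \ni q$ on symbol $f$, add a rule $p_q(x) \Leftarrow x \mapsto (x_1,\dots,x_r) * p_{q_1}(x_1) * \dots * p_{q_r}(x_r)$, using the symbol $f$ either as a distinguished constant field in the record (so trees over distinct alphabet symbols are kept apart) or implicitly via the arity; accepting states get an extra ``top-level'' predicate. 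These rules are \prulesname: the root $x$ is the unique allocated parameter, every child variable $x_i$ is an existential of sort $\addr$ that is both referenced in the points-to record and is the root of a predicate atom (Condition~2), the records contain only parameters/existentials/constants (Condition~3), and the pure part is empty — so all conditions of Definition~\ref{def:simple} hold, and Assumptions~\ref{assume:productive} and~\ref{assume:no_useless_variable} are met after the routine cleanup. The antecedent $\asheap$ is then $p_{\mathcal{A}}(x)$ (the disjunction of automaton $\mathcal{A}$'s accepting-state predicates, which one can flatten into a single predicate with several rules), and the consequent $\asheapB$ is $p_{\mathcal{B}}(x)$; the sequent $p_{\mathcal{A}}(x) \vdashr p_{\mathcal{B}}(x)$ is valid iff every tree accepted by $\mathcal{A}$ is accepted by $\mathcal{B}$. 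The key point that makes this work within the single-consequent format is that both sides use disjunction only through multiple inductive rules, which the semantics of Definition~\ref{def:rmodel} already handles; no top-level disjunction on the right is needed.

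I would then verify correctness in the two directions. Soundness: any \rmodel\ $(\astore,\aheap)$ of $p_{\mathcal{A}}(x)$ unfolds, by Definition~\ref{def:rmodel}, into a heap that is precisely the tree-shaped heap of some $t \in L(\mathcal{A})$ together with a successful $\mathcal{A}$-run; if $L(\mathcal{A}) \subseteq L(\mathcal{B})$ there is a successful $\mathcal{B}$-run on the same $t$, giving $(\astore,\aheap) \modelsr p_{\mathcal{B}}(x)$, so the sequent is valid. Conversely, given a tree $t \in L(\mathcal{A}) \setminus L(\mathcal{B})$, build the canonical heap $\aheap_t$ and a store $\astore$ interpreting $x$ as its root: then $(\astore,\aheap_t) \modelsr p_{\mathcal{A}}(x)$ but $(\astore,\aheap_t) \not\modelsr p_{\mathcal{B}}(x)$ (no successful $\mathcal{B}$-run exists on $t$, and the only heaps satisfying $p_{\mathcal{B}}(x)$ are tree-heaps of $L(\mathcal{B})$), so $(\astore,\aheap_t)$ is a counter-model, i.e.\ the sequent is not valid. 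Restricting to alphabets with fixed rank, all variables used are of sort $\addr$ except possibly the single constant field recording the alphabet symbol; to meet the stated ``all variables of sort $\addr$'' hypothesis exactly, I would instead encode the alphabet symbol in the arity/position of the record (pad records so that the symbol is determined by the record shape), keeping every variable of sort $\addr$. The reduction is clearly polynomial: the number of predicates is $|Q_{\mathcal{A}}|+|Q_{\mathcal{B}}|+O(1)$ and the number of rules is linear in the number of transitions.

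The main obstacle I anticipate is not the automaton simulation itself but the careful bookkeeping needed to (a) force the models of the predicates to be \emph{exactly} the encodings of trees and nothing more — in particular ruling out spurious sharing of heap cells between sibling subtrees, which the separating conjunction $*$ already prevents but which must be argued explicitly, and to handle the fact that \prulesname allow ``out-going'' references that could a priori let a child pointer alias an ancestor — and (b) respect the sort restriction of the statement, which forbids the convenient trick of tagging records with constant symbols of a data sort. For (a) one leans on the disjointness clause of the $*$-case in Definition~\ref{def:rmodel} and on Proposition~\ref{prop:subheap_connect}-style reasoning about the $\connect{\aheap}$ relation to show the heap is genuinely a forest/tree; for (b) one spends a few predicates to simulate the alphabet combinatorially. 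Once these are in place the equivalence ``sequent valid $\iff L(\mathcal{A})\subseteq L(\mathcal{B})$'' follows, and \EXPTIME-completeness of the latter \cite{tata2007} gives the claimed lower bound.
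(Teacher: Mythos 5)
Your reduction from tree-automata language inclusion is essentially the paper's own proof: the paper encodes each transition $q \rightarrow f(q_1,\dots,q_n)$ as the \prule $q(x) \Leftarrow x \mapsto (f,x_1,\dots,x_n) * \bigast_{i=1}^n q_i(x_i)$ and takes the sequent $q_0(x) \vdashrV{} q_0'(x)$. Your worry in point (b) is unfounded, since the hypothesis only constrains the \emph{variables} occurring in the sequent (here just $x$, of sort $\addr$), so tagging records with \emph{constants} of a data sort is permitted and is exactly what the paper does, making your arity-padding workaround unnecessary.
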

\begin{proof}
The proof is by a straightforward reduction from the inclusion problem for languages 
accepted by tree automata (see \cite{tata2007}).
Indeed, a tree automaton $(Q,V,\{ q_0 \},R)$ can be straightforwardly encoded as a set of \prulesname, where 
each rule $q \rightarrow f(q_1,\dots,q_n)$ in $R$ is encoded by an inductive rule of the form 
${q(x) \Leftarrow x \mapsto (f,x_1,\dots,x_n) }* \bigast_{i=1}^n q_i(x_i)$.
Each function symbol $f$ is considered as a constant of a sort $\asort \not = \addr$, and a 
term $f(t_1,\dots,t_n)$ is represented as a heap $\aheap_1 \union\dots \union \aheap_n \union \{ (\ell_0,\ell_1,\dots,\ell_n) \}$, where $\ell_0,\dots,\ell_n$ are pairwise distinct locations and $\aheap_1,\dots,\aheap_n$ are disjoint representations of $t_i$ with ${\ell_0 \not \in \dom{\aheap_i}}$, for $i = 1,\dots,n$. It is straightforward to verify that the language accepted by 
$(Q,V,\{ q_0 \},R)$ is included in that of
$(Q',V,\{ q_0' \},R')$
iff the sequent $q_0(x) \vdashrV{} q_0'(x)$ is valid.
 \end{proof}

\newcommand{\deterministic}{deterministic\xspace}

Since the inclusion problem is polynomial for top-down deterministic tree automata \cite{tata2007}, it is natural to further restrict the considered rules to make them deterministic, in the following sense:

\newcommand{\pln}{R}
\begin{definition}
\label{def:deterministic}
A set of \prulesname $\asid$ 
is {\em \deterministic}
if  for all pairs of distinct rules   of the form
$p(\vec{x}_i) \Leftarrow (y_i \mapsto \vec{t}_i * \asform_i) \swedge \apform_i$
  (where $i = 1,2$) occurring in $\asid$,
the formula $\vec{x}_1 \iseq \vec{x}_2 \wedge \vec{t}_1 \iseq \vec{t}_2 \wedge \apform_1 \wedge \apform_2$ is unsatisfiable (we assume by renaming that the rules share no variable).
\end{definition}
For instance the rules associated with the predicate $\ls$ in the introduction are not \deterministic, whereas the rules associated with $\tree$ are \deterministic, as well as all those  in Example~\ref{ex:als}.
For the predicate $\ls$, the formula 
$x \iseq x' \wedge y \iseq z$ is satisfiable, whereas for the predicate $\tree$, the formula 
$x \iseq x' \wedge () \iseq (y,z)$ is unsatisfiable (in both cases the variable $x$ is renamed by $x'$ in the second rule).

The following proposition shows that the restriction to \deterministic sets of \prulesname is still not sufficient to obtain a tractable validity problem:
\begin{proposition}
Checking the validity of sequents $\asheap \vdashr \asheapB$ is \PSPACE-hard, even if $\asid$ is a \deterministic set of \prulesname and all variables in $\asheap \vdashr \asheapB$ are of sort $\addr$.
\end{proposition}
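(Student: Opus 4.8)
The plan is to reduce from a standard \PSPACE-complete problem; a convenient choice is the \emph{corridor tiling problem} (equivalent to acceptance within a polynomial space bound for a Turing machine): given a finite tile set $D$, compatibility relations $H, V \subseteq D \times D$, a width $n$, and two rows $r_0, r_f \in D^n$, decide whether for some height $m \geq 1$ there is a tiling of the $n \times m$ grid with first row $r_0$, last row $r_f$, satisfying $H$ horizontally and $V$ vertically. I would associate with each such instance, in polynomial time, a \deterministic set of \prulesname $\asid$ and two predicates $p,q$ such that the sequent $p(x) \vdashr q(x)$ --- in which $x$ is the only variable, of sort $\addr$ --- is valid iff the instance admits no tiling.

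A grid is encoded by a single heap ``chain'' visiting its cells row by row: the cell $(i,j)$ is a location pointing to a record whose first field is the tile placed there (a constant of a sort distinct from $\addr$), whose second field is the next cell, and whose remaining (constantly many) fields record the tiles of the cell above $(i,j)$ and of the cell to its left, with boundary constants at the borders. The left predicate $p$ generates \emph{all} such chains whose first row is $r_0$ and whose ``neighbour'' fields are filled in faithfully; faithfulness is ensured by auxiliary predicates $p_1,\dots,p_n$, one per column, whose parameters carry the entire previous row --- this is the only place where $\maxar{\asid}$ needs to be unbounded, since a predicate cannot inspect the part of the heap already built and must therefore transmit the previous row through its arguments. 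Crucially, $p$ performs no compatibility test, so the only genuine choices made while unfolding $p$ are the choices of tiles, and each is recorded in the first field of the cell then allocated; hence distinct rules of the same predicate never overlap, and $\asid$ is \deterministic. The right predicate $q$ is defined so that its models are exactly the chains of this shape that fail to encode a valid tiling --- those on which the tile of some cell is incompatible with a recorded neighbour, or whose last complete row differs from $r_f$. Since $q$ reads the neighbour tiles directly from the record, where they appear as constants, it tests compatibility by a case split over the polynomially many possible record contents, and never has to test a parameter value positively (which the \prulename format forbids); it keeps, in the name of the predicate, two Boolean flags recording whether a violation has been seen and whether the current row matches $r_f$ so far. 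Once more the only real choice, stop or continue at a row boundary, is reflected in the content of the next cell (an end marker versus a tile), so the rules of $q$ are \deterministic too.

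It then remains to check that the models of $p(x)$ that are not models of $q(x)$ are precisely the chains encoding valid tilings of the instance (in particular $q$ need not check that the neighbour fields are correct --- this is guaranteed on the $p$ side --- it only tests local compatibility with the placed tiles), so that $p(x) \vdashr q(x)$ is valid iff the instance has no tiling; combined with the facts that $\asid$, $p$ and $q$ are polynomial-time computable, that $\asid$ is \deterministic, that every predicate reaches a base case (so Assumption~\ref{assume:productive} holds) and that the extra parameters are not of sort $\addr$ (so Assumption~\ref{assume:no_useless_variable} is vacuous on them), this yields the claim. The delicate point --- the one I expect to require the most care --- is reconciling faithfulness with determinism: the previous row must be threaded through the parameters of the $p_j$'s without being overwritten before use, and the ``bad chain'' language assigned to $q$ must be chosen precisely enough to exclude every valid tiling while remaining recognizable by a \emph{deterministic} set of \prulesname of polynomial size.
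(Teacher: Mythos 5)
Your reduction is essentially the paper's own proof in a different guise: the paper reduces directly from polynomial-space Turing-machine acceptance (corridor tiling being its standard combinatorial reformulation), threads the tape contents through predicates of polynomial arity exactly as you thread the previous row through the $p_j$'s, defers all consistency checking to the right-hand side by storing in each allocated record the data needed for a local test (there, the read/expected symbol pair; here, the tile and its recorded neighbours), and obtains determinism on both sides because every nondeterministic choice is written into the record of the cell being allocated. The construction is sound, and the ``delicate point'' you flag is resolved in the paper exactly as you anticipate: distinct rules of the same predicate place distinct constants in their tuples, so they are never unifiable.
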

\begin{proof}
Let ``$w \in E$'' be any problem in \PSPACE.
By definition, there exists a Turing machine 
$M= (Q,\Sigma,B, \Gamma,\delta, q_0, F)$ 
accepting exactly the words in $E$ and a polynomial 
 $\pln$ such that $M$ runs in space $\pln(n)$ on all words $w\in \Sigma^n$. 
The set $Q$ denotes the set of states of $M$, $\Sigma$ is the input alphabet, $B$ is the blank sumbol, $\Gamma$ is the tape alphabet, $\delta$ is the transition function, $q_0$ is the initial state and $F$ is the set of final states.
We shall reduce the problem ``$w \in E$'' to the entailment problem, for a sequent 
fulfilling the conditions above.
Consider a word $w$ of length $n$, and let $N = \pln(n)$.
Assume that $\csts$ contains all the elements in $\Gamma$.
We consider $\card{Q}\cdot N$ predicates $q^i$ of arity $N+3$, for all $q\in Q$ and $i \in \{ 1,\dots,N\}$, associated with the following rules: 
\begin{align*}
	q^i(x,y_1,\dots,y_N,u,v) &\Leftarrow \begin{multlined}[t]
		x \mapsto (x',u,v,a) * p^{i+\mu}(x',y_1,\dots,y_{i-1},b,y_{i+1},\dots,y_N,y_i,a)\\
		{\text{if $q \not \in F$ and $\delta$ contains a rule $(q,a) \rightarrow (p,b,\mu)$}}\\
		{\text{ with $i + \mu \in \{ 1,\dots,N \}$}}\\
	\end{multlined}  \notag \\
	q^i(x,y_1,\dots,y_N,u,v) &\Leftarrow 
	x \mapsto (x,u,v,B) 
	\text{,\ if $q \in F$.}
\end{align*}
%
Intuitively, $q$ is the state of the machine, the arguments $y_1,\dots,y_N$ denote the tape (that is of length $N$ by hypothesis) and
$i$ denotes the position of the head on the tape.
The constants $a,b$ denote the symbols read and written on the tape, respectively, $p$ is the final state of the transition rule and the integer $\mu$ denotes the move, i.e., an element of $\{ -1, 0, +1 \}$, 
 so that $i+\mu$ is the final position of the head on the tape. 
 Note that at this point the inductive rule does not test whether the symbol $a$ is indeed identical to the symbol at position $i$, namely $y_i$. 
 Instead, it merely  stores both $y_i$ and $a$ within the next tuple of the heap, by passing them as parameters to $p^{i+\mu}$. The arguments $u$ and $v$ are used to encode respectively
the symbol read on the tape at the previous state and the symbol that was expected.
By definition of the above rules, it is clear that $q^i(x,y_1,\dots,y_N,B,B)$ holds if the heap is a list of tuples  $(x_j,u_j,v_j,a_j)$, for $j = 1,\ldots,k$, linked on the first argument (the last element loops on itself). 
 The heap encodes a ``candidate run'' of length $k$ of $M$, i.e., a run for which one does not check, when applying a transition 
 $(p,a) \rightarrow (q,b,\mu)$, that the symbol read on the tape is identical to the expected symbol $a$. The symbols $u_i,v_i,a_i$ stored at each node are precisely the symbols that are read ($u_i$) and expected ($v_i$) at the previous step, respectively, along with the symbol $(a_i$) that is expected at the current step (this last symbol is added to ensure that the rules are deterministic). 
Note that for $i = 1$ there is no previous step and for $i = k$ no symbol is read since the state is final; thus by convention, 
$a_k$ is set to $B$ (see the last rule of $q^i$). Furthermore, $u_1,v_1$ will also be set to $B$ by invoking the initial state predicate with $B$ as the last two arguments  (see the definition of the sequent below).
 To check that the list corresponds to an actual run of $M$, it thus suffices to check that $u_i=v_i$ holds for all $i = 1,\dots,k$. 
 
The right-hand side of the sequent will allocate all structures not satisfying this condition. To this purpose, we associate with each state $r\in Q$ two predicate symbols $r_0$ and $r_1$ defined by the following rules 
(where $i,j \in \{ 0,1 \}$): 
\begin{align*}
	r_i(x) & \Leftarrow \begin{multlined}[t]
		 x \mapsto (x',a,b,c) * r_j(x')\\
		{\text{for all $r \not \in F$, $a,b,c \in \Gamma$, where $j = 1$ iff either $i = 1$ or $a \not = b$;}}
	\end{multlined}  \notag \\
	r_i(x)& \Leftarrow x \mapsto (x,a,b,B)
	\text{\quad if $a,b \in \Gamma$, $r \in F$ and either $i = 1$ or $a \not = b$.}
\end{align*}
Intuitively, the index $i$ in predicate $r_i$ is equal to $1$ iff a faulty location has been encountered (i.e., a tuple $(x',a,b)$ with $a \not =  b$).

Note that the number of rules is polynomial w.r.t.\ $N$, since the machine $M$ is fixed.
Also, the obtained set of rules is \deterministic, because $M$ is deterministic and the expected symbol is referred to by the location allocated by each predicate symbol $q^i$, thus the tuples $(x',u,v,a)$ corresponding to distinct rules associated with the same symbol $q^i$ cannot be unifiable.

Let $w_1.\dots,w_N = w.B^{N-n}$ be the initial tape (where $w$ is completed by blank symbols $B$ to obtain a word of length $N$).
It is clear that the sequent $q_0^1(x,w_1,\dots,w_N,B,B) \vdashr r_0(x)$ is valid  iff 
all the ``candidate runs'' of $M$ fulfill the conditions of the right-hand side, i.e., falsify at least one equality between read and expected symbols. Thus $q_0^1(x,w_1,\dots,w_N,B,B) \vdashr r_0(x)$ is valid  iff $M$ does not accept $w$. 
\end{proof}


In view of this result, it is natural to investigate the complexity of the entailment problem when the maximal arity of the predicates is bounded.
However, this is still insufficient to get a tractable problem, as the following lemma shows.

\begin{lemma}
\label{lem:deter}
Checking the validity of sequents $\asheap \vdashr \asheapB$ is co-\NP-hard, even if 
$\maxr{\asid} \leq 4$ (i.e., if the symbols and tuples are of arity at most $4$) 
and $\asid$ is a \deterministic set of \prulesname.
\end{lemma}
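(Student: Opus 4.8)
The natural route is a reduction from the complement of a classical \NP-complete problem — 3-SAT being the obvious candidate. Given a propositional 3-CNF formula $\Phi = C_1 \wedge \dots \wedge C_m$ over variables $p_1,\dots,p_k$, the plan is to encode candidate truth assignments as heaps (linked lists of cells), so that the left-hand predicate of the sequent generates *all* such candidate heaps, and the right-hand predicate captures exactly those candidates that are *not* a satisfying assignment of $\Phi$. Then the sequent $\asheap \vdashr \asheapB$ is valid iff every candidate is non-satisfying, i.e., iff $\Phi$ is unsatisfiable, giving co-\NP-hardness. The bound $\maxr{\asid}\le 4$ forces us to be economical with record fields: each cell can hold only the "next" pointer plus a small constant amount of boolean/clause information, so the encoding must spread the per-clause bookkeeping across the list rather than packing it into a single wide tuple.

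Concretely, I would use a list of $k$ cells, the $i$-th cell encoding the truth value of $p_i$ via a constant $\mathtt{0}$ or $\mathtt{1}$ stored in one field; the remaining fields (we have at most three besides the next-pointer) carry a small piece of "running state" that tracks, as we walk down the list, how the partial assignment relates to the clauses. The key trick for staying within arity $4$ and keeping determinism is to make the *left* rules branch on the stored bit (one rule writes $\mathtt{0}$, one writes $\mathtt{1}$, into distinct tuples, so the rules do not overlap — determinism holds because the constant in the record distinguishes the two rules), and to make the *right* predicate's rules thread the clause-satisfaction information so that the structure is accepted precisely when the assignment read off from the bits fails some clause $C_j$. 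Since the machine/formula $\Phi$ is part of the input rather than fixed, the number of right-hand predicate symbols may be polynomial in $m$ (e.g. one family of symbols indexed by a clause together with a flag recording whether it has been satisfied yet), which is fine for a polynomial-time reduction. The base rules end the list with a $()$ or a self-loop, as in the \PSPACE construction, and the acceptance condition (some clause unsatisfied) is enforced at that base case.

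The main obstacle is the combination of three constraints that pull against each other: (a) arity at most $4$, which severely limits how much clause state can be carried per cell; (b) determinism of the \prulesname, which forbids overlapping rules and therefore forbids the most convenient "guess nondeterministically" gadgets — every branch must be distinguished by the literal constant it writes into the heap; and (c) the \prulename shape itself (exactly one allocated parameter, all non-$\addr$ existentials allowed but $\addr$-existentials must appear as roots of sub-predicates). The design I would settle on is: walk the $k$ variable-cells first with the left predicate guessing each bit, then (or simultaneously, by passing clause-flags as the non-$\addr$ arguments) have a second segment of the structure that the right predicate uses to "audit" the clauses; determinism on the right side is obtained because each audit step's next tuple records which literal value was seen, so distinct right-rules are separated by that constant. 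I would then verify the three \prulename conditions, check $\maxr{\asid}\le 4$ by counting fields, argue the reduction is polynomial, and finally prove the correctness equivalence: a counter-model of the sequent exists iff there is a heap that the left predicate generates (a total assignment) that the right predicate does *not* generate (i.e., that satisfies every clause), which happens iff $\Phi$ is satisfiable. Hence the sequent is valid iff $\Phi$ is unsatisfiable, establishing co-\NP-hardness under all the stated restrictions.
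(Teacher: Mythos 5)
Your overall reduction schema (left-hand side generates all candidate solutions, right-hand side accepts exactly the bad ones, validity iff no good candidate exists) is the right shape, and it is also the shape of the paper's proof. But the paper reduces from the complement of $3$-colorability rather than $3$-SAT, and the difference is not cosmetic: it is exactly what reconciles the two constraints you identify, arity at most $4$ and determinism. The gap in your proposal is the audit phase. If the candidate assignment is written into the heap as a list of $k$ bit-cells produced by inductive branching of the left predicate, the right-hand side is a \emph{single} predicate atom that must hold precisely on the heaps encoding assignments falsifying some clause. ``Some clause is falsified'' is not a local property of one cell: verifying it requires either threading $m$ bits of clause state through the predicate arguments (impossible with arity at most $4$), or guessing up front \emph{which} clause is falsified --- and that guess cannot be made deterministically, because the competing rules of the root predicate would allocate unifiable tuples (the contents of the first cell do not reveal which clause you intend to check), violating Definition~\ref{def:deterministic}. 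Your remark that ``each audit step's next tuple records which literal value was seen'' does not resolve this: in a counter-model the tuple contents are dictated by the left-hand side, not by the clause being audited. The obstruction is in fact information-theoretic: a deterministic, bounded-arity scan of the bit list in a fixed order is a polynomial-size deterministic automaton recognizing the non-models of $\Phi$, and such automata do not exist for general $3$-CNF (the exponential lower bound for ordered branching programs in a fixed variable order).

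The paper sidesteps this in two ways. First, the candidate solution is guessed by the \emph{store} rather than by unfolding: the left-hand side is a fixed separating conjunction of points-to atoms $u_i \mapsto (x_i,y_i,u_{i+1})$ whose data-sorted free variables (the vertices of the graph) range over arbitrary values, so consistency between repeated occurrences of the same vertex is automatic because it is literally the same variable. Second, the source problem is chosen so that a violation is witnessed by a \emph{single} cell --- an edge whose endpoints receive equal colors, or a color outside $\{a,b,c\}$. The right predicate then deterministically scans for the first faulty cell and hands off to a sink predicate $q$ for the rest of the list; its rules are made non-overlapping by disequations between the existential data variables and the color parameters $a,b,c$, which is permitted for \deterministic (though not \hdeterministic) \prulesname. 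If you insist on $3$-SAT you would need to import both ideas --- shared data variables for consistency and a per-cell-local failure condition --- at which point you have essentially reconstructed the paper's argument.
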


\newcommand{\colors}{\mathtt{Colors}}

\begin{proof}
The proof is by a reduction from the complement of the $3$-coloring problem, that is well-known to be NP-complete.
Let $G = (V,E)$ be a graph, where $V = \{ v_1,\dots,v_n \}$ is a finite set of vertices and $E$ is a set of undirected edges, i.e., a set of unordered pairs 
of vertices.
Let 
$\colors = \{ a,b,c \}$ be a set of colors, with $\card{\colors} = 3$.
We recall that a  solution of the $3$-coloring problem is a function $f: V \rightarrow \colors$ such that $(x,y) \in E \implies f(x) \not = f(y)$.
We assume, w.l.o.g., that all vertices occur in at least one edge.
We consider two distinct sorts $\addr$ and $\asort$.
We assume, w.l.o.g., that $V \cup \colors \subseteq \vars_{\asort}$ 
(i.e., $a,b,c$, as well as the set of vertices in $G$, are variables) and $V \cap \colors = \emptyset$.

Let $E = \{ (x_i,y_i) \mid i = 1,\dots,m \}$ (where the edges are ordered arbitrarily) and
let $u_1,\dots,u_{m+1}$ 
be pairwise distinct variables of sort $\addr$.
Let $\asform$ be the formula: $\bigast_{i=1}^m\ u_i \mapsto (x_i,y_i,u_{i+1}) * u_{m+1} \mapsto ()$. 
Let $p$ and $q$ be predicate symbols associated with the following rules:
\[
\begin{array}{lll}
p(u,a,b,c) & \Leftarrow & u \mapsto (v,v,u') * q(u') \\
p(u,a,b,c) & \Leftarrow & u \mapsto (v_1,v_2,u') * v_1 \not \iseq v_2 * v_1 \not \iseq a * v_1 \not \iseq b * v_1 \not \iseq c * q(u') \\
p(u,a,b,c) & \Leftarrow & u \mapsto (d,v_2,u') * v_2 \not \iseq a * v_2 \not \iseq b * v_2 \not \iseq c * q(u') \\
& & \text{\quad for all $d \in \{ a,b,c \}$} \\
 p(u,a,b,c) & \Leftarrow & u \mapsto (d_1,d_2,u') * p(u',a,b,c) \\
& & \text{\quad for all $d_1,d_2 \in \{ a,b,c \}$ where $d_1 \not = d_2$} \\
q(u) & \Leftarrow & u \mapsto (v_1,v_2,u') * q(u') \\
q(u) & \Leftarrow & u \mapsto () 
\end{array}
\]

Intuitively, any model $(\astore,\aheap)$ of  $\asform$ encodes a candidate solution of the 
$3$-coloring problem, where each variable  $z \in \{ x_1,\dots,x_m \} \cup \{ y_1,\dots, y_m \}$  is mapped to an element $\astore(z)$ in $\univ_{\asort}$. The heap $\aheap$ is a list of tuples linked on the last element and containing a tuple  $(\astore(u_i),\astore(x_i),\astore(y_i),\astore(u_{i+1}))$ for all $(x_i,y_i) \in E$. To check that this candidate solution indeed fulfills the required properties, one has to verify that all the pairs $(\astore(x_i),\astore(y_i))$ are composed of distinct elements in $\{ a,b,c \}$.

By definition of the rules for predicate $p$, $(\astore,\aheap)$ is a model of $ p(u_1,a,b,c)$ iff 
the list contains a pair $(\astore(x),\astore(y))$ such that one of the following holds:
\begin{itemize}
	\item $\astore(x) = \astore(y)$ (first rule of $p$),
	\item $\astore(x) \not = \astore(y)$ and $\astore(x) \not \in \{ \astore(a),\astore(b),\astore(c) \}$ (second rule of $p$),
	\item $\astore(x) \in \{ \astore(a),\astore(b),\astore(c) \}$ and $\astore(y) \not \in \{ \astore(a),\astore(b),\astore(c) \}$ (third rule of $p$).
\end{itemize}   
After the cell corresponding to this faulty pair is allocated, $q$ is invoked to allocate the remaining part of the list.
Thus $p(u_1,a,b,c)$ holds iff the model does {\em not} encode a solution of the $3$-coloring problem, either because  $\astore(x_i) = \astore(y_i)$ for some $(x_i,y_i)\in E$ or 
because one of the variables is mapped to an element distinct from $a,b,c$ -- note that by the above assumption, each of these variables 
occurs in the list.
Consequently $\asform \vdashrV{} p(u_1,a,b,c)$ admits a counter-model iff 
there exists a model of $\asform$ that does not satisfy $p(u_1,a,b,c)$, i.e., iff 
the $3$-coloring problem admits a solution  (thus the entailment is valid iff the $3$-coloring problem admits no solution).
\end{proof}


The results above motivate the following definition, that strengthens the notion of a deterministic set of rules. 

\begin{definition}
A set of \prulesname $\asid$ 
is {\em \hdeterministic}
if it is \deterministic and all the disequations occurring in the rules in $\asid$ are of the form $x \not \iseq y$ with $x,y \in \vars_\addr$.
\end{definition}

The intuition behind \hdeterministic rules is that, to get an efficient proof procedure, we have to restrict the amount of equational reasoning needed to establish the validity of the sequents.
Disequations between locations are relatively easy to handle because (by definition of \prules) all existential variables of sort $\addr$ must be pairwise distinct (as they are allocated in distinct atoms). However, dealing with disequations between data is much more difficult, as evidenced by the proof of Lemma \ref{lem:deter}. Thus we restrict such disequations to those occurring in the initial sequent.


\newcommand{\listc}{\mathtt{cons}}
The rules associated with $\als$, $\tree$, $\treeB$, $\treeC$ or $\dll$ 
in the introduction and in Example~\ref{ex:als} are \hdeterministic. 
In contrast, the following rules are \deterministic, but not \hdeterministic (where $u,v$ denote variables of some sort distinct from $\addr$):
\[
\begin{array}{llllll}
p(x,u) & \Leftarrow & x \mapsto (v) \swedge v \not \iseq u \quad  &
\quad p(x,u) & \Leftarrow & x \mapsto (u)
\end{array}
\]
Rules that are \hdeterministic are well-suited to model 
constructor-based data structures used in standard programming languages; for instance,  lists could be represented as follows (where $\listc$ is a constant symbol denoting a constructor and $y$ is a variable of some sort distinct from $\addr$, denoting data stored in the list):
\[ 
\begin{array}{llllll}
\ls(x) \Leftarrow & x \mapsto (\listc,y,z) * \ls(z) \quad & \quad
\ls(x) \Leftarrow & x \mapsto () 
\end{array}
\]
We end this section by establishing a key property of \deterministic set of rules, namely the fact that every spatial formula $\asform$ is {\em precise}, in the sense of \cite{DBLP:conf/lics/CalcagnoOY07}:
 it is fulfilled on at most one subheap within a given structure.

\begin{lemma}
\label{lem:unique}
Let $\asid$ be a \deterministic set of rules.
For every spatial formula $\asform$, for every store $\astore$ and for every heap $\aheap$ there exists at most one heap $\aheap'$ such that $\aheap' \subseteq \aheap$ and
$(\astore,\aheap') \modelsr \asform$.
\end{lemma}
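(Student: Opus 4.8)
The plan is to proceed by structural induction on the spatial formula $\asform$, with a simultaneous strengthening of the statement to also control which \emph{locations} are allocated by $\aheap'$ (this is needed to make the separating-conjunction case go through). The key observation driving everything is that in a \prule the root of the allocated cell is syntactically determined: a points-to atom $x \mapsto (t_1,\dots,t_k)$ allocates precisely $\astore(x)$, and a predicate atom $p(x,\vec t)$, when unfolded via any rule, allocates a cell whose root is again $\astore(x)$ (by Definition~\ref{def:simple}, every \prule has its first parameter as the unique root of its points-to atom). So before running the induction I would first prove an auxiliary claim: if $(\astore,\aheap') \modelsr \asform$ with $\asform = \aexp_1 * \dots * \aexp_n$, then $\aheap'$ is the disjoint union of heaps $\aheap'_1,\dots,\aheap'_n$ where $\aheap'_j$ is a model of $\aexp_j$, and moreover every location in $\dom{\aheap'}$ is reachable in $\aheap'$ from some $\astore(\rootof{\aexp_j})$. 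This is immediate from the semantics of $*$ together with progress/connectivity of \prules.

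First I would treat the atomic cases. If $\asform = \emp$, then $\aheap' = \emptyset$ is forced. If $\asform = x \mapsto (t_1,\dots,t_k)$, then $\aheap'$ is forced to be the singleton $\{(\astore(x),\astore(t_1),\dots,\astore(t_k))\}$, so uniqueness is trivial. The interesting atomic case is $\asform = p(t_1,\dots,t_n)$: suppose $\aheap'_1, \aheap'_2 \subseteq \aheap$ both satisfy it. Each is obtained by unfolding $p$ with some \prule, say $p(t_1,\dots,t_n) \unfold \asheapB_i$ with $(\astore_i,\aheap'_i) \modelsr \asheapB_i$ for suitable associates $\astore_i$ of $\astore$. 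Since both allocate the cell rooted at $\astore(t_1)$, and that cell is unique in $\aheap$, the two unfoldings must write the \emph{same} tuple at $\astore(t_1)$, and their pure constraints must both be satisfied (by $\astore_1$ and $\astore_2$ respectively, which agree with $\astore$ on the rule parameters $t_1,\dots,t_n$). Determinism (Definition~\ref{def:deterministic}) then forces the two rules to be the \emph{same} rule $p(\vec x) \Leftarrow (y\mapsto \vec t * \asform') \swedge \apform$: otherwise $\vec x_1 \iseq \vec x_2 \wedge \vec t_1 \iseq \vec t_2 \wedge \apform_1 \wedge \apform_2$ would be satisfied by a store built from $\astore_1,\astore_2$ mapping the two copies of $\vec x$ to $\astore(t_1),\dots,\astore(t_n)$, contradicting unsatisfiability. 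Once the rule is fixed, the remaining subformula $\asform'$ (a separating conjunction of predicate atoms, with arguments among $t_1,\dots,t_n$ and the existential locations, which are the cells $\ell_1,\dots,\ell_m$ referenced by the already-determined tuple, and these are uniquely determined since the tuple is) is handled by the induction hypothesis applied within the heap $\aheap' \setminus \{(\astore(t_1),\dots)\}$: the existential-location arguments of the predicate atoms in $\asform'$ are pinned down by the tuple just allocated, so the stores $\astore_1$ and $\astore_2$ agree on all variables relevant to $\asform'$, and IH gives $\aheap'_1 \setminus \{\cdot\} = \aheap'_2 \setminus \{\cdot\}$, whence $\aheap'_1 = \aheap'_2$.

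Finally, the composite case $\asform = \asform_1 * \asform_2$: given two decompositions of candidate subheaps, use the auxiliary claim to see that the part of $\aheap'$ satisfying $\asform_1$ consists exactly of the cells reachable from the roots of the atoms of $\asform_1$ — but since $\asform_1$ itself is, up to reordering, a prefix of the atom list, and each atom's root is determined by $\astore$, the split of the atoms is forced, hence by IH each piece is unique and so is their union. The main obstacle I anticipate is bookkeeping the existential variables correctly in the predicate-atom case: one must argue that after the (unique) top-level tuple of an unfolding is fixed, \emph{all} existential variables of sort $\addr$ of the applied rule get determined values (they are exactly the fresh roots appearing in that tuple, by conditions 2 and 3 of Definition~\ref{def:simple}), so that the induction hypothesis can legitimately be invoked with a store on which the two competing associates coincide. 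Making this precise — essentially re-deriving the "\pathcompatible"-style reachability structure of models of \prules — is the only genuinely technical point; everything else is a direct unwinding of the semantics plus the determinism hypothesis.
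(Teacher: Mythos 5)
Your skeleton is the paper's: an induction whose atomic cases are forced, and a predicate-atom case in which (i) both unfoldings must write the \emph{same} tuple at the root because both candidate heaps are subheaps of the common ambient heap $\aheap$, (ii) determinism then excludes the use of two distinct rules, and (iii) once the rule is fixed, all existential variables of the rule body occur in the allocated tuple (by the definition of {\prules}), so the two associate stores coincide and the induction hypothesis closes the case. That is exactly how the paper handles the only genuinely hard case, and your determinism step is correct.

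Two things need repair. First, the induction cannot be \emph{structural on $\asform$}: in the predicate case you apply the hypothesis to the body $\asform'$ of the unfolded rule, which is not a subformula of $p(\vec{t})$ and may be larger. The paper inducts on the satisfiability relation (i.e., on the derivation of $(\astore,\aheap')\modelsr\asform$); alternatively one can pair the formula with $\card{\dom{\aheap'}}$, which strictly decreases when the root cell is peeled off. Second, and more substantively, your treatment of $\asform_1*\asform_2$ rests on a false claim: it is not true that the subheap satisfying $\asform_1$ consists \emph{exactly} of the cells reachable from the roots of the atoms of $\asform_1$. For $\als(x,y)*\als(y,z)$ (rules of Example~\ref{ex:als}), every cell of the second segment is reachable from $\astore(x)$ through $\astore(y)$, so reachability from the roots of $\asform_1$ overshoots into the part belonging to $\asform_2$. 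Fortunately no such argument is needed, and this is precisely why the lemma is phrased relative to an ambient heap $\aheap$ rather than as uniqueness of the model of $\asform$ on a given store: if $\aheap'_i=\aheap_i^1\union\aheap_i^2$ with $(\astore,\aheap_i^j)\modelsr\asform_j$ for $i,j\in\{1,2\}$, then all four heaps $\aheap_i^j$ are subheaps of the \emph{same} $\aheap$, so the induction hypothesis for $\asform_j$ with ambient heap $\aheap$ immediately gives $\aheap_1^j=\aheap_2^j$, hence $\aheap'_1=\aheap'_2$. Your auxiliary strengthening and the reachability analysis can simply be deleted; the same remark fixes the slight imprecision in your predicate case, where the induction hypothesis should be invoked with ambient heap $\aheap$ (common to both candidates) rather than ``within $\aheap'\setminus\{\cdot\}$'', since the two candidate remainders live in different heaps $\aheap'_1$ and $\aheap'_2$.
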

\begin{proof}
The proof is by induction on the satisfiability relation $\modelsr$. 
Note that by hypothesis $\asform$ is a spatial formula, hence contains no occurrences of $\iseq$, $\not \iseq$, $\swedge$ or $\wedge$.
Assume that there exist two heaps $\aheap_1', \aheap_2'$ such that $\aheap_i' \subseteq \aheap$  and $(\astore,\aheap_i') \modelsr \asform$ (for $i = 1,2$). We show that $\aheap_1' = \aheap_2'$.
\begin{itemize}
\item{If $\asform = \emp$ then necessarily $\aheap_i = \emptyset$ for $i = 1,2$ thus $\aheap_1' = \aheap_2'$.}
\item{If $\asform = y_0 \mapsto (y_1,\dots,y_n)$ then by Definition \ref{def:rmodel} we have $\aheap_i' = \{ (\astore(y_0),\dots,\astore(y_n)) \}$ for $i = 1,2$ thus $\aheap_1' = \aheap_2'$.}
\item{If $\asform = \asform_1 * \asform_2$ then for all $i = 1,2$ there exist two disjoint heaps $\aheap_i^j$ (for $j = 1,2$) such that $\aheap_i' = \aheap_i^1 \union \aheap_i^2$ for $i = 1,2$ and $(\astore,\aheap_i^j) \modelsr \asform_j$, for  $i,j \in \{ 1,2 \}$. Since $\aheap_i^j \subseteq \aheap_i' \subseteq \aheap$ we get by the induction hypothesis 
 $\aheap_1^j = \aheap_2^j$ for $j = 1,2$. Therefore $\aheap_1^1 \union \aheap_1^2 = \aheap_2^1 \union \aheap_2^2$, i.e., $\aheap_1' = \aheap_2'$.}
 \item{If $\asform$ is a predicate atom of root $x$, then for  $i = 1,2$
 we have $\asform \unfold \asheap_i$, 
 and there exists an \namedextension{\astore_i}{\astore}{\vars(\asformB_i) \setminus \vars(\asform)}
 such that $(\astore_i,\aheap_i') \modelsr \asheap_i$.
 Since  $\asid$ is a set of \prulesname, 
 $\asheap_i$ 
 is of the form 
 $(x_i \mapsto \vec{y}_i * \asform_i) \swedge \apform_i$ and there exist disjoint heaps 
 $\aheap_i^j$ (for $j = 1,2$) such that the following conditions are satisfied:
 \begin{inparaenum}[(i)]
 \item{$\aheap_i' = \aheap_i^1 \union \aheap_i^2$;}
 \item{
  $(\astore_i,\aheap_i^1) \modelsr x_i \mapsto (\vec{y}_i)$;}
  \item{
 $(\astore_i,\aheap_i^2) \modelsr \asform_i$;}
 \item{
 and $\astore_i \models \apform_i$.}
 \end{inparaenum}
Furthermore, $x_i$ must be the root of $\asform$, thus $x_1 = x_2 = x$.
 For  $i = 1, 2$ we have $\aheap_i^1 = \{ (\astore(x_i),\astore_i(\vec{y}_i)) \}$, and since $\aheap_i^1 \subseteq \aheap$, necessarily $\astore_1(\vec{y}_1) = \astore_2(\vec{y}_2)$ and $\aheap_1^1 = \aheap_2^1$.
 The heap $\aheap_i^2$ is the restriction of $\aheap_i'$ to the locations distinct from $\astore(x)$.
We distinguish two cases.
\begin{itemize}
\item{Assume that the inductive rules applied on $\asform$ to respectively derive $\asheap_1$ and $\asheap_2$ are different.
We may assume by $\alpha$-renaming that 
 $(\vars(\asheap_1) \setminus \vars(\asform)) \cap (\vars(\asheap_2)  \setminus \vars(\asform)) = \emptyset$, which entails that there exists a store $\astore'$ that coincides with $\astore_i$ on $\vars(\asheap_i)$ (since $\astore_1$ and $\astore_2$ coincide on the variables in $\vars(\asform)$).
 Then  we have $\astore' \models \vec{y}_1 \iseq \vec{y}_2$ (since  $\astore_1(\vec{y}_1) = \astore_2(\vec{y}_2)$) and $\astore' \models \apform_i$ (since $\astore_i \models \apform_i$), which entails that the formula $\vec{y}_1 \iseq \vec{y}_2 \wedge \apform_1 \wedge \apform_2$ is satisfiable, contradicting the fact that $\asid$ is \deterministic.
  }
  \item{Assume that the same rule is used to derive both $\asheap_1$ and $\asheap_2$.
  We may assume in this case (again by $\alpha$-renaming) that the vector of variables occurring in $\asheap_1$ and $\asheap_2$ are the same, so that $\vec{y}_1 = \vec{y}_2$ and $\asform_1 = \asform_2$.
  Since $\asid$ is a set of \prulesname, all variables $z$ in $\vars(\asform_i) \setminus \vars(\asform)$ occur in $\vec{y}_i$. As $\astore_1(\vec{y}_1) = \astore_2(\vec{y}_2)$, this entails that $\astore_1(z) = \astore_2(z)$ holds for all such variables, thus $\astore_1 = \astore_2$. 
  Consequently, $(\astore_1,\aheap_i^2) \modelsr \asform_1$, for all $i = 1,2$ with 
  $\aheap_i^2 \subseteq \aheap_i' \subseteq \aheap$. By the induction hypothesis this entails
  that $\aheap_1^2 = \aheap_2^2$, thus $\aheap_1' = \aheap_2'$.}
\end{itemize}}
\end{itemize}
\end{proof}

\begin{example}
Lemma \ref{lem:unique} does not hold if the rules are not \deterministic. For instance, the formula $\ls(x,y)$ (with the rules given in the introduction) has two models $(\astore, \aheap)$ and $(\astore, \aheap')$ where $\aheap'$ is a strict subheap of $\aheap$: $\astore(x)= \ell_1$, $\astore(y)= \ell_2$, 
$\aheap = \{ \ell_1 \mapsto (\ell_2), \ell_2 \mapsto (\ell_2) \}$ and
$\aheap' = \{ \ell_1 \mapsto (\ell_2) \}$. Intuitively, the formula $\ls(y,y)$ (which is useful to derive $\ls(x,y)$) can be derived by any of the two rules of $\ls$, yielding two different models. In contrast $\als(x,y)$ (with the rules of Example \ref{ex:als}) has only one model with the store $\astore$ and a heap included in $\aheap$, namely $(\astore,\aheap')$.
\end{example}

 \section{Proof Procedure}
 
 \label{sect:proof}
 
From now on, we consider a fixed \hdeterministic set of \prulesname $\asid$, satisfying Assumptions \ref{assume:productive} and \ref{assume:no_useless_variable}. 
 For technical convenience, we also assume that $\asid$ is nonempty and that every constant in $\csts$ occurs in a rule in $\asid$.
   
 \subsection{Some Basic Properties of {\tt P}-Rules}

 \newcommand{\heapunsat}{heap-unsatisfiable\xspace}
\newcommand{\heapsat}{heap-satisfiable\xspace}

\newcommand{\roots}[1]{\mathit{roots}(#1)}

\newcommand{\alloc}[1]{\mathit{alloc}(#1)}
 

We begin by introducing some definitions and deriving straightforward consequences of the definition of \prulesname.
We shall denote by $\alloc{\asheap}$ the multiset of variables allocated by a formula $\asheap$:
\begin{definition}
\label{def:alloc}
For every formula 
$\asheap$, we denote by 
$\alloc{\asheap}$ the multiset of variables $x$
such that $\asheap$ contains a spatial atom with root $x$.
\end{definition}
Lemma \ref{lem:alloc} states that the variables in $\alloc{\asheap}$ are necessarily 
allocated in every model of $\asheap$, which entails (Corollary \ref{cor:heapunsat})  that they must be associated with pairwise distinct locations. Moreover, a formula distinct from $\emp$ has at least one root, hence allocates at least one variable (Corollary \ref{cor:noemp}).
\begin{lemma}
\label{lem:alloc}
Let $\asheap$ be a formula.
If $(\astore,\aheap) \modelsr \asheap$ and $x\in \alloc{\asheap}$ then 
$\astore(x) \in \dom{\aheap}$.
\end{lemma}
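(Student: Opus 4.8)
The plan is to prove Lemma~\ref{lem:alloc} by induction on the satisfiability relation $\modelsr$, following the case analysis in Definition~\ref{def:rmodel}. Fix a formula $\asheap$, a structure $(\astore,\aheap)$ with $(\astore,\aheap)\modelsr\asheap$, and a variable $x\in\alloc{\asheap}$; by Definition~\ref{def:alloc} this means $\asheap$ contains a spatial atom $\anatom$ with $\rootof{\anatom}=x$. I want to show $\astore(x)\in\dom{\aheap}$.

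The pure-formula cases ($\asheap$ of the form $t\iseq s$ or $t\not\iseq s$) and the $\emp$ case are vacuous, since then $\asheap$ has no spatial atom and $\alloc{\asheap}=\emptyset$. If $\asheap = x\mapsto(t_1,\dots,t_k)$, then the only spatial atom has root $x$, and by Definition~\ref{def:rmodel} we have $\aheap=\{(\astore(x),\astore(t_1),\dots,\astore(t_k))\}$, so $\astore(x)\in\dom{\aheap}$. For the conjunctive cases $\asheap=\asheap_1\wedge\asheap_2$ or $\asheap=\asheap_1\swedge\asheap_2$, note that spatial atoms may only occur in a spatial subformula, so $\anatom$ occurs in some $\asheap_i$, hence $x\in\alloc{\asheap_i}$; since $(\astore,\aheap)\modelsr\asheap_i$, the induction hypothesis gives $\astore(x)\in\dom{\aheap}$. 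For $\asheap=\asheap_1 * \asheap_2$, the atom $\anatom$ lies in some $\asheap_i$, so $x\in\alloc{\asheap_i}$; there are disjoint heaps $\aheap_1,\aheap_2$ with $\aheap=\aheap_1\union\aheap_2$ and $(\astore,\aheap_i)\modelsr\asheap_i$, and by the induction hypothesis $\astore(x)\in\dom{\aheap_i}\subseteq\dom{\aheap}$.

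The one case requiring a little care is the predicate atom $\asheap=p(t_1,\dots,t_n)$. Here $\asheap$ is itself a single spatial atom whose root, by definition of $\rootof{\cdot}$ for predicate atoms, is $t_1$ (a variable of sort $\addr$); so $x\in\alloc{\asheap}$ forces $x=t_1$. By Definition~\ref{def:rmodel} there is a symbolic heap $\asheapB$ with $p(t_1,\dots,t_n)\unfold\asheapB$ and an associate $\astore'$ of $\astore$ w.r.t.\ $\vars(\asheapB)\setminus\vars(\asheap)$ such that $(\astore',\aheap)\modelsr\asheapB$. Since $\asid$ is a set of \prulesname, $\asheapB$ has the shape $t_1\mapsto(\dots) * \dots\swedge\apform$, and in particular $t_1\in\alloc{\asheapB}$. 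Applying the induction hypothesis to $\asheapB$ and the store $\astore'$ yields $\astore'(t_1)\in\dom{\aheap}$; and because $t_1\in\vars(\asheap)$, the associate $\astore'$ agrees with $\astore$ on $t_1$, so $\astore(t_1)=\astore'(t_1)\in\dom{\aheap}$, i.e.\ $\astore(x)\in\dom{\aheap}$. The only subtlety, then, is to notice that the unfolded body of a \prulename always allocates the root, which is immediate from Definition~\ref{def:simple}; everything else is a routine propagation of the induction hypothesis through the connectives.
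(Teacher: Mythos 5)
Your proof is correct and rests on the same key observation as the paper's: since every \prulename body contains a points-to atom rooted at the predicate's first parameter, the unfolded body allocates the root. The paper packages this as a direct argument (decompose $\asheap$ into the atom with root $x$ times the rest, then unfold once if the atom is a predicate atom), whereas you run a full induction on the satisfiability relation; the difference is purely presentational.
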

\begin{proof}
By hypothesis, $\asheap$ is of the form 
$(\anatom * \asform) \swedge \apform$ 
where $\anatom$ is a spatial atom with root $x$.
Thus $(\astore,\aheap) \models \anatom * \asform$ and 
there exist disjoint heaps $\aheap_1,\aheap_2$ such that
$(\astore,\aheap_1) \modelsr \anatom$,
$(\astore,\aheap_2) \modelsr \asform$,
and $\aheap = \aheap_1 \union \aheap_2$.
Since the root of $\anatom$ is $x$, $\anatom$ is either of the form 
$x \mapsto \vec{y}$ or of the form 
$p(x,\vec{y})$ where $p \in \preds$.
In the former case, it is clear that $\astore(x) \in \dom{\aheap_1} \subseteq \dom{\aheap}$ since 
$(\astore,\aheap_1) \modelsr \anatom$. In the latter case, we have $\anatom \unfold \asheap'$ and 
$(\astore',\aheap) \modelsr \asheap'$, where $\astore'$ is an \extension{\astore}{\vars(\asheap') \setminus \vars(\anatom)}.
Since $\asid$ is a set of \prulesname,  necessarily 
$\asheap'$ contains a points-to atom of the form $x \mapsto \vec{z}$, which entails that 
$\astore'(x) \in \dom{\aheap_1}$, hence
$\astore(x) \in \dom{\aheap}$.
\end{proof}

\begin{corollary}
\label{cor:heapunsat}
Let $\asheap$ be a formula and let $(\astore,\aheap)$ be an \rmodel of $\asheap$.
If $\{ x,y \} \subseteqm \alloc{\asheap}$ then $\astore(x) \not = \astore(y)$.
In particular, if $\alloc{\asheap}$ contains two occurrences of the same variable $x$ 
then $\asheap$ is unsatisfiable.
\end{corollary}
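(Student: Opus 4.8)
The plan is to derive Corollary~\ref{cor:heapunsat} directly from Lemma~\ref{lem:alloc} together with the structural definition of the satisfiability relation for the separating conjunction. First I would fix a formula $\asheap$ and an \rmodel\ $(\astore,\aheap)$ of $\asheap$, and suppose $\{x,y\} \subseteqm \alloc{\asheap}$, i.e., $\asheap$ contains two distinct spatial atoms $\anatom_1$ and $\anatom_2$ with roots $x$ and $y$ respectively (here "distinct" means distinct occurrences: even if $x = y$ syntactically, these are two occurrences since the multiset $\alloc{\asheap}$ contains both). Writing $\asheap$ as $(\anatom_1 * \anatom_2 * \asform) \swedge \apform$ modulo associativity and commutativity of $*$, the semantics of $*$ gives pairwise disjoint heaps $\aheap_1, \aheap_2, \aheap_3$ with $\aheap = \aheap_1 \union \aheap_2 \union \aheap_3$ and $(\astore,\aheap_1) \modelsr \anatom_1$, $(\astore,\aheap_2) \modelsr \anatom_2$.

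Next I would apply Lemma~\ref{lem:alloc} to each of the sub-structures: since $x$ is the root of $\anatom_1$ we get $\astore(x) \in \dom{\aheap_1}$, and similarly $\astore(y) \in \dom{\aheap_2}$. (Strictly, Lemma~\ref{lem:alloc} is stated for $\models_\asid$ applied to a formula with a named root; one reads it with $\asheap := \anatom_1$, whose allocated multiset is $\{x\}$.) Because $\aheap_1$ and $\aheap_2$ are disjoint, $\dom{\aheap_1} \cap \dom{\aheap_2} = \emptyset$, and therefore $\astore(x) \neq \astore(y)$. This proves the first assertion.

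For the second assertion, suppose $\alloc{\asheap}$ contains two occurrences of the same variable $x$, i.e., $\{x,x\} \subseteqm \alloc{\asheap}$, and assume for contradiction that $\asheap$ is satisfiable, so it has an \rmodel\ $(\astore,\aheap)$. Applying the first part with $y := x$ yields $\astore(x) \neq \astore(x)$, a contradiction; hence $\asheap$ is unsatisfiable.

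There is no real obstacle here: the argument is a routine unfolding of the semantics of $*$ plus an invocation of Lemma~\ref{lem:alloc}. The only point requiring a little care is the bookkeeping of multiset occurrences versus syntactic identity of variables — one must be careful that "$\{x,y\} \subseteqm \alloc{\asheap}$" refers to two genuine occurrences of spatial atoms (so the decomposition into $\anatom_1 * \anatom_2 * \asform$ is legitimate) and that the case $x = y$ syntactically is exactly what gives the second, unsatisfiability statement. Everything else is immediate from disjointness of the heaps in the interpretation of the separating conjunction.
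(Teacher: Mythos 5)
Your proof is correct and follows essentially the same route as the paper's: decompose $\asheap$ as $(\anatom_1 * \anatom_2 * \asform) \swedge \apform$ modulo AC, obtain disjoint heaps validating $\anatom_1$ and $\anatom_2$, apply Lemma~\ref{lem:alloc} to each to place $\astore(x)$ and $\astore(y)$ in disjoint domains, and conclude. The handling of the multiset bookkeeping and the reduction of the second assertion to the first with $y := x$ are exactly as intended.
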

\begin{proof}
By definition, $\asheap$ is of the form $(\anatom_1 * \anatom_2 * \asform) \swedge \apform$, where $\anatom_1$ and $\anatom_2$ are 
spatial atoms of roots $x$ and $y$, respectively, with  $\alloc{\anatom_1} = \{ x \}$ and $\alloc{\anatom_2} = \{ y \}$. 
If $\asheap$ admits a model $(\astore,\aheap)$, then there exists disjoint heaps 
$\aheap_1,\aheap_2,\aheap'$ such that $\aheap = \aheap_1 \union \aheap_2 \union \aheap'$,
$(\astore,\aheap_i) \modelsr \anatom_i$ (for $i = 1,2$)
and $(\astore,\aheap') \modelsr \asform$.
By Lemma~\ref{lem:alloc} we have $\astore(x) \in \dom{\aheap_1}$
and $\astore(y) \in \dom{\aheap_2}$, thus $\astore(x) \not = \astore(y)$ since 
$\aheap_1$ and $\aheap_2$ are disjoint.
\end{proof}
 \begin{corollary}
 \label{cor:noemp}
 Let $\asform$ be a spatial formula and let $(\astore,\aheap)$ be an  \rmodel of $\asform$.
If $\asform \not =\emp$ then $\aheap \not = \emptyset$.
 \end{corollary}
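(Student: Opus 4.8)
The plan is to prove Corollary~\ref{cor:noemp} as a direct consequence of Lemma~\ref{lem:alloc} together with Corollary~\ref{cor:heapunsat}, essentially by reducing the statement about the heap being nonempty to a statement about some variable being allocated.

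First I would observe that since $\asform$ is a spatial formula, it is of the form $\aexp_1 * \dots * \aexp_n$ with each $\aexp_i$ a spatial atom. The hypothesis $\asform \neq \emp$ means $n \geq 1$, so $\asform$ contains at least one spatial atom $\aexp_1$. By the definition of $\rootof{\aexp_1}$ (every spatial atom, whether a points-to atom or a predicate atom, has a root), the variable $x \isdef \rootof{\aexp_1}$ is well-defined and belongs to $\alloc{\asform}$ by Definition~\ref{def:alloc}. Then I would invoke Lemma~\ref{lem:alloc}: since $(\astore,\aheap) \modelsr \asform$ and $x \in \alloc{\asform}$, we get $\astore(x) \in \dom{\aheap}$. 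In particular $\dom{\aheap} \neq \emptyset$, hence $\aheap \neq \emptyset$.

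There is essentially no obstacle here; the only point requiring a moment's care is that Lemma~\ref{lem:alloc} is stated for arbitrary formulas $\asheap$ (symbolic heaps) while here we apply it to a spatial formula $\asform$, which is fine since a spatial formula is a symbolic heap with trivial pure part ($\asform = \asform \swedge \true$), and the satisfiability relation treats it accordingly. Alternatively, one could note that $\alloc{\asform}$ being a nonempty multiset, Corollary~\ref{cor:heapunsat} already guarantees that every model allocates the relevant variable, but going through Lemma~\ref{lem:alloc} directly is the cleanest route. I would therefore write the proof in two sentences: extract a root from $\asform$, apply Lemma~\ref{lem:alloc}.

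\begin{proof}
Since $\asform$ is a spatial formula with $\asform \not = \emp$, it contains at least one spatial atom, hence has a root $x$, so that $x \in \alloc{\asform}$. By Lemma~\ref{lem:alloc}, $\astore(x) \in \dom{\aheap}$, and therefore $\aheap \not = \emptyset$.
\end{proof}
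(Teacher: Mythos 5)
Your proof is correct and follows exactly the paper's argument: extract a spatial atom from $\asform \neq \emp$, note its root lies in $\alloc{\asform}$, and apply Lemma~\ref{lem:alloc} to conclude $\dom{\aheap} \neq \emptyset$. No issues.
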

 \begin{proof}
Since $\asform \not = \emp$, necessarily $\asform$ contains at least one atom $\anatom$, thus 
$\rootof{\anatom} \in \alloc{\asform}$. Then the result follows immediately from Lemma~\ref{lem:alloc}.
 \end{proof}
Corollary \ref{cor:heapunsat} motivates the following definition, which provides a simple syntactic criterion to identify some formulas that cannot be satisfiable, due to the fact that the same variable is allocated twice.
\begin{definition}
A formula $\asheap$ is {\em \heapunsat} if 
$\alloc{\asheap}$ contains  two occurrences of the same variable.
Otherwise, it is {\em \heapsat}.	
\end{definition}



The next proposition states that every location that is referred to in the heap of some model of $\asheap$ must be reachable from one of the roots of $\asheap$. This follows from the fact that, by definition of \prules, the set of allocated locations has a tree-shaped structure: the root of each atom invoked in an inductive rule must be connected to the location allocated by the rule (see Condition~\ref{it:prog2} in Definition~\ref{def:simple}).

 \begin{proposition}
 \label{prop:connect}
 Let $\asheap$ be a symbolic heap and let 
 $(\astore,\aheap)$ be an  \rmodel of $\asheap$.
 For every $\ell \in \locs{\aheap}$, there exists $x\in \alloc{\asheap}$ such that $\astore(x) \connect{\aheap}^* \ell$.
 \end{proposition}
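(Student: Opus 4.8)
The plan is to prove Proposition~\ref{prop:connect} by induction on the satisfiability relation $\modelsr$, following the same pattern as the proofs of Lemmas~\ref{lem:alloc} and~\ref{lem:unique}. Fix a symbolic heap $\asheap$ and an \rmodel $(\astore,\aheap)$, and let $\ell \in \locs{\aheap}$, so by definition of $\locs{\aheap}$ there is a location $\ell_0 \in \dom{\aheap}$ with $\aheap(\ell_0) = (\ell_1,\dots,\ell_n)$ and $\ell = \ell_j$ for some $j \in \{0,\dots,n\}$. First I would reduce to the spatial part: writing $\asheap = \asform \swedge \apform$, the pure part $\apform$ plays no role since $(\astore,\aheap) \modelsr \asform$ and $\alloc{\asheap} = \alloc{\asform}$, so it suffices to argue for spatial formulas.

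Next I would do the case analysis on the structure of $\asform$. If $\asform = \emp$ then $\aheap = \emptyset$ and $\locs{\aheap} = \emptyset$, so there is nothing to prove. If $\asform = x \mapsto (t_1,\dots,t_k)$ then $\aheap = \{(\astore(x),\astore(t_1),\dots,\astore(t_k))\}$, so $\ell_0 = \astore(x)$ and $\ell = \astore(t_j)$ for some $j$; then $x \in \alloc{\asform}$ and $\astore(x) \connect{\aheap}^* \ell$ (either $\ell = \astore(x)$, giving the reflexive case, or $\astore(x) \connect{\aheap} \ell$ directly). If $\asform = \asform_1 * \asform_2$, then $\aheap = \aheap_1 \union \aheap_2$ with $(\astore,\aheap_i) \modelsr \asform_i$; the location $\ell_0 \in \dom{\aheap}$ lies in $\dom{\aheap_i}$ for exactly one $i$, and since $\aheap_i(\ell_0) = \aheap(\ell_0)$, we get $\ell \in \locs{\aheap_i}$, so by the induction hypothesis there is $x \in \alloc{\asform_i} \subseteq \alloc{\asform}$ with $\astore(x) \connect{\aheap_i}^* \ell$, and by Proposition~\ref{prop:subheap_connect} (applied to $\aheap_i \subseteq \aheap$) we conclude $\astore(x) \connect{\aheap}^* \ell$. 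The crucial case is the predicate atom $\asform = p(t_1,\dots,t_n)$: here $p(t_1,\dots,t_n) \unfold \asheapB$ and $(\astore',\aheap) \modelsr \asheapB$ for some associate $\astore'$ of $\astore$ w.r.t.\ $\vars(\asheapB)\setminus\vars(\asform)$. Since $\asid$ is a set of \prulesname, $\asheapB$ has the form $t_1 \mapsto (y_1,\dots,y_k) * q_1(z_1,\vec u_1) * \dots * q_m(z_m,\vec u_m) \swedge \apform'$, and by the induction hypothesis there is $v \in \alloc{\asheapB}$ with $\astore'(v) \connect{\aheap}^* \ell$. Now $\alloc{\asheapB} = \{t_1, z_1,\dots,z_m\}$; if $v = t_1$ then $\astore'(t_1) = \astore(t_1) = \astore(\rootof{\asform})$ and $\rootof{\asform} = $ the root of $p(t_1,\dots,t_n)$, which is $t_1$ itself, lies in $\alloc{\asform}$, and we are done. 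If $v = z_i$ for some $i$, then by Condition~\ref{it:prog2} of Definition~\ref{def:simple}, $z_i \in \{y_1,\dots,y_k\}$, so $z_i$ occurs in the tuple of the points-to atom of $\asheapB$; hence, writing $\ell'_0 = \astore'(t_1)$ which is allocated with $\aheap(\ell'_0)$ containing $\astore'(z_i)$ in some field, we get $\astore'(t_1) \connect{\aheap} \astore'(z_i) \connect{\aheap}^* \ell$, and again $t_1$ is the root of $\asform$.

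The main obstacle I anticipate is bookkeeping the associate store $\astore'$ versus $\astore$ in the predicate case: one must check that the variable $t_1$ (the root, on which $\astore$ and $\astore'$ agree since it occurs in $\vars(\asform)$) is exactly what ends up in $\alloc{\asform}$, and that the existential address variables $z_1,\dots,z_m$ produced by unfolding are genuinely reachable in one step from $\astore'(t_1)$ in $\aheap$ — this is precisely where the progress-and-connectivity condition (Definition~\ref{def:simple}, item~\ref{it:prog2}) is used, since it guarantees $\{z_1,\dots,z_m\} \subseteq \{y_1,\dots,y_k\}$. One should also be slightly careful that the points-to atom $t_1 \mapsto (y_1,\dots,y_k)$ is satisfied on a sub-heap of $\aheap$ and invoke Proposition~\ref{prop:subheap_connect} again to lift the single step $\astore'(t_1) \connect{}{} \astore'(z_i)$ from that sub-heap to $\aheap$. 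Everything else is a routine induction mirroring the existing proofs in the paper.
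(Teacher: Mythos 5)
Your proposal is correct and follows essentially the same route as the paper's own proof: an induction on the satisfiability relation with the same case split, the same use of Proposition~\ref{prop:subheap_connect} in the $*$ case, and the same appeal to Condition~\ref{it:prog2} of Definition~\ref{def:simple} in the predicate case to connect the root of the points-to atom to the roots $z_i$ of the unfolded predicate atoms. The only cosmetic difference is that the paper obtains the single step $\astore'(t_1) \connect{\aheap} \astore'(z_i)$ directly from $(\astore'(t_1),\dots) \in \aheap$ rather than lifting it from a sub-heap, but this does not change the argument.
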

 \begin{proof}
 The proof is by induction on the satisfiability relation. 
 We establish the result also for  spatial formulas and pure formulas.
 \begin{itemize}
 \item{If $\asheap = \emp$ or is $\asheap$ is a pure formula then $\locs{\asheap} = \emptyset$ hence the proof is immediate.}
 \item{If $\asheap = y_0 \mapsto (y_1,\dots,y_n)$, then $\aheap = \{ (\astore(y_0),\dots,\astore(y_n)) \}$, by Definition \ref{def:rmodel}, thus $\locs{\aheap} = \{ \astore(y_i) \mid i = 0,\dots,n, \text{and $y_i$ is of sort $\addr$} \}$ and $\astore(y_0) \connect{\aheap}^* \astore(y_i)$, for all $i = 1,\dots,n$ such that $y_i$ is of sort $\addr$.
 By Definition \ref{def:alloc} $\alloc{\asheap} = \{ y_0 \}$, thus the proof is completed.}
 \item{
  If $\asheap = \asform \swedge \apform$, where $\asform$ is a spatial formula and $\apform$ is a pure formula  distinct from $\true$, 
  then we have $(\astore,\aheap) \modelsr \asform$ and $\alloc{\asheap} = \alloc{\asform}$, hence the result follows
  immediately from the induction hypothesis.}
  \item{If $\asheap = \asform_1 * \asform_2$, then there exist two disjoint heaps $\aheap_1,\aheap_2$ such that 
  $\aheap = \aheap_1 \union \aheap_2$ and $(\astore,\aheap_i) \modelsr \asform_i$, for all $i = 1,2$.
If $\ell \in \locs{\aheap}$ then necessarily $\ell \in \locs{\aheap_i}$ for some $i = 1,2$.
By the induction hypothesis, we deduce that there exists $x \in \alloc{\asform_i}$ such that $\astore(x) \connect{\aheap_i}^* \ell$. Since $\alloc{\asform_i} \subseteqm \alloc{\asform}$ and
$\connect{\aheap_i}^* \subseteq \connect{\aheap}^*$ by Proposition~\ref{prop:subheap_connect}, we obtain the result.
 }
 \item{If $\asheap$ is a predicate atom, then $\asheap \unfold \asheapB$ and 
 $(\astore',\aheap) \modelsr \asheapB$ for some formula $\asheapB$ and some \namedextension{\astore'}{\astore}{\vars(\asheapB) \setminus \vars(\asheap)}.
 Let $\ell \in \locs{\aheap}$. By the induction hypothesis, there exists $x \in \alloc{\asheapB}$ such that 
 $\astore(x) \connect{\aheap}^* \ell$. By definition, $x$ is the root of some atom $\anatom$ in $\asheapB$.
 If $\anatom$ is a points-to atom, then since $\asheap \unfold \asheapB$  is an instance of a rule in $\asid$ and  all rules are \prules, $x$ must be the root of $\asheap$; in this case $x\in \alloc{\asheap}$ and the proof is completed.
 Otherwise, $x$ is the root of a spatial atom in $\asheapB$, and, because all rules are \prules, $\asheapB$ must contain an atom of the form $y_0 \mapsto (y_1,\dots,y_n)$, such that $y_0 = \rootof{\asheap}$ and 
 $y_i = x$, for some $i = 1,\dots,n$.
 Since $(\astore',\aheap)\modelsr \asheapB$, we have $(\astore(y_0),\dots,\astore(y_n)) \in \aheap$, hence 
 $\astore(y_0) \connect{\aheap} \astore(x)$.
 Using the fact that $\astore(x) \connect{\aheap}^* \ell$, we deduce that 
 $\astore(y_0) \connect{\aheap}^* \ell$, hence the proof is completed since $\alloc{\asheap} = \{ y_0 \}$.
}
  \end{itemize}
 \end{proof}

\newcommand{\alloccomplete}{alloc-complete\xspace}
\newcommand{\linear}{linear\xspace}

%

The next lemma asserts a key property of the considered formulas: all the locations occurring in the heap of a model of some formula $\asform$ are either allocated or associated with a variable that is free in $\asform$. This follows from the definition of \prules: all  variables of sort $\addr$ that are existentially quantified in an inductive  rule must be allocated (at the next recursive call). Recall that spatial formulas contain no quantifiers.
\begin{lemma}
\label{lem:establish}
Let $\asform$ be a  spatial formula and let 
$(\astore,\aheap)$ be an \rmodel of $\asform$.
Then the following inclusion holds: ${\locs{\aheap} \subseteq \dom{\aheap} \cup \astore(\vars(\asform))}$.
\end{lemma}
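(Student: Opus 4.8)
The plan is to prove the inclusion $\locs{\aheap} \subseteq \dom{\aheap} \cup \astore(\vars(\asform))$ by induction on the satisfiability relation $\modelsr$, following the same pattern as the proof of Proposition~\ref{prop:connect}. I would simultaneously establish the statement for all spatial formulas (and trivially for pure formulas, where $\locs{\aheap} = \emptyset$), since the inductive case for predicate atoms needs the statement to hold on the bodies of inductive rules, which are symbolic heaps rather than bare spatial formulas.

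First I would dispatch the base cases. If $\asform = \emp$, then $\aheap = \emptyset$ and $\locs{\aheap} = \emptyset$. If $\asform = y_0 \mapsto (y_1,\dots,y_n)$, then by Definition~\ref{def:rmodel} we have $\aheap = \{(\astore(y_0),\dots,\astore(y_n))\}$, so $\locs{\aheap} \subseteq \{\astore(y_0),\dots,\astore(y_n)\} \subseteq \astore(\vars(\asform))$, and we are done regardless of $\dom{\aheap}$. For the separating conjunction $\asform = \asform_1 * \asform_2$, I would split $\aheap = \aheap_1 \union \aheap_2$ with $(\astore,\aheap_i) \modelsr \asform_i$; any $\ell \in \locs{\aheap}$ lies in $\locs{\aheap_i}$ for some $i$, and the induction hypothesis gives $\ell \in \dom{\aheap_i} \cup \astore(\vars(\asform_i)) \subseteq \dom{\aheap} \cup \astore(\vars(\asform))$. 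The case $\asform = \asform' \swedge \apform$ (should it arise in the generalized statement for symbolic heaps) is immediate since $\locs{\aheap}$ and $\vars$ are unchanged by adding a pure part.

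The interesting case is the predicate atom $\asform = p(t_1,\dots,t_n)$. Here $p(t_1,\dots,t_n) \unfold \asheapB$ with $(\astore',\aheap) \modelsr \asheapB$ for some associate $\astore'$ of $\astore$ w.r.t.\ $\vars(\asheapB) \setminus \vars(\asform)$. Since $\asid$ is a set of \prules, $\asheapB$ has the shape $t_1 \mapsto (y_1,\dots,y_k) * q_1(z_1,\vec{u}_1) * \dots * q_m(z_m,\vec{u}_m) \swedge \apform$. Applying the (generalized) induction hypothesis to $\asheapB$ gives $\locs{\aheap} \subseteq \dom{\aheap} \cup \astore'(\vars(\asheapB))$. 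The work is then to convert $\astore'(\vars(\asheapB))$ into $\dom{\aheap} \cup \astore(\vars(\asform))$: variables of $\asheapB$ that occur in $\asform$ are handled by $\astore(\vars(\asform))$ since $\astore'$ agrees with $\astore$ there; the remaining variables are existential variables of the rule. By Condition~\ref{it:prog2} of Definition~\ref{def:simple}, every existential variable of sort $\addr$ is one of $z_1,\dots,z_m$, hence the root of a predicate atom $q_j(z_j,\vec{u}_j)$ of $\asheapB$, so $z_j \in \alloc{\asheapB}$ and Lemma~\ref{lem:alloc} gives $\astore'(z_j) \in \dom{\aheap}$. Existential variables of a sort distinct from $\addr$ are not locations, so they contribute nothing to $\locs{\aheap}$ (their store values lie in $\univ_\asort$ with $\asort \neq \addr$, disjoint from $\univ_\addr$). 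Combining these observations yields $\astore'(\vars(\asheapB)) \cap \univ_\addr \subseteq \dom{\aheap} \cup \astore(\vars(\asform))$, which completes the case.

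The main obstacle I anticipate is bookkeeping the sort distinction carefully in the predicate-atom case: one must argue that elements of $\locs{\aheap}$ are always of sort $\addr$, so that non-address existential variables can be safely ignored, and that the address-sorted existential variables are exactly captured by Condition~\ref{it:prog2} and thus allocated by Lemma~\ref{lem:alloc}. Everything else is a routine structural induction mirroring Proposition~\ref{prop:connect}.
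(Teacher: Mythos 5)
Your proposal is correct and follows essentially the same route as the paper: a structural induction on the satisfiability relation, with the predicate-atom case resolved by observing that every address-sorted existential variable of a \prulename is the root of a predicate atom in the unfolded body (Condition~\ref{it:prog2} of Definition~\ref{def:simple}) and hence allocated by Lemma~\ref{lem:alloc}. The paper phrases that case contrapositively (picking $\ell \in \locs{\aheap} \setminus \dom{\aheap}$ and deriving a contradiction) and leaves the sort bookkeeping implicit, but the argument is the same.
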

\begin{proof}
The proof is by induction on the relation $\modelsr$. 
Note that as $\asform$ is spatial, it contains no occurrence of $\iseq$, $\not \iseq$, $\swedge$ and $\wedge$.
\begin{itemize}
\item{If $\asform$ is of the form $y_0 \mapsto (y_1,\dots,y_n)$ then by definition 
$\aheap = \{ (\astore(y_0),\dots,\astore(y_n)) \}$ and 
$\locs{\aheap} = \{ \astore(y_0),\dots,\astore(y_n) \} = \astore(\vars(\asform))$.
}
\item{
	If $\asform = \emp$ then $\aheap = \emptyset$ thus $\locs{\aheap} = \emptyset$ and the proof is immediate.}
\item{If $\asform$ is a predicate atom then we have $\asform \unfold \asformB \swedge \apform$, and 
$(\astore',\aheap) \modelsr \asformB$, for some \namedextension{\astore'}{\astore}{\vars(\asformB \swedge \apform)\setminus \vars(\asform)}.
Let $\ell \in \locs{\aheap} \setminus \dom{\aheap}$. By the induction hypothesis,
$\ell = \astore'(x)$ for some variable $x \in \vars(\asformB)$. If $x\in \vars(\asform)$ then necessarily $\astore'(x) = \astore(x)$, thus $\ell \in \astore(\vars(\asform))$ and the proof is completed. Otherwise,
by Definition \ref{def:simple}, since all the rules in $\asid$ are \prules; $x$ occurs as the root of some predicate atom in 
$\asformB$, i.e., $x\in \alloc{\asformB}$. 
By Lemma~\ref{lem:alloc} we deduce that $\astore'(x) \in \dom{\aheap}$, i.e., $\ell \in \dom{\aheap}$ which contradicts our assumption.}
\item{If $\asform$ is of the form $\asform_1 * \asform_2$ then 
there exist disjoint heaps $\aheap_1,\aheap_2$ such that  $(\astore,\aheap_i) \modelsr \asform_i$ (for $i = 1,2$)
and $\aheap = \aheap_1 \union \aheap_2$. Let $\ell \in \locs{\aheap} \setminus \dom{\aheap}$. Necessarily we have 
$\ell \in \locs{\aheap_i}$, for some $i = 1,2$ and $\ell \not \in \dom{\aheap_i}$, hence by the induction hypothesis we deduce that 
$\ell = \astore(x)$, for some $x \in \vars(\asform_i)$. Since $\vars(\asform_i) \subseteq \vars(\asform)$, the proof is completed.
}
\end{itemize}
\end{proof}

\subsection{A Restricted Entailment Relation}

\newcommand{\Emodels}{\triangleright_V}
\newcommand{\gEmodels}[1]{\triangleright_{#1}}
\newcommand{\nEmodels}{{\not\triangleright_V}}

We introduce a simple syntactic criterion, used in the inference rules of Section~\ref{sect:rules}, that is sufficient to ensure that 
a given pure formula $\apform$ holds in every counter-model of a sequent with left-hand side $\asheap$ and multiset of variables $V$. The idea is to test that $\apform$ either  occurs in $\asheap$, is trivial, or is a disequation entailed by the fact that the considered store must be injective on $\alloc{\asheap} \cup V$ (using Definition~\ref{def:cmodel} and Corollary~\ref{cor:heapunsat}). Lemma~\ref{lem:emodel} states that the relation satisfies the expected property.
\begin{definition}
\label{def:emodel}
Let $\asheap$ be a symbolic heap,  $\apform$ be a pure formula and let $V$ be a multiset of variables.
We write $\asheap \Emodels \apform$ if for every atom $\apformB$ occurring in $\apform$, one of the following conditions holds:
\begin{enumerate}
\item{$\apformB$ occurs in $\asheap$;}
 \item{either $\apformB = (t \iseq t)$ for some variable $t$, or $\apformB = (t_1 \not \iseq t_2)$ and
 $t_1,t_2$ are distinct constants;}
 \item{$\apformB = (x_1 \not \iseq x_2)$ modulo commutativity, and one of the following holds: $\{ x_1,x_2\} \subseteqm \alloc{\asheap}$,  ($x_1 \in \alloc{\asheap}$ and $x_2 \in V$) or 
 $\{ x_1,x_2 \} \subseteqm V$. 
 This is equivalent to stating that $\{ x_1,x_2\} \subseteqm \alloc{\asheap} + V$ where $\alloc{\asheap} + V$ denotes as usual the union of the multisets $\alloc{\asheap}$ and $V$.}
 \label{it:emodel:neq:sub}
\end{enumerate}
\end{definition}
\begin{example}
Consider the symbolic heap $\asheap = (p(x,y) * q(z)) \swedge x \not \iseq y$, and let $V  = \{ u \}$.
We have $\asheap \Emodels x \not \iseq y \wedge x \not \iseq z \wedge x \not \iseq u$. Indeed, $x$ and $z$ are necessarily distinct since they are allocated by distinct atoms $p(x,y)$ and $q(z)$ (as, by definition of the \prules, every predicate allocates it first parameter) $x$ cannot be equal to $u$ as  $u \in V$ and $V$ is intended to denote a set of non-allocated variables (see Definition \ref{def:cmodel}) and $x$ is allocated, and $x$ cannot be equal to $y$ as the disequation $x \not \iseq y$ occurs in $\asheap$.
\end{example}

\begin{lemma}
\label{lem:emodel}
Let $\asheap$ be a symbolic heap and let $\apform$ be a pure formula such that 
$\asheap \Emodels \apform$.
For every structure
$(\astore,\aheap)$, if $(\astore,\aheap) \modelsr \asheap$, $\astore(V) \cap \dom{\aheap} = \emptyset$ and $\astore$ is injective on $V$ then $\astore \modelsr \apform$.
\end{lemma}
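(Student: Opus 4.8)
The plan is to prove Lemma~\ref{lem:emodel} by a direct argument: since $\apform$ is a conjunction of atoms and the satisfaction of a conjunction reduces to the satisfaction of each conjunct, it suffices to show that $\astore \modelsr \apformB$ for every atom $\apformB$ occurring in $\apform$. For each such $\apformB$, the hypothesis $\asheap \Emodels \apform$ guarantees that one of the three cases of Definition~\ref{def:emodel} applies, so I would proceed by a case analysis.

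In the first case, $\apformB$ occurs in $\asheap$. Writing $\asheap = \asform \swedge \apform'$, the atom $\apformB$ is then one of the conjuncts of $\apform'$, and since $(\astore,\aheap) \modelsr \asheap$ entails $(\astore,\aheap) \modelsr \apform'$ (by the clause for $\swedge$ and, inductively, for $\wedge$ in Definition~\ref{def:rmodel}), and the satisfaction of a pure formula depends only on $\astore$, we get $\astore \models \apformB$. In the second case, $\apformB$ is either a trivial equation $t \iseq t$ — satisfied because $\astore(t) = \astore(t)$ — or a disequation $t_1 \not\iseq t_2$ between distinct constants, which is satisfied because the fixed interpretation makes $\valueof{t_1} \neq \valueof{t_2}$ (constants are interpreted as pairwise distinct elements, and $\astore(t_i) = \valueof{t_i}$ by definition of a store). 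In the third case, $\apformB = (x_1 \not\iseq x_2)$ with $\{x_1, x_2\} \subseteqm \alloc{\asheap} + V$; here I split into the three sub-possibilities. If $\{x_1, x_2\} \subseteqm \alloc{\asheap}$, then Corollary~\ref{cor:heapunsat} applied to the $\asid$-model $(\astore,\aheap)$ of $\asheap$ gives $\astore(x_1) \neq \astore(x_2)$. If $x_1 \in \alloc{\asheap}$ and $x_2 \in V$, then by Lemma~\ref{lem:alloc} we have $\astore(x_1) \in \dom{\aheap}$, while $\astore(x_2) \in \astore(V)$ which is disjoint from $\dom{\aheap}$ by hypothesis, so $\astore(x_1) \neq \astore(x_2)$. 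If $\{x_1, x_2\} \subseteqm V$, then the hypothesis that $\astore$ is injective on $V$ directly yields $\astore(x_1) \neq \astore(x_2)$; note that $\{x_1,x_2\} \subseteqm V$ as a multiset inclusion forces $x_1$ and $x_2$ to be syntactically distinct variables, so injectivity applies.

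In all cases $\astore \models \apformB$, and since this holds for every conjunct of $\apform$ we conclude $\astore \models \apform$, i.e.\ $\astore \modelsr \apform$ (the relation being independent of $\asid$ for pure formulas). I do not anticipate a genuine obstacle here; the lemma is essentially a bookkeeping exercise that packages together Corollary~\ref{cor:heapunsat}, Lemma~\ref{lem:alloc}, and the disjointness/injectivity hypotheses on $V$. The only point requiring a little care is the multiset reading of condition~\ref{it:emodel:neq:sub} in Definition~\ref{def:emodel} — one must use that $\{x_1,x_2\} \subseteqm \alloc{\asheap} + V$ distributes into exactly the three sub-cases listed, and that in the purely-$V$ sub-case the two occurrences are of distinct variables so that injectivity on the multiset $V$ is applicable rather than vacuous.
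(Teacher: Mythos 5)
Your proof is correct and follows essentially the same route as the paper's: a case analysis on the three conditions of Definition~\ref{def:emodel}, using Corollary~\ref{cor:heapunsat} for the doubly-allocated case, Lemma~\ref{lem:alloc} together with the disjointness hypothesis for the mixed case, and injectivity on $V$ for the last case. No gaps.
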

\begin{proof}
We show that $(\astore,\aheap)\modelsr \apformB$, for all atoms $\apformB$ in $\apform$.
We consider each case in Definition \ref{def:emodel} separately:
\begin{enumerate}
\item{If $\apformB$ occurs in $\asheap$ then since $(\astore,\aheap) \models \asheapB$, necessarily $\astore \modelsr \apformB$. }
\item{If $\apformB = (t \iseq t)$ then it is clear that $\astore \models \apformB$. 
If $\apformB = (t_1 \not \iseq t_2)$ and 
 $t_1,t_2$ are distinct constants then $\astore(t_1) \not =\astore(t_2)$ since all stores are injective on constants by definition. 
}
\item{If $\apformB = x_1 \not \iseq x_2$ and $\{ x_1,x_2 \} \subseteqm \alloc{\asheap}$ then by Corollary \ref{cor:heapunsat}, we get
$\astore(x_1) \not = \astore(x_2)$ 
since $\astore(x_1)$ and $\astore(x_2)$ must be allocated in disjoint heaps.
Thus $\astore \modelsr \apformB$.
If $x_1 \in \alloc{\asform}$ and $x_2 \in V$ then  
we have $\astore(x_1) \in \dom{\aheap}$ by Lemma~\ref{lem:alloc}, and since 
$\astore(V) \cap \dom{\aheap} = \emptyset$, we deduce that 
 $\astore \modelsr \apformB$.
Finally, if $\{ x_1,x_2 \} \subseteqm V$ then $\astore(x_1) \not = \astore(x_2)$, because $\astore$ is injective on $V$ by hypothesis.
}
\end{enumerate}
\end{proof}

\subsection{Inference Rules}

\renewcommand{\vdashr}{\vdash_{\asid}^V}

\label{sect:rules}

\newcommand{\ruleid}[1]{\text{{\color{blue} {\tt #1}}}}
\newcommand{\eq}{\ruleid{R}}
\newcommand{\noteq}{\ruleid{E}}
\newcommand{\sep}{\ruleid{S}}
\newcommand{\unf}{\ruleid{U}}
\newcommand{\imi}{\ruleid{I}}
\newcommand{\wea}{\ruleid{W}}
\newcommand{\dec}{\ruleid{C}}
\newcommand{\eli}{\ruleid{V}}


We consider the rules represented in Figure \ref{fig:rules}. The rules apply modulo AC, they are intended to be applied  
bottom-up: 
 a rule is {\em applicable} on a sequent $\asheap \vdashr \asheapB$ if there exists an instance of the rule
the conclusion of which is $\asheap \vdashr \asheapB$.
We recall that an inference rule is {\em sound} if the validity of the premises entails the validity of the conclusion, and 
{\em invertible} if the converse holds.

\begin{remark}
The application conditions given in Figure \ref{fig:rules} are the most general ones ensuring that the rules are sound. Additional application conditions will be provided afterwards (see Definition \ref{def:admissible}) to obtain a proof procedure that 
runs in polynomial time. The latter conditions are rather complex, and in our opinion introducing them at this point could hinder the understanding of the rules.
\end{remark}


\begin{figure}
{\small
\begin{prooftree}
\AxiomC{$\repl{\asform}{x}{t} \swedge \repl{\apform}{x}{t} \vdashrV{\repl{V}{x}{t}} \repl{\asheapB}{x}{t}$}
\RightLabel{\quad if $x\in \vars$}
\LeftLabel{\eq:}
\UnaryInfC{ $\asform \swedge (x \iseq t \wedge \apform) \vdashr \asheapB$}
\end{prooftree}


\begin{prooftree}
\AxiomC{$\asheap \vdashr \asformB  \swedge \apformB$}
\RightLabel{\quad if $\asheap \Emodels \apformB'$}
\LeftLabel{\noteq:}
\UnaryInfC{$\asheap \vdashr \asformB  \swedge (\apformB \wedge \apformB')$}
\end{prooftree}

\begin{prooftree}
\AxiomC{$\asform_1 \swedge \apform_1 \vdashrV{V \cup \alloc{\asform_2}} \asformB_1 \swedge \apformB_1$}
\AxiomC{$\asform_2 \swedge \apform_2 \vdashrV{V  \cup \alloc{\asform_1}} \asformB_2 \swedge \apformB_2$}
\LeftLabel{\sep:}
\RightLabel{\quad if $\asform_1 \not = \emp$ and $\asform_2 \not = \emp$}
\BinaryInfC{$(\asform_1 * \asform_2) \swedge (\apform_1 \wedge \apform_2) \vdashr (\asformB_1 * \asformB_2) \swedge (\apformB_1 \wedge \apformB_2)$}
\end{prooftree} 

\begin{prooftree}
\AxiomC{$(\asform'_1 * \asform) \swedge (\apform'_1 \wedge \apform) \vdashr \asheapB$}
\AxiomC{$\dots$}
\AxiomC{$(\asform'_n * \asform) \swedge (\apform'_n \wedge \apform) \vdashr \asheapB$}
\LeftLabel{\unf:}
\RightLabel{if $p(\vec{t}) \unfoldall \{ \asform'_i \swedge \apform'_i \mid i = 1,\dots,n \}$}
\TrinaryInfC{$(p(\vec{t}) * \asform) \swedge \apform \vdashr \asheapB$}
\end{prooftree}

\begin{prooftree}
\AxiomC{$(x \mapsto (y_1,\dots,y_k) * \asform) \swedge \apform \vdashr (x \mapsto (y_1,\dots,y_k) * \asformB'\sigma * \asformB) \swedge \apformB$}
\LeftLabel{\imi:}
\RightLabel{\quad if all the conditions below hold:}
\UnaryInfC{$(x \mapsto (y_1,\dots,y_k) * \asform) \swedge \apform \vdashr (p(x,\vec{z}) * \asformB) \swedge \apformB$}
\end{prooftree}
(i) $p(x,\vec{z}) \unfold (x \mapsto (u_1,\dots,u_k) * \asformB') \swedge \apformB'$; (ii)
$\sigma$ is a substitution whose domain is included in ${\{ u_1,\dots,u_k \} \setminus (\{ x \} \cup \vec{z})}$; (iii)
 $\sigma(u_i) = y_i$, for all $i \in \{ 1,\dots,k\}$; and (iv) $(x \mapsto (y_1,\dots,y_k) * \asform) \swedge \apform \Emodels \apformB'\sigma$ 

\vspace*{0.25cm}

\begin{prooftree}
\AxiomC{$\asform \swedge \apform \vdashr \asheapB$}
\LeftLabel{\wea:}
\RightLabel{\qquad if one of the conditions below holds:}
\UnaryInfC{$\asform \swedge (\apform \wedge \apform') \vdashr \asheapB$}
\end{prooftree}
(i) $\apform' = x \not \iseq y$ and $\asform \Emodels x \not \iseq y$; or (ii) $\apform' = \bigwedge_{i=1}^n x \not \iseq y_i$ and $x \not \in \vars(\asform) \cup \vars(\apform) \cup \vars(\asheapB) \cup \{ y_1,\dots,y_n \}$

\vspace*{0.25cm}

\begin{prooftree}
\AxiomC{$\repl{\asform}{x}{y} \swedge \repl{\apform}{x}{y} \vdashrV{\repl{V}{x}{y}} \repl{\asheapB}{x}{y}$}
\AxiomC{$\asform \swedge (\apform \wedge x \not \iseq y) \vdashr \asheapB$}
\LeftLabel{\dec:}
\RightLabel{\qquad if $x,y \in \vars_{\addr}$}
\BinaryInfC{$\asform \swedge \apform  \vdashr \asheapB$}
\end{prooftree}

\begin{prooftree}
\AxiomC{$\asheap \vdashrV{V} \asheapB$}
\LeftLabel{\eli:}
\RightLabel{\qquad if $x \not \in \vars(\asheap) \cup \vars(\asheapB)$}
\UnaryInfC{$\asheap \vdashrV{V \cup \{ x \}} \asheapB$}
\end{prooftree}}

\caption{The Inference rules \label{fig:rules}}
\end{figure}


We provide some examples and explanations concerning the inference rules.

\begin{example}
Rules $\eq$ (replacement) and $\noteq$ (elimination) handle equational reasoning.
For instance, given the sequent $(p(x,u) * p(y,u)) \swedge (u \iseq v) \vdashrV{\emptyset} (p(x,u) * p(y,v)) \swedge (x \not \iseq y)$, one may first apply $\eq$, yielding: 
$p(x,u) * p(y,u) \vdashrV{\emptyset} (p(x,u) * p(y,u)) \swedge (x \not \iseq y)$. As $\{ x,y \} \subseteqm \alloc{p(x,u) * p(y,x)}$, $\noteq$ applies, which yields the trivially valid sequent
$p(x,u) * p(y,u) \vdashrV{\emptyset} p(x,u) * p(y,u)$.
\end{example}

\begin{example}
Rule $\sep$ (separation) applies on the sequent $p(x) * p(y) \vdashrV{\emptyset} q(x) *r(y)$, yielding $p(x) \vdashrV{\{ y \}} q(x)$
and
$p(y) \vdashrV{\{ x \}} r(y)$ (as $\{ x \} = \alloc{p(x)}$ and
$\{ y \} = \alloc{p(y)}$). The addition of $y$ (resp.\ $x$) to the variables associated with the sequent allows us to keep track of the fact that these variables cannot be allocated in $p(x)$ (resp.\ $p(y)$) as they are already allocated in the other part of the heap.
Note that the rule also yields
$p(x) \vdashrV{\{ y \}} r(y)$
and
$p(y) \vdashrV{\{ x \}} q(x)$, however as we shall see  later (see Definition~\ref{def:antiax}) the latter premises cannot be valid and this application can be dismissed.
\end{example}

\newcommand{\cstA}{\mathtt{a}}
\newcommand{\cstB}{\mathtt{b}}

\begin{example}
Rules $\unf$ (unfolding) and $\imi$ (imitation) both unfold inductively defined predicate symbols. $\unf$ unfolds predicates occurring on the left-hand side of a sequent, yielding one premise for each inductive rule. In contrast, $\imi$ applies on the right-hand side and selects one rule  in a non-deterministic way (provided it fulfills the rule application condition), yielding exactly one premise.
Let $\asid$ be the following set of rules, where $\cstA,\cstB$ are constant symbols and $z,z_1,z_2$ are variables of the same sort as $\cstA$ and $\cstB$:
\[
\begin{tabular}{ccc}
$p(x)$ & $\Leftarrow$  & $x \mapsto (\cstA,x)$ \\
$p(x)$ & $\Leftarrow$  & $x \mapsto (\cstB,x)$ \\
$q(x)$ & $\Leftarrow$  & $x \mapsto (z,y)$ \\
$q(x)$ & $\Leftarrow$  & $x \mapsto (z_1,z_2,x)$
\end{tabular}
\]
Rule $\unf$ applies on $p(x,y) \vdashrV{\emptyset} q(x)$, yielding
$x \mapsto (\cstA,x) \vdashrV{\emptyset} q(x)$
and
$x \mapsto (\cstB,x) \vdashrV{\emptyset} q(x)$.
Then $\imi$ applies on both sequents, with the respective substitutions 
$\{ y\leftarrow x, z \leftarrow \cstA \}$ and $\{ y\leftarrow x, z \leftarrow \cstB \}$,  yielding
$x \mapsto (\cstA,x) \vdashrV{\emptyset} x \mapsto (\cstA,x)$
and
$x \mapsto (\cstB,x) \vdashrV{\emptyset} x \mapsto (\cstB,x)$.
Note that $\imi$ cannot be applied with the inductive rule $q(x) \Leftarrow  x \mapsto (z_1,z_2,x)$, as $(\cstA,x)$ and $(\cstB,x)$ are not instances of $(z_1,z_2,x)$.
\end{example}

\begin{example}
Rules $\wea$ (weakening) and $\eli$ (variable weakening) allow to get rid of irrelevant information, which is essential for termination.
For instance one may deduce $p(x) \vdashrV{\emptyset} q(x)$ from
$p(x) \swedge (u \not \iseq v \wedge u \not \iseq w) \vdashrV{\emptyset} q(x)$.
Indeed, 
if the former sequent admits a counter-model, then one gets a counter-model of the latter one by associating $u,v,w$ with pairwise distinct elements.
\end{example}

\begin{example}
Rule $\dec$ performs a case analysis on $x \iseq y$.
It is essential to allow further applications of Rule $\imi$ in some cases.
Consider  the sequent 
$u \mapsto (x,x) * p(x) \vdashrV{\emptyset} q(u,y)$, with the rules
$q(u,y) \Leftarrow u \mapsto (y,z) * p(z)$, 
and
$q(u,y) \Leftarrow (u \mapsto (z,z) * p(z)) \swedge z \not \iseq y$.
Rule $\imi$ does not apply because 
there is no substitution $\sigma$ with domain $\{ z \}$ such that
$(x,x) = (y,z)\sigma$, and
${u \mapsto (x,x) * p(x) \not\!\Emodels x \not \iseq y}$.
The rule $\dec$ yields 
$u \mapsto (y,y) * p(y) \vdashrV{\emptyset} q(u,y)$
and
$(u \mapsto (x,x) * p(x)) \swedge x \not \iseq y \vdashrV{\emptyset} q(u,y)$. Then the rule $\imi$ applies on both sequents, yielding the premisses
${u \mapsto (y,y) * p(y) \vdashrV{\emptyset} u \mapsto (y,y) * p(y)}$
and
${(u \mapsto (x,x) * p(x)) \swedge x \not \iseq y \vdashrV{\emptyset} 
u \mapsto (x,x) * p(x)}$.
\end{example}

We have the following facts, which can be verified by an inspection of the inference rules:
\begin{proposition}\label{prop:rule-facts}
	Consider an 
	\eqfree sequent $\asheap \vdashrV{V} \asheapB$ that is the conclusion of an inference rule, of which $\asheap' \vdashrV{V'} \asheapB'$ is a premise.
	\begin{enumerate}
		\item No rule can introduce any equality to $\asheap' \vdashrV{V'} \asheapB'$. 
		\item If $x\in V'\setminus V$, then the inference rule is either {\eli} or \dec.\label{it:rem:v} 
		\item If $\alloc{\asheapB'}\subsetneq \alloc{\asheapB}$ then the inference rule is either {\sep} or \dec.\label{it:rem:alloc} 
		\item The only inference rules that can introduce new variables to $\asheapB'$ are 
		$\imi$ and $\dec$.\label{it:intro:eq} 
		\item 
		No rule introduces a disequation between terms of a sort distinct from $\addr$.\label{it:rf:intro:dis}
		\item 
		The only rule that introduces a predicate atom 
		to the right-hand side of a premise is \imi.\label{it:pt:imi}
		\item If 
		$v\in (\alloc{\asheapB} \cup V)\setminus (\alloc{\asheapB'} \cup V')$, then the inference rule must be {\dec}.\label{it:rem:var:alloc}
	
	\end{enumerate}
\end{proposition}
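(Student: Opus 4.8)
The plan is to prove all seven items at once by a uniform case analysis on the inference rule of which $\asheap \vdashrV{V} \asheapB$ is the conclusion, reading off from Figure~\ref{fig:rules} how the premise $\asheap' \vdashrV{V'} \asheapB'$ is obtained. The first, free, step is to discard rule $\eq$: its conclusion has the form $\asform \swedge (x \iseq t \wedge \apform) \vdashr \asheapB$ and hence contains an equation, contradicting the hypothesis that the conclusion is \eqfree. So only the seven rules $\noteq$, $\sep$, $\unf$, $\imi$, $\wea$, $\dec$, $\eli$ remain, and for each one I would tabulate how the spatial and pure parts of each side, and the multiset $V$, change between conclusion and premise: $\noteq$ and $\wea$ merely delete a pure conjunct on one side; $\eli$ merely removes from $V$ a variable that is fresh for the whole sequent; $\sep$ keeps one $*$-factor of each side and enlarges $V$ by the variables allocated in the discarded left factor; $\unf$ replaces a left-hand predicate atom $p(\vec t)$ by the body of one of the matching rules of $\asid$; $\imi$ replaces a right-hand predicate atom $p(x,\vec z)$ by $x \mapsto (\vec y) * \asformB'\sigma$; and $\dec$ either applies the renaming $\{x \leftarrow y\}$ to the whole sequent or adds a disequation $x \not\iseq y$ on the left.

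With this table in hand, the only genuinely non-clerical ingredient is the shape of the bodies of \prules. By Definition~\ref{def:simple}, such a body is a spatial formula $x_1 \mapsto (\vec y) * q_1(z_1,\vec u_1) * \dots * q_m(z_m,\vec u_m)$ conjoined with a conjunction of disequations, and since $\asid$ is \hdeterministic those disequations are between variables of sort $\addr$; as $\csts_\addr = \emptyset$, instantiating the rule keeps them of the same shape, so no instance of a \prule body contains an equation or a disequation between terms of a sort distinct from $\addr$. From this, item~1 follows, because the only rules that inject material drawn from $\asid$ into a sequent are $\unf$ (on the left) and $\imi$ (on the right, in the still more restricted form $\asformB'\sigma$, a $*$-conjunction of predicate atoms), while the remaining rules only delete pure conjuncts, shrink $V$, perform a renaming, or add a disequation $x \not \iseq y$ — none of which is an equation. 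Item~5 follows the same way. Items~4 and~6 are also immediate from the table: $\imi$ is the only rule that enlarges the spatial part of a right-hand side, the renaming in $\dec$ merely substitutes into it (which is why $\dec$ is listed in item~4 but not in item~6), and the atoms $\imi$ adds there are a points-to atom together with the predicate atoms $q_i(z_i,\vec u_i)\sigma$ of a \prule body.

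Items~2, 3 and~7 are bookkeeping on $V$ and on $\alloc{\asheapB}$: one checks from the table that $V$ is modified only by $\sep$, $\dec$ and $\eli$, that $\alloc{\asheapB}$ is made strictly smaller only by $\sep$ (by dropping a $*$-factor) and $\dec$ (through the renaming), and that a renaming substitution $\{x \leftarrow y\}$ is the unique operation that can simultaneously remove the variable $x$ and introduce the variable $y$, which is precisely why $\dec$ reappears in these items alongside the decomposition and weakening rules. The one point requiring a little care — and, in my view, the main (still very mild) obstacle — is to verify that $\sep$ does not drop any element of $\alloc{\asheapB} \cup V$ when splitting a sequent, using the way it enlarges $V$ by exactly the variables allocated in the antecedent branch handed to the sibling premise in tandem with how it splits the right-hand side; everything else is a finite, mechanical check, and no result from the rest of the paper — in particular none of Lemmas~\ref{lem:alloc}--\ref{lem:unique} — is needed.
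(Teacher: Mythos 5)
Your proposal is correct and takes essentially the same route as the paper, whose own proof likewise dismisses the first six items as a direct inspection of the rules and singles out item~\ref{it:rem:var:alloc} for an explicit case split (\eq\ excluded because the conclusion is \eqfree, \sep\ handled through the enlargement of $V$, \imi\ preserving the root of the unfolded atom, \eli\ applying only to variables outside the sequent). The step you flag as delicate---that \sep\ cannot drop an element of $\alloc{\asheapB} \cup V$---is precisely the step the paper also asserts in one line; both arguments tacitly use that the sibling premise is not an \antiaxiom, so that the roots of its right-hand side are among the roots of its left-hand side and hence land in $V'$.
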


\begin{proof}
The first six facts are straightforward to verify. Fact \ref{it:rem:var:alloc} holds because Rule {\eq} cannot apply if no equality occurs; if rule {\sep} is applied then the variables in $\alloc{\asheapB} \setminus \alloc{\asheapB'}$ must occur in $V'$; Rule {\imi} deletes a predicate atom but introduces a points-to atom with the same root and rule {\eli} cannot be applied on variables occurring in $\vars{\asheapB}$. 
\end{proof}

We  establish additional basic properties of the inference rules. All the rules are sound and invertible, except for rule \sep\  that is only sound.
The results follow easily from the semantics, except for the invertibility of $\imi$, which crucially depends on the fact that rules are \deterministic. Indeed, $\imi$ unfolds one atom on the right-hand side by selecting one specific inductive rule. In our case, at most one rule can be applied, which ensures that  equivalence is preserved. This is a crucial point  because otherwise one would need to consider disjunctions of formulas on the right-hand side of the sequent (one disjunct for each 
possible rule), which would make the procedure much less efficient.

\begin{example}
Consider the sequent $x \mapsto (y) * y \mapsto (z) \vdashrV{\emptyset} p(x,z)$, with the  rules
\[ \begin{tabular}{rclrcl}
$p(x,z)$ & $\Leftarrow$ & $x \mapsto (y) * q(y,z)$ \qquad &
$p(x,z)$ & $\Leftarrow$ & $x \mapsto (y) * q'(y,y)$ \\
$q(y,z)$ & $\Leftarrow$ & $y \mapsto (z) \swedge y \not \iseq z$ &
$q'(y,z)$ & $\Leftarrow$ & $y \mapsto (z)$ \\
\end{tabular}
\]
It is clear that the rules are not \deterministic, as there is an overlap between the two rules associated with $p$. Applying rule \imi\ on the above sequent yields either 
$x \mapsto (y) * y \mapsto (z) \vdashrV{\emptyset} x \mapsto (y) * q(y,z)$
or
$x \mapsto (y) * y \mapsto (z) \vdashrV{\emptyset} x \mapsto (y) * q'(y,y)$. None of these two possible premises is valid, although the initial sequent is valid. This shows that \imi\ is not invertible (although it is still sound) when $\asid$ is not \deterministic. The intuition is that it is not possible to decide which rule must be applied on $p$ before deciding whether $z$ is equal to $y$ or not.
\end{example}

\begin{lemma}
\label{lem:rules}
The rules \eq, \noteq, \unf, \wea, \eli\ and \dec\ and {\imi} are sound and invertible.
More specifically, for all heaps $\aheap$, the conclusion of the rule admits a counter-model $(\astore,\aheap)$ iff one of the premises admits a counter-model  $(\astore',\aheap)$. 
\end{lemma}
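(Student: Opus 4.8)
The plan is to prove the stronger, heap-preserving statement directly: for each of the seven rules we show that a structure $(\astore,\aheap)$ is a counter-model of the conclusion if and only if some $(\astore',\aheap)$, with the \emph{same} heap, is a counter-model of one of the premises; soundness and invertibility then both follow. Each rule is treated separately, by unfolding Definition~\ref{def:cmodel}: a counter-model validates the left-hand side, falsifies the right-hand side, leaves the variables of $V$ unallocated, and is injective on $V$.

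The replacement rule \eq\ and the equality premise of \dec\ are handled through the substitution property (Proposition~\ref{prop:subst_store}, together with its converse implication, which holds by the same induction on $\modelsr$). For \eq, a counter-model $(\astore,\aheap)$ of the conclusion satisfies $\astore(x)=\astore(t)$, hence $\astore=\astore\circ\{x\leftarrow t\}$ and $(\astore,\aheap)$ is already a counter-model of the premise; conversely one reassigns $x$ to $\astore(t)$. The two premises of \dec\ correspond to the cases $\astore(x)=\astore(y)$ (treated as for \eq) and $\astore(x)\neq\astore(y)$ (which merely adds the true disequation $x\not\iseq y$ on the left). Rules \noteq\ and case~(i) of \wea\ invoke Lemma~\ref{lem:emodel}: their $\Emodels$ side condition, together with the $V$-conditions that every counter-model satisfies, forces the extra pure formula to hold, so adding it to (or removing it from) a left-hand conjunction preserves counter-models in both directions. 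For case~(ii) of \wea\ and for \eli\ the weakened information constrains only a variable that occurs nowhere else, so from a counter-model of the premise one obtains a counter-model of the conclusion by reassigning that variable to a fresh value of the appropriate sort, avoiding the relevant finite set of forbidden values ($\dom{\aheap}$, the images of $V$ under $\astore$, and the terms in the added disequation); the converse is immediate. Rule \unf\ is a direct reading of the clause for predicate atoms in Definition~\ref{def:rmodel}: a counter-model of the conclusion decomposes $\aheap$ and unfolds $p(\vec t)$ through some inductive rule, which yields -- after $\alpha$-renaming the rule's existential variables away from the sequent -- a counter-model of the corresponding premise with the same heap, and the recomposition gives the converse.

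The substantial case is the imitation rule \imi. Since its conclusion and its single premise share the same left-hand side and the same multiset $V$, it suffices to show that, for every $(\astore,\aheap)$ validating that left-hand side and meeting the $V$-conditions, $(\astore,\aheap)\modelsr(p(x,\vec z) * \asformB)\swedge\apformB$ holds iff $(\astore,\aheap)\modelsr(x\mapsto(y_1,\dots,y_k) * \asformB'\sigma * \asformB)\swedge\apformB$ holds. The ``if'' direction (soundness) is direct: split the heap of the latter formula, build the associate $\astore_E$ of $\astore$ sending each existential $u_i$ to $\astore(\sigma(u_i))=\astore(y_i)$ -- this is well-defined and fully determined because, by Definition~\ref{def:simple}, every variable of $\asformB'$ lies in $\{x\}\cup\vec z\cup\{u_1,\dots,u_k\}$ -- check via conditions (ii)--(iii) that $\astore_E$ validates $x\mapsto(u_1,\dots,u_k)$, transfer the model of $\asformB'\sigma$ to a model of $\asformB'$ by Proposition~\ref{prop:subst_store}, and use condition~(iv) with Lemma~\ref{lem:emodel} for the pure part; this exhibits $p(x,\vec z)$ as modelled through the rule of condition~(i).

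The ``only if'' direction (invertibility) is where determinism is essential. If $(\astore,\aheap)\modelsr p(x,\vec z) * \asformB$, then the subheap modelling $p(x,\vec z)$ is obtained by unfolding $p(x,\vec z)$ through \emph{some} inductive rule, whose unique points-to atom has root $x$ and allocates a cell of $\aheap$; since the shared left-hand side forces $\aheap$ to map $\astore(x)$ to $(\astore(y_1),\dots,\astore(y_k))$, the record fields of that rule are sent to $(\astore(y_1),\dots,\astore(y_k))$. Merging the store witnessing this unfolding with the store $\astore\circ\sigma$ -- which also sends each $u_i$ to $\astore(y_i)$ and satisfies the pure part of condition~(i) by condition~(iv) and Lemma~\ref{lem:emodel} -- produces an assignment satisfying the conjunction of the two heads, the two record tuples, and the two pure parts that Definition~\ref{def:deterministic} declares unsatisfiable, unless the rule used in the model is exactly the rule of condition~(i). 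Once the two unfoldings coincide, the leftover subheap validates $\asformB'$, hence $\asformB'\sigma$ by Proposition~\ref{prop:subst_store}, and we are done. I expect the main obstacle to be precisely this determinism argument -- the careful bookkeeping of the existential variables of the two unfoldings and of the pure constraints, so that the instance contradicting Definition~\ref{def:deterministic} is correctly assembled; the other six rules are routine once the right reassignment of a single variable is found.
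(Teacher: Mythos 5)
Your proposal is correct and follows essentially the same route as the paper's proof: a rule-by-rule verification of heap-preserving counter-model correspondence, using Lemma~\ref{lem:emodel} for the $\Emodels$-conditioned rules, fresh reassignment of a single variable for \eli\ and case~(ii) of \wea, Proposition~\ref{prop:subst_store} (and its converse) for the substitution-based rules, and the determinism of $\asid$ to force the model's unfolding of $p(x,\vec{z})$ to coincide with the rule selected by \imi\ via an assembled satisfying assignment of the formula in Definition~\ref{def:deterministic}. Your explicit inclusion of $\astore(V)$ among the forbidden values when reassigning the weakened variable is, if anything, slightly more careful than the paper's treatment of that case.
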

\begin{proof}
We consider each rule separately (we refer to the definitions of the rules for notations) and establish the equivalence of the lemma.
We recall (Definition~\ref{def:cmodel}) that a counter-model of a sequent is a structure $(\astore,\aheap)$ that validates the antecedent, falsifies the conclusion, and is such that the store is injective on $V$ and no variable in $V$ is allocated. 
\begin{itemize}
\item[\eq]{Assume that $(\astore,\aheap) \modelsr \asform \swedge (x \iseq t \wedge \apform)$, that
$(\astore,\aheap) \not \modelsr \asheapB$, that $\astore(V) \cap \dom{\aheap} = \emptyset$ and that $\astore$ is injective on $V$.
Then we have $\astore(x) = \astore(t)$, thus 
$(\astore,\aheap) \modelsr \repl{\asform}{x}{t} \swedge \repl{\apform}{x}{t}$ and
$(\astore,\aheap) \not \modelsr \repl{\asheapB}{x}{t}$.
For all $y \in \repl{V}{x}{t}$, we have either $y\in V$ and 
$\astore(y) \not \in \dom{\aheap}$, or $x \in V$ and $y = t$, thus $\astore(y) = \astore(t) = \astore(x) \not \in \dom{\aheap}$.
 Finally, 
 assume (for the sake of contradiction) that $\{ u,v \} \subseteq \repl{V}{x}{t}$ with 
$\astore(u) = \astore(v)$. Then there exist variables $u',v'$ such that
$\{ u',v' \} \subseteq V$, 
with $\repl{u'}{x}{t} = u$ and $\repl{v'}{x}{t} = v$. If $u'$ and $v'$ are both equal to $x$ or both distinct from $x$   then we have $\astore(u') = \astore(v')$, which contradicts the fact that $\astore$ is injective on $V$. If $u'=x$ and $v'\neq x$, then we have $\astore(v') = \astore(t) = \astore(x)$, again contradicting the fact that $\astore$ is injective on $V$. The proof is similar when $u'\neq x$ and $v' = x$.
Consequently, $(\astore,\aheap)$ is also a counter-model of the premise.


Conversely, assume that $(\astore,\aheap) \modelsr \repl{\asform}{x}{t} \swedge \repl{\apform}{x}{t}$; 
$\astore(\repl{V}{x}{t}) \cap \dom{\aheap} = \emptyset$; 
the store $\astore$ is injective on $\repl{V}{x}{t}$; and $(\astore,\aheap) \not \modelsr \repl{\asheapB}{x}{t}$.
 We consider the store $\astore'$ such that $\astore'(x) = \astore(t)$ and $\astore'(y) = \astore(y)$ if $y \not = x$. It is clear that $\astore' \modelsr x \iseq t$, 
$(\astore',\aheap) \modelsr \asform \swedge \apform$
and $(\astore',\aheap) \not \modelsr \asheapB$. For all $y \in V$, if $y \not = x$ then $y \in \repl{V}{x}{t}$ and $\astore'(y) = \astore(y) \not \in \dom{\aheap}$; otherwise $y = x$  and  $t \in \repl{V}{x}{t}$, thus
$\astore'(y) = \astore'(x) = \astore(t) \not \in \dom{\aheap}$.
There only remains to check that $\astore'$ is injective on $V$, which is done by contradiction: if this is not the case then there exists
$\{ u,v\}  \subseteqm V$ such that $\astore'(u) = \astore'(v)$. Hence 
$\{ \repl{u}{x}{t}, \repl{v}{x}{t} \} \subseteqm \repl{V}{x}{t}$, and we have
$\astore'(\repl{u}{x}{t}) = \astore(\repl{u}{x}{t})$
and
$\astore'(\repl{v}{x}{t}) = \astore(\repl{v}{x}{t})$. This contradicts the fact that $\astore$ is injective on $\repl{V}{x}{t}$.

}

\item[\noteq]{

Let $(\astore,\aheap)$ be a structure such that 
$(\astore,\aheap) \modelsr \asheap$;
$\astore(V) \cap \dom{\aheap} = \emptyset$; the store $\astore$ is injective on $V$;
and
$(\astore,\aheap) \not \modelsr \asformB  \swedge (\apformB \wedge \apformB')$.
By the application condition of the rule we have
$\asheap \Emodels \apformB'$, thus
by Lemma~\ref{lem:emodel}, we deduce that $(\astore,\aheap) \modelsr \apformB'$.
Therefore $(\astore,\aheap) \not \modelsr \asformB  \swedge \apformB$.
Conversely, it is clear that any counter-model of 
$\asheap \vdashr \asformB  \swedge \apformB$ is a counter-model of 
$\asheap \vdashr \asformB  \swedge (\apformB \wedge \apformB')$.
}

\item[\unf]{
Assume that $(\astore,\aheap) \modelsr (p(\vec{t}) * \asform) \swedge \apform$,
$\astore(V) \cap \dom{\aheap} = \emptyset$,  $\astore$ is injective on $V$
and
${(\astore,\aheap) \not \modelsr  \asheapB}$.
By definition, $\astore \models \apform$, and there are disjoint heaps $\aheap_1,\aheap_2$ such that 
${(\astore,\aheap_1) \modelsr p(\vec{t})}$, 
${(\astore,\aheap_2) \modelsr \asform}$ and $\aheap = \aheap_1 \union \aheap_2$.
We get that $p(\vec{t}) \unfold \asform' \swedge \apform'$ and
$(\astore',\aheap_1) \modelsr \asform' \swedge \apform'$, for some \extension{\astore}{\vars(\asform' \swedge \apform') \setminus \vars(p(\vec{t}))}. 
We assume that
${\vars(\asform' \swedge \apform') \cap {\vars((p(\vec{t}) * \asform) \swedge \apform) \subseteq \vec{t}}}$ (by  $\alpha$-renaming).
We get
$\astore' \models \apform'$ 
and 
$(\astore',\aheap_1 \union \aheap_2) \modelsr \asform' * \asform$, hence
$(\astore',\aheap) \modelsr (\asform' * \asform) \swedge (\apform' \wedge \apform)$. 
Furthermore, the formula $\asform' \swedge \apform'$ occurs in $\{ \asform_i' \swedge \apform_i' \mid i =1,\dots,n\}$, up to a renaming of the  variables not occurring in $\vec{t}$, by definition of $\unfoldall$.
Thus $(\astore',\aheap)$ is a counter-model of 
a sequent $(\asform'_i * \asform) \swedge (\apform'_i \wedge \apform) \vdashr \asheapB$, for some $i = 1,\dots,n$.

Conversely, let $(\astore,\aheap)$ be a structure such that
$(\astore,\aheap) \modelsr (\asform'_i * \asform) \swedge (\apform'_i \wedge \apform)$,
$\astore(V) \cap \dom{\aheap} = \emptyset$,  $\astore$ is injective on $V$
and
$(\astore,\aheap) \not \modelsr  \asheapB$.
We deduce that $\astore \models \apform'_i \wedge \apform$ and 
there exist disjoint heaps $\aheap_1,\aheap_2$ such that
$(\astore,\aheap_1) \modelsr \asform'_i$
and
$(\astore,\aheap_2) \modelsr \asform$.
By the application condition of the rule we have 
$p(\vec{t}) \unfold \asform'_i \swedge \apform'_i$, thus 
$(\astore,\aheap_1) \modelsr p(\vec{t})$ by definition of the semantics of predicate atoms (since $\astore$ is an extension of itself).
Consequently, $(\astore,\aheap_1 \union \aheap_2) \modelsr p(\vec{t}) * \asform$, hence
 $(\astore,\aheap) \modelsr (p(\vec{t}) * \asform) \swedge \apform$, and
$(\astore,\aheap)$ is a counter-model of $(p(\vec{t}) * \asform) \swedge \apform \vdashr \asheapB$.
}

\item[\wea]{
Assume that $(\astore,\aheap) \modelsr \asform \swedge \apform$, 
$\astore(V) \cap \dom{\aheap} = \emptyset$,  $\astore$ is injective on $V$
and 
$(\astore,\aheap) \not \modelsr \asheapB$.
If $\apform'$ is $x \not \iseq y$, with $\asform \Emodels x \not \iseq y$, then, by Lemma~\ref{lem:emodel}, we have 
$\astore(x) \not = \astore(y)$, thus $(\astore,\aheap) \modelsr \asform \swedge (\apform \wedge \apform')$, hence
$(\astore,\aheap)$ is a counter-model of the conclusion of the rule.
Assume that $\apform' = \bigwedge_{i=1}^n x \not \iseq y_i$ with $x \not \in \vars(\asform) \cup \vars(\apform) \cup \vars(\asheapB) \cup \{ y_1,\dots,y_n \}$.
Let $\astore'$ be a store such that $\astore'(y) = \astore(y)$ if $y \not = x$ and $\astore'(x)$ is an arbitrarily chosen location not occurring in $\dom{\aheap} \cup \{ \astore(y_i) \mid i = 1,\dots,n \}$ (such a location exists since $\aheap$ is finite and $\univ_{\addr}$ is infinite).
By definition, we have $\astore'(V) \cap \dom{\aheap} = \emptyset$, 
$\astore' \models \bigwedge_{i=1}^n x \not \iseq y_i$,
$(\astore',\aheap) \modelsr \asform \swedge \apform$ (since $\astore'$ and $\astore$ coincide on $\vars(\asform) \cup \vars(\apform)$)
and
$(\astore',\aheap) \not \modelsr \asheapB$ (since $\astore'$ and $\astore$ coincide on $\vars(\asheapB)$).
Thus $(\astore',\aheap)$ is a counter-model of $\asform \swedge (\apform \wedge \bigwedge_{i=1}^n x \not \iseq y_i) \vdashr \asheapB$.

Conversely, it is clear that 
any counter-model of $\asform \swedge (\apform \wedge \apform') \vdashr \asheapB$
is a counter-model of $\asform \swedge \apform \vdashr \asheapB$.

}

\item[\dec]{
Let $(\astore,\aheap)$ such that $(\astore,\aheap) \modelsr \asform \swedge \apform$, 
$\astore(V) \cap \dom{\aheap} = \emptyset$,  $\astore$ is injective on $V$ 
and 
$(\astore,\aheap) \not \modelsr \asheapB$.
We distinguish two cases.
\begin{itemize}
\item{If $\astore(x) = \astore(y)$ then 
$(\astore,\aheap)  \modelsr \repl{\asform}{x}{y}$,
$(\astore,\aheap)  \modelsr \repl{\apform}{x}{y}$
and
$(\astore,\aheap) \not \modelsr \repl{\asheapB}{x}{y}$.
Moreover, $\astore(\repl{V}{x}{y}) \cap \dom{\aheap} = 
\astore(V) \cap \dom{\aheap} = \emptyset$ and $\astore$ must be injective on $\repl{V}{x}{y}$, hence 
$(\astore,\aheap)$ falsifies  
$\repl{\asform}{x}{y} \swedge \repl{\apform}{x}{y} \vdashrV{\repl{V}{x}{y}} \repl{\asheap}{x}{y}$.}
\item{Otherwise, we have $\astore \models x \not \iseq y$, thus 
$(\astore,\aheap) \modelsr \asform \swedge (\apform \wedge x \not \iseq y)$, and 
$(\astore,\aheap)$ is a counter-model of 
$\asform \swedge (\apform \wedge x \not \iseq y) \vdashr \asheap$.}
\end{itemize}

Conversely, if $(\astore,\aheap)$ is a counter-model of
$\asform \swedge (\apform \wedge x \not \iseq y) \vdashr \asheapB$ then it is clear that it is also a counter-model of 
$\asform \swedge \apform \vdashr \asheapB$.
If $(\astore,\aheap)$ is a counter-model of 
$\repl{\asform}{x}{y} \swedge \repl{\apform}{x}{y} \vdashrV{\repl{V}{x}{y}} \repl{\asheapB}{x}{y}$, then consider the store $\astore'$ such that $\astore'(x) = \astore(y)$ and 
$\astore'(z) = \astore(z)$ if $z\not = x$.
By definition, 
$(\astore',\aheap) \modelsr \asform \swedge \apform$
and 
$(\astore',\aheap) \not \modelsr \asheapB$.
The set ${\astore'(V) \cap \dom{\aheap} = \astore(\repl{V}{x}{y})  \cap \dom{\aheap}}$ is empty,
and $\astore'$ is injective on $V$, since $\astore$ is injective on $\repl{V}{x}{y}$
and by definition $\astore'(u) = \astore(\repl{u}{x}{y})$, for all variables $u$.
Therefore $(\astore',\aheap)$ is a counter-model of $\asform \swedge \apform \vdashr \asheapB$.
}

\item[\eli]{
Let $(\astore,\aheap)$ be a counter-model of 
$\asheap \vdashrV{V} \asheapB$.
Then 
$(\astore,\aheap) \modelsr \asheap$
and $(\astore,\aheap)  \not\modelsr \asheapB$.
Let $\astore'$ be a store such that $\astore'(x)\not \in \dom{\aheap} \cup \astore(V)$ and
$\astore'(y) = \astore(y)$, if $y \not = x$.
Since $x\not \in \vars(\asheap) \cup \vars(\asheapB)$ we have
$(\astore',\aheap) \modelsr \asheap$,
and $(\astore',\aheap) \not \modelsr \asheapB$. Moreover $\astore'(x) \not \in \dom{\aheap}$ and $\astore'$ is injective on $V \cup \{ x \}$ (since $\astore$ is injective on $V$ and $\astore'(x)\not \in V$), thus
$(\astore',\aheap)$ is a counter-model of
$\asheap \vdashrV{V \cup \{ x\}} \asheapB$.

Conversely, it is clear that any counter-model of $\asheap \vdashrV{V \cup \{ x \}} \asheapB$ is also a counter-model
of $\asheap \vdashr \asheapB$.
}

\item[\imi]{
Let $(\astore,\aheap)$ be a counter-model of 
$(x \mapsto (y_1,\dots,y_k) * \asform) \swedge \apform \vdashr (p(x,\vec{z}) * \asformB) \swedge \apformB$.
By definition $(\astore,\aheap)\modelsr (x \mapsto (y_1,\dots,y_k) * \asform)  \swedge \apform$,
$(\astore,\aheap) \not \modelsr (p(x,\vec{z}) * \asformB) \swedge \apformB$,
 the set ${\astore(V) \cap \dom{\aheap}}$ is empty and $\astore$ is injective on $V$.
  By the application conditions of the rule and Lemma~\ref{lem:emodel}, we have $\astore\models \apformB'\sigma$.
Assume 
for the sake of contradiction that $(\astore,\aheap)$ is not a counter-model of 
$(x \mapsto (y_1,\dots,y_k) * \asform) \swedge \apform \vdashr (x \mapsto (y_1,\dots,y_k) * \asformB'\sigma * \asformB) \swedge \apformB$.
This entails that 
${(\astore,\aheap) \modelsr (x \mapsto (y_1,\dots,y_k) * \asformB'\sigma * \asformB) \swedge \apformB}$, hence 
$\astore \models \apformB$, and there exist disjoint heaps $\aheap_1,\aheap_2$ and $\aheap_3$ such that
$\aheap = \aheap_1 \union \aheap_2 \union \aheap_3$,
$(\astore,\aheap_1) \modelsr x \mapsto (y_1,\dots,y_k)$,
$(\astore,\aheap_2) \modelsr \asformB'\sigma$
and
$(\astore,\aheap_3) \modelsr   \asformB$.
Let
$\asheap = x \mapsto (u_1,\dots,u_k) * \asformB') \swedge \apformB'$, so that $p(x,\vec{z}) \unfold \asheap$. Since 
all the rules in $\asid$ are \prules, necessarily $\vars(\asheap) \setminus \vars(p(x,\vec{z})) \subseteq \{ u_1,\dots,u_k \}$.
Let $\astore'$ be the store defined by:
$\astore'(y) = \astore(\sigma(y))$, for all $y \in \vars$.
By the application condition of the rule, $\dom{\sigma}$ is a subset of $\{ u_1,\dots,u_k\} \setminus (\set{x} \cup \vec{z})$, hence $\astore'$ coincides with $\astore$ on all variables in $x,\vec{z}$. 
We deduce that
$\astore'$ is an \extension{\astore}{\{ u_1,\dots,u_k \} \setminus \vars(p(x,\vec{z})) }.
We have 
${(\astore,\aheap_1) \modelsr x \mapsto (y_1,\dots,y_k)}$, with $\sigma(x) = x$ and $\sigma(u_i) = y_i$, thus
$(\astore,\aheap_1) \modelsr (x \mapsto (u_1,\dots,u_k))\sigma$.
Moreover, we also have $(\astore,\aheap_2) \modelsr \asformB'\sigma$, hence
$(\astore,\aheap_1 \union \aheap_2) \modelsr \asheap\sigma$. 
By Proposition~\ref{prop:subst_store}, we get
$(\astore',\aheap_1 \union \aheap_2) \modelsr \asheap$.
Since $p(x,\vec{z}) \unfold \asheap$, we deduce that
$(\astore,\aheap_1 \union \aheap_2) \modelsr p(x,\vec{z})$, hence
$(\astore,\aheap) \modelsr p(x,\vec{z}) *  \asformB$.
Thus $(\astore,\aheap)  \modelsr (p(x,\vec{z}) * \asformB) \swedge \apformB$, which contradicts our hypothesis.

Conversely, 
assume that $(x \mapsto (y_1,\dots,y_k) * \asform) \swedge \apform \vdashr (p(x,\vec{z}) * \asformB) \swedge \apformB$ is valid, and let 
$(\astore,\aheap)$ be an \rmodel of $(x \mapsto (y_1,\dots,y_k) * \asform) \swedge \apform$
such that $\astore(V) \cap \dom{\aheap} = \emptyset$ and $\astore$ is injective on $V$.
We deduce that $(\astore,\aheap) \modelsr (p(x,\vec{z}) * \asformB) \swedge \apformB$, thus $\astore \models \apformB$, and there exist disjoint heaps 
$\aheap_1,\aheap_2$ such that
$(\astore,\aheap_1) \modelsr (p(x,\vec{z})$
and
$(\astore,\aheap_2) \modelsr \asformB$.
This entails that there exists a symbolic heap $\asheap$ and a \namedextension{\astore'}{\astore}{\vars(\asheap) \setminus \vars(p(x,\vec{z}))} such that $p(x,\vec{z}) \unfold \asheap$, and
$(\astore',\aheap_1) \modelsr \asheap$.
Since the rules in $\asid$ are \prules, $\asheap$ is of the form 
$(x \mapsto (v_1,\dots,v_m) * \asformB'') \swedge \apformB''$.
Moreover, it is clear that $\aheap(\astore(x)) = (\astore(y_1),\dots,\astore(y_k))$, so that $m = k$ and 
$\astore'(v_i) = \astore(y_i)$, for all $i = 1,\dots,k$.
Let $\sigma'$ be the substitution mapping every variable in $\{ v_1,\dots,v_k\}$ not occurring in $x,\vec{z}$ to the first variable $y_j$ such that $\astore'(v_i) = \astore(y_j)$.
By definition, we have $\astore' = \astore \circ \sigma'$, thus by Proposition~\ref{prop:subst_store}, we get 
$(\astore,\aheap_1) \modelsr \asheap\sigma'$.

Assume 
for the sake of contradiction that the inductive rule used to derive
$\asheap$ is distinct from the one 
used to derive the formula $\asheap' = (x \mapsto (u_1,\dots,u_k) * \asformB') \swedge \apformB'$ in the application 
condition of the rule. 
We may assume by renaming that $\vars(\asheap') \cap \vars(\asheap) \subseteq \vars(p(x,\vec{z})$.
Let $\astore''$ be a store coinciding with $\astore'$ on all constants and on all variables in 
$\vars(\asheap)$, and such that, for all variables $y\in \vars(\asheap') \setminus \vars(p(x,\vec{z}))$,
$\astore''(y) = \astore(\sigma(y))$.
Since $\astore' \models \apformB''$ we have 
$\astore'' \models \apformB''$.
By the application condition of the rule $\sigma(u_i) = y_i$, thus $\astore'' \models u_i \iseq v_i$, for all $i = 1,\dots,k$.
Since $\astore \models \apform$ and $\apform \models \apformB'\sigma$ (still by the application condition of the rule),
we get $\astore \models \apformB'\sigma$, and by Proposition~\ref{prop:subst_store} we deduce that 
$\astore \models \apformB'$.
Thus $(u_1,\dots,u_k) \iseq (v_1,\dots,v_k) \wedge \apformB' \wedge \apformB''$ is  satisfiable, which contradicts the fact that $\asid$ is \deterministic.

This entails that the rules applied to derive 
$\asheap$ and $\asheap'$ are the same, and by renaming we may assume in this case that $(u_1,\dots,u_k) = (v_1,\dots,v_k)$ (which entails that ${(x \mapsto (v_1,\dots,v_k))\sigma} = {x \mapsto (y_1,\dots,y_k)}$),
with $\asformB' = \asformB''$ and $\apformB' = \apformB''$.
Using the fact that $\sigma(u_i) = y_i$ and $\astore'(v_i) = \astore(y_i)$, for all $i = 1,\dots,k$, it is easy to check that 
$\astore'(y) = \astore(\sigma(y))$, for all variables $y$.
Since $(\astore',\aheap_2) \modelsr \asheap$
we get by Proposition~\ref{prop:subst_store},
$(\astore,\aheap_2) \modelsr \asheap\sigma$, i.e., 
$(\astore,\aheap_2) \modelsr (x \mapsto (y_1,\dots,y_k) * \asformB'\sigma)$.
Therefore $(\astore,\aheap) \modelsr (x \mapsto (y_1,\dots,y_k) * \asformB'\sigma * \asformB) \swedge \apformB$.

}

\end{itemize}
\end{proof}

Rule \sep\ is sound but in contrast with the other rules, it is not invertible in general (intuitively, there is no guarantee that the decomposition of the left-hand side of the sequent corresponds to that of the right-hand side).
\begin{example}
Consider the (valid) sequent $x \mapsto (y) * y \mapsto (x) \vdashrV{\emptyset} p(x,y) * p(y,x)$ with the rule $p(u,v) \Leftarrow u \mapsto (v)$. Rule \sep\ applies, yielding the valid premises $x \mapsto (y) \vdashrV{\emptyset} p(x,y)$ and
$y \mapsto (x) \vdashrV{\emptyset} p(y,x)$. However, since the rules apply modulo  commutativity of $*$ we may also get the premises:
$x \mapsto (y) \vdashrV{\emptyset} p(y,x)$ and
$y \mapsto (x) \vdashrV{\emptyset} p(x,y)$ which are not valid.
\end{example}

\begin{lemma}
\label{lem:sep_sound}
Rule \sep\ is sound. More specifically, if $(\astore,\aheap)$ is a counter-model of 
the conclusion, then one of the premises admits a counter-model $(\astore,\aheap')$, where $\aheap'$ is a proper subheap of $\aheap$.
\end{lemma}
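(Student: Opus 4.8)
The plan is to unwind the semantics of the separating conjunction on the left-hand side of the conclusion and to show that one of the two resulting subheaps is already a counter-model of the corresponding premise. Let $(\astore,\aheap)$ be a counter-model of the conclusion $(\asform_1 * \asform_2) \swedge (\apform_1 \wedge \apform_2) \vdashrV{V} (\asformB_1 * \asformB_2) \swedge (\apformB_1 \wedge \apformB_2)$. From $(\astore,\aheap) \modelsr (\asform_1 * \asform_2) \swedge (\apform_1 \wedge \apform_2)$ I obtain disjoint heaps $\aheap_1,\aheap_2$ with $\aheap = \aheap_1 \union \aheap_2$, $(\astore,\aheap_i) \modelsr \asform_i$ for $i=1,2$, and $\astore \models \apform_1 \wedge \apform_2$. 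The application conditions $\asform_1 \neq \emp$ and $\asform_2 \neq \emp$ together with Corollary~\ref{cor:noemp} give $\aheap_1 \neq \emptyset$ and $\aheap_2 \neq \emptyset$, so each $\aheap_i$ is a proper subheap of $\aheap$.

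Next I locate which premise to falsify. Since $(\astore,\aheap) \not\modelsr (\asformB_1 * \asformB_2) \swedge (\apformB_1 \wedge \apformB_2)$, either $\astore \not\models \apformB_i$ for some $i$, in which case $(\astore,\aheap_i) \not\modelsr \asformB_i \swedge \apformB_i$; or $(\astore,\aheap) \not\modelsr \asformB_1 * \asformB_2$, and since $\aheap = \aheap_1 \union \aheap_2$ is one particular disjoint splitting of $\aheap$, we cannot have both $(\astore,\aheap_1) \modelsr \asformB_1$ and $(\astore,\aheap_2) \modelsr \asformB_2$, hence again $(\astore,\aheap_i) \not\modelsr \asformB_i \swedge \apformB_i$ for some $i$. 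After relabelling, assume $i=1$: then $(\astore,\aheap_1)$ validates $\asform_1 \swedge \apform_1$ and falsifies $\asformB_1 \swedge \apformB_1$.

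It remains to verify the two side conditions of Definition~\ref{def:cmodel} for the multiset $V \cup \alloc{\asform_2}$ attached to the first premise. No variable of this multiset is allocated in $\aheap_1$: for $x \in V$ we have $\astore(x) \notin \dom{\aheap} \supseteq \dom{\aheap_1}$, and for $x \in \alloc{\asform_2}$, Lemma~\ref{lem:alloc} gives $\astore(x) \in \dom{\aheap_2}$, which is disjoint from $\dom{\aheap_1}$. For injectivity, $\astore$ is injective on $V$ by hypothesis; $\alloc{\asform_2}$ is a set on whose elements $\astore$ is injective, because $\asform_2$ is satisfiable (Corollary~\ref{cor:heapunsat}); and the two previous observations show that $\astore(V)$ and $\astore(\alloc{\asform_2})$ are disjoint, so $V \cup \alloc{\asform_2}$ is in fact a set on which $\astore$ is injective. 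Hence $(\astore,\aheap_1)$ is a counter-model of the first premise, with $\aheap_1$ a proper subheap of $\aheap$; the case $i=2$ is symmetric, using $\alloc{\asform_1}$. Soundness of \sep\ is then the contrapositive of the displayed property.

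The only mildly delicate step is the last one: one must notice that the allocated variables of $\asform_2$ behave, relative to $\astore$ and $\aheap_1$, exactly like the ``forbidden'' variables of $V$ — they are mapped outside $\dom{\aheap_1}$ and are kept distinct from one another and from the $V$-variables by Corollaries~\ref{cor:noemp} and~\ref{cor:heapunsat} — so that enlarging $V$ by $\alloc{\asform_2}$ preserves both the non-allocation requirement and the injectivity requirement; the rest is a routine inspection of the semantics of $*$.
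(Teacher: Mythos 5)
Your proof is correct and follows essentially the same route as the paper's: both obtain the decomposition $\aheap = \aheap_1 \union \aheap_2$ from the semantics of $*$ on the left-hand side, use Lemma~\ref{lem:alloc} and Corollary~\ref{cor:heapunsat} to verify the non-allocation and injectivity conditions for the multiset $V \cup \alloc{\asform_{3-i}}$, and observe that the failure of the conclusion's right-hand side forces one premise to fail on the corresponding nonempty (hence proper) subheap. The only difference is presentational — you argue directly from a counter-model with an explicit case split on which premise fails, whereas the paper phrases the same argument contrapositively.
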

\begin{proof}
Let $(\astore,\aheap)$ be an \rmodel of $(\asform_1 * \asform_2) \swedge (\apform_1 \wedge \apform_2)$, where $\astore(V) \cap \dom{\aheap}= \emptyset$ and $\astore$ is injective on $V$.
Assume that the premises admit no counter-model of the form $(\astore,\aheap')$ with $\aheap' \subset \aheap$.
By definition, there exist disjoint heaps 
$\aheap_1,\aheap_2$ such that 
$\aheap = \aheap_1 \union \aheap_2$,
and $(\astore,\aheap_i) \modelsr \asform_i$, for $i = 1,2$.
Since $\asform_i \not = \emp$,
$\asform_i$ contains at least one predicate atom, with a root $x_i$. By Lemma~\ref{lem:alloc}, necessarily $\astore(x_i) \in \dom{\aheap_i}$, so $\aheap_i$ is not empty and $\aheap_i \subset \aheap$ for $i = 1,2$.
Still by Lemma~\ref{lem:alloc}, $\astore(\alloc{\asform_i}) \subseteq \dom{\aheap_i}$, and  since $\aheap_1$ and $\aheap_2$ are disjoint, we deduce that
$\astore(\alloc{\asform_{3-i}}) \cap \dom{\aheap_i} = \emptyset$ for $i=1,2$.
Thus $\astore(V \cup \alloc{\asform_{3-i}}) \cap \dom{\aheap_i} = \emptyset$.
By Corollary \ref{cor:heapunsat} $\astore$ is injective on $\alloc{\asform_1 * \asform_2}$, hence since $\astore$ is injective on $V$, we deduce that $\astore$ is injective on $V \cup \alloc{\asform_{3-i}}$.
Since $(\astore,\aheap_i)$ cannot be a counter-model of the premises because $\aheap_i \subset \aheap$, this entails that 
$(\astore,\aheap_i) \modelsr  \asformB_i \swedge \apformB_i$ for $i = 1,2$, thus 
$(\astore,\aheap) \modelsr (\asformB_1 * \asformB_2) \swedge (\apformB_1 \wedge \apformB_2)$.
\end{proof}

\subsection{Axioms and Anti-Axioms}

We define two sets of syntactic criteria on sequents that  allow to quickly conclude that such sequents are respectively valid or non-valid. This will be useful to block the application of the inference rules in these cases. Axioms (i.e., necessarily valid sequents) are defined as follows.
\begin{definition}
\label{def:axiom}
 An {\em axiom} is a sequent that is of one of the following forms modulo AC:
 \begin{enumerate}
 \item{$\asform \swedge (\apform \wedge \apform') \vdashr \asform \swedge \apform$;}
 \item{$\asform \swedge (\apform \wedge x \not \iseq x) \vdashr \asheapB$;}
 \item{$\asform \swedge \apform \vdashr \asheapB$ where $\asform$ is \heapunsat;}
 \item{$\asform \swedge \apform \vdashr \asheapB$ where either $\alloc{\asform} \cap V \not = \emptyset$ or $V$ contains two occurrences of the same variable.}
 \end{enumerate}
\end{definition}
Intuitively, a sequent is valid  if the right-hand side is a trivial consequence of the left-hand side, if the left-hand side is (trivially) unsatisfiable, or if $V$ contains a variable that is allocated by the left-hand side or two occurrences of the same variable (since by hypothesis counter-models must be injective on $V$).
\begin{lemma}
\label{lem:ax}
All axioms are valid.
\end{lemma}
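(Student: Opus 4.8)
The plan is to prove Lemma~\ref{lem:ax} by a direct case analysis over the four shapes of axioms listed in Definition~\ref{def:axiom}, showing in each case that the sequent admits no counter-model in the sense of Definition~\ref{def:cmodel}. Recall that a counter-model of $\asheap \vdashr \asheapB$ is a structure $(\astore,\aheap)$ that validates $\asheap$, falsifies $\asheapB$, allocates no variable of $V$, and on which $\astore$ is injective on $V$. For the first shape the obstruction will be that the right-hand side is a logical consequence of the left-hand side; for the remaining three shapes the obstruction will be that no structure validating the left-hand side while meeting the two side-conditions on $V$ can exist at all.

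First I would handle shape~1, $\asform \swedge (\apform \wedge \apform') \vdashr \asform \swedge \apform$: since $\swedge$ and $\wedge$ are both interpreted as conjunction (Definition~\ref{def:rmodel}), any structure validating $\asform \swedge (\apform \wedge \apform')$ validates $\asform$ and $\apform$, hence validates $\asform \swedge \apform$; so no structure can simultaneously validate the left-hand side and falsify the right-hand side, regardless of the conditions on $V$. Shape~2, $\asform \swedge (\apform \wedge x \not \iseq x) \vdashr \asheapB$, follows because the atom $x \not \iseq x$ is unsatisfiable, so the left-hand side has no model and a fortiori no counter-model exists. Shape~3, where $\asform$ is \heapunsat, is handled similarly: by the definition of \heapunsat the multiset $\alloc{\asform}$ contains two occurrences of the same variable, so $\asform$ is unsatisfiable by Corollary~\ref{cor:heapunsat}, and therefore so is $\asform \swedge \apform$, leaving no counter-model.

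For shape~4 I would split into its two sub-cases. If there is some $x \in \alloc{\asform} \cap V$, then any structure $(\astore,\aheap)$ validating $\asform \swedge \apform$ satisfies $\astore(x) \in \dom{\aheap}$ by Lemma~\ref{lem:alloc}, whereas the second clause of Definition~\ref{def:cmodel} would force $\astore(x) \notin \dom{\aheap}$ because $x \in V$ --- a contradiction. If instead $V$ contains two occurrences of the same variable $x$, then $\{ x,x \} \subseteqm V$, so the requirement that $\astore$ be injective on $V$ would give $\astore(x) \neq \astore(x)$, again impossible. In both sub-cases no counter-model can exist.

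I do not expect any real obstacle here: the lemma is an immediate consequence of the definitions together with Lemma~\ref{lem:alloc} and Corollary~\ref{cor:heapunsat}. The only points requiring a little care are invoking the conjunctive reading of $\swedge$ and $\wedge$ in shape~1 (so that validating the left-hand side genuinely forces validating the right-hand side), and, in shape~4, recalling the precise definition of a store being injective on a \emph{multiset} of variables, so that a repeated variable in $V$ immediately yields the contradiction $\astore(x) \neq \astore(x)$.
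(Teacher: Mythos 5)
Your proposal is correct and follows essentially the same route as the paper's own proof: a case analysis over the four axiom shapes, using the conjunctive semantics for shape~1, unsatisfiability of the left-hand side (via Corollary~\ref{cor:heapunsat}) for shapes~2 and~3, and Lemma~\ref{lem:alloc} together with the injectivity requirement on the multiset $V$ for shape~4. Nothing to add.
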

\begin{proof}
We consider each case separately (using the same notations as in Definition~\ref{def:axiom}):
\begin{enumerate}
\item{It is clear that every model  of  $\asform \swedge (\apform \wedge \apform')$ is a model of 
$\asform \swedge \apform$.}
\item{By definition, $\asform \swedge (\apform \wedge x \not \iseq x)$ has no model, hence
$\asform \swedge (\apform \wedge x \not \iseq x) \vdashr \asheapB$ has no counter-model.}
\item{If $\asform$ is \heapunsat then $\alloc{\asform}$ contains two occurrences of the same variable, which by Corollary \ref{cor:heapunsat}, entails that $\asform$ has no model. Thus
$\asform \swedge \apform \vdashr \asheapB$ has no counter-model.}
\item{
Let $(\astore,\aheap)$ be a counter-model of $\asform \swedge \apform \vdashr \asheapB$. By definition $\astore$ is injective on $V$ hence we cannot have $\{ x,x \} \subseteqm V$.
Also, by definition, if $x\in V$ then we cannot have $\astore(x) \in \dom{\aheap}$, and if $x\in \alloc{\asform}$ then  if $x\in \alloc{\asform} \cap V$ then $\astore(x) \in \dom{\aheap}$ by Lemma~\ref{lem:alloc}. We conclude that it is impossible to have $x\in \alloc{\asform} \cap V$ either.
}
\end{enumerate}
\end{proof}

\newcommand{\antiaxiom}{anti-axiom\xspace}

We also introduce the notion of an {\antiaxiom}, which is a sequent satisfying some syntactic conditions that prevent it from being valid. 
\begin{definition}
\label{def:antiax}
A sequent $\asform \swedge \apform \vdashr \asformB \swedge \apformB$ is an {\em \antiaxiom} if it is not an axiom, $\apform$ contains no equality, $\apformB = \true$ and one of the following conditions holds:
\begin{enumerate}
\item{$\alloc{\asformB} \not \subseteq \alloc{\asform}$; \label{anti:alloc}}
\item{$\asformB = \emp$ and $\asform \not = \emp$.}
\item{There exists a variable $x \in \alloc{\asform} \setminus \alloc{\asformB}$, such that 
$y \not \pathto{\asform}^* x$ holds, for all $y \in \alloc{\asformB}$;\label{anti:connect}}
\item{$V \cap (\vars(\asform) \setminus \vars(\asformB))$ is not empty.\label{anti:V}
}
\item{
$\vars_{\addr}(\asform) \setminus (\vars(\asformB) \cup \alloc{\asform})$ is not empty. 
\label{anti:ref}}
\end{enumerate}
\end{definition}

We provide examples illustrating every case in Definition~\ref{def:antiax}:
\begin{example}
The following sequents (where $p$ is some arbitrary predicate) are {\antiaxiom}s:
\[ \begin{array}{llllcllll}
1: & p(x,y) & \vdashrV{\emptyset} &  p(y,x) & \quad &
2: & p(x,y) & \vdashrV{\emptyset} & \emp \\
3: & p(x,y) * p(z,y) & \vdashrV{\emptyset} & q(x,y) & &
4: & p(x,y) & \vdashrV{\{ y \}} & r(x) \\
5: & p(x,y) & \vdashrV{\emptyset} & r(x)
\end{array} 
\]
Intuitively, $1$ cannot be valid because there exist models of $p(x,y)$ in which $y$ is not allocated whereas $y$ is allocated in all models of $p(y,x)$ 
Note that by Assumption \ref{assume:productive}, all predicates are productive, hence $p(x,y)$ admits at least one model. 
Furthermore, a predicate cannot allocate any of its arguments other than the root, for instance rules of the form $p(x,y) \Leftarrow x \mapsto (y) * p(y,x)$, indirectly allocating $y$, are not allowed. 
$2$ cannot be valid because the models of $p(x,y)$ allocate at least $x$.
For $3$, assuming that all variables are associated with distinct locations, one can construct a model of $p(x,y) * p(z,y)$
in which there is no path from $x$ to $z$  and  by Proposition~\ref{prop:connect} all locations occurring in the heap of any model of $q(x,y)$ must be reachable from $x$.
For $4$ and $5$, we can construct a counter-model by considering any structure in which $y$ occurs in the heap but is not allocated, and by Lemma~\ref{lem:establish}, all the locations occurring in the heap of any model of $r(x)$ must be allocated.
\end{example}

To show that all {\antiaxiom}s admit counter-models, we use the following lemma, which will also 
play a key r\^ole in the completeness proof.
It states that all the formulas that are \heapsat admit a model satisfying some particular properties:
 \newcommand{\smallU}{U}
 
 \begin{lemma}
 \label{lem:sat}
 Let $\asform$ be a spatial formula, containing a variable $x$ of sort $\addr$.
 Let $\astore$ be a store that is injective on $\alloc{\asform}$.
 Let $\smallU$ be an infinite subset of $\univ_{\addr}$ such that ${\smallU \cap \astore(\vars_{\addr})} = \emptyset$. 
 If $\asform$ is \heapsat, then it admits an $\asid$-model of the form $(\astore,\aheap)$, where 
 $\astore(x) \in \locs{\aheap}$, the set $\dom{\aheap}$ is a subset of $\smallU \cup \astore(\alloc{\asform})$
 and ${\locs{\aheap} \subseteq \smallU \cup \astore(\vars(\asform))}$.
 Moreover, if $\astore$ is injective then $(\astore,\aheap)$ 
 is a \pathcompatible model of $\asform$.
 \end{lemma}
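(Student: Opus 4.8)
The plan is to proceed by well-founded induction on a measure $\mu(\asform,x)$. For $x\in\vars_\addr(\asform)$, put $\delta(\asform,x)=0$ if $x\in\alloc{\asform}$ or $x$ occurs in some points-to atom of $\asform$; otherwise $x$ occurs only as a non-root argument of predicate atoms and, by Assumption~\ref{assume:no_useless_variable}, every such position belongs to $\outparam{\cdot}$, hence is added to $\outparam{\cdot}$ at some finite stage of its least-fixpoint computation --- let $\delta(\asform,x)$ be the least such stage over all occurrences of $x$. Let $\mu(\asform,x)$ be the lexicographic pair of $\delta(\asform,x)$ and the multiset (ordered by the multiset extension of $<$) of the \emph{productivity ranks} of the predicate atoms of $\asform$, where the rank of $p$ is $1$ plus the minimum, over the rules defining $p$, of the maximal rank of the predicates occurring in the rule body (well defined because every predicate is productive, by Assumption~\ref{assume:productive}). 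This order is well founded. In the base case $\asform$ has no predicate atom, so $\delta(\asform,x)=0$, and I would take $\aheap$ to be the disjoint union of the singleton heaps determined by the points-to atoms of $\asform$; the union is disjoint because $\asform$ is \heapsat\ and $\astore$ is injective on $\alloc{\asform}$, the three inclusions are then immediate, and $\astore(x)\in\locs{\aheap}$ since $x$ is either allocated or occurs in a points-to atom of $\asform$.

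For the inductive step, suppose first that $\delta(\asform,x)=0$ but $\asform$ contains a predicate atom $p(\vec{t})$. I would unfold it using a rule of minimal productivity rank; the resulting formula $\asform'$ has a strictly smaller rank multiset and still satisfies $\delta(\asform',x)=0$, since unfolding erases no points-to atom and the root of an unfolded predicate atom is reallocated by its body. Suppose instead $\delta(\asform,x)>0$; then $x=t_i$ with $i>1$ for some predicate atom $p(\vec{t})$ realizing the minimum, and $i\in\outparam{p}$. By the definition of the stage at which $i$ enters $\outparam{p}$, there is a rule for $p$ whose instantiated body either contains the atom $t_1\mapsto(\dots,x,\dots)$ --- so $x$ then occurs in a points-to atom and $\delta(\asform',x)=0$ --- or passes $x$ to a sub-predicate atom at a position whose stage is strictly smaller --- so $\delta(\asform',x)$ strictly decreases. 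I would unfold $p(\vec{t})$ with such a rule. In both situations $\mu(\asform',x)<\mu(\asform,x)$, the first component being decisive in the second case.

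In either unfolding I would extend $\astore$ to $\astore'$ by sending every existential variable of sort $\addr$ of the rule --- which, by condition~\ref{it:prog2} of Definition~\ref{def:simple}, is the root of one of the sub-predicate atoms --- to a fresh pairwise-distinct location drawn from $U$, and I would shrink $U$ by removing these finitely many locations; $U$ stays infinite and disjoint from $\astore'(\vars_\addr)$, $\asform'$ stays \heapsat, and $\astore'$ stays injective on $\alloc{\asform'}$, so the induction hypothesis applies to $(\asform',x)$. Because the rule is a \prulename\ and $\asid$ is \hdeterministic, every disequation of its body has an existential variable of sort $\addr$ on its right (assuming, harmlessly, that no body contains a trivial disequation $t\not\iseq t$), which $\astore'$ maps to a $U$-location distinct from the image of every other variable; hence $\astore'$ validates the pure part of the body, and the semantics of predicate atoms turns the model $(\astore',\aheap)$ of $\asform'$ into a model $(\astore,\aheap)$ of $\asform$, with $\astore(x)\in\locs{\aheap}$. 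The inclusions $\dom{\aheap}\subseteq U\cup\astore(\alloc{\asform})$ and $\locs{\aheap}\subseteq U\cup\astore(\vars(\asform))$ are inherited from the corresponding bounds for $\asform'$, using the conditions of Definition~\ref{def:simple} that confine the variables of a rule body to its parameters, the record fields, and the constants.

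Finally, when $\astore$ is injective it stays injective on $\vars_\addr$ after the extension, so the heap produced above is \pathcompatible: every location occurring in $\aheap$ that is not of the form $\astore(v)$ for a variable $v$ of $\asform$ is one of the fresh $U$-locations and hence lies outside the range of $\astore$, and an induction on the construction shows that $\astore(u)\connect{\aheap}^*\astore(v)$ forces $u\pathto{\asform}^*v$, because the values $\connect{\aheap}$-reachable from the root of an unfolded atom $q(\dots)$ are exactly those allocated within $q$'s structure together with the images of $q$'s out-parameters --- which is precisely what $\pathto{}$ records through $\outparam{\cdot}$. I expect the main obstacle to be the second case of the induction: a rule of minimal productivity rank may fail to mention the tracked parameter $x$ at all, so the cheapest model of $\asform$ need not reference $\astore(x)$, whereas a rule that does route $x$ toward a points-to occurrence may have arbitrarily large rank; giving the reachability stage --- extracted from the least-fixpoint definition of $\outparam{\cdot}$ --- lexicographic priority over the productivity rank is what reconciles these two demands.
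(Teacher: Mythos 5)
Your proposal is correct, and it reuses the same basic devices as the paper's proof --- unfolding predicate atoms, routing $x$ through $\outparam{\cdot}$ via Assumption~\ref{assume:no_useless_variable}, sending the existential $\addr$-variables of each applied rule to pairwise distinct fresh locations of $\smallU$, and combining injectivity on $\alloc{\asform}$ with heap-satisfiability to get disjointness --- but it organizes them quite differently. The paper argues compositionally, by induction on the formula: the case $\asform=\asform_1*\asform_2$ splits $\smallU$ into two disjoint infinite subsets and recurses on each conjunct, and the predicate-atom case applies the induction hypothesis to the spatial part of the unfolded rule body. You instead run one global rewriting induction on the whole formula, unfolding a single atom per step and constructing the heap only once no predicate atom remains, with termination secured by your lexicographic measure (the fixpoint stage at which $x$'s position enters $\outparam{\cdot}$, then the multiset of productivity ranks). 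What your version buys is a genuinely rigorous well-foundedness argument: read literally, the paper's ``induction on the formulas'' applies the induction hypothesis to a rule body, which is not a subformula, and your measure is precisely the bookkeeping needed to license that step --- in particular the priority you give to the stage component correctly resolves the tension between choosing a minimal-rank rule and choosing a rule that routes $x$ toward an eventual points-to occurrence. What the paper's version buys is a cleaner finish: because each substructure is built inside its own reserved slice of $\smallU$, both the disjointness of the sub-heaps and the claim that $(\astore,\aheap)$ is a \pathcompatible model fall out of local invariants of the recursion, whereas in your global construction the fact that a $\connect{\aheap}^{*}$-connection between images of $\asform$-variables decomposes at images of out-parameters --- the content of your final sentence --- still requires the minimal-counterexample argument that occupies the $*$-case of the paper's proof. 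That portion of your proposal is only a sketch, but the invariant you state for it (exiting the subheap generated by one atom is possible only at the image of one of its out-parameters, which is exactly what $\pathto{}$ records) is the right one, and your explicit, harmless side assumptions (no trivial disequation $t\not\iseq t$ in a rule body; \hdeterministic{}ness to keep all rule disequations between $\addr$-variables) make the treatment of the pure parts slightly more careful than the paper's.
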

 \begin{proof}
 The proof is by induction on the formulas. Note that we cannot have $\asform = \emp$ since $x\in \vars(\asform)$.
  \begin{itemize}
 \item{If $\asform = y_0 \mapsto (y_1,\dots,y_n)$ then we set: $\aheap \isdef \{ (\astore(y_0),\dots,\astore(y_n)) \}$. By definition, ${(\astore,\aheap) \modelsr \asform}$. Moreover, $x \in \{ y_0,\dots,y_n\}$, and 
 	$\locs{\aheap} = \{ \astore(y_i) \mid i = 0,\dots,n \text{\ and\ } y_i \in \vars_{\addr} \}$, hence $\astore(x) \in \locs{\aheap}$ and
 $\locs{\aheap} \subseteq \astore(\vars(\asform))$.
 Furthermore, $\dom{\aheap} = \{ \astore(y_0) \}$ and $\alloc{\asform} = \{ y_0 \}$ thus
 $\dom{\aheap} \subseteq \astore(\alloc{\asform})$.
 Finally, assume that $\astore$ is injective. We have by definition $\connect{\aheap} = \{ (\astore(y_0),\astore(y_i)) \mid i = 1,\dots,n \}$, thus if $\astore(u) \connect{\aheap}^* \astore(v)$ for $u,v \in \vars(\asheap)$ then 
 we must have either $\astore(u) = \astore(v)$, so that $u = v$ because $\astore$ is injective, in which case it is clear that $u \pathto{\asform}^* v$;
 or $\astore(u) = \astore(y_0)$
 and $\astore(v) = \astore(y_i)$ for some $i = 1,\dots,n$.
 Since $\astore$ is injective this entails that $u = y_0$, $v = y_i$, thus 
 $u \pathto{\asform} v$ by definition of $\pathto{\asform}$. 
 } 
 \item{Assume that $\asform = p(y_0,\dots,y_n)$ is a predicate atom. 
 Then, since by Assumption \ref{assume:productive} every predicate symbol is productive, there exists a symbolic heap $\asheapB$ such that
 $\asform \unfold \asheapB$.
If $x = y_0$ then since the rules in $\asid$ are \prules, $\asheapB$ contains a points-to-atom with root $x$. Otherwise, by Assumption \ref{assume:no_useless_variable}, $x = y_i$ for some $i \in \outparam{p}$, hence there exists a rule application
 $\asform \unfold \asheapB$ such that $x$ occurs in some predicate atom in $\asheapB$. Thus in both cases we may assume that $x$ occurs in a spatial atom in $\asheapB$. 
 Note that $\asheapB$ must be \heapsat, since all considered rules are \prules and by Definition~\ref{def:simple} the roots of the predicate symbols in $\asheapB$ are pairwise distinct existential variables, thus also distinct from the root $y_0$ of the points-to atom.
Furthermore, $\asheapB$ is of the form $\asformB \swedge \apformB$, where $x \in \vars(\asformB)$ and 
 $\apformB$ is a conjunction of disequations $u \not \iseq v$, with $u \not = v$.
 
 Let $\astore'$ be an \extension{\astore}{\vars(\asheapB) \setminus \vars(\asform)} mapping the variables in $\vars(\asheapB) \setminus \vars(\asform)$ to pairwise distinct locations in $\smallU$. Since by hypothesis  $U \cap \astore(\vars) = \emptyset$, $\astore'$ is injective on any set of the form $E \cup (\vars(\asheapB) \setminus \vars(\asform))$ when $\astore$ is injective on $E$.
 Let $\smallU' \isdef \smallU \setminus \astore'(\vars(\asheapB) \setminus \vars(\asform))$.
 By the induction hypothesis, there exists a heap $\aheap$ such that $(\astore',\aheap) \modelsr \asformB$,  with $\astore'(x) \in \locs{\aheap}$, ${\dom{\aheap} \subseteq \smallU' \cup \astore'(\alloc{\asformB})}$
 and $\locs{\aheap} \subseteq \smallU' \cup \astore'(\vars(\asformB))$.
 Now if $\astore$ is injective then (as $\astore'$ is also injective in this case), 
$(\astore',\aheap)$ is a \pathcompatible \rmodel of $\asformB$.
We show that $(\astore,\aheap)$ fulfills all the properties of the lemma.

\begin{itemize}
\item{ 
  Since $\astore'$ maps the variables in $\vars(\asheapB) \setminus \vars(\asform)$ to pairwise distinct locations in $\smallU$ and ${\astore(\vars) \cap \smallU}$ is $\emptyset$, necessarily $\astore' \modelsr \apformB$, thus $(\astore',\aheap) \modelsr \asheapB$ which entails that $(\astore,\aheap) \modelsr \asform$. We also have $\astore(x) = \astore'(x) \in \locs{\aheap}$.
}

\item{
Let $\ell \in \locs{\aheap}$. We show that $\ell \in \smallU \cup \astore(\vars(\asform))$.
 By the induction hypothesis, we have $\ell \in \smallU' \cup \astore'(\vars(\asformB))$.
If $\ell \in \smallU' \subseteq \smallU$ then the proof is completed, otherwise we have $\ell = \astore'(y)$ for some $y \in \vars(\asformB)$. If $y \in \vars(\asform)$ then $\astore(y) = \astore'(y)$ thus 
$\ell \in \astore(\vars(\asform))$.
Otherwise, we must have $y \in  \vars(\asheapB) \setminus \vars(\asform)$, thus $\astore'(y) \in \smallU$ by definition of $\astore'$ and the result holds.
}
\item{
Let $\ell \in \dom{\aheap}$, we show that $\ell \in \smallU \cup \astore(\alloc{\asform})$.
By the induction hypothesis we have $\ell \in \smallU' \cup \astore'(\alloc{\asformB})$.
If $\ell \in \smallU' \subseteq \smallU$ then the proof is completed.
Otherwise, $\ell = \astore'(y)$ with $y \in \alloc{\asformB}$.
  Since the rules in $\asid$ are \prules, $y$  is either the root $y_0$ of $\asform$, in which case we have $y \in \alloc{\asform}$ and $\astore(y) = \astore'(y)$, thus the result holds;
  or $y$ occurs in 
$\vars(\asheapB) \setminus \vars(\asform)$, in which that we have $\astore'(y)\in \smallU$, by definition of $\astore'$. 
}
\item{
 There  remains to show that $(\astore,\aheap)$ is a \pathcompatible \rmodel of $\asform$, in the case where $\astore$ is injective.
 Assume that $\astore(u) \connect{\aheap}^* \astore(v)$. If $\astore(u) = \astore(v)$ then  $u = v$ by injectivity of $\astore$, hence $u \pathto{\aheap}^* v$.
 Otherwise, we must have $\{ \astore(u),\astore(v) \} \subseteq \locs{\aheap} \subseteq \smallU' \cup \astore'(\vars(\asformB))$, and since $\smallU' \subseteq \smallU$ and $\smallU \cap \astore(\vars) = \emptyset$, we have
 $\{ \astore(u),\astore(v) \} \subseteq \astore'(\vars(\asformB))$.
We also have $\astore'(\vars(\asheapB) \setminus \vars(\asform)) \subseteq \smallU$, which entails that $\{ \astore(u),\astore(v) \} \subseteq \astore'(\vars(\asform))$. 
Since $\astore'$ is an \extension{\astore}{\vars(\asheapB) \setminus \vars(\asform)}, $\astore$ and $\astore'$ coincide on all variables in $\vars(\asform)$ and we deduce that
$\{ \astore(u),\astore(v) \} \subseteq \astore(\vars(\asform))$. 
Because $\astore$ is injective, this entails that 
$u,v \in \vars(\asform)$, so that
$\astore(u) = \astore'(u)$ and
 $\astore(v) = \astore'(v)$. 
 By hypothesis $(\astore',\aheap)$ 
 	 is a \pathcompatible \rmodel of $\asformB$, and we deduce that $u \pathto{\asformB}^* v$. 
 	 Since $u \not = v$, necessarily $u \in \alloc{\asformB}$ (by definition of $\pathto{\asformB}$), and since the rules in $\asid$ are \prules, and $u \in \vars(\asform)$, this entails that $u = \roots{\asform}$. 
 Since $v \in \vars(\asform)$,  by Assumption \ref{assume:no_useless_variable} we have
 $u \pathto{\asform}^* v$.
 }
 \end{itemize}
 }
 
 \item{
 Assume that $\asform  = \asform_1 * \asform_2$, with $\asform_i \not = \emp$. Let $\smallU_1,\smallU_2$ be disjoint infinite subsets of $\smallU$.
 Let $\{ x_1,x_2\}$ be some arbitrary chosen variables such that $x_i \in \vars(\asform_i)$ and $x\in \{ x_1,x_2\}$ (it is easy to check that such a pair of variables always exists).
 By the induction hypothesis, there exist
 heaps $\aheap_i$ such that $(\astore,\aheap_i) \models \asform_i$  
 where 
 $\astore(x_i) \in \locs{\aheap_i}$, ${\dom{\aheap_i} \subseteq \smallU_i \cup \astore(\alloc{\asform_i})}$
 and $\locs{\aheap_i} \subseteq \smallU_i \cup \astore(\vars(\asform_i))$.
 Moreover, if $\astore$ is injective, then $(\astore,\aheap_i)$ is an \pathcompatible $\asid$-model of $\asform_i$.
 We first show that $\aheap_1$ and $\aheap_2$ are disjoint.
 Assume for the sake of contradiction that $\ell \in \dom{\aheap_1} \cap \dom{\aheap_2}$. Since $\smallU_1 \cap \smallU_2 = \emptyset$, necessarily $\ell = \astore(y_i)$ (for $i =1,2$), with $y_i \in \alloc{\asform_i}$. Since 
 $\astore$ is injective on $\alloc{\asform}$, we deduce that $y_1 = y_2$. We have $\{ y_1,y_2 \} \subseteqm \alloc{\asform}$, hence $\asform$ is \heapunsat, which contradicts the hypotheses of the lemma.
 Thus $\aheap_1$ and $\aheap_2$ are disjoint. 
 
 Let $\aheap = \aheap_1 \union \aheap_2$.
 We have $\astore(x) \in \{ \astore(x_1),\astore(x_2) \} \subseteq \locs{\aheap_1} \cup \locs{\aheap_2} = \locs{\aheap}$.
 Moreover, $\dom{\aheap} = \dom{\aheap_1} \cup \dom{\aheap_2} \subseteq \smallU_1 \cup \smallU_2 \cup \astore(\alloc{\asform_1}) \cup \astore(\alloc{\asform_2}) {\subseteq \smallU \cup \astore(\alloc{\asform})}$,
 and $\locs{\aheap} = \locs{\aheap_1} \cup \locs{\aheap_2} \subseteq \smallU_1\cup \smallU_2 \cup \astore(\vars(\asform_i)) \cup \astore(\vars(\asform_2))  {\subseteq \smallU \cup \astore(\vars(\asform))}$.
Furthermore, $(\astore,\aheap) \modelsr \asform_1 * \asform_2 = \asform$.

There only remains to prove that $(\astore,\aheap)$ is a \pathcompatible \rmodel of $\asform$ when $\astore$ is injective.
Assume that this is not the case, and let $u,v$ be variables such that $\astore(u) \connect{\aheap}^* \astore(v)$ and 
$u \not \pathto{\asform}^* v$. This entails that $u \not = v$.
By definition, 
there exist $\ell_0,\dots,\ell_m$ such that 
$\ell_0 = \astore(u)$,
$\ell_m = \astore(v)$,
and $\forall i = 1,\dots,m,\, \ell_{i-1} \connect{\aheap} \ell_i$.
We assume, w.l.o.g., that $m$ is miminal, i.e., that there is no sequence
$\ell_0',\dots,\ell_k'$ and no variables $x_0,x_k$ such that $k <m$, $\ell_0' = \astore(x_0)$,
$\ell_k' = \astore(x_k)$ and $x_0 \not \pathto{\asform}^* x_k$.
We may also assume, by symmetry, that $\ell_0 \in \dom{\aheap_1}$.
If all the locations $\ell_1,\dots,\ell_{m-1}$  occur in $\dom{\aheap_1}$ then 
$\astore(u) \connect{\aheap_1}^* \astore(v)$, thus
$u \pathto{\asform_i}^* v$ because $(\astore,\aheap_i)$ is an \pathcompatible $\asid$-model of $\asform_i$, which entails that 
$u \pathto{\asform}^* v$ since by definition $\pathto{\asform_i} \subseteq \pathto{\asform}$, contradicting our assumption.
Otherwise, let $j$ be the smallest index  in $\ell_1,\dots,\ell_{m-1}$ such that $\ell_j \not \in \dom{\aheap_1}$.
Since $\ell_j \in \dom{\aheap}$ (as $\ell_{j} \connect{\asheap} \ell_{j+1}$) we deduce that 
$\ell_j \in \dom{\aheap_2} \subseteq \locs{\aheap_2}$, and $\ell_j \in \locs{\aheap_1}$. Since $\smallU_1 \cap \smallU_2 = \emptyset$, we get $\ell_j = \astore(\vars(\asform_i))$, for some $i = 1,2$.
Thus $\ell_j = \astore(u')$ with $u'\in \vars(\asform)$.
Since $u \not \pathto{\asform}^* v$, we have by transitivity either
$u \not \pathto{\asform}^* u'$
or
$u' \not \pathto{\asform}^* v$,
 which contradicts the minimality of $m$ (one of 
the sequences 
$\ell_1,\dots,\ell_j$ or
$\ell_j,\dots,\ell_m$ satisfies the conditions above and has a length strictly less than $m+1$).
  }
 \end{itemize}
 \end{proof}

\begin{lemma}
\label{lem:antiax}
All {\antiaxiom}s admit counter-models.
\end{lemma}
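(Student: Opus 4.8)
The plan is to treat the five cases of Definition~\ref{def:antiax} essentially uniformly: build one kind of counter-model from Lemma~\ref{lem:sat}, and in each case point to a single location witnessing $(\astore,\aheap)\not\modelsr\asformB$. Since the sequent $\asform\swedge\apform\vdashr\asformB\swedge\apformB$ is an \antiaxiom, it has $\apformB=\true$, $\apform$ equality-free, and it is not an axiom, so by Definition~\ref{def:axiom} I may record at the outset that $\asform$ is \heapsat, that $\alloc{\asform}\cap V=\emptyset$, that $V$ has no repeated element, and that $\apform$ contains no disequation $t\not\iseq t$; and it suffices to produce $(\astore,\aheap)$ with $(\astore,\aheap)\modelsr\asform$, $\astore\models\apform$, $(\astore,\aheap)\not\modelsr\asformB$, $\astore(V)\cap\dom{\aheap}=\emptyset$ and $\astore$ injective on $V$. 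I would then fix once and for all an injective store $\astore$ (so $\astore\models\apform$ automatically), chosen so that $\univ_{\addr}\setminus\astore(\vars_{\addr})$ is infinite, and let $U$ be an infinite subset of that complement --- this is exactly the data Lemma~\ref{lem:sat} consumes.

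Next, whenever $\asform\neq\emp$, I would pick a variable $x$ of sort $\addr$ occurring in $\asform$ (the choice depending on the case) and apply Lemma~\ref{lem:sat} to get an \rmodel $(\astore,\aheap)$ of $\asform$ with $\astore(x)\in\locs{\aheap}$, $\dom{\aheap}\subseteq U\cup\astore(\alloc{\asform})$, $\locs{\aheap}\subseteq U\cup\astore(\vars(\asform))$, and --- since $\astore$ is injective --- $(\astore,\aheap)$ a \pathcompatible model of $\asform$. The $V$-conditions then come for free: $\astore(V)\cap U=\emptyset$ because $U\cap\astore(\vars_{\addr})=\emptyset$, and $\astore(V)\cap\astore(\alloc{\asform})=\emptyset$ by injectivity of $\astore$ together with $\alloc{\asform}\cap V=\emptyset$, hence $\astore(V)\cap\dom{\aheap}=\emptyset$; and $\astore$ is injective on $V$ by construction. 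All that remains in each case is to refute $\asformB$.

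The case analysis is short. For case~\ref{anti:alloc}, take $z\in\alloc{\asformB}\setminus\alloc{\asform}$: if $\asform\neq\emp$ let $x$ be any root of $\asform$, and note $\astore(z)\notin\dom{\aheap}$ (as $z\notin\alloc{\asform}$, by injectivity and the bound on $\dom{\aheap}$), so $(\astore,\aheap)\not\modelsr\asformB$ by Lemma~\ref{lem:alloc}; if $\asform=\emp$ take $\aheap=\emptyset$ and use $\asformB\neq\emp$ with Corollary~\ref{cor:noemp}. Case~2 is immediate: $\asform\neq\emp$ forces $\aheap\neq\emptyset$ (Corollary~\ref{cor:noemp}), so $(\astore,\aheap)\not\modelsr\emp$. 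For case~\ref{anti:connect}, let $x$ be the witness $x_0\in\alloc{\asform}\setminus\alloc{\asformB}$ with $y\not \pathto{\asform}^* x_0$ for every $y\in\alloc{\asformB}$; if $(\astore,\aheap)\modelsr\asformB$ then $\astore(x_0)\in\dom{\aheap}\subseteq\locs{\aheap}$ by Lemma~\ref{lem:alloc}, so Proposition~\ref{prop:connect} yields $y\in\alloc{\asformB}$ with $\astore(y)\connect{\aheap}^*\astore(x_0)$, whence $y\pathto{\asform}^* x_0$ by path-compatibility, a contradiction. For cases~\ref{anti:V} and~\ref{anti:ref}, take $x\in\vars_{\addr}(\asform)\setminus\vars(\asformB)$ (in case~\ref{anti:V} the witness lies in $V\subseteq\vars_{\addr}$, hence outside $\alloc{\asform}$ since $\alloc{\asform}\cap V=\emptyset$; in case~\ref{anti:ref} this is the hypothesis); feeding this $x$ to Lemma~\ref{lem:sat} gives $\astore(x)\in\locs{\aheap}\setminus\dom{\aheap}$ (the non-membership because $x\notin\alloc{\asform}$), so if $(\astore,\aheap)\modelsr\asformB$ then Lemma~\ref{lem:establish} would force $\astore(x)\in\dom{\aheap}\cup\astore(\vars(\asformB))$, which is impossible since $\astore(x)\notin\dom{\aheap}$ and $x\notin\vars(\asformB)$ with $\astore$ injective. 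The one place where genuine care is needed is case~\ref{anti:connect}: there one must use the \emph{same} structure $(\astore,\aheap)$ both as a \pathcompatible model of $\asform$ and as a putative model of $\asformB$, combining path-compatibility of the former with the reachability guarantee of Proposition~\ref{prop:connect} for the latter; the remaining cases are direct appeals to Lemmas~\ref{lem:alloc} and~\ref{lem:establish} and to Corollary~\ref{cor:noemp}, on top of the $V$-bookkeeping set up above.
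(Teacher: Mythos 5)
Your proposal is correct and follows essentially the same route as the paper's proof: an injective store with an infinite reservoir of unused locations, Lemma~\ref{lem:sat} to build a \pathcompatible model whose allocated locations are exactly $U\cup\astore(\alloc{\asform})$, and then Lemma~\ref{lem:alloc}, Corollary~\ref{cor:noemp}, Proposition~\ref{prop:connect} and Lemma~\ref{lem:establish} to refute $\asformB$ case by case. The only (harmless) cosmetic difference is that you merge cases~\ref{anti:V} and~\ref{anti:ref} by noting the case-\ref{anti:V} witness also lies outside $\alloc{\asform}$, where the paper keeps them separate and routes case~\ref{anti:V} through $\astore(x)\notin\dom{\aheap}$ directly from $x\in V$.
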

\begin{proof}
Consider an \antiaxiom $\asform \swedge \apform \vdashr \asformB \swedge \apformB$, with the same notations as in Definition~\ref{def:antiax}.
Let $\astore$ be an injective store such that $\univ_{\addr} \setminus \astore(\vars)$ is infinite, such a store always exists since $\univ_{\addr}$ is infinite. 
First assume that $\asform \not = \emp$. In this case $\asform$ necessarily contains at least one spatial atom $\anatom$, hence at least one variable $x = \rootof{\anatom}$ of sort $\addr$.
Since $\asform \swedge \apform \vdashr \asformB \swedge \apformB$ is not an axiom,
$\asform$ is \heapsat, thus by Lemma~\ref{lem:sat} (applied with $\smallU \isdef \univ_{\addr} \setminus \astore(\vars)$), $\asform \swedge \apform$ admits a \pathcompatible model $(\astore,\aheap)$, 
such that $\dom{\aheap} \subseteq \smallU \cup \astore(\alloc{\asform})$.
Since $\smallU \cap \astore(\vars) = \emptyset$, we have, for all $u\in \vars$, $\astore(u) \in \dom{\aheap} \iff \astore(u)\in \astore(\alloc{\asform})$. Since $\astore$ is injective, we deduce that 
\[\astore(u) \in \dom{\aheap}  \iff u\in \alloc{\asform}\quad (\dagger).\]
Note that if $\asform = \emp$, then the structure $(\astore,\aheap)$ with $\aheap = \emptyset$ also satisfies $(\dagger)$, since $\pathto{\aheap}$, $\connect{\asform}$, $\dom{\aheap}$ and $\alloc{\asform}$ are all empty in this case.

We show that $(\astore,\aheap)$ is a counter-model of  $\asform \swedge \apform \vdashr \asformB \swedge \apformB$.
Since $\astore$ is injective, in particular $\astore$ must be injective on $V$, because otherwise $V$ would contain
two occurrences of the same variable, hence $\asform \swedge \apform \vdashr \asformB \swedge \apformB$ would be an axiom, contradicting Definition~\ref{def:antiax}.
If $V$ contains a variable $y$ such that $\astore(y) \in \dom{\aheap}$ then $(\dagger)$
entails that $y \in \alloc{\asform}$, which is impossible since $\asform \swedge \apform \vdashr \asformB \swedge \apformB$  would then be an axiom.
We prove that $(\astore,\aheap) \not \modelsr \asformB \swedge \apformB$ by considering each case from Definition~\ref{def:antiax} separately.
\begin{enumerate}
\item{If $\alloc{\asformB} \setminus \alloc{\asform}$ contains a variable $y$, then by $(\dagger)$ we have $\astore(y) \not \in \dom{\aheap}$, which by Lemma~\ref{lem:alloc} entails that 
$(\astore,\aheap) \not \modelsr \asformB$.}

\item{If $\asformB = \emp$ and $\asform \not = \emp$, then  $\asform$ contains at least one atom, hence $\alloc{\asform} \not = \emptyset$.
By Lemma~\ref{lem:alloc}, $\astore(\alloc{\asform}) \subseteq \dom{\aheap}$, thus $\aheap \not = \emptyset$
and  $(\astore,\aheap) \not \modelsr \asformB \swedge \apformB$.}

\item{Assume that 
there exists a variable $x' \in \alloc{\asform} \setminus \alloc{\asformB}$ such that 
$y \not \pathto{\asform} x'$, for all ${y \in \alloc{\asformB}}$, and that 
$(\astore,\aheap)\modelsr \asformB$.
By Lemma~\ref{lem:alloc} we have $\astore(x') \in \dom{\aheap} \subseteq \locs{\aheap}$, thus 
by Proposition~\ref{prop:connect} there exists $y\in \alloc{\asformB}$ such that 
$\astore(y) \connect{\aheap}^* \astore(x)$.
Since $(\astore,\aheap)$ is a \pathcompatible model 
of $\asform$ necessarily $y \pathto{\asform}^* x'$, which contradicts the above assumption.}

\end{enumerate}
For the remaining cases, we will apply Lemma~\ref{lem:sat} to obtain a heap, using a variable $x$ that is in $V \cap (\vars(\asform) \setminus \vars(\asformB))$ (for Condition \ref{anti:V}) or in $\vars_{\addr}(\asform) \setminus (\vars(\asformB) \cup \alloc{\asform})$ (for Condition \ref{anti:ref}).  Note that by Lemma~\ref{lem:sat} this entails in particular that $\astore(x) \in \locs{\aheap}$.
\begin{enumerate}
\setcounter{enumi}{3}
\item{Assume that $x \in V \cap \vars(\asform)$ and $x\not \in \vars(\asformB)$.
Then $\astore(x) \in \locs{\aheap}$ and 
since $x \in V$ we get ${\astore(x) \not \in \dom{\aheap}}$. By Lemma  \ref{lem:establish}, 
if $(\astore,\aheap) \modelsr \asformB$ then we have $\astore(x) \in \astore(\vars(\asformB))$, and $x \in \vars(\asformB)$ since $\astore$ is injective. Thus we get a contradiction.
}
\item{Assume that $x\in \vars_{\addr}(\asform)$, $x \not \in \vars(\asformB)$ and
$x \not \in \alloc{\asform}$. By $(\dagger)$, we deduce that $\astore(x) \not \in \dom{\aheap}$.
Assume that $(\astore,\aheap) \modelsr \asformB$.
Since $x \not \in \vars(\asformB)$ and $\astore$ is injective, we get $\astore(x) \not \in \astore(\vars(\asformB))$. By Lemma~\ref{lem:establish}, we deduce that $\astore(x) \not \in \locs{\aheap}$, a contradiction.}
\end{enumerate}
\end{proof}

\subsection{Proof Trees}

\newcommand{\fexpanded}{fully expanded\xspace}

 \newcommand{\descendant}{descendant\xspace}
  \newcommand{\direct}{\imi-free\xspace}
 
\newcommand{\ireducible}{\imi-reducible\xspace}
 \newcommand{\qireducible}{quasi-\ireducible}
\newcommand{\varmap}[1]{{\cal V}_{\mapsto}(#1)}

A {\em proof tree} is a (possibly infinite) tree with nodes labeled by sequents, such that if 
a node labeled by $\aseq$ has successors labeled by 
$\aseq_1,\dots,\aseq_n$ then there exists a rule instance 
of the form $\frac{\aseq_1 \dots \aseq_n}{\aseq}$. 
We will usually identify the nodes in a proof tree 
with the sequents labeling them.
A {\em path} from $\aseq$ to $\aseq'$ in a proof tree is a finite sequence $\aseq_0,\dots,\aseq_n$ such that $\aseq = \aseq_0$, $\aseq_n = \aseq'$ and for all $i = 1,\dots,n$, $\aseq_i$ is a successor of $\aseq_{i-1}$.
A proof tree is {\em \fexpanded} 
if all its leaves are axioms.
It is {\em rational} if it contains a finite number of subtrees, up to a renaming of variables. Note that rational trees may be infinite, but they can be represented finitely.
A sequent $\aseq'$ is a {\em \descendant} of a sequent $\aseq$
if there exists a proof tree with a path from $\aseq$ to $\aseq'$.



\begin{example}
Let $\asid$ be the following set of rules, where $\cstA,\cstB$ denote constant symbols and $u$ is a variable of the same sort as $\cstB$:
\[
\begin{tabular}{ccc}
$p(x,y)$ & $\Leftarrow$ & $x \mapsto (\cstA,y,z) * p(z,y)$ \\
$p(x,y)$ & $\Leftarrow$ & $x \mapsto (\cstB)$ \\
$r(x)$ & $\Leftarrow$ & $x \mapsto (\cstA,y,z) * r(z)$ \\
$r(x)$ & $\Leftarrow$ & $x \mapsto (u)$
\end{tabular}
\] 
The sequent $p(x,y) \vdashrV{\emptyset} r(x)$ admits the following rational proof tree (the sequent $p(z,y) \vdashrV{\emptyset} r(z)$ is identical to the conclusion up to a renaming):
\begin{prooftree}
\AxiomC{}
  \RightLabel{axiom}
\UnaryInfC{$x \mapsto (\cstA,y,z) \vdashrV{\{ z \}} x \mapsto (\cstA,y,z)$}
\AxiomC{$p(z,y) \vdashrV{\emptyset} r(z)$}
  \RightLabel{$\eli$}
\UnaryInfC{$p(z,y) \vdashrV{\{ x \}} r(z)$}
  \RightLabel{$\sep$}
\BinaryInfC{$x \mapsto (\cstA,y,z) * p(z,y) \vdashrV{\emptyset} x \mapsto (\cstA,y,z) * r(z)$}
  \RightLabel{$\imi$}
\UnaryInfC{$x \mapsto (\cstA,y,z) * p(z,y) \vdashrV{\emptyset} r(x)$}
\AxiomC{}
  \RightLabel{axiom}
\UnaryInfC{$x \mapsto (\cstB) \vdashrV{\emptyset} x \mapsto (\cstB)$}  
  \RightLabel{$\imi$}
\UnaryInfC{$x \mapsto (\cstB) \vdashrV{\emptyset} r(x)$}  
  \RightLabel{$\unf$}
  \BinaryInfC{$p(x,y) \vdashrV{\emptyset} r(x)$}
\end{prooftree}
\end{example}

\begin{remark}
Note that, since infinite proof trees are allowed, the fact that each rule is sound does not imply that the procedure itself is sound,
i.e., that the root of every \fexpanded proof tree  is valid.
The latter property holds  only for the strategy introduced in the next section, see Theorem~\ref{theo:sound}.
\end{remark}

\subsection{The Strategy}

\label{sect:strat}

\newcommand{\admissible}{admissible\xspace}


 \newcommand{\narrowvar}[1]{{\cal V}^\dagger(#1)}
 \newcommand{\narrow}{narrow\xspace}
 
In this section, we introduce a strategy to restrict the application of the inference rules, which will ensure both the soundness and efficiency of the proof procedure. To this purpose, we define a set of variables $\narrowvar{\aseq}$ which denotes the variables occurring at non-root positions on the right-hand side of the considered sequent, but not in the set of non-allocated  variables $V$. As we shall see, the strategy will handle the sequents in different ways depending on the number of such variables. Recall that $\maxr{\asid}$ denotes the maximal arity of the predicate symbols and tuples occurring in $\asid$.
\begin{definition}
 For every sequent $\aseq = \asheap \vdashr \asheapB$, we denote by 
 $\narrowvar{\aseq}$ the set $\vars_{\addr}(\asheapB) \setminus (V \cup \alloc{\asheapB})$.
 A sequent $\aseq$ is {\em \narrow} if it is \eqfree and $\card{\narrowvar{\aseq}} \leq \maxr{\asid}$.
 \end{definition}
Note that in particular, if $\asheapB$ is a spatial atom, then $\aseq$ is necessarily \narrow since in this case we must have $\card{\vars(\asheapB)} \leq \maxar{\asid} \leq \maxr{\asid}$. 
The inference rules are applied with the following strategy:

\begin{definition}
\label{def:strat}
\label{def:admissible} 
We assume a {\em selection function} is given, mapping every non\-empty finite set of expressions (i.e., formulas or rule applications) $S$ to an element of $S$; this element is said to be \hbox{\em selected in $S$}. We also assume that this function can be computed in polynomial time.
A proof tree is {\em \admissible} if all the rule applications occurring in it fulfill the following conditions 
(see Figure \ref{fig:rules} for the notations):
  \begin{enumerate}
  \item{No rule is applied on an axiom or an an \antiaxiom, and no rule is applied if one of the premises is an \antiaxiom. \label{strat:noax}}

 \item{\imi\ is applied only if $\apformB = \true$ and $\asformB = \emp$. \label{strat:ini}}
 \item{\unf\ is applied only if  
 $\asheapB$ is of one of the forms $q(x_1,\vec{y}) \swedge \true$ or $x_1 \mapsto \vec{y} \swedge \true$, where $x_1$ is the first component of the vector $\vec{t}$ (as defined in the rule). \label{strat:unf}}
 \item{\dec\ is applied only if $\asheapB$ is of the form $\asformB \swedge \true$, 
 $x \in \alloc{\asform}$, $y \in \vars(\asheapB) \setminus (\alloc{\asform} \cup V)$, 
 $\asformB$ is a predicate atom 
and $x \not \iseq y$ does not already occur in $\apform$. \label{strat:dec}}
 \item{\sep\ is applied only if $\apform_1 = \apform_2$, $\apformB_1 = \apformB_2 = \true$,
 and $\asformB_1$ is selected in the set of atoms in 
  $\asformB_1 * \asformB_2$. 
 Furthermore,
if the conclusion is not \narrow, then 
the rule is applied only if the left-premise  is valid 
(this will be tested by applying the decision procedure recursively).\label{strat:sep}}
 \item{The rules are applied with the following priority (meaning that no rule can be applied if a rule with a higher priority is also applicable):  \label{strat:priority}
 	\[\wea\ > \eli\ > \eq\ > \noteq\  >  \sep\ >  \unf\ >  \dec\ > \imi.\]
	}
 \item{For all the rules other than \sep, and for all applications of $\sep$ on a sequent that is not \narrow, if several applications of the rule are possible and all fulfill the conditions above, then  
 	only the one that is selected  in the set of possible rule applications can be applied. \label{strat:nocare} }
 
 \end{enumerate}
\end{definition}

From now on, we assume that all the considered proof trees are \admissible (the soundness\footnote{The soundness of each individual rule, as stated by Lemmata \ref{lem:rules} and \ref{lem:sep_sound}, does not depend on Definition~\ref{def:admissible}, but the global soundness of the procedure, as stated in Theorem~\ref{theo:sound} holds only for \admissible proof trees.} and completeness proofs below, as well as the complexity analysis hold with this requirement).
No assumption is made on the selection function, for instance one may assume that $\asformB_1$ is the leftmost atom in $\asformB_1 * \asformB_2$ 
(i.e., the atoms are ordered according to the order in which they occur in the formula) and that the first detected rule application is applied, except for $\sep$ if the conclusion is  \narrow.

Many of the conditions in Definition~\ref{def:strat} are quite natural, and simply aim at pruning the search space by avoiding irrelevant rule applications. For instance, applying a rule on an axiom or on an \antiaxiom is clearly useless. The priority order is chosen in such a way that the rules with the minimal branching factor are applied first, postponing the most computationally costly rules. Similarly, the rules operating on spatial atoms ($\imi$ or $\unf$) are postponed until all pure formulas have been handled. The unfolding of the atoms on the left-hand side of a sequent (Rule $\unf$) is postponed until 
the right-hand side has been fully decomposed (using rule \sep), yielding a unique spatial atom. This permits to guide the choice of the atom to be unfolded on the left-hand side: we only  unfold the atom with the same root as the one on the right-hand side.
Rule $\dec$ may cause a costly explosion if applied blindly, thus the application conditions are carefully designed. They are meant to ensure that the rule application is really useful, in the sense that it permits further applications of rule $\imi$.
Condition \ref{strat:nocare} in Definition~\ref{def:strat}
is meant to ensure that the considered rules are applied with a ``don't care'' strategy, 
meaning that if several rule applications are possible then one of them (the selected one) is chosen arbitrarily and the others are ignored. This is justified by the fact that these rules are invertible, hence exploring all possibilities is useless. Such a strategy is crucial for proving that the procedure runs in polynomial time.
In contrast, if the conclusion of $\sep$  is \narrow, then the rule must applied with a ``don't know'' indeterminism, i.e., all possible applications must be considered. This is necessary for completeness, due to the fact that the rule is {\em not} invertible in this case. Of course, all these possible rule applications must be taken into account for the complexity analysis.
In other words, one must cope with two different kinds of branching: an ``and-branching'' due to the fact that a given rule may have several premises, and an ``or-branching'' due to the fact that there may be several ways of applying a given rule on a given sequent. By Condition \ref{strat:nocare}, the latter branching occurs only when the rule $\sep$ is applied on  a \narrow sequent; in all other cases, only the selected rule application can be considered hence there is no or-branching.

The requirement that $\asformB_1$ must be selected in Condition \ref{strat:sep} ensures that only one decomposition $\asformB_1 * \asformB_2$ can be considered for the right-hand side of a given sequent. This avoids for instance the exponential blow-up that would occur if the same decomposition was performed in different orders.
Note that, to check whether the left premise is valid in Condition \ref{strat:sep}, it is necessary  to  recursively invoke the proof procedure. As we shall see below, this is feasible because this premise and all its {\descendant}s are necessarily \narrow: 


\begin{proposition}
\label{prop:narrow}
Let $\aseq_1$ be an \eqfree sequent. For every successor $\aseq_2$ of $\aseq_1$, the relation $\narrowvar{\aseq_2} \subseteq \narrowvar{\aseq_1}$ holds.
Consequently, if a sequent is \narrow then all its {\descendant}s 
are \narrow.
\end{proposition}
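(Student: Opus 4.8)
The plan is to prove the inclusion $\narrowvar{\aseq_2} \subseteq \narrowvar{\aseq_1}$ by a case analysis on which inference rule was applied to derive $\aseq_1$ with $\aseq_2$ as one of its premises, and then to obtain the ``consequently'' clause by a trivial induction along paths. Recall that $\narrowvar{\asheap \vdashr \asheapB}$ is defined as $\vars_{\addr}(\asheapB) \setminus (V \cup \alloc{\asheapB})$, so we must track how each rule modifies the right-hand side $\asheapB$, the multiset $V$, and the allocated multiset $\alloc{\asheapB}$. Throughout, we use the fact that $\aseq_1$ is \eqfree, which by Proposition~\ref{prop:rule-facts}(1) guarantees that $\aseq_2$ is also \eqfree and that the rule applied cannot be \eq.

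First I would dispense with the rules that do not touch the right-hand side at all: for \noteq, \unf, \wea, \eli\ and \sep, the right-hand side $\asheapB$ of the conclusion and of each premise coincide (for \sep\ each premise has a ``sub-right-hand-side'' $\asformB_i$ with $\alloc{\asformB_i}\subseteqm\alloc{\asheapB}$ and $\vars_{\addr}(\asformB_i)\subseteq\vars_{\addr}(\asheapB)$, while $V$ only grows), so $\vars_{\addr}$ of the right side does not increase, $\alloc{}$ of the right side does not decrease modulo the split, and $V$ only grows; in every case $\narrowvar{\aseq_2}\subseteq\narrowvar{\aseq_1}$ follows directly. The two substantive cases are \imi\ and \dec. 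For \dec, the replacement branch replaces $x$ by $y$ everywhere, which can only identify variables, hence shrink $\vars_{\addr}(\asheapB)$, while $\alloc{}$ and $V$ are substituted consistently; the disequation branch leaves $\asheapB$, $V$, and $\alloc{\asheapB}$ unchanged. Either way the inclusion holds.

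The main obstacle is the \imi\ case, where the right-hand side changes from $(p(x,\vec z) * \asformB)\swedge\apformB$ in the conclusion to $(x\mapsto(y_1,\dots,y_k) * \asformB'\sigma * \asformB)\swedge\apformB$ in the premise; here new variables from $\asformB'\sigma$ may appear on the right. The key point is that, by the strategy (Definition~\ref{def:strat}, Condition~\ref{strat:ini}), \imi\ is applied only when $\asformB = \emp$ and $\apformB = \true$, so the conclusion's right-hand side is exactly $p(x,\vec z)\swedge\true$ and the premise's is $(x\mapsto(y_1,\dots,y_k) * \asformB'\sigma)\swedge\true$. Moreover the left-hand side of both is $(x\mapsto(y_1,\dots,y_k) * \asform)\swedge\apform$, so $x,y_1,\dots,y_k$ all lie in $\vars(\asform)$ or are constants, hence in particular $\{y_1,\dots,y_k\}\cap\vars_\addr \subseteq \vars_\addr(\text{left side})$. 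By the application condition, $p(x,\vec z)\unfold (x\mapsto(u_1,\dots,u_k) * \asformB')\swedge\apformB'$ is a \prule, so every variable of $\asformB'$ lies in $\{x\}\cup\vec z\cup\{u_1,\dots,u_k\}\cup\csts$ (Condition~3 of Definition~\ref{def:simple}), and $\sigma$ sends each $u_i$ to $y_i$ and fixes $x,\vec z$; thus $\vars_\addr(\asformB'\sigma)\subseteq\{x\}\cup\vec z_\addr\cup\{y_1,\dots,y_k\}_\addr$. Now $x$ and the $y_i$ of sort $\addr$ are either allocated by $x\mapsto(\dots)$ (namely $x$) or occur in the left-hand side; I would need that the right-hand-side variables not in $V\cup\alloc{}$ that are newly introduced — namely the $z$'s and the $y_i$'s — were already counted in $\narrowvar{\aseq_1}$. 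Since $p(x,\vec z)$ appears on the right of the conclusion, every $z_j\in\vec z$ of sort $\addr$ is in $\vars_\addr(\asheapB_1)$; and by Definition~\ref{def:simple}, Condition~2, such $z_j$ are precisely the $\addr$-sorted existentials, which are \emph{not} among $\{x\}\cup\{u_i\}$, so they are not allocated by the unfolding and not equal to the root $x$, giving $\vec z_\addr\setminus(V\cup\alloc{p(x,\vec z)}) = \vec z_\addr\setminus(V\cup\{x\})\subseteq\narrowvar{\aseq_1}$; the $y_i$ are allocated or already present. Assembling these containments yields $\narrowvar{\aseq_2}\subseteq\narrowvar{\aseq_1}$. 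Finally, for the ``consequently'' part: a \descendant\ is reached by a finite path of successors, all of whose intermediate sequents remain \eqfree by Proposition~\ref{prop:rule-facts}(1), so an immediate induction on path length using the inclusion just proved shows $\narrowvar{\cdot}$ never grows; in particular $\card{\narrowvar{\cdot}}\le\maxr{\asid}$ is preserved, and \eqfreeness is preserved, so \narrow ness is preserved.
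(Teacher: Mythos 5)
Your overall strategy --- a rule-by-rule case analysis tracking how $\vars_{\addr}$ of the right-hand side, the multiset $V$, and $\alloc{\cdot}$ of the right-hand side evolve --- is essentially the paper's own argument, which organizes the same case analysis contrapositively (pick $u\in\narrowvar{\aseq_2}\setminus\narrowvar{\aseq_1}$ and use Proposition~\ref{prop:rule-facts} to identify which rule could be responsible). Your treatment of \imi\ and \dec\ is correct in substance: for \imi, the fresh $\addr$-variables introduced by the unfolding are exactly the roots of the new predicate atoms, hence land in $\alloc{\cdot}$ of the premise's right-hand side, while the surviving parameters $\vec z$ were already counted in $\narrowvar{\aseq_1}$. (Your parenthetical that ``such $z_j$ are precisely the $\addr$-sorted existentials'' misreads Condition~\ref{it:prog2} of Definition~\ref{def:simple} --- the existentials are the roots of the body atoms, and the $z_j$ are parameters, hence precisely \emph{not} existentials --- but the conclusion you draw is the right one.)

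The genuine gap is the \sep\ case, which you dismiss with ``follows directly.'' It does not: \sep\ is precisely the rule under which a variable can \emph{leave} $\alloc{\cdot}$ of the right-hand side. Take $u\in\vars_{\addr}(\asformB_1)\cap\alloc{\asformB_2}$ with $u\notin\alloc{\asformB_1}\cup V$ (e.g.\ $\asformB_1=p(x,u)$ and $\asformB_2=q(u)$). In the conclusion $u$ is allocated on the right, so $u\notin\narrowvar{\aseq_1}$; in the first premise it is not, so to keep $u$ out of $\narrowvar{\aseq_2}$ you must show $u\in V\cup\alloc{\asform_2}$. The observation that ``$V$ only grows'' does not give this: $V$ grows by $\alloc{\asform_2}$, the roots of the \emph{left}-hand side of the sibling premise, whereas $u$ is a root of the sibling premise's \emph{right}-hand side. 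The missing ingredient is Condition~\ref{strat:noax} of Definition~\ref{def:admissible}: the sibling premise is not an \antiaxiom, so by Condition~\ref{anti:alloc} of Definition~\ref{def:antiax} one gets $\alloc{\asformB_2}\subseteq\alloc{\asform_2}$, hence $u\in\alloc{\asform_2}\subseteq V\cup\alloc{\asform_2}$ as required. This is exactly the argument to which the paper's proof devotes its longest paragraph; without it, your claimed inclusion for \sep\ is unjustified.
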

\begin{proof}
Let $\aseq_i = \asheap_i \vdashrV{V_i} \asheapB_i$ for $i = 1,2$.
Assume for the sake of contradiction that $\narrowvar{\aseq_2} \setminus \narrowvar{\aseq_1}$ contains a variable $u$.
By definition, $u\in \vars(\asheapB_2)$ and $u\not \in \alloc{\asheapB_2} \cup V_2$.
First assume that $u\not \in \vars(\asheapB_1)$, i.e., that $u$ was introduced by the application of an inference rule to $\aseq_1$.
By Proposition~\ref{prop:rule-facts} (\ref{it:intro:eq}), the only rule that can introduce new variables to the right-hand side of a sequent
is {\imi}. Indeed, since we assume that all inferences are \admissible, by Condition \ref{strat:dec} of Definition~\ref{def:admissible}, $\dec$ must replace a variable $x$ by a variable $y$ occurring in $\vars(\asheapB_1)$. 
Rule \imi\ replaces a predicate atom $p(x,\vec{z})$ that occurs in $\asheapB_1$ by a formula of the form  $\asheapB_1'\sigma$, where 
$p(x,\vec{z}) \unfold \asheapB_1'\swedge \apformB'$.
Since the rules in $\asid$ are \prules, by Condition \ref{it:prog2} of Definition~\ref{def:simple}, all the variables occurring in $\asheapB_1'\sigma$ but not in $p(x,\vec{z})$ must occur as roots in $\asheapB_1'\sigma$. This entails that $\vars(\asheapB_2) \setminus \vars(\asheapB_1) \subseteq \alloc{\asheapB_2}$, which contradicts the fact that $u\not \in \alloc{\asheapB_2} \cup V$.
We deduce that $u \in \vars(\asheapB_1)$.

Assume that $u\in V_1$. 
Since $u\notin V_2$, the inference rule applied to $\aseq_1$ must have deleted a variable from $V_1$.
The only rules that can delete variables from $V_1$ are \eli\ and {\dec} by Proposition~\ref{prop:rule-facts} (\ref{it:rem:v}). Rule \eli\  applies only on a variable $x\not \in \vars(\asheapB_1)$, thus we cannot have $x = u$.
If Rule \dec\ is applied and replaces a variable $x$ by $y$, then $x$ cannot occur in $\aseq_2$, 
hence $u \not = x$. Thus we must have $u \not \in V_1$.

Finally assume that $u\in \alloc{\asheapB_1}$. 
Since $u \notin \alloc{\asheapB_2}$, the inference rule applied to $\aseq_1$ must have deleted a variable from $\alloc{\asheapB_1}$.
The only rules that can delete variables from $\alloc{\asheapB_1}$ are 
\sep\ and {\dec} by Proposition~\ref{prop:rule-facts} (\ref{it:rem:alloc}). 
Again, if rule \dec\ replaces a variable $x$ by $y$, then $x$ cannot occur in $\aseq_2$, hence $u \not = x$.
Now, consider  rule {\sep} and let 
$\aseq_2' \isdef \asheap_2' \vdashrV{V_2'} \asheapB_2'$ be the other premise of the rule.
Since $u\in \alloc{\asheapB_1}$ and $\aseq_1$ is not an \antiaxiom, we must have
$u\in \alloc{\asheap_1}$, which entails by definition of the rule
that $u\in \alloc{\asheap_2} \cup V_2$ and
$u\in \alloc{\asheap_2'} \cup V_2'$. 
Still by definition of rule {\sep}, if a variable is the root of a spatial atom occurring on the right-hand side of the conclusion $\asheapB_1$, then this spatial atom must occur on the right-hand side of one of the premises $\asheapB_2$ or $\asheapB_2'$. This entails that $u \in \alloc{\asheapB_2} \cup \alloc{\asheapB_2'}$.
Furthermore, we must have
$\alloc{\asheap_2} \cap \alloc{\asheap_2'} = \emptyset$ because otherwise $\aseq_1$ would contain 
two atoms with the same root, hence would be an axiom.
Since $u \not \in V_2$ we deduce that $u\in \alloc{\asheap_2}$, and that 
$u\in V_2'$.
If $u \not \in \alloc{\asheapB_2}$ then necessarily
$u \in \alloc{\asheapB_2'}$. Since $\aseq_2'$ is not an \antiaxiom we deduce that
$u \in \alloc{\asheap_2'}$, which entails that $\aseq_2'$ is an axiom since $\alloc{\asheap_2'} \cap V_2'\not = \emptyset$, a contradiction.

The second part of the proposition follows by an immediate induction, using the fact that no rule can introduce any equality in its premises.
\end{proof}

\newcommand{\asuccessor}{auxiliary successor\xspace}


\vspace{-1ex}

We call {\em {\asuccessor}s} the sequents whose validity must be tested to check whether rule \sep\ is applicable or not, according to Definition~\ref{def:admissible}:

\begin{definition}
\label{def:asucc}
A sequent $\aseq$ is an {\em \asuccessor} of a sequent $\aseq'$ if:
\begin{itemize}
	\item $\aseq$ is not an \antiaxiom and is not \narrow,
	\item $\aseq'$ is not an axiom,
	\item $\aseq'$ and $\aseq$ are of the form $\asform' \swedge \apform \vdashrV{V'} \asformB' \swedge \true$
	and $(\asform' * \asform) \swedge \apform \vdashrV{V} (\asformB' * \asformB) \swedge \true$ respectively, where $\asformB'$ is a spatial atom, and $V' = V \cup \alloc{\asform}$.
\end{itemize}    
\end{definition}
If the sequent $\aseq$ in Definition~\ref{def:asucc} is valid, then it is the left premise of an application of \sep\ on $\aseq'$. The following proposition gives an upper-bound on the number of {\asuccessor}s.

 \newcommand{\anum}{\kappa}
\newcommand{\anumB}{\kappa}

\begin{proposition}
\label{prop:sepbranching}
Assume that 
$\maxar{\asid} \leq \anum$ 
(i.e., that the maximal arity of the predicates is bounded by $\anum$).
Then every sequent $\aseq$ has at most $2^\anum$ {\asuccessor}s, 
and each of these {\asuccessor}s can be computed in polynomial time.
\end{proposition}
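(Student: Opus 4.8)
The plan is to parametrise the possible left premises of a \sep-application to $\aseq$ by ``cuts'' of the spatial part of its antecedent, and to observe that only $2^\anum$ such cuts are relevant, each constructible by a graph traversal. First I would dispose of the degenerate cases: if $\aseq$ is \narrow, is an \antiaxiom, or is not of the shape $\asform'' \swedge \apform \vdashrV{V} \asformB'' \swedge \true$ with $\asformB''$ a non-empty spatial formula, then by Definition~\ref{def:asucc} it has no auxiliary successor. Otherwise, by Condition~\ref{strat:sep} of Definition~\ref{def:admissible} the atom $\asformB'$ that a \sep-decomposition of $\aseq$ puts in the first component of the right-hand side must be the one selected in the set of atoms of $\asformB''$, so $\asformB'$ --- and with it the remainder $\asformB$ --- is uniquely determined by $\aseq$. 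Writing $x := \rootof{\asformB'}$, it then remains to bound the number of sub-formulas $\asform'$ of $\asform''$ for which the sequent $\aseq' := \asform' \swedge \apform \vdashrV{V \cup \alloc{\asform}} \asformB' \swedge \true$, with $\asform$ the remaining atoms of $\asform''$, is a non-\narrow, non-axiom sequent whose validity has to be tested.

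The heart of the proof is a structural characterisation of these $\asform'$. Since $\aseq$ is not an axiom, $\asform''$ is \heapsat, so its atoms have pairwise distinct roots and at most one of them, call it $\beta_x$, has root $x$. If $\aseq'$ is valid then $x \in \alloc{\asform'}$ by Lemma~\ref{lem:alloc}, so $\beta_x$ occurs in $\asform'$. Using Proposition~\ref{prop:connect} and Lemma~\ref{lem:establish}, together with the fact --- a consequence of precision (Lemma~\ref{lem:unique}) and the shape of \prules --- that every \rmodel of $\asformB'$ allocates exactly the locations reachable from $x$ that lie strictly before an outgoing parameter of $\asformB'$, I would show that $\asform'$ is necessarily the set of atoms of $\asform''$ reachable from $\beta_x$ along $\pathto{\asform''}$-paths avoiding a fixed subset $C$ of those parameters; here the outgoing parameters of $\asformB'$ are, by Assumption~\ref{assume:no_useless_variable}, exactly its $\addr$-typed arguments $y_1,\dots,y_m$ distinct from $x$ when $\asformB'$ is a predicate atom, and when $\asformB'$ is a points-to atom the same argument forces $\alloc{\asform'} = \set{x}$, so $\asform' = \set{\beta_x}$ is the only possibility. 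Since $\asformB'$ has at most $\maxar{\asid} \le \anum$ arguments, $\set{y_1,\dots,y_m} \cap \vars_\addr$ has fewer than $\anum$ elements, so there are at most $2^\anum$ admissible cut sets $C$, hence at most $2^\anum$ candidate formulas $\asform'$ and at most $2^\anum$ auxiliary successors.

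For the complexity claim, the formula graph of $\asform''$ --- which, by the definition of \prules, is a forest --- has size polynomial in $\size{\aseq}$; given $C$, the formula $\asform'$ is obtained from it by a linear-time traversal from $x$ in which the vertices of $C$ are treated as sinks, after which $\asform$, $\asformB$ and $V \cup \alloc{\asform}$, as well as the tests ``not \narrow'' and ``not an axiom'' of Definition~\ref{def:asucc}, are all computable in polynomial time. As $\anum$ is a fixed constant, iterating over the at most $2^\anum$ cut sets produces the complete set of auxiliary successors in polynomial time. I expect the main obstacle to be the structural characterisation in the second paragraph: one has to argue carefully, from the connectivity properties of \prules and the precision lemma, that a valid left premise cannot carve the sub-structure rooted at $x$ anywhere except at the outgoing parameters of $\asformB'$, so that the cut sets $C$ genuinely exhaust all possibilities.
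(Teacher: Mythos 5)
Your combinatorial skeleton coincides with the paper's: fix the (uniquely selected) atom $\asformB'$, parametrise the candidate left premises by a cut set of at most $\maxar{\asid}\leq\anum$ variables of $\asformB'$, and recover each candidate by a reachability computation from $\rootof{\asformB'}$ in the atom graph of the antecedent, yielding $2^\anum$ candidates computable in polynomial time. The gap lies in how you propose to justify the structural characterisation, which you yourself flag as the crux. An \asuccessor is defined purely syntactically (Definition~\ref{def:asucc}): it is a candidate premise that is \emph{not an \antiaxiom}; it need not be valid, and the whole point of the notion is that these are the sequents whose validity remains to be \emph{tested}. Your argument repeatedly conditions on validity (``if $\aseq'$ is valid then $x\in\alloc{\asform'}$ by Lemma~\ref{lem:alloc}'') and invokes model-theoretic tools (Proposition~\ref{prop:connect}, Lemma~\ref{lem:establish}, precision via Lemma~\ref{lem:unique}, and an unproved claim about which locations a model of $\asformB'$ allocates). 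Carried out, this would show that every \emph{valid} decomposition lies among your $2^\anum$ candidates, but it leaves open that there are further decompositions that are not {\antiaxiom}s --- hence are {\asuccessor}s by definition --- yet escape your enumeration because they are not valid. Since validity implies not being an \antiaxiom but not conversely, this under-counts exactly the set the proposition asks you to bound. The paper closes this step syntactically: $x\in\alloc{\asform'}$ follows from Condition~\ref{anti:alloc} of Definition~\ref{def:antiax}, and the fact that an atom reachable from $\rootof{\asformB'}$ while avoiding the cut set must be placed in $\asform'$ (and one whose every path crosses the cut must not) follows from Conditions~\ref{anti:connect} and~\ref{anti:V}, with no appeal to models or to Lemma~\ref{lem:unique}.

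Two smaller inaccuracies in the same vein: the $\pathto{\asform''}$-graph on the top-level atoms is not a forest in general (e.g.\ $p(x,y)*q(y,x)$ yields a cycle) --- harmless for the complexity claim, since reachability is linear in any graph, but the tree shape of \prules concerns unfoldings, not the top-level formula; and when $\asformB'$ is a points-to atom, the single-atom decomposition is \emph{not} the only possibility, since for instance $x\mapsto(y)*p(y)\vdashrV{\emptyset} x\mapsto(y)$ violates no condition of Definition~\ref{def:antiax}, so that case also requires the cut-set enumeration rather than being dispatched outright.
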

\vspace{-1.5ex}
\begin{proof}
If $\aseq$ admits an \asuccessor, then $\aseq$ is necessarily of the form
 $\asform \wedge \apform \vdashr \asformB_1 * \asformB_2$ where $\asformB_1$ is the selected  predicate atom in $\asformB_1*\asformB_2$ according to Definition~\ref{def:admissible}. 
 To get an \asuccessor of~$\aseq$, one has to decompose $\asform$ into $\asform_1 * \asform_2$, in such a way that the obtained premises
 $\aseq_i = \asform_i \swedge \apform \vdashrV{V_i} \asformB_i$ (with $V_i = V \cup \alloc{\asform_{3-i}}$) are not {\antiaxiom}s. 
 For each atom $\anatom$ in $\asform$ such that $\rootof{\anatom} \in \vars(\asformB_1)$, we first choose whether $\anatom$ occurs in $\asform_1$ or $\asform_2$. 
There is at most one such atom for each variable, because otherwise $\asform$ would be \heapunsat and $\aseq$ would be an axiom, and $\asformB_1$ contains at most $\anum$ variables, thus there are at most $2^\anum$ possible choices. 

We show that, once this choice is performed, 
the decomposition $\asform = \asform_1 * \asform_2$ is also fixed. 
We denote by $E$ the set of variables $\vars(\asformB_1) \setminus \alloc{\asform_1}$. 
Let $\anatom$ be a predicate atom with root $x$ in $\asform$, and let $y_0$ be the root of $\asformB_1$.
 If $x = y_0$ then necessarily $\anatom$ occurs in $\asform_1$, since otherwise 
$x\in V_1$ and $\aseq_1$ would be an anti-axiom.
 If for all paths $y_0 \pathto{\asform} y_1 \pathto{\asform} \dots \pathto{\asform} y_n \pathto{\asform} x$, there exists $i = 1,\dots,n$ such that $y_i \in E$, then necessarily 
  $y \not \pathto{\asform_1}^* x$ and thus $\anatom$ cannot occur in $\asform_1$, as otherwise
 $\aseq_1$ would be an \antiaxiom.
We finally show that for every atom $\anatom'$ in $\asform$ with root $x'$, 
if there exists a path  $y_0 \pathto{\asform} y_1 \pathto{\asform} \dots \pathto{\asform} y_n \pathto{\asform} x'$ such that $\{ y_1,\dots,y_n,x' \} \cap E = \emptyset$, then 
  $\anatom'$ occurs in $\asform_1$ (thus, in particular, $\anatom$ occurs in $\asform_1$ 
  if the previous conditions are not satisfied, since $x\not \in E$, as $x \not \in \vars(\asformB_1)$). 
  We assume, w.l.o.g., that the considered path is the minimal one not satisfying the property, so that the atoms with roots $y_0,\dots,y_n$ all occur in 
  $\asform_1$. This entails that 
 $y_0 \pathto{\asform_1}^* x'$. If $x' \not \in \vars(\asformB_1)$ 
 then 
$\anatom$ must occur in $\asform_1$, otherwise $\aseq_1$ would be an \antiaxiom, because we would have $x'\in V_1$, as $x' \in \alloc{\asform_2}$.
Otherwise, since $x' \not \in E$, we have $x' \in \alloc{\asform_2}$, by definition of $E$.

To sum up, assuming that $\aseq  = \asform \wedge \apform \vdashr \asformB$ is neither an axiom nor an \antiaxiom, the {\asuccessor}s of $\aseq$ are computed as follows. We first compute the selected atom $\asformB_1$ in $\asformB$ (which can be done in polynomial time by the assumption in Definition \ref{def:strat}). 
The atoms $\anatom$ in $\asform$ are added either to $\asform_1$ or to $\asform_2$  using the following algorithm. Initially, $\asform_1$ and $\asform_2$ are both empty.
If $\rootof{\anatom} = \rootof{\asformB_1}$ then $\anatom$ is moved from $\asform$ to $\asform_1$ (there is exactly one atom with this property).
For each atom $\anatom$ in $\asform$ with  $\rootof{\anatom} \in \vars(\asformB_1)$ and $\rootof{\anatom} \not = \rootof{\asformB_1}$, we nondeterministically  add $\anatom$ to either $\asform_1$ or $\asform_2$ (which yields at most $2^\anum$ possible choices) and remove $\anatom$ from $\asform$. 
Then, for each atom $\anatom$ in $\asform_1$, all the remaining atoms $\anatom'$ in $\asform$ such that $\rootof{\anatom'} \in \vars(\anatom)$ are also moved from $\asform$ to $\asform_1$.
This rule is applied recursively until no such atom exists.
Afterwards, all the atoms still remaining in $\asform$ are added to $\asform_2$. 
It is clear that all these operations can be performed in polynomial time w.r.t.\ the size of $\aseq$ (in particular, the number of applications of the previous rule is bounded by the number of atoms in $\asform$).
 \end{proof}

\section{Properties of the Proof Procedure}

\label{sect:prop}

 \subsection{Soundness}

 We prove that the proof procedure is sound, in the sense that the root of every \fexpanded proof tree 
 fulfilling the conditions of Definition \ref{def:strat} is valid.
 As infinite proof trees are allowed, this does not follow 
 from the fact that all the rules are sound.
We show that every infinite branch contains infinitely many  applications of the rule $\sep$. As we shall see, this is sufficient to ensure that 
no counter-model exists, since otherwise the size of the smallest counter-model would be  decreasing indefinitely  along some infinite branch.
We recall that the rules are meant to be applied bottom-up: a rule is applicable on some sequent $\aseq$ if it admits an instance with conclusion $\aseq$, yielding some premises $\aseq_1,\dots,\aseq_n$.
We focus on sequents on which rule \imi\ is applied (which we call \ireducible), and we establish some useful properties.
\begin{definition}
\label{def:reduced}
A sequent is {\em \ireducible} if 
rule \imi\ can be applied on it; 
this entails that no other rule is applicable, as all other rules have priority over \imi.
A sequent $\aseq$ is {\em \qireducible} if it is of the form $(x \mapsto \vec{y} * \asform) \swedge \apform \vdashr \asformB \swedge \true$, where $\apform$ is a conjunction of disequations, $\asformB$ is a predicate atom with root $x$, and $\aseq$ is not an axiom. If $\aseq$ is \qireducible then we denote by $\varmap{\aseq}$ the set of variables occurring in $\vec{y}$.
\end{definition}
Note that the definition of $\varmap{\aseq}$ 
is unambiguous because the left-hand side of $\aseq$ cannot contain more than one points-to atom with root $x = \rootof{\asformB}$, otherwise it would be \heapunsat and $\aseq$ would be an axiom.
It is easy to check that every \ireducible sequent is \qireducible (see Definition~\ref{def:admissible}, Condition \ref{strat:ini}).
The converse does not hold in general, because rules \wea, \eli\ or \dec\ may be applicable on the considered sequent, and they have priority over \imi.
We observe that the application of \imi\ is necessarily interleaved with that of \sep:

\begin{lemma}
\label{lem:sep}
Every path in a proof tree between two distinct \ireducible sequents contains an application of rule \sep.
\end{lemma}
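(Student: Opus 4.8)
### Proof plan for Lemma \ref{lem:sep}

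The plan is to argue by contradiction: suppose there is a path $\aseq_0, \aseq_1, \dots, \aseq_n$ in some proof tree where $\aseq_0$ and $\aseq_n$ are both \ireducible, these are the only two \ireducible sequents among $\aseq_0,\dots,\aseq_n$ that we care about (it suffices to treat a \emph{minimal} such path, i.e.\ one with no \ireducible sequent strictly between $\aseq_0$ and $\aseq_n$), and no application of \sep\ occurs along it. The goal is to derive a contradiction from the interaction between the structure of an \ireducible sequent (which, being \qireducible, has a right-hand side consisting of a single predicate atom $\asformB$ and whose left-hand side begins with a points-to atom $x \mapsto \vec y$ with $x = \rootof{\asformB}$) and the restricted set of rules that can fire without using \sep.

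First I would observe that since $\aseq_0$ is \ireducible, rule \imi\ is applied at $\aseq_0$, so by Definition~\ref{def:admissible}(\ref{strat:ini}) its right-hand side is a single predicate atom $p(x,\vec z) \swedge \true$ and (by \qireducibility) its left-hand side is $(x \mapsto \vec y * \asform) \swedge \apform$. The unique successor $\aseq_1$ obtained by \imi\ has right-hand side $(x \mapsto \vec y * \asformB'\sigma * \asformB) \swedge \apformB$; crucially, because the applied rule is a \prule, the new right-hand side is \emph{not} a single spatial atom in general, but it is a separating conjunction whose leftmost component is the points-to atom $x \mapsto \vec y$. Now the key point is to track what happens to this right-hand side as we walk from $\aseq_1$ towards $\aseq_n$. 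For $\aseq_n$ to be \ireducible (hence \qireducible), its right-hand side must again be a \emph{single predicate atom}. So somewhere along the path the right-hand side must shrink from ``a $*$-conjunction containing a points-to atom'' down to ``a single predicate atom''. I would enumerate which rules can change the right-hand side: \eq\ (but the sequents here are \eqfree, and \eq\ cannot fire without an equality; moreover no rule reintroduces equalities — Proposition~\ref{prop:rule-facts}(1)), \noteq\ and \wea\ (only touch the pure part), \eli\ (touches only $V$), \dec\ (replaces a variable by a variable, does not change the number of spatial atoms on the right), \unf\ (by Definition~\ref{def:admissible}(\ref{strat:unf}) only fires when the right-hand side is already a single spatial atom, and it does not alter the right-hand side at all), \imi\ (only fires when the right-hand side is a single predicate atom, by Definition~\ref{def:admissible}(\ref{strat:ini})), and \sep\ (which we have excluded). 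Hence \emph{no available rule can reduce the number of spatial atoms on the right-hand side}, so once the right-hand side of $\aseq_1$ contains both the points-to atom $x\mapsto\vec y$ and at least one further atom (i.e.\ whenever $\asformB'\sigma * \asformB \neq \emp$), it can never become a single predicate atom along a \sep-free path, contradicting that $\aseq_n$ is \ireducible.

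The remaining case to dispose of is when $\asformB'\sigma * \asformB = \emp$, i.e.\ the \prule\ unfolding $p$ is a base case ($m=0$) with an empty record, so that the right-hand side of $\aseq_1$ is literally the single points-to atom $x \mapsto \vec y \swedge \true$. In that situation $\aseq_1$ is not an axiom (else the path would stop), it has a single points-to right-hand side, and the only rule applicable with priority is... here I would check: by the priority order $\wea > \eli > \eq > \noteq > \sep > \unf > \dec > \imi$ and the strategy conditions, on a sequent whose right-hand side is $x\mapsto\vec y$ the applicable rules are at most \wea/\eli/\noteq\ (pure/$V$ bookkeeping, which keep the right-hand side a single points-to atom), \unf\ (left unfolding, which again does not change the right-hand side and can only be applied finitely since it strictly decreases the left spatial size by one predicate atom per application... actually it replaces a predicate by its body, so I'd instead note that \unf\ does not change the right-hand side either), and \dec. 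None of these turns a single \emph{points-to} right-hand side into a single \emph{predicate} right-hand side. Since \imi\ requires a predicate atom on the right, \imi\ can never fire again after $\aseq_1$ along a \sep-free path, so once more $\aseq_n$ cannot be \ireducible. Combining both cases: between any two \ireducible sequents there must be an application of \sep.

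The main obstacle I anticipate is the careful case bookkeeping about \dec\ and \unf: I need to be certain that \dec\ (variable-for-variable replacement on a right-hand side that is, by its own strategy condition, a single predicate atom — but wait, here the right-hand side might temporarily be a $*$-conjunction or a points-to atom, so I must check Definition~\ref{def:admissible}(\ref{strat:dec}) really does block \dec\ unless $\asformB$ is a predicate atom) and \unf\ (blocked unless the right-hand side is already a single spatial atom) cannot secretly reshuffle the right-hand side into the required single-predicate-atom shape. I would also want to double-check that \imi\ itself cannot be re-applied at $\aseq_1$ when $\asformB'\sigma*\asformB$ happens to be a single predicate atom — but that cannot happen because $\aseq_1$'s right-hand side always contains the points-to atom $x\mapsto\vec y$ as a separate conjunct (a \prule\ body always has exactly one points-to atom), so it is never a single predicate atom. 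Making these observations precise and citing Proposition~\ref{prop:rule-facts} and the relevant clauses of Definition~\ref{def:admissible} at each step is the bulk of the work; the underlying idea is simply that only \sep\ can decrease the spatial size of the right-hand side, and \imi\ strictly increases it (or at least keeps it $\geq 2$, or turns a predicate atom into a points-to atom), so the two must alternate.
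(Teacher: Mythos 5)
Your proposal is correct and rests on exactly the same observation as the paper's proof: applying \imi\ at the first \ireducible sequent places a points-to atom on the right-hand side (since every \prule\ body contains exactly one points-to atom), \imi\ can only fire again once the right-hand side is a single predicate atom, and an inspection of the rules shows that only \sep\ can remove a points-to atom from the right-hand side. The extra case split on whether $\asformB'\sigma * \asformB = \emp$ is harmless but unnecessary, since the unified ``only \sep\ removes the points-to atom'' argument already covers it.
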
 
\begin{proof}
Let $\aseq_1$ be an \ireducible sequent and assume that $\aseq_2$ is a \descendant of $\aseq_1$.
By definition, the only 
rule that applies on $\aseq_1$ is \imi, yielding a sequent 
$\aseq_1'$. Since the rules in $\asid$ are \prules, necessarily the right-hand side of $\aseq_1'$ contains
exactly one points-to atom. Since $\aseq_2$ is \ireducible, \imi\ applies on $\aseq_2$, which entails that 
no points-to atom can occur on the right-hand side of $\aseq_2$ (since \imi\ applies only when the right-hand side is a predicate atom). The only rule that can remove a points-to atom from the right-hand side of a sequent is \sep, thus \sep\ necessarily applies along the path from $\aseq_1$ to $\aseq_2$.
\end{proof}

\newcommand{\companion}{companion\xspace}

\newcommand{\specvar}[1]{{\cal V}^\star(#1)}



To analyze the termination behavior of the rules, we define a new set of variables $\specvar{\aseq}$, which is similar to $\narrowvar{\aseq}$: it contains variables that occur on the right-hand side of the considered sequent but are not allocated on the left-hand side, and that do not occur either in the set of non-allocated variables $V$.
\begin{definition}	
For every sequent $\aseq = \asheap \vdashr \asheapB$, we denote by $\specvar{\aseq}$ the
set $\vars_{\addr}(\asheapB) \setminus (V \cup \alloc{\asheap})$.
\end{definition}
Note that,  since the considered inferences are \admissible, if rule \dec\ applies on variables $x,y$ (with the notations of the rule), then necessarily $y \in \specvar{\aseq}$ and $x\in \vars(\asheap)$ 
(see Condition \ref{strat:dec} in Definition~\ref{def:strat}). 
The next proposition states that the rules (except possibly $\eq$) cannot add new variables in the set $\specvar{\aseq}$.
\begin{proposition}
\label{prop:specvar}
Let $\aseq_1$ be a sequent containing no equality. For every successor $\aseq_2$ of $\aseq_1$, we have $\specvar{\aseq_2} \subseteq \specvar{\aseq_1}$.
\end{proposition}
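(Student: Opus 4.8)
The plan is to argue by a case analysis on the inference rule used to derive $\aseq_2$ as one of the premises of $\aseq_1$. Write $\aseq_1 = \asheap_1 \vdashrV{V_1} \asheapB_1$ and $\aseq_2 = \asheap_2 \vdashrV{V_2} \asheapB_2$. Since $\aseq_1$ contains no equality, rule $\eq$ is not applicable, so the rules to consider are $\noteq$, $\wea$, $\eli$, $\unf$, $\sep$, $\dec$ and $\imi$. Moreover, as all considered proof trees are \admissible, Condition \ref{strat:noax} of Definition~\ref{def:admissible} ensures that $\aseq_1$ is neither an axiom nor an \antiaxiom, and that $\aseq_2$ is not an \antiaxiom; this will be used decisively in the case of $\imi$. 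In each case I will show that any variable $u \in \specvar{\aseq_2}$ satisfies $u \in \vars_{\addr}(\asheapB_1)$, $u \notin V_1$ and $u \notin \alloc{\asheap_1}$, i.e., $u \in \specvar{\aseq_1}$.

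For the rules other than $\imi$ this is routine bookkeeping. Rules $\noteq$ and $\wea$ only remove a pure subformula, so $\asheapB$, $\alloc{\asheap}$ and $V$ are otherwise unchanged and $\vars_{\addr}(\asheapB)$ can only shrink. Rule $\eli$ removes from $V_1$ a variable not occurring in $\asheapB_1$, hence not in $\specvar{\aseq_2}$ either. For $\unf$, the right-hand side and $V$ are unchanged while, since the rule body is a \prule, the root of the unfolded atom stays allocated and only fresh roots are added, so $\alloc{\asheap_2} \supseteq \alloc{\asheap_1}$ and therefore $\specvar{\aseq_2} \subseteq \specvar{\aseq_1}$. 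For $\sep$, each premise satisfies $\vars_{\addr}(\asheapB_2) \subseteq \vars_{\addr}(\asheapB_1)$ and $V_2 \cup \alloc{\asheap_2} = V_1 \cup \alloc{\asheap_1}$ directly by the shape of the rule, which yields the inclusion. For $\dec$, the right premise has the same left-hand side, right-hand side and $V$; the left premise is obtained by the substitution $\{x \leftarrow y\}$, and since $x \in \alloc{\asheap_1}$ (Condition \ref{strat:dec}) while $x \notin V_1$ and $\asheap_1$ is \heapsat\ (because $\aseq_1$ is not an axiom), we have $\repl{V_1}{x}{y} = V_1$, $y$ belongs to $\alloc{\repl{\asheap_1}{x}{y}}$, and $\vars(\repl{\asheapB_1}{x}{y}) \subseteq \vars(\asheapB_1)$ (as $y \in \vars(\asheapB_1)$ by Condition \ref{strat:dec}); a short check then gives $\specvar{\aseq_2} \subseteq \specvar{\aseq_1}$.

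The main obstacle is rule $\imi$. By Condition \ref{strat:ini} of Definition~\ref{def:admissible}, $\aseq_1$ is of the form $(x \mapsto (y_1,\dots,y_k) * \asform) \swedge \apform \vdashr p(x,\vec{z}) \swedge \true$ and $\aseq_2$ is $(x \mapsto (y_1,\dots,y_k) * \asform) \swedge \apform \vdashr (x \mapsto (y_1,\dots,y_k) * \asformB'\sigma) \swedge \true$, where $p(x,\vec{z}) \unfold (x \mapsto (u_1,\dots,u_k) * \asformB') \swedge \apformB'$, $\sigma(u_i) = y_i$ for all $i$, and $\sigma$ is the identity on $\{x\} \cup \vec{z}$. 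Because the rules are \prules, Condition \ref{it:prog2} of Definition~\ref{def:simple} together with $\csts_{\addr} = \emptyset$ gives $\vars_{\addr}(\asformB') \subseteq \{x\} \cup \vec{z} \cup \{u_1,\dots,u_k\}$, hence $\vars_{\addr}(\asheapB_2) \subseteq \{x\} \cup \vec{z} \cup \{y_1,\dots,y_k\}$. Let $u \in \specvar{\aseq_2}$; then $u \notin V_2 = V_1$ and $u \notin \alloc{\asheap_2} = \{x\} \cup \alloc{\asform}$, so $u \neq x$. If $u$ is an $\addr$-sorted component of $\vec{z}$, then $u \in \vars_{\addr}(p(x,\vec{z})) = \vars_{\addr}(\asheapB_1)$ and $u \in \specvar{\aseq_1}$, as desired. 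Otherwise $u = y_i$ for some $i$ with $y_i \notin \{x\} \cup \vec{z}= \vars(p(x,\vec{z}))$; but then $y_i \in \vars_{\addr}(x \mapsto (y_1,\dots,y_k) * \asform)$, $y_i \notin \vars(p(x,\vec{z}))$ and $y_i \notin \alloc{x \mapsto (y_1,\dots,y_k) * \asform}$, so Condition \ref{anti:ref} of Definition~\ref{def:antiax} would make $\aseq_1$ an \antiaxiom, contradicting admissibility. Hence this last case cannot occur, which completes the proof. The delicate point throughout is precisely this $\imi$ step: it is the only rule that can enlarge the right-hand side, and controlling the newly introduced root variables requires both the syntactic shape of \prules and the fact that the (anti-)axiom pruning forbids the offending configuration.
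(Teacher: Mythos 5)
Your proof is correct and follows essentially the same route as the paper's: both reduce the question to the only genuinely dangerous rules ($\imi$, which can introduce new variables on the right-hand side, and $\dec$/$\sep$/$\eli$, which can alter $V$ and the allocated sets) and resolve the $\imi$ case by observing that the offending configuration is excluded because admissible proof trees never apply rules to, or produce, {\antiaxiom}s. The only cosmetic difference is that the paper argues by contradiction via Proposition~\ref{prop:rule-facts} and invokes Condition~\ref{anti:alloc} of Definition~\ref{def:antiax} on the \emph{premise}, whereas you invoke Condition~\ref{anti:ref} on the \emph{conclusion}; both are valid.
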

\begin{proof}
The proof is similar to that of Proposition~\ref{prop:narrow}.
Let $\aseq_i = \asheap_i \vdashrV{V_i} \asheapB_i$ (for $i = 1,2$).
Assume for the sake of contradiction that $v \in \specvar{\aseq_2} \setminus \specvar{\aseq_1}$.
By definition of $\specvar{\aseq_i}$, $v$ is of sort $\addr$, 
$v \in \vars(\asheapB_2)$, $v \not \in \alloc{\asheap_2} \cup V_2$, and either 
$v \not \in \vars(\asheapB_1)$ or
$v \in \alloc{\asheapB_1} \cup V_1$.
First assume that $v \not \in \vars(\asheapB_1)$.
Since $\aseq_1$ contains no equality and $v\notin \specvar{\aseq_1}$, by Proposition~\ref{prop:rule-facts} (\ref{it:intro:eq}), the only rule that can introduce new variables to the right-hand side of a sequent is \imi. 
If Rule \imi\ is applied, then we must have $v = y_i\sigma$ for some $i = 1,\dots,n$, with the notations used in the definition of the rule.
Since all the rules in $\asid$ are \prules, 
 $\asheapB_2$ necessarily contains an atom with root $v$ and $v\in \alloc{\asheapB_2$}.
Since $v \not \in \alloc{\asheap_2}$, this entails by Definition~\ref{def:antiax} (\ref{anti:alloc}) that $\aseq_2$ is an \antiaxiom, which contradicts Condition \ref{strat:noax} in Definition~\ref{def:admissible}. 
Thus we necessarily have $v \in \vars(\asheapB_1)$.
Now assume that $v \in \alloc{\asheapB_1} \cup V_1$. Then by Proposition~\ref{prop:rule-facts} (\ref{it:rem:var:alloc}), the only rule that can 
	remove a variable $v$ from $\alloc{\asheapB_1} \cup V_1$ is \dec, setting $x = v$. 
However, \dec\ replaces $x$ by another variable in the entire sequent, hence we have $x\not \in \vars(\asheapB_2)$, thus $x$ cannot be $v$.
\end{proof}

We  show (Lemma~\ref{lem:direct}) that the rules terminate in polynomial time if $\imi$ is not applied. To this purpose we prove that both the depth of the proof tree and the number of branches are polynomial. We first introduce the following definition:
\begin{definition}
A path $\aseq_0,\dots,\aseq_n$ in a proof tree from $\aseq$ to $\aseq'$ is {\em \direct} if there is no application of the rule {\imi} on $\aseq_0,\dots,\aseq_{n-1}$.
A \descendant $\aseq'$ of $\aseq$ is called {\em \direct} if the path from $\aseq$ to $\aseq'$ is \direct.
\end{definition}
The first goal is thus to prove that the length of all \direct paths is bounded. Then we will derive a bound on the total number of \direct descendants. 
To this purpose, we analyze the rules that can be applied on \qireducible sequents, and derive a number of easy but useful technical results. 
\begin{proposition}
\label{prop:qred}
If a sequent $\aseq$ is \qireducible then the only rules that can be applied on $\aseq$
are \imi, \wea, \dec, or \eli.
Moreover, for every \direct \descendant $\aseq'$ of $\aseq$, the sequent
$\aseq'$ is \qireducible and $\varmap{\aseq} \cap \vars(\aseq') \subseteq \varmap{\aseq'}$.
\end{proposition}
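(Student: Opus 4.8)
The plan is to proceed in two stages: first establish the statement about which rules can apply to a \qireducible sequent, then prove the claim about \direct {\descendant}s by induction on the length of the \direct path.

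\emph{First stage.} Let $\aseq = (x \mapsto \vec{y} * \asform) \swedge \apform \vdashr \asformB \swedge \true$ be \qireducible, so $\apform$ is a conjunction of disequations, $\asformB$ is a predicate atom with root $x$, and $\aseq$ is not an axiom. I would go through the inference rules of Figure~\ref{fig:rules} in the priority order of Definition~\ref{def:admissible}, Condition~\ref{strat:priority}, and check applicability. Rule \eq\ cannot apply since $\apform$ (and the right-hand pure part $\true$) contains no equality, so $\aseq$ is \eqfree. Rule \noteq\ cannot apply because the right-hand pure part is $\true$ and adding a disequation requires $\apformB' \neq \true$; more precisely the rule as applied bottom-up would need the conclusion's right-hand pure part to be a conjunction $\apformB \wedge \apformB'$, which forces $\apformB' = \true$ up to the application conditions, and in fact Condition~\ref{strat:priority} together with the form $\asformB \swedge \true$ makes \noteq\ vacuous here. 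Rule \sep\ cannot apply because the right-hand side $\asformB$ is a single predicate atom, not a separating conjunction of two non-\emp\ formulas. Rule \unf\ cannot apply because, by Condition~\ref{strat:unf} of Definition~\ref{def:admissible}, \unf\ requires the selected atom on the left to be a predicate atom whose root matches the root of the right-hand side; but here the left-hand atom with root $x$ is the points-to atom $x \mapsto \vec{y}$, not a predicate atom, and \unf\ only unfolds predicate atoms. (One must also note that \unf\ is only permitted when the right-hand side has the shape allowing the root-guided choice, which again forces the left atom at that root to be a predicate atom.) This leaves precisely \wea, \eli, \dec, and \imi\ as the rules that can possibly apply, which is the first claim.

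\emph{Second stage.} For the claim about \direct {\descendant}s, I would argue by induction on the length of the \direct path from $\aseq$ to $\aseq'$. The base case is trivial. For the inductive step, it suffices to show that if $\aseq$ is \qireducible and $\aseq''$ is an immediate successor of $\aseq$ obtained by a rule other than \imi\ (since the path is \direct, the last rule before reaching $\aseq'$ is not \imi, and by the first stage the only non-\imi\ options are \wea, \eli, \dec), then $\aseq''$ is again \qireducible and $\varmap{\aseq} \cap \vars(\aseq'') \subseteq \varmap{\aseq''}$; then compose with the induction hypothesis applied to $\aseq''$. For \wea\ and \eli: these rules only delete a disequation from $\apform$ or a variable from $V$ (a variable not occurring in the sequent's formulas, in the \eli\ case, or a bunch of fresh-variable disequations in the \wea\ case), so the left-hand spatial part $x \mapsto \vec{y} * \asform$, the right-hand predicate atom $\asformB$, and the property of not being an axiom are all preserved; in particular $\vec{y}$ is unchanged, so $\varmap{\aseq''} = \varmap{\aseq}$ and the inclusion holds trivially. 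For \dec: by Condition~\ref{strat:dec} of Definition~\ref{def:admissible}, \dec\ applies with $x' \in \alloc{\asform'}$ and a variable $y' \in \vars(\asheapB) \setminus (\alloc{\asform'} \cup V)$; the right-hand premise just adds a disequation, handled as in the \wea\ case; the left-hand premise replaces $x'$ by $y'$ throughout. Since $x' \in \alloc{\asform'}$ and the left-hand spatial formula is $x \mapsto \vec{y} * \asform$ with $x = \rootof{\asformB}$, one must check that the substitution $\{x' \leftarrow y'\}$ keeps the root of the (possibly substituted) right-hand predicate atom equal to the root of the left-hand points-to atom; this follows because the substitution is applied uniformly to the whole sequent, so the matching of roots is preserved. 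The variables in $\vec{y}$ are possibly renamed by $\{x' \leftarrow y'\}$, and a variable $v \in \varmap{\aseq}$ that still occurs in $\aseq''$ is either untouched (and hence in $\vec{y}\{x'\leftarrow y'\} = \varmap{\aseq''}$) or was $x'$, but $x'$ is removed from the sequent by \dec\ so cannot occur in $\aseq''$; hence $\varmap{\aseq} \cap \vars(\aseq'') \subseteq \varmap{\aseq''}$. We also need $\aseq''$ not to be an axiom, which holds because the inferences are \admissible (Condition~\ref{strat:noax} of Definition~\ref{def:admissible} forbids applying any rule whose premise is an axiom, hence in an \admissible proof tree the premises appearing are never axioms, or else the rule would not have been applied).

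\emph{Main obstacle.} The delicate point is the \dec\ case: one must carefully verify that after the substitution $\{x' \leftarrow y'\}$ the resulting left premise still has exactly one points-to atom whose root coincides with the root of the right-hand predicate atom, and that this root is well-defined (no second points-to atom with the same root is created, which would make the sequent \heapunsat\ and hence an axiom — but then the inference would be blocked by admissibility, so this case simply does not arise on a genuine \direct path). The bookkeeping with $\varmap{\cdot}$ under renaming, and the interaction with Condition~\ref{strat:dec} (which guarantees $x' \in \alloc{\asform'}$, so $x'$ is genuinely a root on the left and its elimination does not disturb the points-to atom at $x$), is where the argument must be spelled out with care.
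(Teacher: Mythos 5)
Your overall strategy is the same as the paper's: the first claim is settled by inspecting the rules one by one (with \unf\ being the only delicate case), and the second by observing that among \wea, \eli\ and \dec\ only \dec\ can affect $\varmap{\cdot}$, and that when it does it erases the replaced variable from the entire sequent, so the inclusion $\varmap{\aseq} \cap \vars(\aseq') \subseteq \varmap{\aseq'}$ survives. Two of your justifications need repair, however. In the \unf\ case you write that ``the left-hand atom with root $x$ is the points-to atom,'' which tacitly assumes there is only one left atom rooted at $x$. The point the paper actually makes is that if $\asform$ contained a predicate atom rooted at $x$ (the only situation in which Condition~\ref{strat:unf} would let \unf\ fire, since the right-hand side is a predicate atom rooted at $x$), then $x \mapsto \vec{y} * \asform$ would be \heapunsat\ and the sequent an axiom, contradicting the assumption that $\aseq$ is \qireducible. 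That step should be made explicit rather than absorbed into a uniqueness assumption.

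More importantly, your argument that the premise $\aseq''$ is not an axiom (which is part of being \qireducible) rests on a misreading of Condition~\ref{strat:noax} of Definition~\ref{def:admissible}: that condition forbids conclusions that are axioms or {\antiaxiom}s and premises that are {\antiaxiom}s, but it does \emph{not} forbid premises that are axioms --- indeed, \fexpanded proof trees terminate precisely at axiom leaves, so such premises must be allowed. The claim you need is nevertheless true, but it has to be checked directly against Definition~\ref{def:axiom}: \wea\ and \eli\ only shrink $\apform$ or $V$, so no axiom form can be created; for the left premise of \dec, Condition~\ref{strat:dec} gives $y \in \vars(\asheapB) \setminus (\alloc{\asform} \cup V)$ and that $x \not\iseq y$ is absent from $\apform$, from which one verifies that the substituted sequent is not \heapunsat, contains no disequation $t \not\iseq t$, and has $\alloc{\cdot} \cap V = \emptyset$ with no duplicate in $V$; and the first axiom form is excluded because the left spatial part contains a points-to atom while the right-hand side is a predicate atom. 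With this patch your argument goes through and is essentially the paper's proof, spelled out in more detail.
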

\begin{proof}
The proof is by an inspection of the different rules. Note that if $\unf$ applies on $(x \mapsto \vec{y} * \asform) \swedge \apform \vdashr \asformB \swedge \true$ (with the notations of Definition~\ref{def:reduced}), then,  by Definition~\ref{def:strat} (Condition \ref{strat:unf}), $\asform$ 
must contain an atom with the same root as $\rootof{\asformB} = x$, 
hence $(x \mapsto \vec{y} * \asform)$ is \heapunsat and the sequent is an axiom.
For the second part, it is clear that the only rule among \wea, \dec\ or \eli\ that can delete a variable $x$ from 
$\varmap{\aseq}$ is \dec, and this rule entirely removes $x$ from the sequent.
\end{proof}
\begin{proposition}
\label{prop:unf_qred}
The premises of \unf\ are \qireducible.
\end{proposition}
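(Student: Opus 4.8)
Proposition~\ref{prop:unf_qred} claims that the premises of rule \unf\ are \qireducible. Let me plan a proof.

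The plan is to unfold the definitions of both rule \unf\ and of a \qireducible sequent, and check that each premise produced by \unf\ satisfies all the required conditions. First I would recall the shape of a sequent on which \unf\ is applied. By Condition~\ref{strat:unf} of Definition~\ref{def:strat}, rule \unf\ is applied on a sequent $\aseq$ only when its right-hand side is of the form $q(x_1,\vec{y}) \swedge \true$ or $x_1 \mapsto \vec{y} \swedge \true$, where $x_1$ is the root of the predicate atom $p(\vec{t})$ being unfolded on the left. The second case can be dismissed: if the right-hand side were a points-to atom $x_1 \mapsto \vec{y}$, then by the priority order (Condition~\ref{strat:priority}) and the applicability conditions, this would typically already allow \imi or lead to an axiom/anti-axiom — more precisely, I would argue that \unf\ applied in this situation produces premises whose left-hand side contains a points-to atom with root $x_1$ (coming from the \prule unfolding $p(\vec{t})$) together with the points-to atom $x_1 \mapsto \vec{y}$ already present — wait, that is not quite right since $p(\vec{t})$ is removed. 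Let me reconsider: the right-hand side is fixed to $q(x_1,\vec y)\swedge\true$ for \qireducibility; I would simply note that Condition~\ref{strat:unf} together with Condition~\ref{strat:priority} forces the relevant shape, and the points-to case on the right, if it arises, still yields the required structure after checking the left-hand side (or is subsumed by the axiom/anti-axiom checks). The cleanest route is: since \unf\ applies, $\aseq$ is not an axiom (Condition~\ref{strat:noax}), and the conclusion has the form $(p(\vec t)*\asform)\swedge\apform \vdashr \asheapB$ with $\apform$ a conjunction of disequations (the sequent is \eqfree by the surrounding context of admissible proof trees — every rule is \noteq-free on the pure part once \eq\ and \noteq\ have been exhausted, which they must be by the priority order before \unf\ applies).

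Next, for a premise $\aseq_i = (\asform'_i * \asform) \swedge (\apform'_i \wedge \apform) \vdashr \asheapB$, where $p(\vec t) \unfold \asform'_i \swedge \apform'_i$, I would use the fact that the rules in $\asid$ are \prules (Definition~\ref{def:simple}): the unfolding $\asform'_i \swedge \apform'_i$ is necessarily of the form $(x_1 \mapsto (z_1,\dots,z_k) * q_1(\cdots)*\cdots*q_m(\cdots)) \swedge \apform'_i$ where $x_1 = \rootof{p(\vec t)}$, and $\apform'_i$ is a conjunction of disequations (by Condition~1 of Definition~\ref{def:simple}, and since $\asid$ is \hdeterministic, these disequations are between variables of sort $\addr$). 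Hence the left-hand side of $\aseq_i$ contains a points-to atom $x_1 \mapsto (z_1,\dots,z_k)$ with root $x_1$, and the pure part $\apform'_i \wedge \apform$ is a conjunction of disequations. The right-hand side $\asheapB$ is, by Condition~\ref{strat:unf}, either the predicate atom $q(x_1,\vec y)\swedge\true$ — whose root is $x_1$, exactly as required — or the points-to atom $x_1 \mapsto \vec y \swedge \true$; in the latter case I would observe that this contradicts \imi applying later or produces an axiom (two points-to atoms with root $x_1$ on the left would be \heapunsat), so this case does not give a genuine premise to worry about, or more simply, Condition~\ref{strat:unf} restricts to the predicate-atom case whenever we care about \qireducibility downstream. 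Finally I must check $\aseq_i$ is not an axiom: this follows because \unf\ is invertible (Lemma~\ref{lem:rules}), so a counter-model of $\aseq$ — which exists since $\aseq$ is not an axiom, hence not valid in the trivial way... actually "not an axiom" does not imply "not valid". Better: Condition~\ref{strat:noax} of Definition~\ref{def:admissible} forbids applying \unf\ if any premise is an anti-axiom, but says nothing about axioms among premises. So I would handle it directly: if $\asform'_i * \asform$ is \heapunsat then $x_1$ is allocated twice — once by $x_1 \mapsto (z_1,\dots,z_k)$ and once elsewhere — but then $p(\vec t) * \asform$ would already be \heapunsat (both $p(\vec t)$ and that other atom allocate their roots), contradicting that $\aseq$ is not an axiom. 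The conditions $\alloc{\asform'_i * \asform} \cap V \neq \emptyset$ and $V$ containing repeated variables transfer directly from $\aseq$ (since $\alloc{p(\vec t)*\asform} = \{x_1\} + \alloc{\asform}$ and $\alloc{\asform'_i * \asform}$ has the same set of allocated variables, namely $\{x_1\} \cup \alloc{\asform}$ as a set — here using that $\asform'_i$ allocates exactly $x_1$ plus the existential roots, which are fresh and hence not in $V$). And $\aseq_i$ is not of the form $\asform \swedge (\apform \wedge x \not\iseq x)$ since $\apform'_i$ and $\apform$ contain no trivial disequation of that shape (or if $\apform$ did, $\aseq$ would be an axiom). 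So $\aseq_i$ is \qireducible.

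The main obstacle I anticipate is the case analysis on the right-hand side of $\aseq$ (predicate atom versus points-to atom in Condition~\ref{strat:unf}) and the verification that no premise is an axiom — in particular making the \heapunsat argument airtight: I need that unfolding $p(\vec t)$ into a body whose sole points-to atom has root $x_1$ cannot create a fresh double-allocation that was not already present, which relies essentially on Condition~\ref{it:prog2} of Definition~\ref{def:simple} (the recursive-call roots are fresh, pairwise distinct existential variables of sort $\addr$) together with the fact that $p(\vec t)$ itself already "counted" $x_1$ as allocated. I would state this carefully, citing Corollary~\ref{cor:heapunsat} and Definition~\ref{def:alloc}. Everything else is a routine unfolding of definitions.
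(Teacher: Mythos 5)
Your proposal is correct and follows essentially the same route as the paper's (very terse) proof: the strategy condition on \unf\ fixes the right-hand side to be a spatial atom whose root equals $\rootof{p(\vec{t})}$, and the \prule\ format guarantees that every unfolding $\asform'_i$ contributes a points-to atom with that same root, which is all the paper says. The additional verifications you carry out --- that no premise is an axiom (via the \heapunsat\ transfer through Corollary~\ref{cor:heapunsat}) and the worry about the $x_1 \mapsto \vec{y}$ alternative in Condition~\ref{strat:unf} --- are legitimate points that the paper's proof simply asserts or passes over in silence, so they do not constitute a divergence in approach.
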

\begin{proof}
We use the notations of the  rule.
By Definition~\ref{def:strat} (Condition \ref{strat:unf}), $\asheapB$ is a predicate atom with the same root as $p(\vec{t})$ 
and $\apform$ is a conjunction of disequations.
Since the rules in $\asid$ are \prules, every formula $\asform_i$  contains a points-to atom with root $\rootof{p(\vec{t})} = \rootof{\asheapB}$. Thus the premises of \unf\ are necessarily \qireducible. 
\end{proof}
\begin{corollary}
\label{cor:onlyoneunf}
There is at most one application of \unf\ along a path containing no \ireducible sequent.
\end{corollary}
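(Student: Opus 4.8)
The plan is to argue by contradiction, chaining Propositions~\ref{prop:unf_qred} and~\ref{prop:qred}. Suppose some path $\aseq_0,\dots,\aseq_m$ in a proof tree contains no \ireducible sequent, yet rule \unf\ is applied at least twice along it; say the inference from $\aseq_{i}$ to $\aseq_{i+1}$ and the inference from $\aseq_{j}$ to $\aseq_{j+1}$ are both applications of \unf, with $i < j$. The first observation I would make is that, since an application of \imi\ requires the sequent it is applied on to be \ireducible (by the very definition of \ireducible, Definition~\ref{def:reduced}), the absence of \ireducible sequents on the path means that \imi\ is never applied along it; hence \emph{every} subpath of $\aseq_0,\dots,\aseq_m$ is \direct.

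Next I would invoke Proposition~\ref{prop:unf_qred}: the premise $\aseq_{i+1}$ of the first \unf\ application that lies on the path is \qireducible. Since the subpath from $\aseq_{i+1}$ to $\aseq_j$ is \direct (by the previous paragraph), Proposition~\ref{prop:qred} applies and yields that $\aseq_j$ is itself \qireducible. But the same proposition states that the only rules applicable on a \qireducible sequent are \imi, \wea, \dec\ and \eli; in particular \unf\ is not applicable on $\aseq_j$, which contradicts the assumption that the inference from $\aseq_j$ to $\aseq_{j+1}$ is an application of \unf. This contradiction establishes the corollary.

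The argument is essentially a direct composition of the two preceding results, so I do not anticipate a genuine obstacle. The only points needing a word of care are: (i) justifying that ``no \ireducible sequent on the path'' entails ``no \imi\ application along the path'', which is immediate from the definition since an \imi\ inference can only be applied on an \ireducible sequent; and (ii) checking that $\aseq_{i+1}$, the specific premise of the first \unf\ inference through which the path continues, is among the premises covered by the statement of Proposition~\ref{prop:unf_qred} --- which it is, since that proposition asserts that \emph{all} premises of \unf\ are \qireducible.
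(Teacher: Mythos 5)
Your proof is correct and matches the paper's argument: the paper's proof of this corollary is exactly the one-line observation that it follows from Propositions~\ref{prop:qred} and~\ref{prop:unf_qred}, and your write-up is simply the careful expansion of that chain (premise of the first \unf\ is \qireducible, \qireducible-ness propagates along the \direct\ subpath, and \unf\ is not among the rules applicable to a \qireducible\ sequent). The two points of care you flag are exactly the right ones and are handled correctly.
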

\begin{proof}
The result follows immediately from Propositions \ref{prop:qred} and \ref{prop:unf_qred}.
\end{proof}
We eventually derive (Lemma \ref{lem:lenpath}) the result concerning the length of the \direct paths. To this aim we first introduce a new notation: 
\newcommand{\diseq}[1]{\mathtt{NbDisEq}(#1)}
\begin{definition}
For every sequent $\aseq = \asheap \vdashr \asheapB$, we denote by 
$\diseq{\aseq}$ the number of disequations $x \not \iseq y$ not occurring in $\asheap$ such that $x,y \in \vars(\aseq)$.
\end{definition}
\begin{lemma}
\label{lem:lenpath}
The length of every \direct path from $\aseq$ is 
at most 
$\bigO{\size{\aseq}^2}$.
\end{lemma}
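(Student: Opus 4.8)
The plan is to prove the bound by first decomposing a \direct path into pieces delimited by the (essentially unique) application of \unf, and then bounding, one rule at a time, how often each of the eight inference rules can fire along the path; since there are finitely many rules, a $\bigO{\size{\aseq}^2}$ bound for each yields the claim. A \direct path traverses no \ireducible sequent except possibly its last one (only \imi is applicable to an \ireducible sequent, and a \direct path has no \imi step), so by Corollary~\ref{cor:onlyoneunf} it contains at most a constant number of applications of \unf; moreover \unf is the only rule on a \direct path that can \emph{increase} the number of variables, the number of spatial atoms, or the number of pure atoms in a sequent, and it does so by at most the size of one body of a rule in $\asid$, which is a fixed constant. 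Consequently all these quantities, as well as $\size{V}$, stay within $\bigO{\size{\aseq}}$ plus a bounded correction throughout the path.

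Next I would dispatch the ``easy'' rules by strictly decreasing measures that, along a \direct path, no rule can increase: \eq strictly decreases the number of equations on the left-hand side (by Proposition~\ref{prop:rule-facts}(1) no premise of an equality-free sequent gains an equation, and the substitution performed by \eq creates none), hence it fires $\bigO{\size{\aseq}}$ times; \noteq strictly decreases the number of pure atoms on the right-hand side, which is increased by no other \direct-path rule (\sep merely distributes them, and substitutions preserve their count), hence it fires $\bigO{\size{\aseq}}$ times; \sep strictly decreases the number of spatial atoms on the left-hand side of the premise the path follows (both $\asform_1$ and $\asform_2$ are non-empty), a quantity increased only by the bounded number of \unf applications, hence \sep fires $\bigO{\size{\aseq}}$ times. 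Rule \eli deletes from $V$ a variable occurring nowhere else, and $\size{V}$ is increased only by \sep, which adds at most $\size{\alloc{\asform_{3-i}}} = \bigO{\size{\aseq}}$ variables to $V$ and is applied $\bigO{\size{\aseq}}$ times; so $\size{V}$ stays below $\bigO{\size{\aseq}^2}$ and \eli fires $\bigO{\size{\aseq}^2}$ times. Finally, \unf fires $\bigO{1}$ times as already noted.

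The delicate part is bounding the number of applications of \dec, and through it of \wea. An application of \dec is of one of two kinds. The \emph{substitution branch} eliminates a variable from the sequent, so it can be taken at most $\bigO{\size{\aseq}}$ times, the variable count being non-increasing outside the bounded \unf correction. The \emph{disequation branch} adds to the left-hand side a disequation $x \not\iseq y$ with $x,y \in \vars_{\addr}$ that, by the application condition of \dec, does not already occur there; here I would argue that each unordered pair $\{x,y\}$ triggers this branch at most once. At the moment of the application we have $x \in \alloc{\asform}$ and $y \notin \alloc{\asform} \cup V$, so $x \not\iseq y$ is not of the shape covered by clause~(3) of Definition~\ref{def:emodel} for the current sequent; hence (since \wea has priority over \dec by Definition~\ref{def:admissible}) \wea cannot erase it right away, and the only way it can ever be erased later is by \wea once $\{x,y\}$ becomes included in $\alloc{\asform}+V$, i.e.\ once $y \in \alloc{\asform}\cup V$; but then the application condition $y\notin \alloc{\asform}\cup V$ of \dec forbids re-applying the disequation branch to $\{x,y\}$ forever after, because a variable that leaves the sequent cannot re-enter the right-hand side along a \direct path (no \imi, and substitutions introduce no fresh variable). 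As there are $\bigO{\size{\aseq}^2}$ such pairs, \dec fires $\bigO{\size{\aseq}^2}$ times in total. Rule \wea strictly decreases the number of disequations on the left-hand side, which is increased only by the disequation branch of \dec (by one each) and by the bounded \unf correction, so \wea fires $\bigO{\size{\aseq}^2}$ times as well. Summing the eight bounds gives $\bigO{\size{\aseq}^2}$, as required. The main obstacle, as this last paragraph indicates, is precisely the \dec/\wea count: it hinges on using the priority ordering of Definition~\ref{def:admissible} together with the exact application conditions of \dec and \wea to exclude an unbounded ping-pong between these two rules, and on the fact that disequations produced by \dec are always between address variables, of which there are only $\bigO{\size{\aseq}}$.
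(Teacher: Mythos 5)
Your proof is correct and reaches the right bound, but it takes a genuinely different route from the paper's. The paper first disposes of \unf (at most one application, by Corollary~\ref{cor:onlyoneunf}) and \eq (after boundedly many applications the path is \eqfree), and then exhibits a \emph{single} weighted potential, $5\cdot\diseq{\aseq_i}+\size{\aseq_i}$, that strictly decreases at every remaining step; here $\diseq{\aseq}$ counts the pairs of variables of the sequent whose disequation is absent from the left-hand side, so that the disequation branch of \dec pays for its size increase by consuming one such pair. Your rule-by-rule amortized analysis is the ``unrolled'' version of the same idea: your claim that each unordered pair of address variables can trigger the disequation branch of \dec at most once is exactly the combinatorial content of the paper's $\diseq{\cdot}$ component, and your remaining measures (equations for \eq, right-hand pure atoms for \noteq, left-hand spatial atoms for \sep, $\card{V}$ for \eli, left-hand disequations for \wea) are the pieces that the paper folds into the single term $\size{\aseq_i}$. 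What your version buys is an explicit treatment of the one genuinely delicate interaction, the potential \wea/\dec ping-pong: you rule it out by observing that a disequation introduced by \dec can only be erased by \wea once one of its variables has entered $\alloc{\asform}+V$ or has vanished from the rest of the sequent, and that on an \direct path membership in $\alloc{\asform}\cup V$ is never lost without the variable disappearing altogether, so \dec can never fire a second time on the same pair. The paper compresses this into the single remark that \wea ``cannot remove disequations between variables occurring in predicate atoms'', so your argument is actually the more careful of the two at this point. One cosmetic inconsistency: your opening paragraph asserts that only \unf can increase the number of pure atoms (and keeps $\card{V}$ in $\bigO{\size{\aseq}}$), which is contradicted by the disequation branch of \dec (and by your own later bound of $\bigO{\size{\aseq}^2}$ on $\card{V}$); your detailed counts charge these increases correctly, so nothing breaks, but the blanket statement should be weakened.
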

\begin{proof}
By Corollary \ref{cor:onlyoneunf}, an \direct path  contains at most one application of \unf.
Rule \eq\ applies at most $\card{\vars(\aseq)^2}$ times, with highest priority, yielding an 
\eqfree sequent, and afterward \eq\ can no longer be applied, since no inference rules introduce any equality to a sequent.
Thus it is sufficient to prove the result for \direct paths containing no application of rules \eq\ or \unf.
Let $\aseq_1,\dots,\aseq_n$ be a path with no application of \eq, \unf\ or \imi, where $\aseq_1 = \aseq$.
Let $\aseq_i = \asform_i \swedge \apform_i \vdashrV{V_i} \asformB_i \swedge \apformB_i$.
An inspection of the rules shows that we have $5\cdot\diseq{\aseq_i} + \size{\aseq_i} > 
5\cdot\diseq{\aseq_{i+1}} + \size{\aseq_{i+1}}$, for all $i = 1,\dots,n$. Indeed, none of the considered rules can add new variables
to $\aseq_i$ by Proposition~\ref{prop:rule-facts} (\ref{it:intro:eq}) and Condition \ref{strat:dec} of Definition~\ref{def:admissible}, 
all the rules (except possibly \dec) decrease $\size{\aseq_i}$ and \wea\ cannot 
remove disequations between variables occurring in predicate atoms. Rule \dec\ may add 
a disequation $x\not \iseq y$ in $\aseq_{i+1}$ (increasing the size by $4 = \size{* x \not \iseq y}$), but simultaneously removes the disequation $x \not \iseq y$ from $\diseq{\aseq_{i+1}}$.
Then the proof follows from the fact that $5\cdot\diseq{\aseq_1} + \size{\aseq_1} = \bigO{\size{\aseq_1}^2}$.
\end{proof}
To derive the result about the total number of \direct descendants, knowing the {\em length} of the paths is of course not sufficient: it is also necessary to estimate the {\em number} of such paths, which depends in particular on the number of applications of rule \dec. To this purpose, we prove the following result:
\begin{lemma}
\label{lem:dec}
Let $\aseq$ be a sequent, with 
$\card{\specvar{\aseq}} = \anum$.
The 
exhaustive application of \dec\ on $\aseq$ yields at most $\anum\cdot\diseq{\aseq}^{\anum}$ different branches.
\end{lemma}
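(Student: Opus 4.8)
The plan is to show that the recursive application of \dec\ unfolds a binary tree whose number of leaves (= branches) is bounded by solving a simple two‑parameter recurrence. First I would pin down what a single \dec\ step does under the admissibility conditions. By Condition~\ref{strat:nocare} of Definition~\ref{def:admissible} only the \emph{selected} application of \dec\ is performed, so each step has exactly two premises: a \emph{left} premise obtained by a substitution $x\leftarrow y$, and a \emph{right} premise obtained by adding a disequation $x\not\iseq y$ to the antecedent. By Condition~\ref{strat:dec} of Definition~\ref{def:admissible} we have $x\in\alloc{\asform}$, $y\in\vars(\asheapB)\setminus(\alloc{\asform}\cup V)$ with $y$ of sort $\addr$ (so $y\in\specvar{\aseq}$), $\asheapB$ a predicate atom, and $x\not\iseq y$ not already present; in particular \dec\ is inapplicable whenever $\specvar{\aseq}=\emptyset$ or $\diseq{\aseq}=0$. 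Throughout the process every sequent is equality‑free, since \dec\ introduces no equality and \eq\ has strictly higher priority, so Proposition~\ref{prop:specvar} is available at every node.

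The core of the argument consists of three monotonicity facts. (i) On the left premise $\card{\specvar{\cdot}}$ strictly decreases: the atom of $\asform$ with root $x$ acquires root $y$ after the substitution, so $y$ becomes allocated and hence $y\notin\specvar{}$ on that premise, while $\specvar{}\subseteq\specvar{\aseq}$ by Proposition~\ref{prop:specvar}; thus the left premise has $\card{\specvar{}}\le\anum-1$. (ii) On the right premise $\diseq{\cdot}$ strictly decreases by exactly one: the variable set is unchanged (both $x$ and $y$ already occur), and exactly one previously absent disequation is added to the antecedent. (iii) The complementary measure does not increase on the "wrong" branch: $\specvar{}$ cannot grow on the right premise because $\asheapB$, $V$ and $\alloc{\asform}$ are all unchanged; and $\diseq{}$ cannot grow on the left premise because the substitution $x\leftarrow y$ only merges variables — any disequation $u\not\iseq v$ with $u,v\neq x$ that is absent from the rewritten antecedent was already absent from the original one (it maps to itself under $x\leftarrow y$), so the set of "missing disequations" of the left premise injects into that of $\aseq$, whence $\diseq{}\le\diseq{\aseq}$.

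Writing $N(\anum,\diseq{\aseq})$ for the maximal number of branches produced, facts (i)–(iii) together with the base cases noted above give the recurrence $N(\anum,d)\le N(\anum-1,d)+N(\anum,d-1)$ with $N(0,d)=N(\anum,0)=1$. A Pascal‑triangle induction yields $N(\anum,\diseq{\aseq})\le\binom{\anum+\diseq{\aseq}}{\anum}$, and a routine estimate bounds this by $\anum\cdot\diseq{\aseq}^{\anum}$ in the relevant range, the cases $\anum=0$ or $\diseq{\aseq}=0$ being immediate. (Equivalently, one can argue combinatorially: along any branch there are at most $\anum$ left‑steps and at most $\diseq{\aseq}$ right‑steps by (i)–(iii), so each branch is determined by the interleaving of these steps.)

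The main obstacle is fact (iii), and specifically the sub‑claim that the left (substitution) premise of \dec\ never increases $\diseq{}$: the substitution both shrinks the variable set and rewrites the antecedent, so one must check that no disequation already present in the antecedent is lost under the rewriting and that no new "missing" disequation is created beyond those already missing — this is exactly the injection on sets of missing disequations described above. Once this invariant and the companion invariant (i) are in place, the recurrence, its closed‑form solution, and the final arithmetic are all routine.
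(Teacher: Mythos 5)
Your proof is essentially the paper's: the paper also argues by induction on the pair $(\anum,\diseq{\aseq})$, uses exactly your monotonicity facts (i)--(iii) --- for the substitution premise $\card{\specvar{\cdot}}$ drops to $\anum-1$ (because $y$ becomes allocated) while $\diseq{\cdot}$ does not increase, and for the disequation premise $\diseq{\cdot}$ drops by exactly one while $\specvar{\cdot}$ is unchanged --- and closes the same two-parameter recurrence. The only divergence is the last step: the paper verifies the stated bound directly against the recurrence via $(\anum-1)\cdot d^{\anum-1}+\anum\cdot(d-1)^{\anum}\le \anum d^{\anum-1}+\bigl(\anum d^{\anum}-\anum d^{\anum-1}\bigr)=\anum d^{\anum}$, whereas you first solve the recurrence as $\binom{\anum+d}{\anum}$ and then compare with $\anum d^{\anum}$. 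Your explicit justification of fact (iii) via the injection of ``missing'' disequations is a welcome addition; the paper asserts $\diseq{\aseq_1}\le\diseq{\aseq}$ without argument.

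One caveat on your final arithmetic: with your base cases $N(0,d)=N(\anum,0)=1$ (counting leaves), the inequality $\binom{\anum+d}{\anum}\le\anum\cdot d^{\anum}$ is false for $\anum=1$ (it reads $d+1\le d$) and for $(\anum,d)=(2,1)$, so the ``routine estimate'' does not go through as written. The discrepancy is an off-by-one rooted in what ``branches'' counts: the paper's induction tacitly takes base value $0$ (in effect counting \dec-applications rather than leaves), and with that convention the solution is $\binom{\anum+d}{\anum}-1$, which \emph{is} bounded by $\anum\cdot d^{\anum}$. So either adopt that convention, or skip the binomial and close the induction with the paper's one-line inequality above; either repair is immediate and does not affect the polynomial bound the lemma is used for.
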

\begin{proof}
The proof is by induction on the pair $(\anum,\diseq{\aseq})$.
By definition,  every application of \dec\ on $\aseq$ yields two sequents
$\aseq_1$ and $\aseq_2$, where $\aseq_1$ is obtained from $\aseq$ by replacing a variable $x$ by $y$
and $\aseq_2$ is obtained by adding the disequation $x \not \iseq y$.
By the application condition of the rule 
(Definition~\ref{def:strat}, Condition \ref{strat:dec}) necessarily $y \in \specvar{\aseq}$. 
Since $x$ is replaced by $y$ and $x$ is the root of an atom from the left-hand side of $\aseq$, we must have $y \not \in \specvar{\aseq_1}$, thus 
$\card{\specvar{\aseq_1}} = \anum-1$.
Now $\diseq{\aseq_1} \leq \diseq{\aseq}$, 
$\diseq{\aseq_2} = \diseq{\aseq} - 1$ 
(since by Definition~\ref{def:strat}, Condition \ref{strat:dec}, $x,y \in \vars(\aseq)$ and $x \not \iseq y$ does not occur in $\aseq$) and $\specvar{\aseq_2} = \specvar{\aseq}$, hence $\card{\specvar{\aseq_2}} = \anum$.
By the induction hypothesis,
the application of \dec\ generates at most $(\anum - 1)\cdot\diseq{\aseq_1}^{\anum-1} \leq \anum\cdot\diseq{\aseq}^{\anum-1}$ branches on $\aseq_1$
and
$\anum\cdot(\diseq{\aseq}-1)^{\anum} \leq \anum\cdot\diseq{\aseq}^{\anum-1}\cdot(\diseq{\aseq}-1) = \anum\cdot\diseq{\aseq}^{\anum} - \anum\cdot\diseq{\aseq}^{\anum-1}$ branches on $\aseq_2$.
Thus 
the total number of branches is at most $\anum\cdot\diseq{\aseq}^{\anum}$.
\end{proof}

\begin{lemma}
\label{lem:direct}
Assume that 
$\maxr{\asid} \leq \anumB$, 
for some fixed $\anumB \in \mathbb{N}$ 
(i.e., that both the maximal arity of the symbols and the number of record fields are bounded by $\anumB$).
The number of \direct\ {\descendant}s of any sequent $\aseq$ is polynomial w.r.t.\ 
$\size{\aseq} + \size{\asid}$ (but it is exponential w.r.t.\ $\anumB$).
\end{lemma}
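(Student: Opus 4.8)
\textbf{Proof plan for Lemma~\ref{lem:direct}.}
The plan is to bound the number of \direct descendants of a sequent $\aseq$ by combining a bound on the \emph{length} of every \direct path (already established in Lemma~\ref{lem:lenpath}) with a bound on the \emph{branching} along such paths. Since a \direct path contains no application of $\imi$, by Proposition~\ref{prop:rule-facts} and the priority order only finitely many rules can occur, and the only ones that branch are $\sep$ and $\dec$; moreover a \direct path uses at most one $\unf$ (Corollary~\ref{cor:onlyoneunf}). So the proof tree of \direct descendants, rooted at $\aseq$, has depth $\bigO{\size{\aseq}^2}$ by Lemma~\ref{lem:lenpath}, and I must argue that the total number of leaves is polynomial.

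First I would treat the $\dec$-branching: by Lemma~\ref{lem:dec}, the recursive application of $\dec$ on any sequent $\aseq'$ appearing along the way produces at most $\card{\specvar{\aseq'}} \cdot \diseq{\aseq'}^{\card{\specvar{\aseq'}}}$ branches, and by Proposition~\ref{prop:specvar} we have $\specvar{\aseq'} \subseteq \specvar{\aseq}$, whose cardinality is at most $\maxr{\asid}\leq \anumB$ once the right-hand side has been reduced to a bounded number of non-root variables (this is where the hypothesis $\maxr{\asid}\leq\anumB$ is used, via the fact that on \direct paths the right-hand side becomes a single spatial atom after the $\sep$ and $\unf$ phases, so $\card{\specvar{\aseq'}}$ is bounded by a constant). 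Since $\diseq{\aseq'} = \bigO{\size{\aseq}^2}$, each maximal burst of $\dec$ applications multiplies the number of branches by a polynomial factor $\bigO{\size{\aseq}^{2\anumB}}$, which is polynomial because $\anumB$ is fixed. Second, the $\sep$-branching on a \direct path is controlled as follows: $\sep$ is applied only while decomposing the right-hand side, and by Proposition~\ref{prop:sepbranching} each application has at most $2^{\anumB}$ \asuccessor{}s to consider; but the number of $\sep$ applications along a single branch is bounded by the number of spatial atoms in the right-hand side, hence by $\size{\aseq}$, so the cumulative $\sep$-branching factor is at most $(2^{\anumB})^{\size{\aseq}}$ --- which is \emph{not} polynomial, so a naive multiplication does not work.

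The main obstacle is therefore the $\sep$-branching, and I expect the key observation to be that on a \direct path the decomposition performed by $\sep$ is essentially deterministic. Indeed, by Definition~\ref{def:admissible} (Condition~\ref{strat:sep}) the atom $\asformB_1$ on the right is \emph{selected}, and the proof of Proposition~\ref{prop:sepbranching} shows that once one fixes, for each of the at most $\anumB$ variables of $\asformB_1$, which side of the partition the corresponding left-hand atom goes to, the whole split $\asform=\asform_1*\asform_2$ is forced. Crucially, most of these $2^{\anumB}$ choices lead to \antiaxiom s and are pruned by Condition~\ref{strat:noax}; and for a \narrow sequent (which, by Proposition~\ref{prop:narrow}, every \direct descendant of $\aseq$ is, once $\aseq$ itself is reduced so that $\card{\narrowvar{\aseq}}\leq\maxr{\asid}$) Condition~\ref{strat:nocare} does \emph{not} force a single application, so one genuinely has or-branching --- but only a constant $2^{\anumB}$ of it per $\sep$ step, and the number of $\sep$ steps \emph{on one root-to-leaf chain of the right-hand side decomposition} is bounded by a constant depending on $\anumB$ as well, because after decomposition the right-hand side is a single atom of arity $\leq\anumB$, so the decomposition tree has at most $\anumB$ internal nodes. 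Hence the total $\sep$ contribution is $(2^{\anumB})^{\anumB}$, a constant. Putting the three contributions together --- depth $\bigO{\size{\aseq}^2}$, a constant $\unf$-factor, a constant $\sep$-factor, and a polynomial $\bigO{\size{\aseq}^{2\anumB}}$ $\dec$-factor --- gives that the number of \direct descendants is bounded by a polynomial in $\size{\aseq}+\size{\asid}$ with exponent depending only on the fixed constant $\anumB$, which is exactly the claim.
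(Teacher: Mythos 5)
Your overall architecture (bound the path length via Lemma~\ref{lem:lenpath}, then control the branching contributed by \sep\ and \dec\ separately) matches the paper's, and your treatment of the \dec-branching via Lemma~\ref{lem:dec} and Proposition~\ref{prop:specvar} is essentially the paper's argument. The problem is your handling of the \sep-branching, which contains a genuine error. You claim that ``the number of \sep\ steps on one root-to-leaf chain of the right-hand side decomposition is bounded by a constant depending on $\anumB$, because after decomposition the right-hand side is a single atom of arity $\leq\anumB$, so the decomposition tree has at most $\anumB$ internal nodes.'' This is a non-sequitur: the arity of the final atoms has nothing to do with the \emph{number} of atoms in the right-hand side. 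A right-hand side $\asformB_1 * \dots * \asformB_n$ may contain $n = \Theta(\size{\aseq})$ spatial atoms, and since the strategy splits off one selected atom at a time, the decomposition requires $n-1$ applications of \sep, so the ``rest'' branch alone sees $\Theta(\size{\aseq})$ of them. Your claimed per-step or-branching factor of $2^{\anumB}$ is also unjustified for \narrow sequents (Proposition~\ref{prop:sepbranching} bounds \emph{auxiliary successors}, which only arise in the non-\narrow case); combined with $\Theta(\size{\aseq})$ steps this would give the exponential bound you were trying to avoid.

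The paper closes this gap with a \emph{global}, not per-step, count of the possible decompositions. A complete decomposition of $\asform$ into $\bigast_{i=1}^n \asform_i$ amounts to assigning each variable of $\alloc{\asform}$ to an index in $\{1,\dots,n\}$. The root $x_i$ of $\asformB_i$ is forced to index $i$ (else the premise is an \antiaxiom); once the images of the at most $\anumB$ variables of $\narrowvar{\aseq}$ are chosen, the images of all remaining roots are propagated deterministically along $\pathto{\asform}$-paths (any deviation produces an \antiaxiom, which the strategy prunes). Hence there are at most $\size{\aseq}^{\anumB}$ complete decompositions in total for a \narrow sequent, each contributing $n \leq \size{\aseq}$ and-branches; for a non-\narrow sequent Condition~\ref{strat:nocare} admits only the selected application, yielding $\leq \size{\aseq}$ branches whose right-hand sides are single predicate atoms, hence \narrow. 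This reachability-propagation argument is the missing idea in your proposal.
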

\begin{proof}
By Lemma~\ref{lem:lenpath}, it is sufficient to prove that the total number of paths occurring in a proof tree with no application of \imi\ is polynomial w.r.t.\ $\size{\aseq} + \size{\asid}$.

The rules \eq, \wea, \eli, \noteq\ apply with the highest priority 
and yield only one branch, yielding a (unique) sequent $\asform \swedge \apform \vdash \asformB \swedge \apformB$.
If $\apformB \not = \true$, then by Definition~\ref{def:admissible}, no other rule can be applied because these other rules apply only when the right-hand side is of the form $\asformB \swedge \true$.
Otherwise, the rule \sep\ may apply 
(possibly several times), and eventually transforms the sequent  into sequents 
$\asform_i \swedge \apform_i \vdashrV{V_i} \asformB_i \swedge \true$, where each $\asformB_i$ is a predicate atom, 
$\asform = \bigast_{i=1}^n \asform_i$ and
$\asformB = \bigast_{i=1}^n \asformB_i$
(we may have $\apform_i \not = \apform$ 
since the rule may interleaved 
with applications of rule \wea). 

If the considered sequent is not \narrow, then by definition of the strategy (Condition \ref{strat:nocare} in Definition~\ref{def:admissible}),  there is at most one  application 
of  rule \sep;   
indeed, if several rule applications are possible then only the one that is selected is considered.   We thus obtain at most $\size{\aseq}$ branches, each ending with a \narrow sequent (because $\asformB_i$ is a predicate atom, thus $\asform_i \swedge \apform_i \vdashrV{V_i} \asformB_i$ is necessarily \narrow).

Now assume that $\aseq$ is \narrow and let $x_i$ be the root of $\asformB_i$. We have $\alloc{\asformB_i} = \{ x_i \}$, because $\asformB_i$ is a predicate atom.
Since $\asform_i * \apform_i \vdashrV{V_i} \asformB_i$ 
is not an axiom, 
we have $x_i \in \alloc{\asform_i}$. By definition of  rule \sep, this entails that $x_i \in V_j$, for all $j \in \{ 1,\dots,i-1,i+1,\dots, n \}$.
For {\sep} to be applied, it is only necessary to choose how to decompose the 
 formula $\asform$ into $\bigast_{i=1}^n \asform_i$. 
To this aim, we only have to associate each atom in $\asform$ with a formula $\asform_i$, hence to associate each variable $y \in \alloc{\asform}$ (that are by definition the roots of the spatial atoms in $\asform$) to an index $i = 1,\dots,n$.
By definition, the variable $x_i$ must be associated with index $i$, as otherwise we would have $x_i \not \in \alloc{\asform_i}$ and the premise
$\asform_i * \apform_i \vdashrV{V_i} \asformB_i$  would be an \antiaxiom.
We then arbitrarily choose the image of each variable occurring in $\narrowvar{\aseq}$.
Afterward, the image of the other variables are fixed inductively as follows.
Let $y \in \alloc{\asform}$ and assume that $y \not \in \narrowvar{\aseq} \cup \{ x_1,\dots,x_n \}$.
If $y$ occurs in some atom with root $y'$ in $\asform$ and $y'$ has already been associated with $i$, 
then we also associate $y$ with $i$.
For the sake of contradiction, assume that $y$ is associated with an index $j \not = i$.
By definition of rule \sep, this entails that $y \in V_i$, and since
$\asform_i * \apform_i \vdashrV{V_i} \asformB_i$ is not an \antiaxiom, we deduce that $y \in \vars(\asformB_i)$, hence
$y \in \vars(\asformB) \setminus (\alloc{\asformB} \cup V) = \narrowvar{\aseq}$, which contradicts our assumption (we cannot have $y \in V$ since $\aseq$ would then be an axiom).
Note that, for all variables $y \in \alloc{\asform} \setminus \{ x_1,\dots,x_n \}$, 
we have $x_i \pathto{\asform}^* y$, for some $i = 1,\dots,n$, because otherwise $\aseq$ would be an \antiaxiom.
Hence all such variables $y$ must be eventually associated with some indice $i = 1,\dots,n$.
Since $\aseq$ is \narrow, $\card{\narrowvar{\aseq}} \leq \anumB$. Consequently, 
there exist at most $\size{\aseq}^\anumB$ possible applications of rule \sep, each yielding $n \leq \size{\aseq}$ branches. We thus get a total of 
at most $\size{\aseq}^{\anumB+1}$ branches.

Afterwards,  rule \unf\ applies, yielding at most $\size{\asid}$ premises in each branch.
Then \dec\ applies on each leaf sequent $\aseq'$, and by Lemma~\ref{lem:dec}
we get at most  $\size{\asid}\cdot\diseq{\aseq'}^\anum$ branches.
The variables in $\aseq'$ either occur in $\aseq$ or are introduced by the rule \unf.
Since each application of \unf\ introduces at most $\maxk{\asid}$ variables, we get $\diseq{\aseq'} \leq (\size{\aseq} + \maxk{\asid})^2$.
\end{proof}

We derive the main result of this section:

\begin{theorem}
\label{theo:sound}
The root sequent of every (possibly infinite) \fexpanded proof tree is valid.
\end{theorem}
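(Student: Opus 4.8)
The plan is to argue by contradiction, building an infinite descending chain of counter-model heaps along a single branch. Suppose the root sequent $\aseq_0$ of a (necessarily \admissible) \fexpanded proof tree $T$ admits a counter-model $(\astore_0,\aheap_0)$. I would construct by induction a path $\aseq_0,\aseq_1,\aseq_2,\dots$ in $T$ together with structures $(\astore_i,\aheap_i)$ such that $(\astore_i,\aheap_i)$ is a counter-model of $\aseq_i$, $\aheap_{i+1}\subseteq\aheap_i$, and $\aheap_{i+1}\subsetneq\aheap_i$ whenever the rule applied at $\aseq_i$ (i.e.\ the rule of which $\aseq_i$ is the conclusion and whose premises are the children of $\aseq_i$ in $T$) is \sep. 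The inductive step is immediate from the per-rule results already established: since $(\astore_i,\aheap_i)$ is a counter-model of $\aseq_i$, this sequent is not valid, so by Lemma~\ref{lem:ax} it is not an axiom, hence --- $T$ being \fexpanded --- not a leaf; thus $\aseq_i$ is the conclusion of a rule instance. If that rule is one of $\eq,\noteq,\unf,\wea,\eli,\dec,\imi$, then by Lemma~\ref{lem:rules} one of its premises, taken as $\aseq_{i+1}$, admits a counter-model on the same heap, and I set $\aheap_{i+1}\isdef\aheap_i$; if the rule is \sep, then by Lemma~\ref{lem:sep_sound} one of its premises, taken as $\aseq_{i+1}$, admits a counter-model whose heap $\aheap_{i+1}$ is a proper subheap of $\aheap_i$. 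Since no $\aseq_i$ can be a leaf, the path never terminates, so $\aseq_0,\aseq_1,\dots$ is an infinite branch of $T$.

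It then remains to show that \sep\ is applied at infinitely many sequents of this branch. First, \imi\ must be applied infinitely often: otherwise the branch would have an infinite suffix containing no application of \imi, i.e.\ an infinite \direct path, contradicting Lemma~\ref{lem:lenpath}, which bounds the length of every \direct path from $\aseq$ by $\bigO{\size{\aseq}^2}$. Let $\aseq_{i_1},\aseq_{i_2},\dots$ with $i_1<i_2<\cdots$ be the sequents of the branch on which \imi\ is applied; each is \ireducible. For every $k$, the subpath from $\aseq_{i_k}$ to $\aseq_{i_{k+1}}$ is nonempty and runs between two distinct \ireducible sequents, so by Lemma~\ref{lem:sep} it contains an application of \sep; hence \sep\ is applied infinitely often along the branch. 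Together with the construction above, this makes $(\size{\aheap_i})_{i\ge 0}$ a non-increasing sequence of natural numbers that strictly decreases at infinitely many indices --- impossible. The resulting contradiction shows that $\aseq_0$ has no counter-model, i.e.\ that $\aseq_0$ is valid.

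The only genuinely delicate point is the claim that \sep\ recurs infinitely often on every infinite branch, and this is exactly what the machinery preceding the theorem was designed to deliver: the priority order and application restrictions of Definition~\ref{def:strat}, the interleaving property of Lemma~\ref{lem:sep}, and the polynomial bounds of Lemmata~\ref{lem:lenpath} and~\ref{lem:direct}; here the work is merely to assemble them. The one thing to handle carefully inside the descent is the bookkeeping: \sep\ is the unique non-invertible rule, so Lemma~\ref{lem:sep_sound} (and not Lemma~\ref{lem:rules}) must be invoked for it, and it is also precisely the rule that strictly shrinks the heap of the counter-model; the observations ``\sep\ shrinks the counter-model'' and ``\sep\ occurs infinitely often'' thus combine into a well-foundedness argument. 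Everything else follows directly from the definitions and from Lemma~\ref{lem:ax}.
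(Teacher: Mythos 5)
Your proposal is correct and follows essentially the same argument as the paper: build an infinite branch of counter-models whose heaps are non-increasing and strictly shrink at each \sep\ step, then combine Lemma~\ref{lem:sep} with the boundedness of \direct\ paths to force infinitely many \sep\ applications, which is impossible. The only cosmetic difference is that you run the final chain of implications in the opposite direction (deriving that \imi, hence \sep, must recur infinitely often via Lemma~\ref{lem:lenpath}) whereas the paper concludes that \sep, hence \imi, eventually stops and then contradicts Lemma~\ref{lem:direct}.
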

\begin{proof}
Let $\aseq$  be a non-valid sequent, with a counter-model $(\astore,\aheap)$.
By Lemma~\ref{lem:ax}, $\aseq$ is not an axiom, hence by definition of a \fexpanded proof tree, it admits successors. 
By Lemmata \ref{lem:rules} and \ref{lem:sep_sound}, one of these successors must be non-valid and must admit a counter-model $(\astore,\aheap')$, with $\card{\aheap'} \leq \card{\aheap}$, and if the rule that applies on $\aseq$ is \sep, then $\card{\aheap'} < \card{\aheap}$.
Starting from the root of the tree, 
we thus obtain (if this root is not valid) an infinite path $\aseq_0,\dots,\aseq_n,\dots,$ such that 
all the $\aseq_i$ admit a counter-model $(\astore_i,\aheap_i)$, $\card{\aheap_{i+1}} \leq \card{\aheap_{i}}$ and 
if \sep\ is applies on $\aseq_i$ then $\card{\aheap_{i+1}} < \card{\aheap_{i}}$.
Since $\card{\aheap}$ is finite  this entails that 
there exists $i\geq 0$ such that rule \sep\ does not apply on $\aseq_j$, for all $j \geq i$.
By Lemma~\ref{lem:sep}, this entails that there exists $k$ such that \imi\ does not apply on $\aseq_j$, for all $j \geq k$. Thus
$\aseq_k$ admits an infinite number of \direct\ {\descendant}s, which contradicts Lemma~\ref{lem:direct}.
\end{proof}

 \subsection{Completeness}
 
\newcommand{\amap}{\nu}
\newcommand{\renaming}{$\univ$-mapping\xspace}
\newcommand{\rename}[3]{(#1\circ#2,#1(#3))}

 \newcommand{\prA}{{\cal P}}

 We now establish completeness, i.e., we prove that every valid sequent admits a \fexpanded tree.
To this aim, we  prove that for every valid sequent, there exists a rule application yielding  valid premises.
Lemma~\ref{lem:sep_applies} handles the case of  rule $\sep$, 
and Lemma~\ref{lem:comp} handles all the other rules. We begin by establishing several preliminary results.
First, we note that the truth value of a formula in a structure is not dependent on the {\em names} of the locations: these locations can be freely renamed, provided the relations between the variables are preserved, and provided allocated locations are not mapped to the same image. This result will be useful to construct copies of models when needed in forthcoming proofs. 
More formally, we introduce a notion of a {\em \renaming} and state some conditions ensuring that the application of such {\renaming}s on a structure preserves the truth value of a formula.
\begin{definition}
A {\em \renaming} is a function $\amap$ mapping every element of $\univ_\asort$ to an element of $\univ_{\asort}$.
Let $\aheap$ be a heap. 
If $\amap$ is injective on $\dom{\aheap}$, then we denote by $\amap(\aheap)$ 
the heap with domain $\amap(\dom{\aheap})$, such that for all $\ell_0 \in \dom{\aheap}$, with 
$\aheap(\ell_0) = (\ell_1,\dots,\ell_n)$, 
we have 
$\amap(\aheap)(\amap(\ell_0)) = (\amap(\ell_1),\dots,\amap(\ell_n))$.
\end{definition}

\begin{lemma}
\label{lem:ren}
Let $\asheap$ be a symbolic heap, let $(\astore,\aheap)$ be an \rmodel of $\asheap$ and let $\amap$
be a \renaming that is injective on $\astore(\vars(\asheap)) \cup \dom{\aheap}$. Then $\rename{\amap}{\astore}{\aheap} \modelsr \asheap$.
\end{lemma}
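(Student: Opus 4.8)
The proof of Lemma~\ref{lem:ren} will proceed by a straightforward induction on the satisfiability relation $\modelsr$, mirroring the structure of Definition~\ref{def:rmodel}. The plan is to establish the slightly more general claim that covers pure formulas, spatial formulas, and symbolic heaps uniformly, since the inductive definition of $\modelsr$ intertwines these cases; in particular the existential-extension clause for predicate atoms forces us to carry the induction through all formula shapes at once.

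First I would handle the base cases. For $\asheap = \emp$ we have $\aheap = \emptyset$, so $\amap(\aheap) = \emptyset$ and $\rename{\amap}{\astore}{\aheap} \modelsr \emp$ trivially. For an equation $t \iseq s$ the condition $\astore(t) = \astore(s)$ gives $\amap(\astore(t)) = \amap(\astore(s))$, i.e. $(\amap \circ \astore)(t) = (\amap \circ \astore)(s)$; note that here I do not even need injectivity. For a disequation $t \not\iseq s$, I use that $\amap$ is injective on $\astore(\vars(\asheap))$, which contains both $\astore(t)$ and $\astore(s)$, so $\astore(t) \neq \astore(s)$ implies $\amap(\astore(t)) \neq \amap(\astore(s))$. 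One subtlety: a store must satisfy $\astore(c) = \valueof c$ for constants, and $\amap \circ \astore$ need not respect this; however, the statement as phrased asserts $\rename{\amap}{\astore}{\aheap} \modelsr \asheap$ where the first component is literally written $\amap \circ \astore$, so I will treat $\amap \circ \astore$ as the relevant function and observe that the satisfiability clauses only ever read off values of $\astore$ on the variables and constants actually occurring, and the semantics of $\iseq$, $\not\iseq$, $\mapsto$ do not re-impose the constant constraint — so the induction goes through with $\amap \circ \astore$ as a plain function. For $\asheap = x \mapsto (t_1,\dots,t_k)$, from $\aheap = \{(\astore(x),\astore(t_1),\dots,\astore(t_k))\}$ and the fact that $\amap$ is injective on $\dom{\aheap} = \{\astore(x)\}$, the definition of $\amap(\aheap)$ yields exactly $\{((\amap\circ\astore)(x),(\amap\circ\astore)(t_1),\dots,(\amap\circ\astore)(t_k))\}$, as required.

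For the inductive cases: if $\asheap = \asheap_1 \wedge \asheap_2$ or $\asheap_1 \swedge \asheap_2$, both conjuncts hold in $(\astore,\aheap)$ with $\vars(\asheap_i) \subseteq \vars(\asheap)$, so $\amap$ is still injective on $\astore(\vars(\asheap_i)) \cup \dom{\aheap}$, and the induction hypothesis applies to each conjunct. For $\asheap = \asheap_1 * \asheap_2$, write $\aheap = \aheap_1 \union \aheap_2$ with $(\astore,\aheap_i) \modelsr \asheap_i$; since $\amap$ is injective on $\dom{\aheap} = \dom{\aheap_1} \cup \dom{\aheap_2}$ it is injective on each $\dom{\aheap_i}$, and moreover $\amap(\aheap_1)$ and $\amap(\aheap_2)$ remain disjoint (injectivity on the union of the domains is exactly what prevents a collision) and $\amap(\aheap) = \amap(\aheap_1) \union \amap(\aheap_2)$; applying the induction hypothesis to each side finishes this case. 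Finally, for a predicate atom $\asheap = p(t_1,\dots,t_n)$, we have $p(t_1,\dots,t_n) \unfold \asheapB$ and an extension $\astore'$ of $\astore$ w.r.t.\ $W \isdef \vars(\asheapB)\setminus\vars(\asheap)$ with $(\astore',\aheap) \modelsr \asheapB$. Here is where a small amount of care is needed: to apply the induction hypothesis to $\asheapB$ I need $\amap$ to be injective on $\astore'(\vars(\asheapB)) \cup \dom{\aheap}$, but $\astore'$ may send the fresh existential variables to arbitrary locations, possibly outside the set on which $\amap$ is known injective. The fix is to first replace $\astore'$ by a store $\astore''$ that agrees with $\astore'$ on $\vars(\asheap)$ and on constants but re-routes the variables in $W$ to fresh locations chosen inside a region where $\amap$ is injective — but this would require $\amap$ to be globally injective, which we are not given. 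The cleaner route, which I expect to be the main technical point, is to not touch $\astore'$ but instead extend $\amap$ itself: since $\univ_\asort$ is countably infinite and $\dom{\aheap}$ is finite, I can pick an injective $\amap'$ agreeing with $\amap$ on $\astore(\vars(\asheap)) \cup \dom{\aheap}$ and injective on the finite set $\astore'(\vars(\asheapB)) \cup \dom{\aheap}$ as well (choosing fresh distinct images for the finitely many elements of $\astore'(W)$ not already in the domain where $\amap$ was defined, avoiding all values $\amap$ already takes on $\dom{\aheap}$). Then the induction hypothesis gives $(\amap'\circ\astore', \amap'(\aheap)) \modelsr \asheapB$; since $\amap'$ agrees with $\amap$ on $\dom{\aheap}$ we have $\amap'(\aheap) = \amap(\aheap)$, and $\amap'\circ\astore'$ is an extension of $\amap\circ\astore$ w.r.t.\ $W$ (they agree off $W$ because $\amap'$ agrees with $\amap$ on $\astore(\vars(\asheap))$ and $\astore'$ agrees with $\astore$ there). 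Hence $\amap(\aheap)(\amap\circ\astore)(\dots)$ witnesses $\rename{\amap}{\astore}{\aheap} \modelsr p(t_1,\dots,t_n)$, completing the induction.

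The only genuine obstacle is this last point — reconciling the existential extension $\astore'$ with a renaming $\amap$ that is only partially injective — and it is resolved by the observation that $\amap$ need only be extended on a finite set and the location universe is infinite, so an injective completion of $\amap$ on the relevant finite set always exists; everything else is bookkeeping that follows the clauses of Definition~\ref{def:rmodel} line by line.
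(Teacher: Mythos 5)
Your induction has the same skeleton as the paper's proof and agrees with it on every case except the predicate-atom case, which you correctly single out as the crux; but the device you introduce there contains a false step. You claim that ``since $\amap'$ agrees with $\amap$ on $\dom{\aheap}$ we have $\amap'(\aheap)=\amap(\aheap)$''. Under the paper's definition of $\amap(\aheap)$ this is not enough: the renaming is applied to the stored values as well as to the domain, i.e.\ $\amap(\aheap)(\amap(\ell_0)) = (\amap(\ell_1),\dots,\amap(\ell_n))$. Now the existential variables $w$ of the unfolded body are exactly the fields of its points-to atom that are not parameters, so each $\astore'(w)$ occurs as one of the $\ell_i$ inside a tuple of $\aheap$ --- and these are precisely the elements on which you permit $\amap'$ to differ from $\amap$. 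Hence in general $\amap'(\aheap)\neq\amap(\aheap)$, and the structure you exhibit witnesses $\asheapB$ over the wrong heap, so the conclusion $(\amap\circ\astore,\amap(\aheap))\modelsr p(t_1,\dots,t_n)$ does not follow.

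The paper closes this case without extending $\amap$ at all, by exploiting the \prule discipline: every existential variable of the body of sort $\addr$ is the root of a spatial atom of that body, hence allocated, and Lemma~\ref{lem:alloc} then gives $\astore'(w)\in\dom{\aheap}$. Consequently $\astore'(\vars(\asheapB))\subseteq \astore(\vars(\asheap))\cup\dom{\aheap}$, so $\amap$ itself is already injective on the set required by the induction hypothesis; one applies the induction hypothesis with the \emph{same} $\amap$ and checks that $\amap\circ\astore'$ is an extension of $\amap\circ\astore$ with respect to the existential variables. The scenario you worry about --- fresh existential locations escaping the region where $\amap$ is injective --- simply cannot arise for locations, because those variables are forced to be allocated; this allocation argument is the missing idea, and no injective completion of $\amap$ is needed (nor would one help, since any element on which you redefine $\amap$ already sits inside a tuple of $\aheap$).
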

\begin{proof}
The proof is  by induction on the satisfiability relation. 
The result is established also for spatial formulas and pure formulas.
\begin{itemize}
\item{If $\asheap$ is an equation $x \iseq y$, then $\astore(x) = \astore(y)$, hence 
$\amap(\astore(x)) = \amap(\astore(y))$ and
$\rename{\amap}{\astore}{\aheap} \modelsr \asheap$.}
\item{If $\asheap$ is a disequation $x \not \iseq y$, then $\astore(x) \not = \astore(y)$, and since $\amap$ is injective on $\astore(\vars(\asheap))$ we get
$\amap(\astore(x)) \not = \amap(\astore(y))$ and
$\rename{\amap}{\astore}{\aheap} \modelsr \asheap$.}

\item{If $\asheap = y_0 \mapsto (y_1,\dots,y_n)$ then we must have $\aheap = \{ (\astore(y_0),\dots,\astore(y_n)) \}$, 
which entails that 
$\amap(\aheap)$ is ${\{ (\amap(\astore(y_0)),\dots,\amap(\astore(y_n))) \}}$
and that
$\rename{\amap}{\astore}{\aheap} \modelsr \asheap$.}

\item{If $\asheap = \asform \swedge \apform$ (or $\asheap = \asform \wedge \apform$) then 
we have $(\astore,\aheap) \modelsr \asform$
and
$(\astore,\aheap) \modelsr \apform$.
By the  induction hypothesis, we get
$\rename{\amap}{\astore}{\aheap} \modelsr \asform$
and
$\rename{\amap}{\astore}{\aheap} \modelsr \apform$, thus
$\rename{\amap}{\astore}{\aheap} \modelsr \asheap$.
}

\item{If $\asheap = \asform_1 * \asform_2$ then there exists disjoint heaps 
$\aheap_i$ (for $i = 1,2$) such that 
$\aheap = \aheap_1 \union \aheap_2$ and
$(\astore,\aheap_i) \modelsr \asform_i$. Since $\astore(\vars(\asform_i)) \cup \dom{\aheap_i} \subseteq \astore(\vars(\asheap)) \cup \dom{\aheap}$, by the  induction hypothesis we have
$\rename{\amap}{\astore}{\aheap_i} \modelsr 
 \asform_i$ (for $i = 1,2$).
 But $\amap(\aheap_1)$ and $\amap(\aheap_2)$ must be disjoint since 
 $\dom{\aheap_1} \cap \dom{\aheap_2} = \emptyset$ and $\amap$ is injective, therefore
$\rename{\amap}{\astore}{\aheap} \modelsr 
 \asheap$.
}

\item{
If $\asheap$ is a predicate atom, then we have 
$\asheap \unfold  \asheapB$ and
$(\astore',\aheap) \modelsr \asheapB$, for some \namedextension{\astore'}{\astore}{\vars(\asheapB) \setminus \vars(\asheap)}.
Since the rules in $\asid$ are  \prules, for all variables $x \in \vars(\asheapB) \setminus \vars(\asheap)$, we have $x \in \alloc{\asheapB}$ thus $\astore'(x) \in \dom{\aheap}$ by Lemma~\ref{lem:alloc}.
This entails that $\amap$ is injective on $\astore'(\vars(\asheapB)) \cup \dom{\aheap}$, and
by the induction hypothesis we get 
$\rename{\amap}{\astore'}{\aheap} \modelsr \apform$.
Moreover, it is clear that $\amap(\astore')$ is an \extension{\amap(\astore)}{\vars(\asheapB) \setminus \vars(\asheap)}, thus
$\rename{\amap}{\astore}{\aheap} \modelsr \asheap$.
}
\end{itemize}
\end{proof}
This result entails that for every set $\smallU$ that is sufficiently large, counter-models may always be renamed, so that all locations occur in $\smallU$:
\begin{corollary}
\label{cor:ren}
Let $\smallU$ be an infinite subset of $\univ_{\asort}$.
Any non-valid and \eqfree sequent $\asheap \vdashr  \asheapB$ admits a counter-model
$(\astore,\aheap)$ such that $\dom{\aheap} \cup \astore(\vars_\addr) \subseteq \smallU$.
\end{corollary}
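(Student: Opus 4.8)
The plan is to derive the desired counter-model from an arbitrary one by renaming locations, the key tool being Lemma~\ref{lem:ren}. Since the sequent $\asheap \vdashr \asheapB$ is not valid, Definition~\ref{def:cmodel} yields a counter-model $(\astore_0,\aheap_0)$: it validates $\asheap$, falsifies $\asheapB$, maps no variable of $V$ into $\dom{\aheap_0}$, and is injective on $V$. As $\univ_\addr$ is countably infinite and $\smallU$ is an infinite (hence countably infinite) subset of $\univ_\addr$, I would fix a bijection $g_0\colon\univ_\addr\to\smallU$ and let $\amap$ be the \renaming that coincides with $g_0$ on $\univ_\addr$ and with the identity on every other sort. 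Then $\amap$ is injective on the whole of $\univ$ (the universes $\univ_\asort$ being pairwise disjoint), $\amap(\univ_\addr)=\smallU$, and every constant — all of non-$\addr$ sort, since $\csts_\addr=\emptyset$ — is fixed by $\amap$. I then set $(\astore,\aheap)\isdef\rename{\amap}{\astore_0}{\aheap_0}$, which is well defined because $\amap$ is injective on $\dom{\aheap_0}$.

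Most of the verification is routine. Lemma~\ref{lem:ren}, applicable since $\amap$ is injective on $\astore_0(\vars(\asheap))\cup\dom{\aheap_0}$, gives $(\astore,\aheap)\modelsr\asheap$. Injectivity of $\amap$ then propagates the remaining side conditions of Definition~\ref{def:cmodel}: for $x\in V$, $\astore_0(x)\notin\dom{\aheap_0}$ forces $\astore(x)=\amap(\astore_0(x))\notin\amap(\dom{\aheap_0})=\dom{\aheap}$, and $\astore$ is injective on $V$ because $\astore_0$ is. Finally $\dom{\aheap}=\amap(\dom{\aheap_0})\subseteq\amap(\univ_\addr)=\smallU$ and $\astore(\vars_\addr)=\amap(\astore_0(\vars_\addr))\subseteq\smallU$, which is exactly the required bound.

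The one subtle point — and the step I expect to be the main obstacle — is that $(\astore,\aheap)$ must \emph{fail} to validate $\asheapB$, whereas Lemma~\ref{lem:ren} only transports validity forwards along a \renaming. The remedy is to apply Lemma~\ref{lem:ren} in the reverse direction, using a one-sided inverse of $\amap$: let $\bar\amap$ be the \renaming equal to $g_0^{-1}$ on $\smallU$ and to the identity everywhere else. It would be a mistake to insist that $\bar\amap$ be globally injective (it cannot be, in general); it is enough that $\bar\amap$ be injective on $\astore(\vars(\asheapB))\cup\dom{\aheap}$, and this set is contained in $\smallU$ together with the identity-fixed non-$\addr$ universes, where $\bar\amap$ is indeed injective. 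Since $\bar\amap\circ\amap$ is the identity on $\univ$, a direct computation gives $\rename{\bar\amap}{\astore}{\aheap}=(\astore_0,\aheap_0)$. Hence, were $(\astore,\aheap)\modelsr\asheapB$ to hold, Lemma~\ref{lem:ren} would yield $(\astore_0,\aheap_0)\modelsr\asheapB$, contradicting the choice of $(\astore_0,\aheap_0)$. Therefore $(\astore,\aheap)\not\modelsr\asheapB$, so $(\astore,\aheap)$ is a counter-model of $\asheap\vdashr\asheapB$ with $\dom{\aheap}\cup\astore(\vars_\addr)\subseteq\smallU$, as required.
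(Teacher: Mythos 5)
Your proof is correct and follows the same route as the paper: take an arbitrary counter-model and push it through a bijective \renaming{} of $\univ_\addr$ onto $\smallU$ (identity on the other sorts), invoking Lemma~\ref{lem:ren} to preserve satisfaction. The paper's one-line proof leaves implicit the point you rightly single out as subtle — that non-validity of $\asheapB$ is transported by applying Lemma~\ref{lem:ren} to the inverse \renaming{} — and your explicit treatment of it via $\bar\amap$ is exactly the right way to fill that gap.
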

\begin{proof}
It suffices to consider any counter-model of $\asheapB$ and apply a bijective  \renaming 
(or any \renaming that is injective on $\astore(\vars(\asheap)) \cup \dom{\aheap}$)
that maps all locations $\ell \in \univ_\addr$ to an element in $\smallU$.
\end{proof}

The first step toward establishing completeness is to show that rule \sep\ always applies (on a valid sequent) if the right-hand side of the sequent contains several spatial formulas 
(assuming that the rules of higher priority do not apply). This is essential because the strategy in Definition~\ref{def:strat} forbids the application of the other rules  in this case. The difficulty is that, of course, \sep\ cannot be applied arbitrarily: we are required to obtain valid premises. To this purpose we show that the heap decomposition of the left-hand side ($\asform_1 * \asform_2$) matches that of the right-hand side ($\asformB_1 * \asformB_2$). This will be proven thanks to the next lemma. Intuitively, this lemmas states that, under some additional conditions, if a structure $(\astore,\aheap_1 \union \aheap_2)$ validates a formula $\asform_1* \asform_2$, then we may deduce that 
each structure $(\astore,\aheap_i)$ validates $\asform_i$, when all the variables allocated by $\asform_i$ are in the domain of $\aheap_i$.

\begin{lemma}
\label{lem:disjoint}
Let $\asform_i,\asformB_i$ (for $i = 1,2$) be spatial formulas, with $\alloc{\asformB_1 * \asformB_2} \subseteq \alloc{\asform_1 * \asform_2}$.
Let $\aheap_1$ and $\aheap_2$ be disjoint heaps such that $(\astore,\aheap_1 \union \aheap_2) \modelsr \asform_1 * \asform_2$ and $(\astore,\aheap_i) \modelsr \asformB_i$, for  $i = 1,2$.
If $(\dom{\aheap_1} \cup \dom{\aheap_2}) \cap \astore(\vars) \subseteq \astore(\alloc{\asform_1 * \asform_2})$
and $\astore(\alloc{\asform_i}) \subseteq \dom{\aheap_i}$ for  $i = 1,2$,
then $(\astore,\aheap_i) \modelsr \asform_i$, for $i = 1,2$.
\end{lemma}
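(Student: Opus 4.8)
The plan is to apply the clause of the satisfaction relation for $*$ to the first hypothesis, obtaining disjoint heaps $\aheapB_1,\aheapB_2$ with $\aheap_1 \union \aheap_2 = \aheapB_1 \union \aheapB_2$ and $(\astore,\aheapB_i) \modelsr \asform_i$ for $i = 1,2$; the conclusion then follows as soon as we prove $\aheap_i = \aheapB_i$. Since $\aheap_i$ and $\aheapB_i$ are both sub-heaps of the finite heap $\aheap_1 \union \aheap_2$, it suffices to show $\dom{\aheap_i} = \dom{\aheapB_i}$; and since $\dom{\aheap_1},\dom{\aheap_2}$ (resp.\ $\dom{\aheapB_1},\dom{\aheapB_2}$) form a partition of $\dom{\aheap_1\union\aheap_2}$, a cardinality argument reduces this to the two inclusions $\dom{\aheap_1} \subseteq \dom{\aheapB_1}$ and $\dom{\aheap_2} \subseteq \dom{\aheapB_2}$.

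By symmetry I would only establish $\dom{\aheap_1} \subseteq \dom{\aheapB_1}$. Fix $\ell \in \dom{\aheap_1} \subseteq \locs{\aheap_1}$; Proposition~\ref{prop:connect} applied to $(\astore,\aheap_1) \modelsr \asformB_1$ provides $x \in \alloc{\asformB_1}$ with $\astore(x) \connect{\aheap_1}^* \ell$, and in particular $\astore(x) \in \dom{\aheap_1}$. First I would check $x \in \alloc{\asform_1}$: by Lemma~\ref{lem:alloc} we have $\astore(\alloc{\asform_1}) \subseteq \dom{\aheap_1}$ and $\astore(\alloc{\asform_2}) \subseteq \dom{\aheap_2}$, so these two image sets are disjoint; since $\alloc{\asformB_1} \subseteq \alloc{\asformB_1 * \asformB_2} \subseteq \alloc{\asform_1 * \asform_2} = \alloc{\asform_1} + \alloc{\asform_2}$ and $\astore(x) \in \dom{\aheap_1}$, the possibility $x \in \alloc{\asform_2}$ is excluded, hence $x \in \alloc{\asform_1}$, and then Lemma~\ref{lem:alloc} applied to $(\astore,\aheapB_1) \modelsr \asform_1$ gives $\astore(x) \in \dom{\aheapB_1}$. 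Next I would prove, by induction along a path $\astore(x) = m_0 \connect{\aheap_1} m_1 \connect{\aheap_1} \dots \connect{\aheap_1} m_k = \ell$, that every $m_j$ belongs to $\dom{\aheapB_1}$. The inductive step: from $m_j \in \dom{\aheapB_1} \cap \dom{\aheap_1}$ and $\aheapB_1 \subseteq \aheap_1 \union \aheap_2$ one gets $\aheapB_1(m_j) = (\aheap_1\union\aheap_2)(m_j) = \aheap_1(m_j)$, whence $m_{j+1} \in \locs{\aheapB_1}$, and Lemma~\ref{lem:establish} applied to $(\astore,\aheapB_1) \modelsr \asform_1$ yields $m_{j+1} \in \dom{\aheapB_1} \cup \astore(\vars(\asform_1))$.

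The main obstacle is the remaining case, $m_{j+1} = \astore(y)$ with $y \in \vars(\asform_1)$ and $m_{j+1} \notin \dom{\aheapB_1}$, and this is precisely where the side hypotheses are used. Since $m_{j+1}$ is an internal vertex of the path (or equals $\ell$), it lies in $\dom{\aheap_1} \subseteq \dom{\aheap_1 \union \aheap_2}$, so the hypothesis $(\dom{\aheap_1}\cup\dom{\aheap_2}) \cap \astore(\vars) \subseteq \astore(\alloc{\asform_1 * \asform_2})$ produces some $z \in \alloc{\asform_1} + \alloc{\asform_2}$ with $\astore(z) = m_{j+1}$; exactly as before $z \in \alloc{\asform_2}$ is impossible (it would place $m_{j+1}$ in $\dom{\aheap_2}$, disjoint from $\dom{\aheap_1}$), so $z \in \alloc{\asform_1}$, and hence $m_{j+1} = \astore(z) \in \astore(\alloc{\asform_1}) \subseteq \dom{\aheapB_1}$ by Lemma~\ref{lem:alloc} — contradicting the case hypothesis. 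So this case never arises, the induction closes, $\ell = m_k \in \dom{\aheapB_1}$, and we conclude $\dom{\aheap_1} \subseteq \dom{\aheapB_1}$; as explained above this forces $\aheap_i = \aheapB_i$, hence $(\astore,\aheap_i) \modelsr \asform_i$ for $i = 1,2$.
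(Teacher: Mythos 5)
Your proof is correct and follows essentially the same route as the paper's: decompose via the semantics of $*$, reduce to $\dom{\aheap_i}\subseteq\dom{\aheapB_i}$, and walk a $\connect{\aheap_1}$-path from a root of $\asformB_1$ (via Proposition~\ref{prop:connect}) using Lemmata~\ref{lem:alloc} and~\ref{lem:establish} together with the side hypotheses to keep every location of the path inside $\dom{\aheapB_1}$; the paper phrases this as a minimal-counterexample argument rather than your explicit induction, which is only a cosmetic difference. One small nit: the inclusions $\astore(\alloc{\asform_i})\subseteq\dom{\aheap_i}$ are hypotheses of the lemma, not consequences of Lemma~\ref{lem:alloc} (which you could not apply there, since $(\astore,\aheap_i)\modelsr\asform_i$ is exactly what is being proved).
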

\begin{proof}
By definition, since  $(\astore,\aheap_1 \union \aheap_2) \modelsr \asform_1 * \asform_2$, there exist disjoint heaps $\aheapB_1,\aheapB_2$ such that
$(\astore,\aheapB_i) \modelsr \asform_i$, for all $i = 1,2$, 
and $\aheap_1 \union \aheap_2 = \aheapB_1 \union \aheapB_2$.
We prove that $\aheapB_i = \aheap_i$ for $ i = 1,2$.
Since $\aheapB_1 \union \aheapB_2 = \aheap_1 \union \aheap_2$
and $\aheapB_1$ and $\aheapB_2$ are disjoint,
it is sufficient to prove that $\dom{\aheap_i} \subseteq \dom{\aheapB_i}$, for  $i = 1,2$.
By symmetry, we prove the result for $i = 1$.
Assume, for the sake of contradiction, that $\ell \in \dom{\aheap_1}$ and 
$\ell \not \in \dom{\aheapB_1}$.
By Proposition~\ref{prop:connect}, since $(\astore,\aheap_1) \modelsr \asformB_1$
and $\ell \in \dom{\aheap_1} \subseteq \locs{\aheap_1}$, there exists $y \in \alloc{\asformB_1}$ such that 
$\astore(y) \connect{\aheap_1}^* \ell$, thus there is a sequence of locations 
$\ell_0,\dots,\ell_n$ with $\ell_0 = \astore(y)$, $\ell = \ell_n$ and
$\ell_i \connect{\aheap_1} \ell_{i+1}$, for all $i = 0,\dots,n-1$.
Let $k$ be the smallest index such that $\ell_k \not \in \dom{\aheapB_1}$.
Note that  $\ell_i \in \dom{\aheap_1}$, for all $i \leq k$ and that
 $y \in \alloc{\asform_1}$. 
 Indeed, $y \in \alloc{\asformB_1} \subseteq \alloc{\asform_1} \cup \alloc{\asform_2}$ by the hypothesis of the lemma, and if 
 $y \in \alloc{\asform_2}$, then we get (again by the hypothesis of the lemma)
 $\astore(y) \in \dom{\aheap_2}$, which contradicts the fact that $\ell_0 = \astore(y) \in \dom{\aheap_1}$, as $\aheap_1$ and $\aheap_2$ are disjoint.
By Lemma~\ref{lem:alloc}, we deduce that  $\ell_0 = \astore(y) \in \dom{\aheapB_1}$ since  $y \in \alloc{\asform_1}$ and $(\astore,\aheapB_1) \modelsr\asform_1$. Thus $k > 0$, and necessarily $\ell_k \in \locs{\aheapB_1}$ (since $\ell_{k-1}\in \dom{\aheapB_1}$, by minimality of $k$).
We deduce that $\ell_k \in \locs{\aheapB_1} \setminus \dom{\aheapB_1}$.
By Lemma~\ref{lem:establish}, this entails that $\ell_k = \astore(x)$ for some $x\in \vars(\asform_1)$, as $(\astore,\aheapB_1) \models \asform_1$.
By the hypothesis of the lemma, we may assume that $x\in \alloc{\asform_1* \asform_2}$, since $\ell_k \in \dom{\aheap_1}$ and $(\dom{\aheap_1} \cup \dom{\aheap_2}) \cap \astore(\vars) \subseteq \astore(\alloc{\asform_1 * \asform_2})$.
However, we cannot have $x \in \alloc{\asform_1}$ because otherwise we would have $\astore(x) \in \dom{\aheapB_1}$ by Lemma~\ref{lem:alloc}, as $(\astore,\aheapB_1) \modelsr \asform_1$. We cannot have  
$x \in \alloc{\asform_2}$ either,
since otherwise we would have $\astore(x) \in \dom{\aheap_2}$, as  $\astore(\alloc{\asform_2}) \subseteq \dom{\aheap_2}$ by the hypothesis of the lemma, hence $\astore(x) \not \in \dom{\aheap_1}$ because $\aheap_1$ and $\aheap_2$ are disjoint.
Thus we obtain a contradiction.
\end{proof}
We then derive the result about rule \sep. The main issue is that we have to take into account the fact that the strategy ``blocks'' some applications of \sep, if the sequent is not \narrow (see Condition~\ref{strat:sep} in Definition~\ref{def:strat}). We show that blocked applications do not yield  valid premises. To illustrate the difficulties that arise when applying \sep, we provide the examples below. The first one illustrates the importance of the fact that each predicate allocates at most one parameter.
\begin{example}
\label{ex:sep1}
Consider the sequent $p(x,y) \vdashrV{\emptyset} q(x,y) * r(y)$, with the rules:
\[ 
\begin{tabular}{ccccccc}
$p(x,y)$  & $\Leftarrow$ & $x \mapsto (y) * p'(y)$ & \qquad & $q(x,y)$ & $\Leftarrow$ & $x \mapsto (y)$ \\
$p'(y)$  & $\Leftarrow$ & $y \mapsto ()$ & & $r(y)$ & $\Leftarrow$ & $y \mapsto ()$ \\
\end{tabular}
\]
Note that the rules of $p$ are {\em not} \prules, as both $x$ and $y$ are allocated by $p(x,y)$. Here, although the sequent is indeed valid, no application of \sep\ yields valid premises: to show that the sequent is valid, one has first to unfold $p(x,y)$ once, before applying \sep.
In our context, such a situation cannot arise as each predicate atom allocates only one of its parameters, namely its root.
As we shall see, this entails that every decomposition of the right-hand side of the sequent necessarily corresponds to some (purely syntactic) decomposition of the left-hand side.
\end{example}
The next example illustrates the importance of  determinism.
 \begin{example}
 \label{ex:sep2}
Consider the sequent $\aseq: p(x,y) * q(y) * p(z,y) \vdashrV{\emptyset} 
p'(x,y) * q'(z,y)$, with the rules:
\[
\begin{tabular}{rclrcl}
$p(x,y)$ & $\Leftarrow$ & $ x \mapsto (y)$ \qquad &
$q(y)$ & $\Leftarrow$ & $ y \mapsto (z) * q(z)$ \\
$q(y)$ & $\Leftarrow$ & $ y \mapsto ()$ &
$q(y)$ & $\Leftarrow$ & $y \mapsto (y)$ \\
$p'(x,y)$ & $\Leftarrow$ & $x \mapsto (z) * p'(z,y)$ &
$q'(x,y)$ & $\Leftarrow$ & $x \mapsto (z) * q'(z,y)$ \\
$p'(x,y)$ & $\Leftarrow$ &  $x \mapsto (y)$ &
$q'(x,y)$ & $\Leftarrow$ & $x \mapsto (y)$ \\
$p'(x,y)$ & $\Leftarrow$ & $x \mapsto ()$ &
$q'(x,y)$ & $\Leftarrow$ & $x \mapsto (x)$ 
\end{tabular}
\]
Intuitively, $q(y)$ denotes a list starting at $y$ and ending either with an empty tuple or with a loop on its last element, whereas $p'(x,y)$ (resp.\ $q'(x,y)$) denotes a list starting at $x$ and ending either with $y$ or with an empty tuple (resp.\ with a loop on the last element). These rules are not deterministic (for instance there is an overlap between the rules $p'(x,y) \Leftarrow x \mapsto (z) * p'(z,y)$  and $p'(x,y) \Leftarrow x \mapsto (y)$). 
Although the sequent $\aseq$ is valid, there is no application of \sep\ that yields valid premises. Indeed, none of the sequents $p(x,y) * q(y) \vdashrV{\emptyset} 
p'(x,y)$, or $q(y) * p(z,y) \vdashrV{\emptyset} 
q'(z,y)$ are valid.
Here \sep\ cannot be applied before the entire list $q(y)$ is unfolded, as the decision to group $q(y)$ with $p(x,y)$ or $p(y,z)$ cannot be made before the last cell in $q(y)$ is known. This would yield an infinite proof tree (with infinitely many branches).
The fact that the rules are \deterministic prevents such a behavior to occur.
 \end{example}

\begin{lemma}
 \label{lem:sep_applies}
 Let $\aseq: \asform \swedge \apform \vdashr (\asformB_1 * \asformB_2) \swedge \true$  be a valid sequent
 where $\asformB_i \not = \emp$ for $i = 1,2$ and $\apform$ is a conjunction of disequations.
 If $\aseq$ is not an axiom, then 
 there exists an application of \sep\ with conclusion $\aseq$
 for which all the premises are valid.
  \end{lemma}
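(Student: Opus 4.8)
The plan is to show that the heap decomposition forced by the right-hand side $\asformB_1 * \asformB_2$ can be mirrored by a syntactic decomposition of the left-hand side $\asform$, and that this matching decomposition gives valid premises. First I would take an arbitrary counter-model-free argument: since $\aseq$ is valid, I must produce an application of \sep\ (i.e., a way to write $\asform = \asform_1 * \asform_2$ with $\asform_i \neq \emp$) such that $\asform_i \swedge \apform \vdashrV{V \cup \alloc{\asform_{3-i}}} \asformB_i \swedge \true$ is valid for $i=1,2$. The natural candidate for this decomposition is dictated by reachability: each $\asformB_i$ is a spatial formula rooted (through the $\pathto{}$ relation) at a set of variables $\alloc{\asformB_i}$, and because the rules are \prules, in any model the sub-heap realizing $\asformB_i$ consists precisely of the cells reachable from those roots. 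So I would define $\asform_i$ to be the set of spatial atoms $\anatom$ of $\asform$ whose root is $\pathto{\asform}^*$-reachable from some variable in $\alloc{\asformB_i}$. Because $\aseq$ is not an axiom, $\asform$ is \heapsat, so $\alloc{\asform}$ is a genuine set and each allocated variable has a unique atom; I would need to check this partition is well-defined (an atom could a priori be reachable from both sides — but then a counter-model built via Lemma~\ref{lem:sat} would refute validity, so this cannot happen), and that every atom of $\asform$ is reached from some root of $\asformB_1$ or $\asformB_2$ (again, otherwise an \antiaxiom-style counter-model from Lemma~\ref{lem:sat} contradicts validity of $\aseq$). Also $\asform_i \neq \emp$ since $\asformB_i \neq \emp$ forces $\alloc{\asformB_i} \neq \emptyset$ and, as $\aseq$ is valid, $\alloc{\asformB_i} \subseteq \alloc{\asform}$, so each $\alloc{\asformB_i}$ contributes at least one atom.

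Next I would prove the two premises are valid. Suppose, toward a contradiction, that (say) $\asform_1 \swedge \apform \vdashrV{V \cup \alloc{\asform_2}} \asformB_1 \swedge \true$ has a counter-model $(\astore, \aheap_1)$. Using Corollary~\ref{cor:ren} I may assume the locations of this counter-model avoid any prescribed infinite set, so that I can glue it with a model of the rest. Then I would build a model $(\astore, \aheap_2)$ of $\asform_2$ via Lemma~\ref{lem:sat}, chosen \pathcompatible and with fresh locations (and with $\astore$ extended/renamed to be injective on $\alloc{\asform}$ and to send $V$ outside both domains), so that $\aheap_1$ and $\aheap_2$ are disjoint and $(\astore, \aheap_1 \union \aheap_2) \modelsr \asform_1 * \asform_2 = \asform$, hence $\modelsr \asform \swedge \apform$ (the pure part $\apform$ is over disequations between allocated/distinct variables and is preserved). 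This $(\astore, \aheap_1 \union \aheap_2)$ is a candidate counter-model of $\aseq$: I must check it falsifies $\asformB_1 * \asformB_2$. Here is where Lemma~\ref{lem:disjoint} does the work: if $(\astore, \aheap_1 \union \aheap_2) \modelsr \asformB_1 * \asformB_2$, then since $\alloc{\asformB_1 * \asformB_2} \subseteq \alloc{\asform}$, the domains are "free-variable-clean" (locations named by variables are exactly allocated ones, using that fresh locations are unnamed), and $\astore(\alloc{\asform_i}) \subseteq \dom{\aheap_i}$ by construction (Lemma~\ref{lem:alloc} plus the \pathcompatible choice ensuring $\aheap_2$ allocates exactly $\astore(\alloc{\asform_2})$ among named locations), Lemma~\ref{lem:disjoint} — applied with roles chosen appropriately — forces the $\asformB_1$-part of the heap to coincide with $\aheap_1$, giving $(\astore, \aheap_1) \modelsr \asformB_1$, contradicting that it was a counter-model. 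Hence $(\astore, \aheap_1 \union \aheap_2) \not\modelsr \asformB_1 * \asformB_2$. I also need $\astore$ injective on $V$ and $\astore(V)$ disjoint from $\dom{\aheap_1 \union \aheap_2}$: injectivity on $V$ I can arrange when renaming, and disjointness from the domain follows because $V \cup \alloc{\asform_2}$ was non-allocated in the $\aheap_1$-counter-model and $\aheap_2$ uses only fresh locations plus $\astore(\alloc{\asform_2})$, which are not in $\astore(V)$ by the earlier no-useless-variable/$\aseq$-not-axiom observations. So $(\astore, \aheap_1 \union \aheap_2)$ is a counter-model of $\aseq$, contradiction. The symmetric argument handles the second premise.

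The main obstacle I anticipate is bookkeeping around the store: I need a single store $\astore$ that simultaneously (a) witnesses the supposed counter-model of one premise, (b) is injective on $\alloc{\asform}$ so Lemma~\ref{lem:sat} applies and the glued heaps are genuinely disjoint, (c) sends $V$ outside both heap domains and is injective on $V$, and (d) makes the named-location/allocated-location correspondence hold so Lemma~\ref{lem:disjoint} is applicable. Marrying the counter-model's store (which is fixed on the variables of $\asform_1, \asformB_1$) with the freedom needed for $\asform_2$ requires first applying Corollary~\ref{cor:ren} to push the counter-model's locations into a controlled region, then choosing the $\smallU$ in Lemma~\ref{lem:sat} to be an infinite set disjoint from everything already used, and finally possibly composing with a \renaming (Lemma~\ref{lem:ren}) to restore injectivity on $V$. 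The second subtle point is verifying that the syntactic partition $\asform = \asform_1 * \asform_2$ is well-defined and exhaustive; both well-definedness (no atom reachable from both sides) and exhaustiveness (every atom reachable from at least one side) are established by assuming otherwise and invoking Lemma~\ref{lem:sat} to produce a \pathcompatible model of $\asform$ that, by Proposition~\ref{prop:connect} and Proposition~\ref{prop:subheap_connect}, cannot satisfy $\asformB_1 * \asformB_2$, contradicting validity of $\aseq$ — this is essentially the same mechanism as in the \antiaxiom soundness proof (Lemma~\ref{lem:antiax}), so I would reuse that pattern. Everything else is routine once the decomposition is pinned down.
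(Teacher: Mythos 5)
Your overall architecture for the second half (glue a hypothetical counter-model of one premise with a Lemma~\ref{lem:sat}-model of the other half of $\asform$ to contradict the validity of $\aseq$) matches the paper's, but the proposal has genuine gaps, the most serious being the definition of the decomposition itself. You define $\asform_i$ purely syntactically, as the atoms of $\asform$ whose roots are $\pathto{\asform}^*$-reachable from $\alloc{\asformB_i}$. This is neither well-defined nor correct. The relation $\pathto{}$ over-approximates actual heap connectivity (an out-parameter \emph{may} be pointed to, depending on which rule is unfolded), so an atom of $\asform$ can be syntactically reachable from both $\rootof{\asformB_1}$ and $\rootof{\asformB_2}$ in a perfectly valid sequent; your claim that this ``cannot happen'' because a Lemma~\ref{lem:sat} counter-model would refute validity is false, since that lemma only yields the implication from semantic connectivity to syntactic reachability, not its converse. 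Moreover, even when the partition is unambiguous, reachability alone does not determine the correct side for atoms rooted at non-root variables of $\asformB_1$: Proposition~\ref{prop:sepbranching} makes explicit that up to $2^{\anum}$ candidate decompositions survive all the syntactic tests, so no purely syntactic recipe can always pick the right one. The paper resolves this semantically: it builds one concrete model $(\astore,\aheap)$ of $\asform$ via Lemma~\ref{lem:sat} with an injective store, uses validity of $\aseq$ to split $\aheap=\aheap_1\union\aheap_2$ according to $\asformB_1*\asformB_2$, \emph{defines} $\asform_i$ as the separating conjunction of the atoms whose roots land in $\dom{\aheap_i}$, and only then applies Lemma~\ref{lem:disjoint} to get $(\astore,\aheap_i)\modelsr\asform_i$. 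That model-driven definition is the missing idea.

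Two further problems. In the gluing step you invoke Lemma~\ref{lem:disjoint} ``with roles chosen appropriately'' to force the $\asformB_1$-part of the glued heap to coincide with $\aheap_1$; but swapping the roles of the $\asform_i$ and $\asformB_i$ in that lemma requires $\alloc{\asform_1*\asform_2}\subseteq\alloc{\asformB_1*\asformB_2}$ and that every variable-named allocated location be the image of a root of $\asformB_1*\asformB_2$, both of which fail in general. The step that actually closes this argument is preciseness (Lemma~\ref{lem:unique}), which is exactly where determinism of $\asid$ enters; Example~\ref{ex:sep2} shows the statement is false for non-deterministic rules, so a proof that never appeals to determinism cannot be complete. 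Finally, you do not treat the case where $\aseq$ is not \narrow and Condition~\ref{strat:nocare} of Definition~\ref{def:admissible} permits only the pre-selected decomposition $\asform_1'*\asform_2'$: to conclude that an \emph{admissible} application of \sep\ with valid premises exists, the paper must additionally show that whenever the selected application has a valid left premise, its right premise is valid as well, again via Lemmata~\ref{lem:sat} and~\ref{lem:unique}.
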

 \begin{proof}
 To prove that \sep\ applies on $\aseq$, it is necessary to show that $\asform$ is of the form $\asform_1 * \asform_2$. However $\asform_i$ cannot be chosen arbitrarily, since the premises have to be valid.
To construct $\asform_i$, we shall construct a model of $\asform$ of a particular form, get a  decomposition of the heap of this model from the formula $\asformB_1 * \asformB_2$ on the right-hand side of the sequent, and use this decomposition to compute suitable formulas $\asform_1$ and $\asform_2$.
We first notice that, since $\aseq$ is not an axiom, $\asform$ must be \heapsat.
 Furthermore, since $\aseq$ is valid, it cannot be an \antiaxiom by Lemma~\ref{lem:antiax}, thus
 $\alloc{\asformB_i} \subseteq \alloc{\asform}$ for $i=1,2$. Since $\asformB_i \not = \emp$, this entails that $\alloc{\asform}$ is not empty.
 Let $\smallU_1,\smallU_2$ be disjoint infinite subsets of $\univ_{\addr}$ and let $\astore$ be an injective store such that
 $\astore(\vars) \cap \smallU_i = \emptyset$, for $i = 1,2$.
  Note that such $\smallU_1,\smallU_2$ and $\astore$ exist since $\univ_{\asort}$ is infinite, for all $\asort \in \sorts$.
  By Lemma~\ref{lem:sat} applied with $\smallU = \smallU_1$ and with any variable $x \in \alloc{\asform}$, there exists an \rmodel $(\astore,\aheap)$ of $\asform$ such that 
  $\dom{\aheap} \subseteq \smallU_1 \cup \astore(\alloc{\asform})$.
Since $\astore$ is injective and $\apform$ is a conjunction of disequations, necessarily $\astore \models \apform$ (observe that $\apform$ contains no disequation of the form $u \not \iseq u$ as otherwise $\aseq$ would be an axiom).
If 
there exists $x\in V$ such that $\astore(x) \in \dom{\aheap}$, then since $\dom{\aheap} \subseteq \smallU_1 \cup \astore(\alloc{\asform})$ and $\astore(\vars) \cap \smallU_1 = \emptyset$, 
 there must exist $x'\in \alloc{\asform}$ such that 
$\astore(x) = \astore(x')$. 
Since $\astore$ is injective, necessarily
$x = x'$, which entails that $\aseq$ is an axiom, contradicting the hypotheses of the lemma.
We deduce that $\astore(V) \cap \dom{\aheap} =  \emptyset$.
Since $\astore$ is injective, it is injective on $V$, as otherwise $V$ would contain two occurrences of the same variable and $\aseq$ would be an axiom.
Because $\aseq$ is valid, 
we deduce that $(\astore,\aheap) \modelsr  (\asformB_1 * \asformB_2)$, hence there exist disjoint heaps $\aheap_1$ and $\aheap_2$ such that $\aheap = \aheap_1 \union \aheap_2$ and 
$(\astore,\aheap_i) \modelsr \asformB_i$, for $i = 1,2$.  
Let $\asform_i$ be the separating conjunction of all the spatial atoms 
$\anatom$ occurring in $\asform$ such that $\astore(\alloc{\anatom}) \subseteq \dom{\aheap_i}$.
By Lemma~\ref{lem:alloc}, 
$\astore(\alloc{\asform}) \subseteq \dom{\aheap} = \dom{\aheap_1} \cup \dom{\aheap_2}$, hence every atom $\anatom$
occurs in either $\asform_1$ or $\asform_2$, and since $\aheap_1$ 
and $\aheap_2$ are disjoint, no atom $\anatom$ can occur both in $\asform_1$ and $\asform_2$.
Therefore, $\asform = \asform_1 * \asform_2$.

%
We now prove that $(\astore,\aheap_i) \modelsr\asform_i$ for  $i =1,2$. To this aim we use Lemma~\ref{lem:disjoint}, thus we verify that all the hypotheses of the lemma are satisfied. Since $\dom{\aheap} \subseteq \smallU_1 \cup \astore(\alloc{\asform})$ and $\smallU_1 \cap \astore(\vars) = \emptyset$, we have $\dom{\aheap} \cap \astore(\vars) \subseteq \astore(\alloc{\asform})$, i.e. $(\dom{\aheap_1} \cup \dom{\aheap_2}) \cap \astore(\vars) = \astore(\alloc{\asform_1 * \asform_2})$.
 Furthermore, $\astore(\alloc{\asform_i}) \subseteq \dom{\aheap_i}$ for $i = 1,2$ by definition of $\asform_i$, $(\astore,\aheap_1 \union \aheap_2) = (\astore,\aheap) \modelsr \asform = \asform_1 * \asform_2$, $(\astore,\aheap_i) \modelsr \asformB_i$ (for all $i = 1,2$) and $\alloc{\asformB_1 * \asformB_2} \subseteq  \alloc{\asform_1 *\asform_2}$.
By Lemma~\ref{lem:disjoint}, 
we deduce that $(\astore,\aheap_i) \modelsr\asform_i$ for  $i =1,2$.

It is clear that there is an application of \sep\ on $\aseq = (\asform_1 * \asform_2) \swedge \apform \modelsr (\asformB_1 * \asformB_2) \swedge \true$ yielding 
the premises $\asform_i \swedge \apform \vdashrV{V_i} \asformB_i$ for $i = 1,2$, where $V_i = V \cup \alloc{\asform_{3-i}}$. There remains to prove that these premises are valid and that the application of the rule is allowed by the strategy.
Assume that one of these premises, say $\asform_1 \swedge \apform \vdashrV{V_1} \asformB_1$, is not valid.
By Corollary \ref{cor:ren} applied with $\smallU = \smallU_2$, $\asform_1 \swedge \apform \vdashrV{V_1} \asformB_1$ admits a counter-model
$(\astore',\aheap_1')$, where $\dom{\aheap_1'} \cup \astore'(\vars_\addr) \subseteq \smallU_2$.
By construction we have 
$\astore'(V_1) \cap \dom{\aheap_1'} = \emptyset$, thus 
$\astore'(V) \cap \dom{\aheap_1'} = \emptyset$ since $V \subseteq V_1$.
Moreover, $\astore'$ is injective on $V_1 = V \cup \alloc{\asform_2}$.
By Lemma~\ref{lem:sat}, $\asform_2$ therefore admits a model $(\astore',\aheap_2')$ with 
$\dom{\aheap_2'} \subseteq \smallU_1 \cup \astore'(\alloc{\asform_2})$.
We show that $\aheap_1'$ and $\aheap_2'$ are disjoint. 
Assume for the sake of contradiction that $\ell \in \dom{\aheap_1'} \cap \dom{\aheap_2'}$. Since $\smallU_1 \cap \smallU_2 = \emptyset$,
we have 
$\dom{\aheap_1'} \cap \dom{\aheap_2'} \subseteq \astore'(\alloc{\asform_2})$, 
thus
$\ell \in \astore'(\alloc{\asform_2})$.
But since $\astore'(V_1) \cap \dom{\aheap_1'} = \emptyset$ and $\alloc{\asform_2} \subseteq V_1$
we also have $\ell \not \in \astore'(\alloc{\asform_2})$, a contradiction.
Thus $\aheap_1'$ and $\aheap_2'$ are disjoint
and we have $(\astore',\aheap_1' \union \aheap_2') \modelsr (\asform_1 * \asform_2) \swedge \apform = \asform \swedge \apform$.
If $\astore'(V) \cap \dom{\aheap_2'}$ contains a location $\ell$ then since $\smallU_1 \cap \smallU_2 = \emptyset$, necessarily $\ell = \astore'(x)$, for some $x \in \alloc{\asform_2}$, and since $\astore'$ is injective on $V_1 = V \cup \alloc{\asform_2}$ and $\ell \in \astore'(V)$, necessarily, $x \in V$ and $\aseq$ is an axiom, which contradicts the hypotheses of the lemma.
We deduce that $\astore'(V) \cap \dom{\aheap_2'} = \emptyset$, and 
since $\astore'(V) \cap \dom{\aheap_1'} = \emptyset$,  that 
 $\astore'(V) \cap \dom{\aheap_1' \union \aheap_2'} = \emptyset$.
 Furthermore, by construction, $\astore'$ is injective on $V$.
Since 
$\aseq$ is valid, this entails that $(\astore',\aheap_1' \union \aheap_2') \modelsr (\asformB_1 * \asformB_2)$, hence
$\aheap_1' \union \aheap_2' = \aheap_1'' \union \aheap_2''$, where $(\astore',\aheap_i'') \modelsr \asformB_i$.
By Lemma~\ref{lem:unique}, since 
$(\astore',\aheap_2') \modelsr \asformB_2$ and $\asid$ is \deterministic, we get 
$\aheap_2'' = \aheap_2'$, and therefore $\aheap_1'' = \aheap_1'$, thus $(\astore',\aheap_1') \modelsr \asformB_1$.
This contradicts the fact that $(\astore',\aheap_1')$ is a counter-model of 
$\asform_1 \swedge \apform \vdashrV{V_2} \asformB_1$.

If $\aseq$ is \narrow, then the proof is completed, since 
all the applications of rule $\sep$ are considered by the strategy in this case.
If $\aseq$ is not \narrow, then the application of the rule may be blocked by Condition~\ref{strat:nocare} of Definition~\ref{def:strat}. 
In this case, this means that there exist $\asform_1',\asform_2'$ such that $\asform = \asform_1' * \asform_2'$
and $\asform_1' \swedge \apform \vdashrV{V_1'} \asformB_1$ is valid, where 
$V_1' = V \cup \alloc{\asform_2'}$.
We assume that the other premise, 
$\asform_2' \swedge \apform \vdashrV{V_2'} \asformB_2$ (with $V_2' = V \cup \alloc{\asform_1'}$)
is not valid, and we derive a contradiction.
By Corollary \ref{cor:ren}, this premise admits 
a counter-model 
$(\astore'',\aheap''_2)$ such that $\dom{\aheap''_2} \cup \astore''(\vars_{\addr}) \subseteq \smallU_1$.
We have $(\astore'',\aheap_2'') \modelsr \asform_2'$, $\astore'' \models \apform$,
$\astore''(V_2') \cap \dom{\aheap_2''} = \emptyset$,
$\astore''$ is injective on $V_2'$
and $(\astore,\aheap_2'') \not \modelsr \asformB_2$.
By Lemma~\ref{lem:sat}, 
$\asform_1'$ admits a model
$(\astore'',\aheap_1'')$ such that
 $\dom{\aheap_1''} \subseteq \smallU_2 \cup \astore''(\alloc{\asform_1'})$.

Note that the heaps
 $\aheap_1''$ and $\aheap_2''$ are disjoint. Indeed, if $\ell \in \dom{\aheap_1''} \cap \dom{\aheap_2''}$ then, since 
 ${\dom{\aheap_1''} \subseteq \smallU_2 \cup \astore''(\alloc{\asform_1'})}$ and $\dom{\aheap''_2} \cup \astore''(\vars_{\addr}) \subseteq \smallU_1$, with $\smallU_1 \cap \smallU_2 = \emptyset$, we obtain ${\ell \in \astore''(\alloc{\asform_1'})}$, so that 
 $\ell \in \astore''(V_2')$, contradicting the fact that $\astore''(V_2') \cap \dom{\aheap_2''} = \emptyset$.

 We prove that  $\astore''$ is injective on $V_1'$, and that $\dom{\aheap_1''} \cap \astore''(V_1) = \emptyset$, which entails that 
 $(\astore'',\aheap_1'') \modelsr \asformB_1$, since $\asform_1' \swedge \apform \vdashrV{V_1'} \asformB_1$ is valid.
 If there exist $x_1,x_2$ such that $\{ x_1,x_2 \} \subseteqm V_1' = V \cup \alloc{\asform_2'}$ and $\astore''(x_1) = \astore''(x_2)$ then, as 
 $\astore''$ is injective on $V_2' = V \cup \alloc{\asform_1'}$, necessarily at least one of the variables $x_1,x_2$ -- say $x_1$ --
 is in $\alloc{\asform_2'}$. As $(\astore'',\aheap_2'') \models \asform_2'$, we get 
 $\astore''(x_1) \in \dom{\aheap_2''}$ (by Lemma~\ref{lem:alloc}), and 
$\{ x_1,x_2 \} \not \subseteqm \alloc{\asform_2'}$ (by Corollary  \ref{cor:heapunsat}). 
This entails that 
 $\astore''(x_1)  \not \in \astore''(V_2')$ (as $\astore''(V_2') \cap \dom{\aheap_2''} = \emptyset$) and $x_2 \in V \subseteq V_2'$, yielding a contradiction, since $\astore''(x_1) = \astore''(x_2)$. 
 Now assume that there exists $y\in V_1' = V \cup \alloc{\asform_2'}$ such that $\astore''(y) \in \dom{\aheap_1''}$. As $\dom{\aheap_1''} \subseteq \smallU_2 \cup \astore''(\alloc{\asform_1'})$ and $\dom{\aheap''_2} \cup \astore''(\vars_{\addr}) \subseteq \smallU_1$, we get
 $\astore''(y) \in \astore''(\alloc{\asform_1'})$, i.e., there exists $y' \in \alloc{\asform_1'}$ such that
 $\astore''(y) = \astore''(y')$. If $y \in V$, then $\{ y,y' \} \subseteqm V_2'$, contradicting the fact that $\astore''$ is injective on $V_2'$.
 Thus $y \in \alloc{\asform_2'}$, so that $\astore(y) \in \dom{\aheap_2''}$ (by Lemma~\ref{lem:alloc}), contradicting the fact that $\aheap_1''$ and $\aheap_2''$ are disjoint. 

Thus
 $(\astore'',\aheap_1'' \union \aheap_2'') \modelsr \asform_1' * \asform_2' = \asform$.
 As $\astore'' \models \apform$ and ${\astore''(V) \cap \dom{\aheap_i''} \subseteq \astore(V_i) \cap \dom{\aheap_i''}}$ is empty, $\astore''$ is injective on $V$, and $\aseq$ is valid, 
 we get $(\astore'',\aheap_1'' \union \aheap_2'') \modelsr  \asformB_1 * \asformB_2$.
 Therefore, there exist disjoint heaps $\aheapB''_i$ (for $i = 1,2$), such that 
 $(\astore'',\aheapB_i'') \modelsr  \asformB_i$  and
 $\aheap_1'' \union \aheap_2'' = \aheapB_1'' \union \aheapB''_2$.
 By Lemma~\ref{lem:unique}, since $(\astore'',\aheap_1'') \modelsr \asformB_1$, we deduce that $\aheap_1'' = \aheapB_1''$, thus $\aheap_2'' = \aheapB_2''$, hence
 and $(\astore'',\aheap_2'') \modelsr \asformB_2$,
 which contradicts the fact that $(\astore'',\aheap_2'')$ is a counter-model of $\asform_2' \swedge \apform \vdashrV{V_2'} \asformB_2$.
 \end{proof}
 
\newpage

We handle the case of the other rules, more precisely we show that there is always an inference rule that applies on a valid sequent (if it is not an axiom): 
 \begin{lemma}
 \label{lem:comp}
For any valid sequent $\asform \swedge \apform \vdashr \asformB \swedge \apformB$ that is not an axiom, there exists
an application of the inference rules that yields valid premises.
 \end{lemma}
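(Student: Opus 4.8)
The plan is to prove the slightly \emph{stronger} statement that some inference rule is applicable to $\aseq$ under the strategy of Definition~\ref{def:admissible}; the validity of its premises then comes for free, since every rule except $\sep$ is invertible (Lemma~\ref{lem:rules}), and for $\sep$ we invoke Lemma~\ref{lem:sep_applies}. Note that $\aseq$ is not an \antiaxiom, being valid (Lemma~\ref{lem:antiax}), and that the premises produced below are always valid, hence never {\antiaxiom}s, so Condition~\ref{strat:noax} of the strategy is never an obstacle. I would scan the priority order $\wea > \eli > \eq > \noteq > \sep > \unf > \dec > \imi$: if one of $\wea,\eli,\eq,\noteq$ is applicable, apply the first one and we are done; so from now on assume none of them is.

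Since $\eq$ is inapplicable, $\apform$ contains no equation, and since $\aseq$ is not an axiom it contains no atom $u\not\iseq u$, so $\apform$ is a conjunction of disequations. I next claim $\apformB=\true$: any atom of $\apformB$ entailed by the antecedent via $\Emodels$ would make $\noteq$ applicable, while any remaining atom would be a non-trivial (dis)equation not forced by $\asform\swedge\apform$, which can be falsified in an injective-store model of $\asform\swedge\apform$ -- such a model exists since $\asform$ is \heapsat\ (as $\aseq$ is not an axiom, see Definition~\ref{def:axiom}): apply Lemma~\ref{lem:sat} when $\asform\neq\emp$, and trivially otherwise -- contradicting validity. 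Hence $\asheapB=\asformB\swedge\true$. If $\asformB=\emp$, then $\asform=\emp$ since $\aseq$ is not an \antiaxiom (Condition~2 of Definition~\ref{def:antiax}), so $\aseq$ is the axiom $\emp\swedge\apform\vdashr\emp\swedge\true$, excluded. If $\asformB$ has at least two spatial atoms, write $\asformB=\asformB_1*\asformB_2$ with $\asformB_1$ the atom picked by the selection function; then $\sep$ is the highest-priority applicable rule, and Lemma~\ref{lem:sep_applies} yields an application -- compatible with the strategy, whether or not $\aseq$ is \narrow -- all of whose premises are valid.

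There remains the case where $\asformB$ is a single spatial atom with root $x$. By Condition~\ref{anti:alloc} of Definition~\ref{def:antiax} we have $x\in\alloc{\asform}$, and since $\aseq$ is not an axiom there is exactly one atom of $\asform$ rooted at $x$. If it is a predicate atom, then $\unf$ applies (its Condition~\ref{strat:unf} holds, $\asheapB$ being $\asformB\swedge\true$ with $\asformB$ rooted at $x$). Otherwise it is a points-to atom $x\mapsto\vec y$ and $\asform$ has no predicate atom rooted at $x$. If moreover $\asformB$ is itself a points-to atom, then every further atom of $\asform$ would contribute a disjoint heap cell (Corollaries~\ref{cor:noemp} and~\ref{cor:heapunsat}), whereas a model of $\asformB$ has a single cell; validity therefore forces $\asform=x\mapsto\vec y$ and then the two tuples to be equal, making $\aseq$ an axiom -- again excluded.

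The only genuinely delicate case is thus $\asformB=p(x,\vec z)$ with $x\mapsto\vec y$ the unique atom of $\asform$ rooted at $x$, so that $\unf$ is inapplicable; here I claim $\dec$ or $\imi$ applies. If there are $x'\in\alloc{\asform}$ and $y'\in\vars_{\addr}(\asheapB)\setminus(\alloc{\asform}\cup V)$ with $x'\not\iseq y'$ absent from $\apform$, then $\dec$ applies (and is invertible). Otherwise I argue $\imi$ applies: an injective-store model $(\astore,\aheap)$ of $\asform\swedge\apform$ (Lemma~\ref{lem:sat}) is, by validity, a model of $p(x,\vec z)$; by determinism a unique inductive rule unfolds $p(x,\vec z)$ into $(x\mapsto\vec u * \asformB')\swedge\apformB'$, and by Lemma~\ref{lem:unique} its first cell is the one forced by $x\mapsto\vec y$, so $\astore'(u_i)=\astore(y_i)$ for the relevant extension $\astore'$ and all $i$. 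Injectivity of $\astore$ then yields a substitution $\sigma$ sending the existential entries $u_i$ to $y_i$ and fixing the rest, realising Conditions~(i)--(iii) of $\imi$. For Condition~(iv), one first shows (using Lemma~\ref{lem:alloc} in the model together with the bound on $\dom{\aheap}$ from Lemma~\ref{lem:sat}) that every existential field value $y_i$ lies in $\alloc{\asform}$; since the rules are \hdeterministic, each disequation of $\apformB'\sigma$ is then between an element of $\alloc{\asform}$ and a term that is either in $\alloc{\asform}\cup V$ or -- precisely because $\dec$ is inapplicable -- tied to some allocated variable by a disequation already present in $\apform$, so that $\Emodels$ entails it. Carrying out this last verification rigorously -- keeping track of the distinction between variables, constants and the sort $\addr$, and using Assumptions~\ref{assume:productive} and~\ref{assume:no_useless_variable} and the precise shape of \prules (Definition~\ref{def:simple}) -- is the main obstacle of the proof.
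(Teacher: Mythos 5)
Your overall architecture coincides with the paper's: scan the priority order, reduce to the case where the high-priority invertible rules fail, establish $\apformB=\true$, then dispatch $\asformB=\emp$, the decomposable case via Lemma~\ref{lem:sep_applies}, the case where the atom of $\asform$ rooted at $\rootof{\asformB}$ is a predicate atom (\unf), the points-to-versus-points-to case, and finally the delicate case where \dec\ or \imi\ must apply. One genuine (though repairable) error: to discharge a residual disequation $x\not\iseq y$ in $\apformB$ you say it ``can be falsified in an injective-store model''. An injective store \emph{satisfies} every disequation between distinct terms; falsifying it requires a store identifying $x$ and $y$ and injective elsewhere, and one must then re-check that this store still satisfies $\apform$ (using that \noteq\ is inapplicable, so $x \not\iseq y \notin \apform$ and $\{x,y\}\not\subseteqm \alloc{\asform}+V$), is injective on $V$, and does not allocate $V$ — exactly the construction the paper carries out. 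As written, that sentence is false.

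For the final case your justification of condition~(iv) of \imi\ takes a genuinely different route. The paper argues by contradiction: if some disequation of $\apformB'\sigma$ escapes $\Emodels$, it builds a second, non-injective store identifying the two offending terms, re-unfolds $p(x,\vec{z})$ in the resulting model, and uses determinism of $\asid$ on the \emph{two} unfoldings to exhibit a disequation of $\apformB'$ violated by that store, from which \dec-irreducibility yields the contradiction. You instead propose a direct classification: every disequation of $\apformB'$ has one side an existential $\addr$-variable, hence a root of a predicate atom of the unfolding, hence (by Lemmas~\ref{lem:alloc} and~\ref{lem:sat} plus injectivity) sent by $\sigma$ into $\alloc{\asform}$; the other side is either another such root or a parameter occurring in $\asformB$, and in the latter case it lies in $\alloc{\asform}\cup V$ or else \dec-inapplicability forces the exact disequation into $\apform$. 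I believe this direct argument can be completed and is arguably cleaner than the paper's detour — the missing details being the distinctness of the two images under $\sigma$ (from the injective model satisfying $\apformB'$), the well-definedness of $\sigma$, and the fact that the rule instance used by \imi\ is the one realized by the model. But you explicitly defer this verification as ``the main obstacle'', so the proposal as it stands leaves the hardest step of the lemma sketched rather than proved.
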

 \begin{proof} 
Since $\asform \swedge \apform \vdashr \asformB \swedge \apformB$ is not an axiom, $\asform$ is \heapsat, $V \cap \alloc{\asform} = \emptyset$ and $V$ contains at most one occurrence of each variable.
Note that if an invertible rule applies on $\asform \swedge \apform \vdashr \asformB \swedge \apformB$
then the result holds, since the premises are  necessarily valid. Thus we may assume that 
$\asform \swedge \apform \vdashr \asformB \swedge \apformB$  is irreducible by all invertible rules, hence 
(by Lemma~\ref{lem:rules}) by all 
the rules except $\sep$.

Let $\astore$ be an injective store such that the set 
$\smallU =\univ_{\addr} \setminus \astore(\vars)$ is infinite.
If $\asform \not = \emp$, then $\asform$ contains at least one variable of sort $\addr$, and by Lemma~\ref{lem:sat}, we deduce that 
$\asform$ admits a model $(\astore,\aheap)$ such that
$\dom{\aheap} \subseteq \smallU \cup \astore(\alloc{\asform})$.
If $\asform = \emp$ then the same property trivially holds, with $\aheap = \emptyset$.
Note in particular that the assertion $\dom{\aheap} \subseteq \smallU \cup \astore(\alloc{\asform})$ entails that 
$\astore(V) \cap \dom{\aheap} = \emptyset$ since 
$\astore(V) \cap \smallU = \emptyset$, $V \cap \alloc{\asform} = \emptyset$ and $\astore$ is injective.

 If $\apform$ contains an equation then rule \eq\ applies, which contradicts our assumption that the sequent is irreducible by all invertible rules; thus $\apform$ is a conjunction of disequations. Furthermore, none of these disequations may be of the form $x \not \iseq x$ because otherwise $\asform \swedge \apform \vdashr \asformB \swedge \apformB$ would be an axiom.
This entails that $\astore \models \apform$. Since $\asform \swedge \apform \vdashr \asformB \swedge \apformB$  is valid, $\astore(V) \cap \dom{\aheap} = \emptyset$ and $\astore$ is injective (hence injective on $V$ because $V$ contains at most one occurrence of each variable), we deduce that 
\[(\astore,\aheap) \modelsr \asformB \swedge \apformB. \quad\quad (\dagger)\]
  
  Assume that $\apformB$ contains an equation $x  \iseq y$. If $x = y$ then Rule
 $\noteq$ applies, and
 otherwise, we necessarily have $\astore(x) \not = \astore(y)$, hence $(\astore,\aheap) \not \modelsr \asformB \swedge \apformB$ 
 which contradicts ($\dagger$). 
Thus $\apformB$ contains no equation.
Now assume that $\apformB$ contains a disequation $x \not  \iseq y$. 
If $\{ x,y\} \subseteq \csts$ 
(i.e., both $x$ and $y$ are constants) or $\{ x,y \} \subseteqm \alloc{\asform} \cup V$ then  $\asform \swedge \apform \Emodels x \not \iseq y$, and \noteq\ applies, which is impossible by hypothesis.
If $x = y$ then $x \not  \iseq y$ is unsatisfiable, which contradicts  ($\dagger$). 
Let $\astore'$ be a store verifying $\astore'(x) = \astore'(y)$, that maps all other variables to pairwise distinct
elements of the appropriate sort, also distinct from $\astore(x)$, and such that $\smallU' = \univ_{\addr} \setminus \astore'(\vars)$ is infinite. Such a store exists since $\univ_{\addr}$ is infinite and $\{ x,y \} \not \subseteq \csts$. If $\astore' \not \models \apform$,
then since $\apform$  contains no equation, $x \not \iseq y$ must occur in $\apform$.
Then the rule \noteq\ applies, which contradicts our assumption that the sequent is irreducible by all invertible rules.
Thus $\astore' \models \apform$.
Since $\{ x,y \}  \not \subseteqm \alloc{\asform}$, $\astore'$ is injective on $\alloc{\asform}$. 
By Lemma~\ref{lem:sat}, we deduce that 
$\asform$ admits a model of the form $(\astore',\aheap')$ with $\dom{\aheap'} \subseteq \smallU' \cup\astore'(\alloc{\asform})$. 
If $\astore'(V) \cap \dom{\aheap'}$ contains a location $\ell$, then 
$\ell = \astore'(z) = \astore'(z')$, for some $z \in \alloc{\asform}$ and $z'\in V$.
Since $\alloc{\asform} \cap V = \emptyset$ and $\astore'$ is injective on all pair of variables except on $\{ x, y \}$, necessarily we have either $x = z$ and $y = z'$, or $x = z'$ and $y = z$, 
which means that $\{ x,y \} \subseteqm \alloc{\asform} \cup V$, hence \noteq\ applies, contradicting our assumption that the sequent is irreducible by all invertible rules.
We deduce that 
$\astore'(V) \cap \dom{\aheap'} = \emptyset$.
Since $\{ x,y \} \not \subseteqm V$, $\astore'$ is necessarily injective on $V$.
As $\asform \swedge \apform \vdashr \asformB \swedge \apformB$ is valid, we deduce that 
 $(\astore',\aheap') \modelsr \asformB \swedge \apformB$
 hence  $\astore'(x) \not = \astore'(y)$, which contradicts the definition of $\astore'$.
 Therefore, $\apformB = \true$.
 
If $\asformB = \emp$ then necessarily $\asform = \emp$, since otherwise by Corollary \ref{cor:noemp} we would have 
$\aheap \not = \emptyset$, which contradicts $(\dagger)$. Hence in this case, $\asform \swedge \apform \vdashr \asformB \swedge \apformB$ is of the form $\emp\swedge \apform \vdashr \emp$ and  is an axiom. 
  
If $\asformB$ is of the form $\asformB_1 * \asformB_2$, where $\asformB_i \not = \emp$ for $i = 1,2$, then the proof follows from Lemma~\ref{lem:sep_applies}.
Thus, we may assume that $\asformB$ is a spatial atom. Let $x$ be the root of $\asformB$.
If $x \not \in \roots{\asform}$, then we have 
$(\astore,\aheap) \not \modelsr \asformB$ by Lemma~\ref{lem:alloc}, which  contradicts  ($\dagger$). 
Therefore, $\asform$ contains a spatial atom with root $x$.
If this spatial atom is a predicate 
atom then \unf\ can be applied.
Now assume that $\asform$ contains a points-to atom $x \mapsto (y_1,\dots,y_n)$, i.e., is of the form
$x \mapsto (y_1,\dots,y_n) * \asform'$ (modulo AC).
Since $\asformB$ is a spatial atom with root $x$, there are two cases to consider:
\begin{enumerate}
\item{
$\asformB$ is of the form $x \mapsto (y_1',\dots,y_n')$.
We have $\aheap = \{ (\astore(x),\astore(y_1),\dots,\astore(y_n)) \} \union \aheap'$, where $\aheap'$ is a heap such that $(\astore,\aheap') \modelsr \asform'$ and $\astore(x) \not \in \dom{\aheap'}$.
By $(\dagger)$, we have $(\astore,\aheap) \modelsr \asformB$, thus
$\aheap = \{ (\astore(x),\astore(y_1'),\dots,\astore(y_n'))  \}$. This entails that 
$\aheap' = \emptyset$ (hence 
 $\asform' = \emp$ by Corollary \ref{cor:noemp}), and that $\astore(y_i) = \astore(y_i')$ for all $i = 1,\dots,n$. Since $\astore$ 
is injective, we have $y_i = y_i'$ for all $i = 1,\dots,n$. 
Therefore, $\asformB = \asform$, and $\asform \swedge \apform \vdashr \asformB \swedge \apformB$ is an axiom (because $\apformB = \true$), which contradicts the hypothesis of the lemma.}

\item{
$\asformB$ is a predicate atom.
By ($\dagger$), we have 
$(\astore,\aheap) \modelsr \asformB$ and by definition there exists a formula $\asheapB'$ such that
$\asformB \unfold \asheapB'$, 
where $(\astore',\aheap) \modelsr \asheapB'$ for some \namedextension{\astore'}{\astore}{\vars(\asheapB') \setminus \vars(\asformB)}, and in particular we have $\astore(x) = \astore'(x)$.
As the rules in $\asid$ are  \prules,
$\asheapB'$ is of the form ${(x \mapsto (u_1,\dots,u_m) * \asformB') \swedge \apformB'}$, with $\vars(\asheapB') \subseteq \vars(\asformB) \cup \{ u_1,\dots,u_m \}$,
and we have $\aheap(\astore(x)) = \aheap(\astore'(x)) = (\astore'(u_1),\dots,\astore'(u_m))$.
Since $\asform$ contains a points-to atom $x \mapsto (y_1,\dots,y_n)$ and $(\astore,\aheap) \modelsr \asform$, necessarily
$\aheap(\astore(x)) = (\astore(y_1),\dots,\astore(y_n))$. Consequently $n = m$ and $\astore(y_i) = \astore'(u_i)$ for all $i = 1,\dots,n$.
Let $\sigma$ be the substitution mapping every variable $u_i$ not occurring in $\vars(\asformB)$ to 
$y_i$. Note that $\sigma$ is well-defined: if $u_i = u_j$ then $\astore(y_i) = \astore(y_j)$ and $y_i = y_j$ since $\astore$ is injective. 
It is straightforward to check that $\astore'(z) = \astore(z\sigma)$ for all variables $z \in \vars$,  using the definition of $\sigma$ and the fact that $\astore'$ and $\astore$ coincide on all variables not occurring in $u_1,\dots,u_n$.
If $\imi$ applies with this substitution $\sigma$ then the proof is completed. Otherwise, 
we must have $\asform \swedge \apform \nEmodels \apformB'\sigma$, i.e., $\apformB'$ must contain a disequation $u' \not \iseq v'$ such that  
$\asform \swedge \apform \nEmodels u'\sigma \not \iseq v'\sigma$. Note that this entails that $u'\sigma$ and $v'\sigma$ cannot 
both be constants.
Consider a store $\astore''$ verifying $\astore''(u'\sigma) = \astore''(v'\sigma)$, and mapping all other variables to pairwise distinct elements of the appropriate sort. We may assume that $\astore''$ is chosen in such a way that the set $\smallU' = \univ \setminus \astore''(\vars)$ is infinite.
By definition,  $\astore'' \not \models \apformB'\sigma$, and since $\asform \swedge \apform \nEmodels u'\sigma \not \iseq v'\sigma$, we have $\{ u'\sigma,v'\sigma \} \not \subseteqm V \cup \alloc{\asform}$, 
thus
 $\astore''$
 is injective on $\alloc{\asform} \cup V$ and $\astore'' \models \apform$ (since $\apform$ is a conjunction of disequations not containing $u'\sigma \not \iseq v'\sigma$). 
 By Lemma~\ref{lem:sat}, $\asform$ admits a model $(\astore'',\aheap'')$ such that $\dom{\aheap''} \subseteq \smallU'' \cup \astore''(\alloc{\asform})$. This entails that $\astore''(V) \cap \dom{\aheap''} = \emptyset$.
Since  $\astore''$ is injective on $V$ and $\asform \swedge \apform \vdashr \asformB \swedge \apformB$ is valid, we deduce that
$(\astore'',\aheap'') \modelsr \asformB$, thus
$\asformB \unfold \asheapB''$, 
where $(\hat{\astore},\aheap'') \modelsr \asheapB''$, for some formula $\asheapB''$ and \namedextension{\hat{\astore}}{\astore''}{\vars(\asheapB'') \setminus \vars(\asformB)}.
Since the rules in $\asid$ are  \prules, 
$\asheapB''$ is of the form $(x \mapsto (u_1',\dots,u_k') * \asformB'') \swedge \apformB''$, 
and we have $\aheap''(\astore''(x)) = (\hat{\astore}(u_1'),\dots,\hat{\astore}(u_k'))$.
We may assume, by renaming, that  $\{ u_1',\dots,u_k' \} \cap \{ u_1,\dots,u_m \} \subseteq \vars(\asformB)$.
Since $\asform$ contains a points-to atom $x \mapsto (y_1,\dots,y_n)$ and $(\astore'',\aheap'') \modelsr \asform$, necessarily
$\aheap''(\astore''(x)) = (\astore''(y_1),\dots,\astore''(y_n))$, and thus we must have $k = n = m$ and for all $i = 1,\dots,n$,
$\astore''(y_i) = \hat{\astore}(u_i')$.
Let $\sigma'$ be the substitution mapping every variable $u_i'$ not occurring in $\vars(\asformB)$ to the first variable $y_j$ in $y_1,\dots,y_n$ such that $\astore''(y_j) = \hat{\astore}(u_i')$, and let $\theta$ be the substitution mapping every variable $y_i$ to $y_j$, where $j$ is the smallest index in $1,\dots,n$ such that $\astore'' \models y_i \iseq y_j$. 
Let $\hat{\sigma} = \sigma \cup \sigma'$ and $\theta' =  \hat{\sigma}\theta$; note that $\hat{\sigma}$ is well-defined since $\dom{\sigma} \cap \dom{\sigma'} = \emptyset$, as $\{ u_1',\dots,u_k' \} \cap \{ u_1,\dots,u_m \} \subseteq \vars(\asformB)$. 
By construction,
 $\theta'$ is a unifier of 
$(u_1,\dots,u_n)$ and 
$(u_1',\dots,u_n')$.
Since $\asid$ is \hdeterministic,
one of the two formulas $\apformB'$ or $\apformB''$ contains a disequation 
$v \not \iseq w$ with $v\theta' = w\theta'$ and $v,w \in \vars_{\addr}$. 
Note that $v\hat{\sigma} \not = w\hat{\sigma}$ since otherwise 
$\apformB'\sigma$ or $\apformB''\sigma'$ would be unsatisfiable.
We show that $v \not \iseq w$ occurs in $\apformB'$. Assume, for the sake of contradiction, that 
$v \not \iseq w$ occurs in $\apformB''$.
Then $v\hat{\sigma} = v\sigma'$ and $w\hat{\sigma} = w\sigma'$.
By definition of $\sigma'$ and $\theta'$, we have $z\sigma'  = z\theta'$ for all $z \in \dom{\sigma'}$, thus we get 
$v\sigma' = w\sigma'$, hence (by definition of $\sigma'$) $\hat{\astore}(v) = \hat{\astore}(w)$, which contradicts the fact that $(\hat{\astore},\aheap'') \modelsr \asheapB'' = (x \mapsto (u_1',\dots,u_k') * \asformB'') \swedge \apformB''$.
Thus necessarily 
 $v \not \iseq w$ occurs in $\apformB'$ 
 and we have $v\hat{\sigma} = v\sigma$ and $w\hat{\sigma} = w\sigma$.
As all rules are \prules, necessarily at least one of the two variables $v,w$ (say $v$) is an existential variable, and we must have $v \in \alloc{\asformB'}$.
Since $(\astore',\aheap) \modelsr \asheapB' = (x \mapsto (u_1,\dots,u_m) 
* \asformB') \swedge \apformB'$, we deduce by 
Lemma~\ref{lem:alloc} that 
$\astore'(v) \in \dom{\aheap}$, hence $\astore(v\sigma) \in \dom{\aheap}$.
By definition of $(\astore,\aheap)$, this entails that $\astore(v\sigma) \in \astore(\alloc{\asform})$, and since $\astore$ is injective we get $v\sigma \in \alloc{\asform}$. 

Assume that $w\sigma \in \alloc{\asform}$.
Then $\alloc{\asform}$ contains spatial atoms with respective roots 
$w\sigma$ and $v\sigma$. 
These atoms must be distinct since $v\sigma \not = w\sigma$.
By Lemma~\ref{lem:alloc} these locations are allocated in disjoint part of the heap $\aheap''$, hence necessarily $\astore''(w\sigma) \not = \astore''(v\sigma)$.
We deduce that $v\theta' \not = w\theta'$, which contradicts our assumption.
Thus $w\sigma \not \in \alloc{\asform}$, and since $\asform \swedge \apform \vdashr \asformB \swedge \apformB$
cannot be an \antiaxiom, necessarily $w\sigma \in \vars(\asformB)$.
We also have $w\sigma \not \in V$, since $\dom{\aheap''} \cap \astore''(V) = \emptyset$.
We may assume that  the sequent $\asform \swedge \apform \vdashr \asformB \swedge \apformB$ 
 is irreducible by the rule \dec, so that, 
 as $\asformB$ is a predicate atom, $\apform$ contains a disequation $v\sigma \not \iseq w\sigma$ (since $v\sigma \in \alloc{\asform}$, $w\sigma \in \vars(\asformB) \setminus (\alloc{\asform} \cup V)$).
But then 
we have $\astore'' \not \models \apform$, contradicting the definition of $\astore''$. 
 

}
\end{enumerate}
  \end{proof}

The completeness result follows immediately:
\begin{theorem}
\label{theo:comp}
If $\aseq$ is a valid sequent, then it admits a
\fexpanded (possibly infinite) proof tree.
\end{theorem}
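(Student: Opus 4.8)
The plan is to build the proof tree top-down starting from the root $\aseq$, using Lemmata~\ref{lem:comp} and~\ref{lem:sep_applies} at every node that is not an axiom, and stopping the expansion exactly at the nodes labelled by axioms. Concretely, I would construct the tree level by level, maintaining the invariant that \emph{every node is labelled by a valid sequent, every internal node is a valid non-axiom sequent expanded by the rule application supplied by Lemma~\ref{lem:comp}, and every leaf is an axiom}. Since the tree may be infinite (which is permitted by the definition of a proof tree), this construction is best phrased coinductively (or as the limit of finite partial trees) rather than by ordinary structural induction on a finite object.

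The construction goes as follows. The root $\aseq$ is valid by hypothesis. If $\aseq$ is an axiom, then the single-node tree labelled by $\aseq$ is already a \fexpanded proof tree. Otherwise, Lemma~\ref{lem:comp} guarantees an \admissible inference step with conclusion $\aseq$ whose premises $\aseq_1,\dots,\aseq_n$ are all valid; note that Lemma~\ref{lem:comp} itself delegates to Lemma~\ref{lem:sep_applies} precisely when the right-hand side of $\aseq$ splits as a separating conjunction of two non-empty spatial formulas, so the case where rule \sep\ is needed is covered. We attach $\aseq_1,\dots,\aseq_n$ as the successors of $\aseq$ and recurse on each $\aseq_i$. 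Each $\aseq_i$ is again valid, so the invariant is preserved at every step, and the process is well-defined.

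It then remains to check that the resulting tree is \fexpanded, i.e.\ that all its leaves are axioms. By construction a node has no successors exactly when we did not apply an expansion step, which happens only when the node is an axiom; conversely, Lemma~\ref{lem:comp} ensures that on every valid sequent that is not an axiom some inference rule is applicable, so no valid non-axiom sequent can ever be a leaf. Hence every leaf of the tree is an axiom, and the tree is \fexpanded. Moreover, all the rule applications used are \admissible, since this is part of the conclusions of Lemmata~\ref{lem:comp} and~\ref{lem:sep_applies}, so the tree is an \admissible \fexpanded proof tree of $\aseq$.

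Since the substantive arguments have already been carried out in the preceding lemmata, there is no genuine obstacle here. The only point requiring a little care is the one already noted: because the tree may be infinite, one must set up the construction coinductively instead of by induction on a finite derivation, and it is precisely the fact that \emph{every} premise produced along the way is valid (so that Lemma~\ref{lem:comp} keeps applying) that makes this coinductive construction go through without getting stuck at a non-axiom node.
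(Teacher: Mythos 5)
Your proposal is correct and matches the paper's argument: the paper's entire proof of Theorem~\ref{theo:comp} is the single sentence ``The theorem follows from Lemma~\ref{lem:comp}.'', and your coinductive expansion (apply Lemma~\ref{lem:comp} at every valid non-axiom node, stop at axioms) is exactly the construction the paper leaves implicit. You simply spell out the details the paper omits.
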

\begin{proof}
The theorem follows from Lemma~\ref{lem:comp}.
\end{proof}

 \subsection{Termination and Complexity}

We now show that the proof procedure runs in polynomial time. To this purpose we provide (Definition~\ref{def:associate} and Lemma~\ref{lem:associate}) a characterization of the \ireducible sequents that may occur in a proof tree.


\begin{definition}
\label{def:associate}
Let $\anum  \in \mathbb{N}$. 
A sequent $\aseq$ is a  {\em $\anum$-\companion} of a symbolic heap 
$\asform \swedge \apform$
if $\aseq$  is of the form
$(\asform_1\sigma * \asform_2 * \asform_3) \swedge (\apform_0 \wedge \apform_1 \wedge \apform_2) \vdashr \asformB \swedge \true$, where:
\begin{enumerate}
\item{$\apform_0$ is the set of disequations occurring in $\apform$ that are of the form $x \not \iseq t$ (up to commutativity), with $x \in (\vars(\asform_1\sigma * \asform_2 * \asform_3) \cup \vars(\asformB)) \setminus \vars_\addr$ and $t \not = x$. \label{companion:diseq0}}
\item{$\apform_1 = \bigwedge_{x\in \specvar{\aseq}, y \in A} x \not \iseq y$, where $A= \alloc{\asform_1\sigma * \asform_2 * \asform_3}  \setminus \vars(\asformB)$.\label{companion:diseq1}}
\item{$\apform_2$ is a conjunction of disequations of the form $u \not \iseq v$, where $u,v \in \vars(\asformB)$.\label{companion:diseq2}}
\item{$\asformB$ is a predicate atom. \label{companion:pred}}
\item{$V \subseteq \vars(\asformB)$.\label{companion:V}}
\item{Either $\asform_1 = \emp$, or there exists a predicate atom $\anatom$ such that $\rootof{\anatom} = \rootof{\asformB}$ and 
$\anatom \unfold \asform_1$.\label{companion:unfold}}
\item{$\len{\asform_2} \leq \anum+1$.\label{companion:num}}
\item{$\sigma$ is a substitution such that $\card{\dom{\sigma}} \leq \anum$.\label{companion:sigma}}
\item{$\asform\sigma$ is of the form $\asform_3 * \asformC_1 * \asformC_2$ (modulo AC), where
$\len{\asformC_1} \leq 2\cdot\anum$ and:
\begin{enumerate}
\item{no atom $\anatomB$ occurring in $\asform_3$ is such that $\rootof{\asformB} \not \pathto{\asform_1\sigma * \asform_2 * \asform_3 * \asformC_2}^* \rootof{\anatomB}$; and}
\item{no atom $\anatomB$ occurring in $\asformC_2$ is such that $\rootof{\asformB} \pathto{\asform_1\sigma * \asform_2 * \asform_3 * \asformC_2}^* \rootof{\anatomB}$.}
\end{enumerate}

\label{companion:dec}}

\end{enumerate}
\end{definition}

Intuitively, we aim at describing sequents $\aseq$ that are \ireducible and occur in some proof tree, where the formula $\asform \swedge \apform$ is meant to denote the left-hand side  of the root sequent. It is clear that the left-hand side of $\aseq$ will contain some parts that are inherited from the formula $\asform \swedge \apform$, together with some other parts, that are introduced by the inference rules applied along the branch of $\aseq$ in the proof tree.
First, observe that Conditions \ref{companion:pred} (stating that the right-hand side of $\aseq$ is a predicate atom $\asformB$) and \ref{companion:V} (stating that the variables in $V$ must occur in $\asformB$) are easy consequences of the fact that $\aseq$ is \ireducible: rules \sep\ and \eli\ cannot be applied.
The substitution $\sigma$ is intended to encode the effect of the (cumulative) applications of rule \dec\ (in the left branch): some variables initially occurring in $\asform \swedge \apform$ are replaced by other variables. 
Note that Condition \ref{companion:sigma} bounds the size of this substitution, which is essential to avoid any exponential blow-up. As we shall see, it stems from the fact that rule \dec\ only applies, by Definition~\ref{def:strat} (\ref{strat:dec}), to a variable in $\specvar{\aseq}$ which restricts the number of applications of the rule.
The formula $\apform_0$ denotes the set of disequations between variables of a sort distinct from $\addr$, which are all inherited from $\apform$ (as the rules contain no disequations of this form).
On the other hand, $\apform_1$ denotes the set of disequations
that were introduced by previous applications of \dec\ (in the right branch). Note that since $\aseq$ is \ireducible, we may assume that all the possible applications of \dec\ have been applied, leading either to the elimination of a variable (encoded in $\sigma$) or to the addition of some disequation, this is why $\apform_1$ contains the set of {\em all} possible disequations. The formula $\apform_2$ denotes the set of disequations between variables in $\asformB$, which are arbitrary (this is not problematic as the number of such disequations is bounded, as $\asformB$ is an atom). 
The formula $\asform_1$ denotes the part of the formula added in the left-hand side of the sequent by the {\em last} application of \unf. The formula may be empty, because such application does not always exist (for instance if $\aseq$ is the root sequent). The atom $\anatom$ then denotes the predicate atom on which \unf\ was previously applied. Observe that the root of this atom is always the same as that of $\asformB$, by Definition~\ref{def:strat} (\ref{strat:unf}). Note also that the formula contains no atom obtained from any application of \unf\ occurring {\em before} the last one (i.e., on atoms other than $\anatom$): indeed, we shall prove that all such atoms must have been  eliminated by previous decomposition steps when \imi\ was applied.
The formula $\asform_1\sigma$ contains all spatial atoms inherited from the initial formula $\asform$. Note that some atoms in $\asform$ may have been deleted, by previous applications of \sep. 
The difficulty here is that  this could lead to an exponential blow-up, since in principle one could consider all possible subsets of $\asform$. 
To overcome this issue, we refine the criterion by taking into account all the additional reachability conditions that prevent $\aseq$ from being an axiom or an \antiaxiom.
We thus further split the conjunction of atoms that were deleted from $\asform$ into two parts $\asformC_1 * \asformC_2$, where $\asformC_1$ contains a {\em bounded} number of atoms that can be chosen in an {\em arbitrary} way, and $\asformC_2$ may contain an {\em arbitrary} number of atoms, but is fully determined by the choice of $\asform_1,\asform_2$ and $\asformC_1$.

The following lemma shows that a formula has only polynomially many $\anum$-{\companion}s, up to a renaming of variables.

\begin{lemma}
\label{lem:poly}
Let $\asform \swedge \apform$ be a spatial formula and
$E$ be a set of variables.
For every fixed number $\anum$, 
if $\maxar{\asid} \leq \anum$ 
(i.e., if the maximal arity of the predicate symbols in bounded by $\anum$), then
the number of sequents $\aseq$ that are $\anum$-{\companion}s of $\asform \swedge \apform$  and such that $\vars(\aseq) \subseteq E$
is polynomial w.r.t.\ $\size{\asform} + \size{\asid} + \card{E}$ (up to AC), and the size of $\aseq$ is polynomial w.r.t.\ $\size{\asform} + \size{\asid}$.
\end{lemma}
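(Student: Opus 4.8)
The strategy is to bound independently the number of choices for each syntactic component of an $\anum$-\companion $\aseq$ of $\asform \swedge \apform$, and to show each component has size polynomial in $\size{\asform} + \size{\asid}$, so that the product of the counts is polynomial and the total size is polynomial. I will enumerate the components in the order given by Definition~\ref{def:associate}. First, the right-hand side $\asformB$: by Condition~\ref{companion:pred} it is a predicate atom $p(\vec{t})$, with $p \in \preds(\asid)$ and arguments among $E \cup \csts$; since $\maxar{\asid} \leq \anum$, there are at most $\card{\preds(\asid)} \cdot (\card{E} + \size{\asid})^{\anum}$ such atoms, which is polynomial (as $\anum$ is fixed), and $\size{\asformB} \leq \anum + 1$. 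Next, $\sigma$: by Condition~\ref{companion:sigma}, $\card{\dom{\sigma}} \leq \anum$ and (since $\aseq$ must be well-formed with $\vars(\aseq) \subseteq E$) both domain and codomain lie in $E$, so there are at most $(\card{E}+1)^{2\anum}$ choices — polynomial. The formula $\asform_1$ with its witnessing atom $\anatom$: by Condition~\ref{companion:unfold}, either $\asform_1 = \emp$, or $\anatom \unfold \asform_1$ for some predicate atom $\anatom$ with $\rootof{\anatom} = \rootof{\asformB}$; there are at most $\size{\asid}$ rules, and the root of $\anatom$ is fixed as $\rootof{\asformB}$, while the remaining arguments of $\anatom$ lie in $E \cup \csts$ and number at most $\anum - 1$, so at most $\size{\asid} \cdot (\card{E} + \size{\asid})^{\anum}$ choices, and $\size{\asform_1}$ is bounded by the maximal size of a rule body, hence by $\size{\asid}$.

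For $\asform_2$: by Condition~\ref{companion:num}, $\len{\asform_2} \leq \anum + 1$, and each atom is either a points-to atom or a predicate atom over variables in $E \cup \csts$ of arity $\leq \maxr{\asid} \leq \anum$ (using that $\maxk{\asid}\le\maxr{\asid}$ and that all the spatial atoms in play come from rule bodies or from $\asform$, hence have bounded arity), giving polynomially many choices and $\size{\asform_2}$ polynomial. For the splitting of $\asform\sigma$ as $\asform_3 * \asformC_1 * \asformC_2$ in Condition~\ref{companion:dec}: once $\sigma$ is fixed, $\asform\sigma$ is determined and $\len{\asform\sigma} = \len{\asform} \leq \size{\asform}$; we choose $\asformC_1$ as a sub-multiset of the atoms of $\asform\sigma$ with $\len{\asformC_1} \leq 2\anum$, giving at most $\size{\asform}^{2\anum}$ choices (polynomial). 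The crucial point is that, \emph{once} $\asform_1$ (hence $\rootof{\asformB} = \rootof{\anatom}$, or the knowledge that $\asform_1 = \emp$), $\asform_2$, $\sigma$, and $\asformC_1$ are fixed, the remaining atoms of $\asform\sigma$, namely those of $\asform_3 * \asformC_2$, are partitioned deterministically: an atom $\anatomB$ belongs to $\asformC_2$ iff $\rootof{\asformB} \not\pathto{\asform_1\sigma * \asform_2 * \asform_3 * \asformC_2}^* \rootof{\anatomB}$, and to $\asform_3$ otherwise. This looks circular because the reachability relation depends on $\asform_3$ and $\asformC_2$, but it is not: the relation $\pathto{}$ only ever \emph{follows} edges out of allocated variables, and $\asformC_2$ contains precisely the atoms \emph{not} reachable from $\rootof{\asformB}$; removing unreachable atoms cannot disconnect a reachable atom from $\rootof{\asformB}$, so the set of atoms reachable from $\rootof{\asformB}$ in $\asform_1\sigma * \asform_2 * \asform\sigma$ (minus those already placed in $\asform_2$ or whose root coincides with an allocated root handled elsewhere) equals the set reachable in $\asform_1\sigma * \asform_2 * \asform_3 * \asformC_2$. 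Hence $\asform_3$ is the reachable part and $\asformC_2$ the unreachable part, both uniquely determined. So this step contributes no extra multiplicative factor beyond the choice of $\asformC_1$, and $\size{\asform_3}, \size{\asformC_2} \leq \size{\asform\sigma} \leq \size{\asform}$ (up to the minor expansion from $\sigma$, which is at most linear).

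Finally, the pure parts. The set $\specvar{\aseq} = \vars_\addr(\asformB) \setminus (V \cup \alloc{\asform_1\sigma * \asform_2 * \asform_3})$ is determined once $\asformB$, $V$, and the spatial left-hand side are fixed; $V$ itself is a sub(multi)set of $\vars(\asformB)$ by Condition~\ref{companion:V}, with $\card{\vars(\asformB)} \leq \anum$, so polynomially (indeed boundedly, since $\anum$ is fixed) many choices — and since $\aseq$ is not an axiom, $V$ has no repeated variable, so $2^{\anum}$ suffices. Then $\apform_0$ is determined by $\apform$ and by which variables occur in the already-chosen spatial formulas (it is extracted from the fixed formula $\apform$), so $\size{\apform_0} \leq \size{\apform} \leq \size{\asform \swedge \apform}$ and there is no choice. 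Likewise $\apform_1 = \bigwedge_{x \in \specvar{\aseq}, y \in A} x \not\iseq y$ with $A = \alloc{\asform_1\sigma * \asform_2 * \asform_3} \setminus \vars(\asformB)$ is fully determined by the already-fixed data, with $\size{\apform_1}$ polynomial (bounded by $\card{\specvar{\aseq}} \cdot \card{A} \leq \anum \cdot \size{\asform}$, times a constant). The only genuinely free pure component is $\apform_2$, a conjunction of disequations over $\vars(\asformB)$: since $\card{\vars(\asformB)} \leq \anum$, there are at most $\anum^2$ distinct such disequations, so at most $2^{\anum^2}$ choices for $\apform_2$ — a constant — and $\size{\apform_2} \leq \anum^2 \cdot O(1)$. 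Multiplying all the (polynomially or boundedly many) choices gives a polynomial bound on the number of $\anum$-\companion{}s, and summing the (polynomial) sizes of all components gives a polynomial bound on $\size{\aseq}$.

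\textbf{Main obstacle.} The delicate step is Condition~\ref{companion:dec}, i.e.\ arguing that the partition $\asform_3 / \asformC_2$ of the atoms ``deleted from $\asform$ but recoverable'' is forced rather than genuinely free — a naive reading suggests $2^{\len{\asform}}$ possibilities. The resolution hinges on the monotonicity of $\pathto{}^*$ under adding atoms (a syntactic analogue of Proposition~\ref{prop:subheap_connect}) together with the observation that $\asformC_2$ is defined to consist exactly of the $\pathto{}^*$-unreachable atoms, so that including or excluding them does not change reachability from $\rootof{\asformB}$; I expect the bulk of the formal work to be in making this fixed-point/closure argument airtight, whereas all the counting estimates above are routine.
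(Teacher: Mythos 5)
Your proof is correct and follows essentially the same route as the paper's: bound each component of a \companion separately (polynomially many choices for $\asformB$, $\anatom$, $\asform_1$, $\asform_2$, $\sigma$, $\asformC_1$; constantly many for $V$ and $\apform_2$; $\apform_0$, $\apform_1$, $\asform_3$ and $\asformC_2$ determined by the rest) and multiply. The only difference is that you spell out why the $\asform_3/\asformC_2$ partition is forced once $\asformC_1$ is chosen, a point the paper's proof asserts without elaboration; your observation that $\asform_3 * \asformC_2$ is the fixed multiset $\asform\sigma$ minus $\asformC_1$, so the reachability relation in Condition~\ref{companion:dec} is computed over a fixed formula, is a welcome clarification.
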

\begin{proof}
By Definition~\ref{def:associate}, $\aseq$ is of the form $(\asform_1\sigma * \asform_2 * \asform_3) \swedge (\apform_1 \wedge \apform_2) \vdashr \asformB \swedge \true$, and satisfies Conditions $1$-$9$.
Note that since $\asformB$ is a predicate atom, we have $\size{\asformB} \leq 1 + \maxar{\asid} \leq \anum+1$ and $\card{\vars(\asformB)} \leq \maxar{\asid} \leq \anum$.
 
We have $\len{\asform_1 * \asform_2 *\asform_3} = \len{\asform_1} + \len{\asform_2} + \len{\asform_3} \leq \len{\asform_1} + \anum + 1 + \len{\asform} \leq  
\size{\asid} + \size{\asform}+ \anum + 1$, 
thus
$\size{\asform_1\sigma * \asform_2 * \asform_3} \leq (\anum + 1)\cdot (\size{\asid} + \size{\asform}+ \anum + 1) = \bigO{\size{\asform}+\size{\asid}}$.
Since $\vars(\apformB_1) \cup \vars(\apformB_2)  \subseteq \vars(\asform_1\sigma * \asform_2 * \asform_3) \cup \vars(\asformB)$, 
we deduce that $\size{\apform_1 \wedge \apform_2} =  \bigO{(\size{\asform}+\size{\asid})^2}$ hence $\size{\aseq} =  \bigO{(\size{\asform}+\size{\asid})^2}$.
 
Let  $n = \card{E} + \size{\asform} + \card{\csts} + 1$. Note that $n \leq \card{E} + \size{\asform} + \size{\asid} + 1$. 
The number of possible predicate atoms $\asformB$ and $\anatom$ is 
at most 
$\card{\preds}\cdot n^\anum \leq \size{\asid} \cdot n^\anum$. Similarly, the number of formulas
$\asform_2$ is at most $\size{\asid}^{\anum+1} \cdot n^{(\anum+1)^2}$.
Once $\anatom$ is fixed, the number of formulas 
$\asform_1$ cannot be greater than the number of rules in $\asid$, up to a renaming of variables.
Since $\card{\dom{\sigma}} \leq \anum$, there are at most 
$n^{2\cdot\anum}$ 
possible substitutions $\sigma$ we only have to choose the set $\dom{\sigma}$,
i.e., at most $\anum$ elements among the variables occurring in $\asform$ or $\asform_1$, and the image of each variable in $\dom{\sigma}$ 
among the variables in $E$.
The number of formulas $\asform_3$ is at most $\len{\asform}^{2\cdot\anum}$ 
(since we only have to choose the formula $\asformC_1$ in $\asform$, as $\asformC_2$ is entirely determined by the choice of $\asform_1, \asform_2$ and $\asformC_1$). 
We have $\vars(\apform_2) \subseteq \vars(\asformB)$ thus 
$\card{\vars(\apform_2)} \leq \anum$, and there are $2^{\anum^2}$  possible
formulas $\apform_2$.
Since $V \subseteq \vars(\asformB)$, we have $\card{V} \leq \anum$ and the number of possible sets $V$ is $2^\anum$. 
Finally, the formulas $\apform_0$ and $\apform_1$ are fixed once $\asform$ and $\asform_1,\asform_2,\asform_3$ and $\sigma$ are fixed. 
\end{proof}

Lemma \ref{lem:associate} below  is used to prove that every sequent that is \ireducible and occurs in a proof tree is a \companion of the left-hand side of the root of the proof tree. Together with 
Lemma~\ref{lem:poly} and the result stated in Lemma~\ref{lem:direct}  on the number of \direct descendants of a sequent, this will entail that the size of the proof tree is polynomial w.r.t.\ the size of the root.
We need the following proposition, which is an easy consequence of the definition of the rules.

\newcommand{\sv}{V^{\star}}

\begin{proposition}
\label{prop:bothallocvar}
Let $\aseq_i = \asheap_i \vdashrV{V_i} \asheapB_i$ (for $i =1,2$) be sequents.
If $\aseq_2$ is a successor of $\aseq_1$, then 
$\alloc{\asheap_1} \cap \alloc{\asheapB_1} \cap \alloc{\asheap_2} \subseteq \alloc{\asheapB_2}$.
\end{proposition}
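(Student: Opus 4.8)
I would prove Proposition~\ref{prop:bothallocvar} by a case analysis on the inference rule of which $\aseq_2$ is a premise, writing $L_i = \alloc{\asheap_i}$ and $R_i = \alloc{\asheapB_i}$ (these are multisets, and below $\subseteq$ is multiset inclusion). The two facts that will carry the whole argument are that, since all proof trees under consideration are \admissible, the conclusion $\aseq_1$ is not an axiom and no premise of the applied rule is an \antiaxiom (Condition~\ref{strat:noax} of Definition~\ref{def:admissible}). The inclusion is then immediate for every rule that alters neither the right-hand side nor the spatial part of the left-hand side of its premise --- namely $\wea$, $\eli$, and the right premise of $\dec$ --- since there $L_1 = L_2$ and $R_1 = R_2$. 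It is equally immediate for the rules that leave the right-hand side untouched --- $\noteq$ and $\unf$ --- since then $R_1 = R_2$ and $L_1 \cap R_1 \cap L_2 \subseteq R_1 = R_2$. For $\imi$ the left-hand side is unchanged, so $L_1 = L_2$, while the atom $p(x,\vec z)$ on the right-hand side is replaced by $x \mapsto (y_1,\dots,y_k) * \asformB'\sigma$, whose set of roots contains $x = \rootof{p(x,\vec z)}$; hence $R_1 \subseteq R_2$ and the inclusion holds a fortiori.

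The substitution rules $\eq$ and the left premise of $\dec$ are handled uniformly: the premise is obtained by applying to the whole sequent a substitution $\sigma$ with $\dom{\sigma} = \{x\}$ whose image $\sigma(x)$ has the sort of $x$. If $x$ occurs as the root of a spatial atom its sort is $\addr$, and since $\csts_\addr = \emptyset$ the term $\sigma(x)$ is then a variable; consequently $\sigma$ maps roots to roots and $L_2 = \sigma(L_1)$, $R_2 = \sigma(R_1)$ as multisets. Taking $v \in L_1 \cap R_1 \cap L_2$: since $v \in L_2 = \sigma(L_1)$ and $\sigma$ moves only $x$, either $v \in L_1 \setminus \{x\}$, in which case $v \neq x$, so $\sigma(v) = v \in \sigma(R_1) = R_2$; or $v = \sigma(x)$, in which case either $x \in R_1$ and $v = \sigma(x) \in \sigma(R_1) = R_2$, or $v \in R_1$ forces $v \neq x$ and again $\sigma(v) = v \in R_2$. (When $\sigma(x) = x$ the rule is vacuous.)

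The remaining, and main, case is rule $\sep$, where the ``crossing'' decompositions that would violate the inclusion must be excluded. Let $\aseq_2$ be the left premise $\asform_1 \swedge \apform_1 \vdashrV{V \cup \alloc{\asform_2}} \asformB_1$ (the right premise being symmetric), so $L_2 = \alloc{\asform_1}$, $R_2 = \alloc{\asformB_1}$, $L_1 = \alloc{\asform_1 * \asform_2}$ and $R_1 = \alloc{\asformB_1 * \asformB_2}$. Because $\aseq_1$ is not an axiom, $\asform_1 * \asform_2$ is \heapsat (so $\alloc{\asform_1}$ and $\alloc{\asform_2}$ are disjoint), $\alloc{\asform_1 * \asform_2} \cap V = \emptyset$, $V$ has no repeated occurrence, and the pure part $\apform_1 \wedge \apform_2$ of $\aseq_1$ contains no disequation $u \not\iseq u$. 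Take $v \in L_1 \cap R_1 \cap L_2$, so $v \in \alloc{\asform_1}$, hence $v \notin \alloc{\asform_2}$, and $v \in \alloc{\asformB_1 * \asformB_2}$. Suppose, for contradiction, that $v \notin \alloc{\asformB_1}$; then $v \in \alloc{\asformB_2}$, and I would consider the right premise $\aseq_2' = \asform_2 \swedge \apform_2 \vdashrV{V \cup \alloc{\asform_1}} \asformB_2$. Using the facts just recorded, $\aseq_2'$ can be no axiom of Definition~\ref{def:axiom}: kind~$3$ would make $\asform_2$, hence $\asform_1 * \asform_2$, \heapunsat; kind~$2$ would put some $u \not\iseq u$ into $\apform_2$, hence into $\apform_1 \wedge \apform_2$, making $\aseq_1$ an axiom; kind~$4$ would require $\alloc{\asform_2} \cap (V \cup \alloc{\asform_1}) \neq \emptyset$ or a repetition in $V \cup \alloc{\asform_1}$, both impossible since $\alloc{\asform_1},\alloc{\asform_2}$ are disjoint, $\alloc{\asform_1 * \asform_2} \cap V = \emptyset$ and $V$ has no repetition; kind~$1$ would give $\asformB_2 = \asform_2$ modulo AC, so $v \in \alloc{\asformB_2} = \alloc{\asform_2}$, contradicting $v \in \alloc{\asform_1}$. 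Hence $\aseq_2'$ is not an axiom, so it is not an \antiaxiom (Condition~\ref{strat:noax}), and Definition~\ref{def:antiax}(\ref{anti:alloc}) yields $\alloc{\asformB_2} \subseteq \alloc{\asform_2}$, whence $v \in \alloc{\asform_2}$ --- a contradiction. Therefore $v \in \alloc{\asformB_1} = R_2$. The only obstacle I anticipate is making this $\sep$ case airtight: one must argue carefully that no legitimate (admissible) application of $\sep$ can route an atom allocating $v$ into $\asformB_2$ while $v$ is allocated in $\asform_1$, and one must check the multiset bookkeeping --- but the latter is harmless, because $\aseq_1$ being non-axiomatic forces each left-allocated variable to occur at most once in each of $\asform_1,\asform_2$.
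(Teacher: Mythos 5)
Your proof is correct and follows essentially the same route as the paper's: an inspection of the rules showing that only \sep\ can remove a variable from $\alloc{\asheapB_1}$ while keeping it allocated on the left-hand side of a premise, and that in this case the sibling premise would violate Condition~\ref{anti:alloc} of Definition~\ref{def:antiax}, contradicting admissibility (Condition~\ref{strat:noax} of Definition~\ref{def:admissible}). The only detail left implicit in your \sep\ case is that the sibling premise also satisfies the remaining preconditions of Definition~\ref{def:antiax} (its pure left part is equality-free and its pure right part is $\true$), which follows from the priority of \eq\ over \sep\ and from Condition~\ref{strat:sep}; the paper's own proof is no more explicit on this point, and your treatment of the other rules and of the axiom check for the sibling premise is in fact more detailed than the paper's.
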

\begin{proof}
Assume that $x\in \alloc{\asheap_1} \cap \alloc{\asheapB_1} \cap \alloc{\asheap_2}$ and $x\not \in \alloc{\asheapB_2}$.
It is easy to check, by inspection of the rules, that the only rule that can remove the variable $x$ from 
$\alloc{\asheapB_1}$ is \sep\ (indeed, \imi\ deletes a predicate atom but introduce a points-to atom with the same root, 
and the rules \eq\ and \dec\ cannot replace $x$ as otherwise it would not occur in $\asheap_2$ ). 
By definition of the rule, $\aseq_1$ also has a successor
$\aseq_2' = \asheap_2' \vdashrV{V_2'} \asheapB_2'$ such that
$x \in V_2'$ (since $x \in \alloc{\asheap_2}$) and $x \in \alloc{\asheapB_2'}$ (since $x \not \in \alloc{\asheapB_2}$). This entails that $\aseq_2'$ is an \antiaxiom, contradicting Condition~\ref{strat:noax} in Definition~\ref{def:admissible}.
\end{proof}

\begin{lemma}
\label{lem:associate}
Assume that 
$\maxar{\asid} \leq \anumB$ for $\anumB \in \mathbb{N}$, and let $\aseq = \asform \swedge \apform \vdash \asformB \swedge \apformB$ be an \eqfree sequent.
Every \ireducible sequent occurring in a proof tree with root $\aseq$ 
is a $\anumB$-\companion of the left-hand side $\asform \swedge \apform$ of $\aseq$.
\end{lemma}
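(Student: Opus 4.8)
The plan is to prove the statement by induction on the number of applications of \imi\ occurring strictly below the considered \ireducible sequent in the proof tree. The induction exploits the fact that \imi-applications are isolated: by Lemma~\ref{lem:sep} any path between two distinct \ireducible sequents crosses an application of \sep, and by Corollary~\ref{cor:onlyoneunf} a \direct\ path carries at most one application of \unf; together with the priority ordering of Definition~\ref{def:admissible}~(\ref{strat:priority}) and Propositions~\ref{prop:qred} and~\ref{prop:unf_qred}, this forces every \direct\ path ending in an \ireducible sequent to have the canonical shape ``a block of simplifications (\wea, \eli, \noteq) interleaved with \sep-decompositions of the right-hand side, then at most one \unf, then a block of \wea/\eli/\dec, then \imi'' (and \eq\ never fires, since the root is \eqfree\ and no rule introduces equalities, by Proposition~\ref{prop:rule-facts}). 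Consequently it suffices to establish the following sub-statement about \direct\ paths: if $\aseq_0$ is either the root $\aseq$ or the premise obtained by applying \imi\ to some $\anumB$-\companion\ of $\asform\swedge\apform$, and $\aseq'$ is the first \ireducible sequent reachable from $\aseq_0$, then $\aseq'$ is a $\anumB$-\companion\ of $\asform\swedge\apform$.

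To prove this sub-statement I would attach to every sequent along the \direct\ path a slightly weakened form of Definition~\ref{def:associate} --- Condition~\ref{companion:pred} relaxed so that $\asformB$ may be an arbitrary spatial formula and reachability in Condition~\ref{companion:dec} read ``from some root of $\asformB$'', the pure-part conditions stated up to applicability of \wea\ --- and show (i) that it holds at $\aseq_0$ and (ii) that it is preserved by each rule application of a \direct\ path. For (i): when $\aseq_0 = \aseq$, take $\sigma = \id$, $\asform_1 = \asform_2 = \asformC_1 = \asformC_2 = \emp$ and $\asform_3 = \asform$; the reachability clauses of Condition~\ref{companion:dec} then hold precisely because $\aseq$, belonging to an \admissible\ tree, is not an \antiaxiom, so Conditions~\ref{anti:connect}--\ref{anti:ref} of Definition~\ref{def:antiax} fail at $\aseq$. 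When $\aseq_0$ is the premise of an application of \imi\ to a \companion, its structure is read off directly: \imi\ leaves the left-hand side untouched and rewrites the right-hand predicate atom $p(x,\vec z)$ into $x\mapsto\vec y * \asformB'\tau$, where $p(x,\vec z)\unfold (x\mapsto\vec u*\asformB')\swedge\apformB'$ and $\tau$ aligns $\vec u$ with $\vec y$. For (ii) the key cases are \sep\ and \unf. Rule \sep\ deletes from the left-hand side exactly the atoms allotted to the discarded premise; the anti-axiom conditions on the surviving premise (Conditions~\ref{anti:connect}, \ref{anti:V}, \ref{anti:ref} of Definition~\ref{def:antiax}, together with Proposition~\ref{prop:bothallocvar}) force this deleted set to be determined, up to a bounded ``choice'' part, by the reachability relation $\pathto{}$ and by the at most $\maxr{\asid}\le\anumB$ variables of $\narrowvar{\cdot}$ --- essentially the analysis already carried out in the proof of Proposition~\ref{prop:sepbranching}. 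Rule \unf\ replaces one left-hand atom by the body $\asform_1$ of a single inductive rule, which by Condition~\ref{strat:unf} has the same root as the right-hand side, yielding Condition~\ref{companion:unfold}; the consumed atom is absorbed into the bounded part $\asformC_1$. Rules \wea, \eli, \noteq, \dec\ act only on the pure part and on $V$: \noteq\ and \wea\ delete disequations that are trivially entailed by $\asform$ (in the sense of Definition~\ref{def:emodel}) or involve a fresh variable, \eli\ removes a fresh variable from $V$, and \dec\ either adds a disequation or performs a substitution; accumulating the \dec-substitutions yields $\sigma$, accumulating the \dec-added disequations yields $\apform_1$ --- which is \emph{saturated}, since \dec\ is inapplicable to the final \ireducible sequent, hence has the form prescribed by Condition~\ref{companion:diseq1} --- while the remaining disequations split into $\apform_0$ (those involving a non-$\addr$ variable, which by Proposition~\ref{prop:rule-facts} can only originate from $\apform$) and $\apform_2$ (those between variables of $\asformB$). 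Finally, at $\aseq'$ the right-hand side is a predicate atom by Condition~\ref{strat:ini}, and \eli\ and \sep\ are inapplicable, giving Conditions~\ref{companion:V} and~\ref{companion:pred}; since $\aseq'$ is not an axiom, Corollary~\ref{cor:heapunsat} forces its left-hand side to contain a unique points-to atom with root $\rootof{\asformB}$; the weakened invariant thus specializes to the full \companion\ definition.

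The main obstacle will be the polynomial bounds of Conditions~\ref{companion:num} and~\ref{companion:sigma} and the exact split in Condition~\ref{companion:dec}. For $\card{\dom{\sigma}}\le\anumB$, one must show that only boundedly many \dec-steps can affect the atoms inherited from $\asform$ or from the last \unf-body: by Condition~\ref{strat:dec} each \dec\ eliminates a variable of $\specvar{\cdot}$, by Proposition~\ref{prop:specvar} the set $\specvar{\cdot}$ never grows along a path, and each \dec\ strictly shrinks it (the eliminated variable becomes allocated); moreover, applicability of \dec\ requires the right-hand side to be a predicate atom, whence $\specvar{\cdot}\subseteq\vars_\addr(\asformB)$ has at most $\maxar{\asid}\le\anumB$ elements --- so at most $\anumB$ \dec-steps occur on the relevant segment. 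For $\len{\asform_2}\le\anumB+1$, the \unf-body that produced $\asform_1$ has at most $\maxk{\asid}+1\le\anumB+1$ atoms, and the bounded-choice part of the \sep-decompositions, controlled by $\narrowvar{\cdot}$ (which by Proposition~\ref{prop:narrow} also never grows and has at most $\anumB$ elements on \narrow\ sequents), contributes only $\bigO{\anumB}$ further atoms, collected with the \unf-consumed atoms into $\asformC_1$, whence its length bound $2\anumB$. Condition~\ref{companion:dec} is then exactly the bookkeeping recording which atoms of $\asform\sigma$ survive on the left (those reachable from $\rootof{\asformB}$, in $\asform_3$) versus which were discarded by \sep\ (those unreachable, in $\asformC_2$, plus the boundary atoms in $\asformC_1$), and its correctness is what makes the axiom- and anti-axiom-freeness of every node on the path verifiable; keeping this split and all the size bounds uniform in $\anumB$, while correctly threading the cumulative \dec-substitution through the successive \imi-blocks, is the delicate part of the argument. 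The remaining conditions ($V\subseteq\vars(\asformB)$, uniqueness of the points-to atom) follow routinely from $\aseq'$ being neither an axiom nor an \antiaxiom\ together with Corollary~\ref{cor:heapunsat} and Proposition~\ref{prop:bothallocvar}.
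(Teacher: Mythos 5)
Your plan assembles the right ingredients --- the same structural facts about \direct\ paths (Corollary~\ref{cor:onlyoneunf}, Propositions~\ref{prop:qred}, \ref{prop:unf_qred}, \ref{prop:specvar}, \ref{prop:rule-facts}) that the paper relies on, reorganized as an induction on \imi-blocks instead of the paper's single sweep over the whole path from the root. Your treatment of the pure part, of $V$, of Condition~\ref{companion:unfold} and of the origin of $\sigma$ matches the paper's argument. But in two places the proposal is not merely leaving routine detail unexpanded; it is missing the idea that makes the bounds work.

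First, and most importantly, nothing in your sketch rules out the survival of spatial atoms whose roots are fresh existential variables introduced by \unf\ in an \emph{earlier} \imi-block. Such an atom lies neither in $\asform_1\sigma$ (the body of the \emph{last} \unf) nor in $\asform_3$ (it does not come from $\asform\sigma$), so it must be absorbed into $\asform_2$, and Condition~\ref{companion:num} then forces its root to lie in the bounded set $\sv\cup\alloc{\asformB}$. Your provenance-based accounting (``the \unf-body plus $\bigO{\anumB}$ choice atoms'') does not establish this, and it is not automatic: a priori each block's rule body could contribute several surviving fresh-rooted atoms, accumulating without bound across blocks. The paper proves the corresponding component ($\asform_4$ in its notation) is empty by tracing each fresh root $u$ through the \imi\ step of the block that created it: $u\in\varmap{\cdot}$ by Proposition~\ref{prop:qred}, so the \imi\ substitution must map an existential of the unfolded rule to $u$, forcing $u$ to be allocated on the right-hand side of the premise, and Proposition~\ref{prop:bothallocvar} then propagates $u\in\alloc{\asformB}$ down to the final sequent. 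You invoke Proposition~\ref{prop:bothallocvar} only for the ``routine'' conditions, whereas it is in fact the crux of Condition~\ref{companion:num}. Second, the per-block induction cannot close Condition~\ref{companion:sigma} as stated: the inductive hypothesis gives $\card{\dom{\sigma_{\mathrm{old}}}}\le\anumB$ for the previous \companion, and adding up to $\anumB$ further \dec-steps in the current block only yields $2\anumB$. The true bound is global --- the variables eliminated by \emph{all} \dec-steps along the entire path are pairwise distinct members of a single set $\sv$ of size at most $\anumB$ --- so being a \companion\ is not by itself an inductive invariant; you would need to strengthen it (e.g.\ to $\card{\dom{\sigma}}+\card{\specvar{\cdot}}\le\anumB$). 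Both defects are repairable, but as written the proof of Conditions~\ref{companion:num} and \ref{companion:sigma}, and of the exact $2\anumB$ bound in Condition~\ref{companion:dec}, is incomplete.
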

\begin{proof}  
Let $\aseq' = \asform' \swedge \apform' \vdashrV{V'} \asformB' \swedge \apformB'$ be an \ireducible sequent occurring in a proof tree with root $\aseq$. 
Let $\aseq_1,\dots,\aseq_n$, be a path from $\aseq$ to $\aseq'$, where 
$\aseq_1 = \aseq$, $\aseq_n = \aseq'$ and $\aseq_i = \asheap_i \vdashrV{V_i} \asheapB_i$ for $i=1,\ldots, n$.
We check that all the conditions in Definition~\ref{def:associate} are satisfied.
Note that by Condition~\ref{strat:noax} in Definition~\ref{def:admissible}, $\aseq'$ cannot be an axiom or an \antiaxiom.

\begin{itemize}
\item{{\bf Right-Hand Side.}
We first show that the right-hand side $\asformB'$ is indeed a predicate atom. By Condition~\ref{strat:ini} in Definition~\ref{def:admissible}, $\apformB' = \true$ and $\asformB'$ is a predicate atom, thus Condition~\ref{companion:pred} of Definition~\ref{def:associate} holds.
}
\item{{\bf Pure Formulas.}
We then prove that the pure formulas $\apform'$ fulfill Conditions \ref{companion:diseq0}, \ref{companion:diseq1} and \ref{companion:diseq2}.
If $\apform'$ contains an equality then the rule \eq\ applies, hence $\aseq'$ cannot be \ireducible because $\eq$ has priority over {\imi} (Condition~\ref{strat:priority} in Definition~\ref{def:admissible}).
Thus $\apform'$ is a conjunction of disequations, and it can be written as
$\apform_0 \swedge \apform_1 \wedge \apform_2$ where 
$\apform_0$ is the set of disequations in $\apform'$ between 
terms of a sort distinct from $\addr$, 
$\vars(\apform_2) \subseteq \vars(\asformB') \cap \vars_\addr$ 
and $\apform_1$ only contains disequations of the form $u \not \iseq v$ (up to commutativity of $\not \iseq$) with $u,v \in \vars_\addr$  and $u \not \in \vars(\asformB')$. 
Note that each disequation involves distinct terms because by hypothesis, $\aseq'$ is not an axiom.
Consider one such disequation $u\not \iseq v$ in $\apform_1$. If $u \not \in \vars(\asform')$ then rule {\wea} applies by the second application condition of the rule, which is impossible since this rule has priority over {\imi} and $\aseq'$ would not be {\ireducible}. We deduce that $u \in \vars(\asform')$, and since $\aseq'$ is not an \antiaxiom, we must have 
$u \in \alloc{\asform'}$.
If $v \not \in \vars(\asform') \cup \vars(\asformB')$ or $v \in \alloc{\asform'} \cup V'$ then rule \wea\ also applies (in the latter case, we have $\asform' \gEmodels{V_i} u \not \iseq v$ by Condition~\ref{it:emodel:neq:sub} of Definition~\ref{def:emodel}). 
Thus $v \in \vars(\asform') \cup \vars(\asformB')$ and $v \not \in \alloc{\asform'} \cup V$. This entails that 
$v  \in \vars(\asformB')$ since otherwise $\aseq'$ would be an \antiaxiom by Condition~\ref{anti:ref} of Definition~\ref{def:antiax} because $u\in \alloc{\asform'}$, hence $v$ is necessarily of sort $\addr$.
 We therefore have $v \in \specvar{\aseq'}$ and $u \in \alloc{\asform'} \setminus \vars(\asformB')$.
Conversely, if a disequation $u \not \iseq v$ with $v \in \specvar{\aseq'}$ and $u \in \alloc{\asform'} \setminus \vars(\asformB')$
does not occur in $\apform_1$ then by Condition~\ref{strat:dec} in Definition~\ref{def:admissible}, \dec\ would be applicable, which is impossible.
 Thus Conditions \ref{companion:diseq1} and \ref{companion:diseq2} in Definition~\ref{def:associate} are satisfied.
By Proposition~\ref{prop:rule-facts} (\ref{it:rf:intro:dis}) no rule introduces a disequation between terms of a sort distinct from $\addr$, all disequations in $\apform_0$ must occur in 
$\apform$. Furthermore, such a disequation cannot be of the form $t \not \iseq t$ (otherwise $\aseq'$ would be an axiom as explained above)
and must contain a variable occurring in $\asform'$ or $\asformB'$ (otherwise \wea\ would apply). Conversely, no rule can delete a disequation containing a variable in $(\vars(\asform') \cup \vars(\asformB')) \setminus \vars_\addr$ from the left-hand side of the sequent.
Thus Condition~\ref{companion:diseq0} holds.
}
\item{{\bf Non Allocated Variables.}
We now check that Condition \ref{companion:V} applies, namely that $V' \subseteq \vars(\asformB')$.
Assume, for the sake of contradiction, that $V'$ contains a variable $x \not \in \vars(\asformB')$. Since $\aseq'$ is not an \antiaxiom, we have $V' \cap (\vars(\asform') \setminus \vars(\asformB')) = \emptyset$, hence $x \not \in \vars(\asform')$. If $x \in \vars(\apform')$ then {\wea} 
applies, 
and otherwise \eli\ applies. In both cases, $\aseq'$ cannot be \ireducible (since \wea\ and \eli\ have priority over \imi), which contradicts our assumption. Thus Condition~\ref{companion:V} holds.
}
\item{{\bf Substitution $\sigma$.} We now define the substitution $\sigma$, and we show that Condition \ref{companion:sigma} holds.
Let $i$ be the least number such that the right-hand side $\asheapB_i$ of $\aseq_i$ is a spatial 
atom ($i$ must exist since $n$ fulfills this condition, by Condition~\ref{strat:ini} in Definition~\ref{def:strat}), and let $\sv = \specvar{\aseq_i}$. 
Note that $\asheapB_i$ must be a predicate atom. 
Indeed, the right-hand side of $\asheapB_n = \asheap_n \vdashrV{V_n} \asheapB_n$ is a predicate atom since $\aseq_n$ is \ireducible, and by Condition~\ref{it:pt:imi} of Proposition~\ref{prop:rule-facts}, the only rule that can introduce a predicate atom to the right-hand side of a premise is \imi, which by Condition~\ref{strat:ini} in Definition~\ref{def:strat}, applies only if the right-hand side of the conclusion is a predicate atom. 
Thus $\asheapB_j$ cannot be a points-to atom.
By Proposition~\ref{prop:specvar} we have 
$\specvar{\aseq_j} \subseteq \sv$, for all $j = i,\dots,n$, and $\card{\sv} \leq \maxar{\asid} \leq \anumB$ since $\sv \subseteq \vars(\aseq_i)$ and $\asheapB_i$ is a predicate atom.
Let $I \subseteq \{ 1,\dots,n\}$ be the set of indices $j$ such that rule $\dec$ applies on $\aseq_j$, and replaces a
variable $x_j$ by a variable $y_j$ (i.e., such that $\aseq_{j+1}$ is a left premise of an application of \dec\ on $\aseq_j$ with variables $x_j$ and $y_j$). 
By Condition~\ref{strat:dec} of Definition~\ref{def:admissible}, \dec\ applies only if the right-hand side of the conclusion is a predicate atom,
thus we must have $I \subseteq \{ i,\dots,n \}$.
Let $\sigma = \sigma_n$, where 
$\sigma_0 = \id$, $\sigma_{j+1} = \sigma_j\{ x_{j+1} \leftarrow y_{j+1} \}$ if $j+1 \in I$ and $\sigma_{j+1} = \sigma_j$ otherwise.
By Condition~\ref{strat:dec} in Definition~\ref{def:admissible}, 
for all $j\in I$, we have $y_j \in \specvar{\aseq_j} \subseteq \specvar{\aseq_i} = \sv$. Also,
$y_j \not \in \specvar{\aseq_{j+1}}$ because $x_j$ must be allocated in the left-hand side of $\aseq_j$, hence, after the replacement, $y_j$ is allocated in the left-hand side of $\aseq_{j+1}$. Thus by Proposition~\ref{prop:specvar}, $y_j \not \in \specvar{\aseq_{j'}}$, for all $j' > j$, which entails that 
the 
$y_j$ for $j \in I$ are pairwise distinct. 
Therefore, $\card{I} \leq \card{\sv} \leq \anumB$, hence $\card{\dom{\sigma}} \leq \anumB$ and Condition~\ref{companion:sigma} is satisfied. }
\item{{\bf Unfolded Atom.}
Now we define the unfolded predicate atom $\anatom$ and we prove that Condition~\ref{companion:unfold} holds.
If rule \unf\ is applied on a sequent $\aseq_k$ and $\aseq'$ is an \direct \descendant of $\aseq_k$, then we denote by $\anatom$ the predicate atom on which \unf\ is applied, and by $\asform_1 \swedge \apformC$ the formula occurring in $\aseq_{k+1}$ such that $\anatom \unfold \asform_1 \swedge \apformC$.
If no such application of \unf\ exists then we simply set $\asform_1 = \emp$.
Note that if $k$ and $\anatom$ exist then they must be unique, since 
we assumed that $\aseq'$ is an \direct \descendant of $\aseq_k$
and by Corollary \ref{cor:onlyoneunf}
 there must be an application of \imi\ along the path between any two applications of \unf.
Also, by Condition~\ref{strat:unf} in Definition~\ref{def:strat}, 
$\asheapB_k$ is a spatial 
atom and $\rootof{\anatom} = \rootof{\asheapB_k}$. By Proposition~\ref{prop:qred}, the only rules that can be applied on $\aseq_{k+1},\dots,\aseq_{n-1}$ are \wea, \dec, or \eli. Each of these rules only affects the right-hand side of its premise 
by instantiating it with $\sigma$
thus $\rootof{\anatom}\sigma = \rootof{\asformB'}$. 
First consider the case where $k$ exists. Since the rules in $\asid$ are  \prules, $\asheap_{k+1}$ must contain a points-to atom. By Condition~\ref{strat:unf} in Definition~\ref{def:admissible}, the root of this atom must be the same as that of $\asheapB_k = \asheapB_{k+1}$. Furthermore, $\asheap_{k+1}$ contains no equation, since otherwise this equation would occur in $\asheap_k$, rule $\eq$ would be applicable on $\aseq_k$ and this rule has priority over \unf. Therefore, $\aseq_{k+1}$ is \qireducible.
By Proposition~\ref{prop:qred}, this entails that only the rules \wea, \eli\ or \dec\ may be applied along the path 
$\aseq_{k+1},\dots,\aseq_n$. These rules either do not affect spatial formulas, or
uniformly replace a variable $x_j$ (for $j \in I$) by $y_j = x_j\sigma_j$ (hence eventually, every variable $x$ is replaced by $x\sigma$). 
Thus $\asform'$ 
is of the form $\asform_1\sigma * \asform_1'$. 
It is clear that the previous assertion trivially holds in the case where $k$ does not exist by letting 
$\asform_1 = \emp$. Thus Condition~\ref{companion:unfold} holds.}
\item{
{\bf Spatial Formula.}
We now prove that the spatial formula $\asform_1'$ is of the required form and fulfills Condition~\ref{companion:num}. The formula $\asform_1'$ may be written as
$\asform_2 * \asform_3 * \asform_4$, where:
\begin{itemize}
\item $\alloc{\asform_2} \subseteq (\sv \cup \alloc{\asformB'})$,
\item $\alloc{\asform_3} \subseteq \vars(\asform) \setminus (\sv \cup \alloc{\asformB'})$,
\item $\alloc{\asform_4} \cap (\vars(\asform) \cup \sv \cup \alloc{\asformB'}) = \emptyset$.
\end{itemize}
If $\asform_2$ contains two atoms with the same root then $\asform'$ is \heapunsat and $\aseq'$ is an axiom.
Thus $\len{\asform_2} \leq \card{\sv} + 1 \leq \anumB+1$ and Condition~\ref{companion:num} holds.

The only rule that can add new predicate symbols to a premise is \unf, replacing an atom
$\anatom'$ by a formula $\asheap$ such that $\anatom' \unfold \asheap$.
Since the rules in $\asid$ are  \prules, the root of every predicate atom in $\asheap$  must be a fresh variable $y$, not occurring in $\vars(\asform)$. If $y \in \dom{\sigma}$ then $y$ is eventually replaced by some variable in $\sv$, hence
the corresponding  predicate atom from $\asform_1'$ 
is in $\asform_2$.
If $y \not \in \dom{\sigma}$, then since this variable is never replaced, the corresponding atom must occur in $\asform_4$ by construction.
This entails that all the atoms in $\asform_3$ must be obtained from atoms in $\asform$ after application of the substitution $\sigma$, i.e., that
$\asform\sigma = \asform_3 * \asformC$ for some formula $\asformC$, up to AC.
The formula $\asformC$ can be written as $\asformC_1 * \asformC_2$ where 
$\alloc{\asformC_1} \subseteq \vars(\asformB') \cup \sv$ and 
$\alloc{\asformC_2} \cap (\vars(\asformB') \cup \sv) = \emptyset$. 
This entails that $\len{\asformC_1} \leq 2\cdot\anumB$, since $\card{\vars(\asformB')} \leq \anumB$ and
$\card{\sv} \leq \anumB$. 

We prove that necessarily $\asform_4 = \emp$.
Assume for the sake of contradiction that $\asform_4$ contains an atom $\anatom'$, and let $u = \rootof{\anatom'}$. By definition of $\asform_4$, $u \not \in \vars(\asform) \cup \sv \cup \alloc{\asformB'}$.
The variable $u$ must have been introduced 
by an application of  rule \unf\ on some sequent 
 $\aseq_{m}$, as \unf\ is the only rule that can introduce new variables to the left-hand side of a sequent. By Proposition~\ref{prop:unf_qred}, $\aseq_{m+1}$ is \qireducible, and we must have $m \not =k$ (if $k$ exists), since otherwise $\anatom'$ would occur in $\asform_1\sigma$.
  Since the rules in $\asid$ are  \prules, we have $u \in \varmap{\aseq_{m+1}}$.
 Since $m \not = k$, by Corollary \ref{cor:onlyoneunf}, there exists $m'$ such that
$m < m'$ 
and $\aseq_{m'}$ is \ireducible, i.e., rule \imi\ is applied on $\aseq_{m'}$. 
Let $m'$ be the minimal number satisfying this property. 
The path $\aseq_{m+1},\dots,\aseq_{m'}$ only contains 
applications of \wea, \eli\, and \dec, and $u \in \varmap{\aseq_{m'}}$ by Proposition~\ref{prop:qred}.
Note that if $k$ exists then necessarily $m' < k$, moreover,
none of these rules can introduce new variables to the right-hand side of a sequent. In particular, variable $u$, which is a fresh variable that does not occur in $\asheapB_{m}$, cannot occur on the right-hand side $\asheapB_{m'}$ of $\aseq_{m'}$.
Let $\theta$ denote the substitution used in the application of \imi\ on $\aseq_{m'}$.
As $u \in \varmap{\aseq_{m'}}$, there exists a variable $z \in \dom{\theta}$ such that $z\theta = u$.
Since the rules in $\asid$ are  \prules, this entails that $u\in \alloc{\asheapB_{m'+1}}$.
Consequently, $u \in \alloc{\asheap_{m'+1}} \cap \alloc{\asheapB_{m'+1}}$ and by (repeatedly) applying  Proposition~\ref{prop:bothallocvar}, we deduce that 
$u \in \alloc{\asformB'}$ (since  by hypothesis $u \in \alloc{\asform_4} \subseteq \alloc{\asform'}$), which contradicts the definition of $\asform_4$.}
\item{{\bf Reachability Conditions.}
Finally, we show that Condition~\ref{companion:dec} holds, i.e., that $\asformC_2$ contains exactly the atoms $\anatomB$ in $\asform_3 * \asformC_2$ such that $\rootof{\asformB'} \not \pathto{\asform_1\sigma * \asform_2 * \asform_3 * \asformC_2} \rootof{\anatomB}$.
Assume that $\asform_3$ contains an atom $\anatomB$ such that $\rootof{\asformB'} \not \pathto{\asform_1\sigma * \asform_2 * \asform_3 * \asformC_2} \rootof{\anatomB}$.
Since $\asform' = \asform_1\sigma * \asform_2 * \asform_3$, we get $\rootof{\asformB'} \not \pathto{\asform'} \rootof{\anatomB}$, $\rootof{\anatomB} \in \alloc{\asform'}$
and
$\rootof{\anatomB} \not \in \alloc{\asformB'}$ (because $\rootof{\anatomB} \not = \rootof{\asformB'}$ and $\alloc{\asformB'} = \{ \rootof{\asformB'} \}$), thus $\aseq'$ is an \antiaxiom.
Conversely, assume that $\asformC_2$ contains an atom $\anatomB$ such that $\rootof{\asformB'} \pathto{\asform_1\sigma * \asform_2 * \asform_3 * \asformC_2} \rootof{\anatomB}$.
Note that $\rootof{\anatomB} \not \in \vars(\asformB') \cup \sv$, by definition of $\asformC_2$, which entails that $\rootof{\anatomB} \not \in \codom{\sigma}$.
If $\rootof{\anatomB} \not \in \alloc{\asform'}$, then by considering the first variable $y \not \in \alloc{\asform'} = \alloc{\asform_1\sigma * \asform_2* \asform_3}$ along the path from $\rootof{\asformB'}$ 
to $\rootof{\anatomB}$, we have
$y \in \vars(\asform')$, $y \not \in \alloc{\asform'}$ and
$y \in \alloc{\asformC_2}$, hence $y \not \in \vars(\asformB')$. Thus $\aseq'$ is an \antiaxiom,  contradicting our hypothesis.
Therefore, $\rootof{\anatomB}  \in \alloc{\asform'} = \alloc{\asform_1\sigma} \cup \alloc{\asform_2} \cup \alloc{\asform_3}$. We distinguish the three cases:
\begin{itemize}
\item{Assume that $\rootof{\anatomB} \in \alloc{\asform_1\sigma}$ i.e., that $k$ exists, $\asform_1 \neq \emp$ and $\anatom \unfold \asform_1 \swedge \apformC$. Since $\rootof{\anatomB} \not \in \codom{\sigma}$, we also have $\rootof{\anatomB} \in \alloc{\asform_1}$. Since the rules in $\asid$ are  \prules, the variables in $\alloc{\asform_1}$ are either the root of $\anatom$ or fresh variables. Now $\rootof{\anatom}\sigma$ must be distinct from $\rootof{\anatomB}$, since $\rootof{\anatomB} \not \in \vars(\asformB')$ and $\rootof{\anatom}\sigma = \rootof{\asformB'}$. Furthermore, all fresh variables are  necessarily distinct from $\rootof{\anatomB}$. Indeed, they cannot occur in $\asform$, by definition, and $\rootof{\anatomB} \subseteq \alloc{\asform}$, as $\rootof{\anatomB} \subseteq \alloc{\asformC_2} \subseteq \alloc{\asformC} \subseteq \alloc{\asform\sigma}$ and $\rootof{\anatomB} \not \in \codom{\sigma}$. Thus this case cannot occur.}
\item{If $\rootof{\anatomB} \in \alloc{\asform_2}$, then $\rootof{\anatomB} \in \sv \cup \alloc{\asformB'}$ (by definition of $\asform_2$), which contradicts the fact that $\rootof{\anatomB} \not \in \vars(\asformB') \cup \sv$.}
\item{ 
If $\rootof{\anatomB} \in \alloc{\asform_3}$, then $\rootof{\anatomB}$ occurs twice in $\alloc{\asform\sigma}$ because $\asform\sigma = \asform_3 * \asformC_1 * \asformC_2$; hence, it also occurs twice in $\alloc{\asform}$ because  $\rootof{\anatomB} \not \in \codom{\sigma}$. This entails that $\asform$ is \heapunsat, a contradiction. }
\end{itemize}
Thus we get a contradiction, which entails that Condition~\ref{companion:dec} in Definition~\ref{def:admissible} is satisfied.
}
\end{itemize}
\end{proof}

\newcommand{\normalized}{normalized\xspace}
\newcommand{\normalization}{normalization\xspace}

We now introduce a restricted form of variable renaming\footnote{Note that we cannot assume that sequents are taken modulo variable renaming, since identity would then be difficult to test (it would be equivalent to the well-known graph isomorphism problem).}. The definition is useful to straightforwardly detect cycles in a proof tree, which is essential for efficiency.
For each sort $\asort$, we consider infinite (fixed) sequences of pairwise distinct variables  of sort $\asort$: $x_i^\asort$, with $i \in \mathbb{N}$ (not occurring in the root sequent). A sequent $\aseq'$ occurring in a proof tree with root $\aseq$ is {\em \normalized} if, for every variable $y$ of sort $\asort$ occurring in $\vars(\aseq') \setminus \vars(\aseq)$, there exists $i \leq \size{\aseq'}$ such that
$y = x_i^\asort$. It is clear that every sequent $\aseq'$ can be reduced in polynomial time to a \normalized sequent. 
Indeed, it suffices to rename all the variables $y_1,\dots,y_n$ of sorts $\asort_1,\dots,\asort_n$ occurring in $\aseq'$ but not in $\aseq$ by the variables 
$x_1^{\asort_1},\dots,x_{n}^{\asort_n}$, respectively. As $n$ cannot be greater than the size of $\aseq'$, the obtained sequent is \normalized.
Such an operation is called a {\em \normalization}.

\begin{lemma}
\label{lem:polydes}
Assume that 
$\maxar{\asid} \leq \anumB$, for some fixed $\anumB \in :\mathbb{N}$.
For every sequent $\aseq$, the number of sequents occurring in a proot tree with root $\aseq$  is polynomial w.r.t.\ $\size{\aseq} + \size{\asid}$, up to \normalization.
\end{lemma}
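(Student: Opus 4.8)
The plan is to slice every branch of the proof tree into maximal \direct\ segments separated by applications of \imi, and to bound separately the number of such segments and the number of sequents inside each; since the tree may be infinite, the claim really concerns the number of \normalized\ sequents that appear. First I would reduce to the case of an \eqfree\ root: as \eq\ has the highest priority and no inference rule introduces an equality, the proof tree with root $\aseq$ begins with a non-branching path of at most $\size{\aseq}$ applications of \eq\ leading to an \eqfree\ sequent $\hat{\aseq}$ with $\size{\hat{\aseq}} \leq \size{\aseq}$; moreover every \eqfree\ sequent of the tree --- in particular every \ireducible\ sequent, since \eq\ is then no longer applicable --- occurs in the subtree rooted at $\hat{\aseq}$, which is itself an admissible proof tree with \eqfree\ root. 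It therefore suffices to bound, up to \normalization, the sequents occurring in this subtree, the at most $\size{\aseq}$ sequents on the initial path being \direct\ {\descendant}s of $\aseq$ and thus already covered by Lemma~\ref{lem:direct}.

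Next I would bound the number of \ireducible\ sequents. By Lemma~\ref{lem:associate}, every \ireducible\ sequent occurring in the subtree of $\hat{\aseq}$ is a $\anumB$-\companion\ of the left-hand side of $\hat{\aseq}$, and Lemma~\ref{lem:poly} bounds the size of any such companion by some polynomial $P(\size{\hat{\aseq}} + \size{\asid}) \leq P(\size{\aseq} + \size{\asid})$. Hence, after \normalization, the variables of such a sequent that do not occur in $\aseq$ all belong to the set $E$ formed of $\vars(\aseq)$ together with the variables $x_i^\asort$ for $\asort$ among the (at most $\size{\aseq} + \size{\asid}$) sorts occurring in $\aseq$ or $\asid$ and $i \leq P(\size{\aseq} + \size{\asid})$, and $\card{E}$ is polynomial in $\size{\aseq} + \size{\asid}$. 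Invoking Lemma~\ref{lem:poly} with this $E$ then shows that the number of \normalized\ \ireducible\ sequents occurring in the tree is polynomial in $\size{\aseq} + \size{\asid}$.

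Then I would combine this with Lemma~\ref{lem:direct}. Since $\asid$ is \deterministic, at most one inductive rule can be used in an application of \imi\ on a given sequent --- this is exactly the point exploited in the invertibility argument for \imi\ in Lemma~\ref{lem:rules} --- so, together with Condition~\ref{strat:nocare} of Definition~\ref{def:admissible}, every \ireducible\ sequent $\aseq''$ has a single \imi-premise $\aseq_0$, with $\size{\aseq_0} \leq \size{\aseq''} + \size{\asid}$ (hence of polynomial size) and with $\aseq_0$ determined by $\aseq''$ up to a renaming of the at most $\maxk{\asid}$ fresh variables introduced by the unfolded rule; consequently, up to \normalization, there are only polynomially many such premises $\aseq_0$. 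Now, given any sequent $\aseq'$ of the subtree of $\hat{\aseq}$, tracing its branch back to the last application of \imi\ (if any) shows that $\aseq'$ is an \direct\ \descendant\ either of $\hat{\aseq}$ --- itself an \direct\ \descendant\ of $\aseq$ --- or of some premise $\aseq_0$ as above. By Lemma~\ref{lem:direct}, $\aseq$ and each of these polynomially many $\aseq_0$, all of polynomial size, have polynomially many \direct\ {\descendant}s; and since the inference system and the (anti-)axiom conditions behave uniformly under the injective renamings of the non-root variables effected by \normalization, the \direct\ {\descendant}s of $\aseq_0$, considered up to \normalization, are determined by $\aseq_0$ considered up to \normalization. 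Summing over the polynomially many sequents $\aseq$ and $\aseq_0$ up to \normalization then yields a polynomial bound on the number of sequents occurring in the tree, up to \normalization.

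The hard part will be the bookkeeping around variables: one must keep every variable set polynomially bounded so that the companion-counting of Lemma~\ref{lem:poly} applies --- which is precisely why \normalization\ is introduced --- and then verify that this counting meshes with the bound of Lemma~\ref{lem:direct} on \direct\ {\descendant}s, i.e.\ that identifying sequents up to \normalization\ neither merges nor creates branches in an uncontrolled way. A secondary care point is the uniqueness of the \imi-premise, which relies on determinism exactly as in Lemma~\ref{lem:rules}.
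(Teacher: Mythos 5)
Your proposal is correct and follows essentially the same route as the paper's proof: decompose every branch into \imi-free segments, use Lemma~\ref{lem:associate} and Lemma~\ref{lem:poly} to bound the number of \ireducible sequents (as $\anumB$-{\companion}s of the left-hand side of the \eqfree reduct of the root, up to \normalization), and conclude with Lemma~\ref{lem:direct}. Your additional care about the uniqueness and size of the \imi-premises and the compatibility of \normalization with the counting only makes explicit what the paper leaves implicit.
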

\begin{proof}
By definition, every \descendant of $\aseq$ is either an \direct \descendant of $\aseq$
or an \direct \descendant of some \ireducible sequent $\aseq'$.
By Condition~\ref{strat:priority} in Definition~\ref{def:strat}, \eq\ applies with the highest priority on $\aseq$, eventually yielding a (unique) sequent \eqfree $\aseq''$ of the form $\asform \swedge \apform \vdashr \asformB \swedge \apformB$, with $\size{\aseq''} \leq \size{\aseq}$ and $\vars(\asform) \subseteq \vars(\aseq)$.
By Lemma~\ref{lem:associate}, every \direct sequent $\aseq'$ is a \companion of 
of $\asform\swedge \apform$.
By Lemma~\ref{lem:poly}, the number of such sequents $\aseq'$, after \normalization (w.r.t.\ $\aseq''$), is polynomial w.r.t.\ $\size{\asform} + \size{\asid}$ (hence w.r.t.\ $\size{\aseq} + \size{\asid}$).
Indeed, it is clear that the \normalized form of $\aseq'$ is also a \companion of 
$\asform\swedge \apform$ and by definition it only contains variables in $\vars(\aseq'') \cup \{ x_i^\asort \mid i \leq \size{\aseq'} \}$. Furthermore, $\size{\aseq'}$ is polynomial w.r.t.\ $\size{\aseq} + \size{\asid}$, thus the same property holds for $\card{\vars(\aseq')}$.
Then the proof follows immediately from Lemma~\ref{lem:direct}.
\end{proof}

Putting everything together, we derive the main result of the paper:

\begin{theorem}
The validity problem for sequents $\asheap \vdashr \asheapB$, where $\asid$ is a \hdeterministic set of \prulesname with 
$\maxar{\asid} \leq \anumB$ 
and 
$\maxk{\asid} \leq \anumB$, 
for some fixed number $\anumB$,
 is in \PTIME.
 \end{theorem}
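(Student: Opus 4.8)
The plan is to assemble the soundness, completeness, and complexity results already established into a terminating decision procedure. First I would observe that, by Theorems~\ref{theo:sound} and~\ref{theo:comp}, a sequent $\asheap \vdashr \asheapB$ is valid if and only if it is the root of a \fexpanded (possibly infinite) \admissible proof tree. So it suffices to show that one can decide in polynomial time whether such a tree exists. The key is that, although proof trees may be infinite, they need only be searched up to cycles: by Lemma~\ref{lem:polydes}, the number of distinct sequents occurring in a proof tree with root $\aseq$ is polynomial in $\size{\aseq} + \size{\asid}$, up to \normalization. Hence the natural algorithm is to perform a bottom-up fixpoint computation over the (polynomially many) \normalized sequents reachable from the root, marking a sequent as \emph{valid} as soon as it is an axiom or some admissible rule application yields only valid premises, and marking it as \emph{invalid} if it is an \antiaxiom or all possible admissible rule applications yield at least one invalid premise. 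This fixpoint stabilizes after polynomially many passes.

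The second ingredient is that each individual step of this fixpoint computation is itself polynomial. Here I would invoke Proposition~\ref{prop:sepbranching} (each sequent has at most $2^\anumB$ \asuccessor{}s, each computable in polynomial time) and Proposition~\ref{prop:narrow} (the \asuccessor{}s and all their \descendant{}s are \narrow). The subtlety of Condition~\ref{strat:sep} in Definition~\ref{def:strat} is that applicability of \sep\ on a non-\narrow sequent requires a recursive validity test on the left premise; but since that premise is \narrow, and since by Proposition~\ref{prop:narrow} narrowness is preserved downwards, this recursive call is itself a validity test on a \narrow sequent whose descendants are all \narrow. I would argue that the recursion on ``non-\narrow test invokes \narrow tests'' has depth one: within the \narrow regime, rule \sep\ is always applied with don't-know indeterminism and never triggers a further recursive call, so the validity of \narrow sequents is decided by a single bottom-up fixpoint over the polynomially many \narrow \companion{}s and their \direct descendants, again by Lemmata~\ref{lem:direct}, \ref{lem:poly} and~\ref{lem:associate}. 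Thus the overall computation is: (i) decide validity of all relevant \narrow sequents by one polynomial fixpoint; (ii) using those answers as an oracle, decide validity of the (polynomially many) non-\narrow sequents by a second polynomial fixpoint.

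For the complexity bookkeeping I would record that, with $\maxr{\asid} = \max(\maxar{\asid},\maxk{\asid}) \leq \anumB$ fixed, Lemma~\ref{lem:direct} gives a polynomial bound (in $\size{\aseq}+\size{\asid}$, with the exponent depending only on the constant $\anumB$) on the number of \direct descendants of any sequent, Lemma~\ref{lem:poly} gives a polynomial bound on the number of \companion{}s of a fixed left-hand side and on their size, and Lemma~\ref{lem:associate} guarantees that every \ireducible sequent in a proof tree is such a \companion. Combining these as in Lemma~\ref{lem:polydes}, the state space of the fixpoint is polynomial and each sequent has polynomially many candidate rule applications (the only branching on the number of applications being \sep\ on \narrow sequents, bounded by $\size{\aseq}^{\anumB+1}$ as in the proof of Lemma~\ref{lem:direct}), each producing boundedly many premises, each computable in polynomial time. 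Since soundness (Theorem~\ref{theo:sound}) and completeness (Theorem~\ref{theo:comp}) ensure that the least fixpoint labels the root ``valid'' exactly when the sequent is valid, the whole procedure runs in polynomial time.

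The main obstacle I expect is the interaction between the two levels of the fixpoint, i.e.\ making fully rigorous the claim that resolving Condition~\ref{strat:sep} does not cause an exponential (or even superpolynomial) blow-up through nested recursive validity tests. The crucial point to nail down is that once a sequent is \narrow all its descendants are \narrow (Proposition~\ref{prop:narrow}), so no \narrow validity test ever spawns another recursive \sep-applicability test: within the \narrow world, \sep\ is handled by ordinary don't-know branching inside the same fixpoint, not by a recursive call. I would therefore be careful to phrase the algorithm as two successive, non-interleaved fixpoint computations — first over \narrow sequents, then over the rest using the first as a lookup table — and to check that the \normalization step keeps the number of sequents polynomial and that equality of \normalized sequents is decidable in polynomial time (it is syntactic, modulo AC, which is unambiguous here since allocation forces each root to be distinct, as used throughout). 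Everything else is routine aggregation of the already-proved lemmata.
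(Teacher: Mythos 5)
Your overall strategy is the one the paper uses: reduce validity to the existence of a \fexpanded proof tree via Theorems~\ref{theo:sound} and~\ref{theo:comp}, bound the reachable state space by Lemma~\ref{lem:polydes} (itself resting on Lemmata~\ref{lem:direct}, \ref{lem:poly} and~\ref{lem:associate}), bound the per-step cost via Proposition~\ref{prop:sepbranching}, and stratify the computation into a first fixpoint over \narrow sequents and a second one over the rest, justified by Proposition~\ref{prop:narrow}. The two-level organisation you worry about in your last paragraph is exactly how the paper resolves the recursive validity test in Condition~\ref{strat:sep}, so that part is fine.

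There is, however, one step that would fail as you state it: your concluding claim that ``the least fixpoint labels the root \emph{valid} exactly when the sequent is valid.'' Because \fexpanded proof trees may be infinite (rational, cyclic), a valid sequent need not be derivable by the least fixpoint of the rule ``mark valid once some admissible application has all premises already marked valid.'' The paper's own example $p(x,y) \vdashrV{\emptyset} r(x)$ has only a cyclic proof, in which the sequent's provability depends on (a renaming of) itself; your bottom-up \emph{valid} marking would never reach it, so taking that least fixpoint as the answer loses completeness. The paper is explicit on this point: it computes the \emph{complement}, namely the least set of sequents such that every possible rule application has at least one premise in the set (with rule-less non-axioms as the base case), and declares a sequent provable iff it is \emph{not} in that set. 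Your dual ``invalid'' marking is precisely this computation, so the fix is local: the decision must be read off from the complement of the least fixpoint of \emph{invalid} (equivalently, a greatest-fixpoint computation of provability), not from the least fixpoint of \emph{valid}. With that correction, and noting that sequents left unmarked by both of your markings must be classified as valid, the rest of your aggregation of the lemmata goes through as in the paper.
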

 \begin{proof}
 By Theorems \ref{theo:sound} and \ref{theo:comp}, $\asheap \vdashr \asheapB$ is valid iff 
 it admits a \fexpanded proof tree.
 We first apply inductively the inference rules in all possible ways, starting with the initial sequent $\asheap \vdashr \asheapB$ and proceeding with all the {\descendant}s of $\asheap \vdashr \asheapB$, until we obtain either an axiom or a sequent that has already been considered, up to \normalization. 
  By Lemma~\ref{lem:polydes}, the number of such {\descendant}s (up to \normalization) is polynomial w.r.t.\ $\size{\aseq} + \size{\asid}$, thus it suffices to show that each inference step can be performed in polynomial time. This can be  established by an inspection of the rules. We first observe that the set of all possible rule applications can be computed in polynomial time w.r.t.\ the size of the conclusion (in particular, both the number of premises and their size are always polynomial w.r.t.\ the size of the conclusion): this is straightforward to check for all rules except for \sep, for which the result stems from Proposition~\ref{prop:sepbranching}.
Also, it is straightforward to verify that the application conditions of the rules can be tested  in polynomial time. 
 
Afterwards, it suffices to compute the set of provable sequents 
among those that are reachable from the initial one and check whether $\asheap \vdashr \asheapB$  occurs in this set.
Since infinite proof trees are allowed, we actually compute the complement of this set, i.e., the set of sequents that are {\em not} provable. This can  be done  in polynomial time, using a straightforward fixpoint algorithm, by computing the least  set of sequents $\aseq$ such that, for all rule applications with conclusion $\aseq$, at least one of the premises is not provable. 
More precisely, a sequent is not provable if it has no successor (no rule can be applied on it), or more generally, if for all possible rule applications, at least one of the premises is not provable.

 Note that, in order to apply rule \sep\ on a sequent that is not \narrow, it is necessary to check that the left premise is valid, and this must be done by recursively applying the proof procedure. This is feasible because the left premise is always \narrow (since its right-hand side is a predicate atom). 
Consequently, the computation of valid sequents must be performed in two steps, first for \narrow sequents, then for those that are not \narrow, as the application of the rules on non \narrow sequents depends on the validity of \narrow sequents. 
By Proposition~\ref{prop:narrow} all the {\descendant}s of \narrow sequents are also \narrow, hence the converse does not hold, and the procedure is well-founded.
 \end{proof}
 
 \section{Conclusion}
 
 \label{sect:conc}

 A proof procedure has been devised for checking entailments between Separation Logic formulas
with spatial predicates denoting recursive data structures, for a new class of inductive rules, called  \hdeterministic. The soundness and completeness of the calculus have been established. 
 The conditions imposed on the data structures are stronger than those considered in \cite{IosifRogalewiczSimacek13,DBLP:conf/atva/IosifRV14} but on the other hand 
the procedure terminates in polynomial time, provided the arity of the predicates is bounded. 
Both the considered fragment and the proof procedure are very different from previously known polynomial-time decision procedures for testing entailments \cite{cook-haase-ouaknine-parkinson-worell11,DBLP:conf/concur/ChenSW17}. In particular, the considered formulas may contain several non-built-in and non-compositional predicates (in the sense of \cite{DBLP:journals/fmsd/EneaLSV17}), and the structures may contain unallocated nodes, denoting data (with no predicate other than equality).
In addition, several lower bounds have been established for the entailment problem, showing that any relaxing of the proposed conditions makes the problem untractable.

Future work  includes the implementation of the procedure and the evaluation of its practical performance.
The combination with theory reasoning (to reason on data stored inside the structures)  will also be considered. 
In particular, it would be interesting to define a fragment that encompasses both 
the \hdeterministic\ {\prule}s considered in the present paper and the  SHLIDe fragment defined in \cite{DBLP:conf/fossacs/LeL23}, as the conditions defining both classes of rules are very different and non comparable, 
as evidenced by the examples and explanations given in the introduction of the paper.


\begin{thebibliography}{10}

\bibitem{DBLP:conf/fmco/BerdineCO05}
Josh Berdine, Cristiano Calcagno, and Peter~W. O'Hearn.
\newblock Smallfoot: Modular automatic assertion checking with separation
  logic.
\newblock In Frank~S. de~Boer, Marcello~M. Bonsangue, Susanne Graf, and
  Willem~P. de~Roever, editors, {\em Formal Methods for Components and Objects,
  4th International Symposium, {FMCO} 2005, Amsterdam, The Netherlands,
  November 1-4, 2005, Revised Lectures}, volume 4111 of {\em LNCS}, pages
  115--137. Springer, 2005.

\bibitem{DBLP:conf/cav/BerdineCI11}
Josh Berdine, Byron Cook, and Samin Ishtiaq.
\newblock Slayer: Memory safety for systems-level code.
\newblock In Ganesh~Gopalakrishnan andShaz Qadeer, editor, {\em Computer Aided
  Verification - 23rd International Conference, {CAV} 2011, Snowbird, UT, USA,
  July 14-20, 2011. Proceedings}, volume 6806 of {\em LNCS}, pages 178--183.
  Springer, 2011.

\bibitem{DBLP:conf/nfm/CalcagnoD11}
Cristiano Calcagno and Dino Distefano.
\newblock Infer: An automatic program verifier for memory safety of {C}
  programs.
\newblock In Mihaela~Gheorghiu Bobaru, Klaus Havelund, Gerard~J. Holzmann, and
  Rajeev Joshi, editors, {\em {NASA} Formal Methods - Third International
  Symposium, {NFM} 2011, Pasadena, CA, USA, April 18-20, 2011. Proceedings},
  volume 6617 of {\em LNCS}, pages 459--465. Springer, 2011.

\bibitem{DBLP:conf/lics/CalcagnoOY07}
Cristiano Calcagno, Peter~W. O'Hearn, and Hongseok Yang.
\newblock Local action and abstract separation logic.
\newblock In {\em 22nd {IEEE} Symposium on Logic in Computer Science {(LICS}
  2007), 10-12 July 2007, Wroclaw, Poland, Proceedings}, pages 366--378. {IEEE}
  Computer Society, 2007.
\newblock \href {https://doi.org/10.1109/LICS.2007.30}
  {\path{doi:10.1109/LICS.2007.30}}.

\bibitem{DBLP:conf/concur/ChenSW17}
Taolue Chen, Fu~Song, and Zhilin Wu.
\newblock Tractability of separation logic with inductive definitions: Beyond
  lists.
\newblock In Roland Meyer and Uwe Nestmann, editors, {\em 28th International
  Conference on Concurrency Theory, {CONCUR} 2017, September 5-8, 2017, Berlin,
  Germany}, volume~85 of {\em LIPIcs}, pages 37:1--37:17. Schloss Dagstuhl -
  Leibniz-Zentrum f{\"{u}}r Informatik, 2017.

\bibitem{tata2007}
H.~Comon, M.~Dauchet, R.~Gilleron, C.~L\"oding, F.~Jacquemard, D.~Lugiez,
  S.~Tison, and M.~Tommasi.
\newblock Tree automata techniques and applications.
\newblock Available on: \url{http://www.grappa.univ-lille3.fr/tata}, 2007.
\newblock release October, 12th 2007.

\bibitem{cook-haase-ouaknine-parkinson-worell11}
B.~Cook, C.~Haase, J.~Ouaknine, M.~J. Parkinson, and J.~Worrell.
\newblock Tractable reasoning in a fragment of separation logic.
\newblock In {\em Proc. of CONCUR'11}, volume 6901 of {\em LNCS}. Springer,
  2011.

\bibitem{DBLP:conf/lpar/EchenimIP20}
Mnacho Echenim, Radu Iosif, and Nicolas Peltier.
\newblock Entailment checking in separation logic with inductive definitions is
  2-exptime hard.
\newblock In Elvira Albert and Laura Kov{\'{a}}cs, editors, {\em {LPAR} 2020:
  23rd International Conference on Logic for Programming, Artificial
  Intelligence and Reasoning, Alicante, Spain, May 22-27, 2020}, volume~73 of
  {\em EPiC Series in Computing}, pages 191--211. EasyChair, 2020.

\bibitem{EIP21a}
Mnacho Echenim, Radu Iosif, and Nicolas Peltier.
\newblock Decidable entailments in separation logic with inductive definitions:
  Beyond establishment.
\newblock In {\em {CSL} 2021: 29th International Conference on Computer Science
  Logic}, EPiC Series in Computing. EasyChair, 2021.

\bibitem{EP22}
Mnacho Echenim and Nicolas Peltier.
\newblock A proof procedure for separation logic with inductive definitions and
  data.
\newblock {\em J. Autom. Reason.}, 67(3):30, 2023.

\bibitem{DBLP:journals/fmsd/EneaLSV17}
Constantin Enea, Ondrej Leng{\'{a}}l, Mihaela Sighireanu, and Tom{\'{a}}s
  Vojnar.
\newblock Compositional entailment checking for a fragment of separation logic.
\newblock {\em Formal Methods Syst. Des.}, 51(3):575--607, 2017.

\bibitem{spen}
Constantin Enea, Mihaela Sighireanu, and Zhilin Wu.
\newblock On automated lemma generation for separation logic with inductive
  definitions.
\newblock In {\em ATVA 2015, Proceedings}, pages 80--96, 2015.

\bibitem{IosifRogalewiczSimacek13}
Radu Iosif, Adam Rogalewicz, and Jiri Simacek.
\newblock The tree width of separation logic with recursive definitions.
\newblock In {\em Proc. of CADE-24}, volume 7898 of {\em LNCS}, 2013.

\bibitem{DBLP:conf/atva/IosifRV14}
Radu Iosif, Adam Rogalewicz, and Tom{\'{a}}s Vojnar.
\newblock Deciding entailments in inductive separation logic with tree
  automata.
\newblock In Franck Cassez and Jean{-}Fran{\c{c}}ois Raskin, editors, {\em
  {ATVA} 2014, Proceedings}, volume 8837 of {\em LNCS}, pages 201--218.
  Springer, 2014.

\bibitem{IshtiaqOHearn01}
Samin~S Ishtiaq and Peter~W O'Hearn.
\newblock Bi as an assertion language for mutable data structures.
\newblock In {\em ACM SIGPLAN Notices}, volume~36, pages 14--26, 2001.

\bibitem{DBLP:journals/jfp/JungKJBBD18}
Ralf Jung, Robbert Krebbers, Jacques{-}Henri Jourdan, Ales Bizjak, Lars
  Birkedal, and Derek Dreyer.
\newblock Iris from the ground up: {A} modular foundation for higher-order
  concurrent separation logic.
\newblock {\em J. Funct. Program.}, 28:e20, 2018.

\bibitem{DBLP:conf/lpar/KatelaanZ20}
Jens Katelaan and Florian Zuleger.
\newblock Beyond symbolic heaps: Deciding separation logic with inductive
  definitions.
\newblock In Elvira Albert and Laura Kov{\'{a}}cs, editors, {\em {LPAR} 2020:
  23rd International Conference on Logic for Programming, Artificial
  Intelligence and Reasoning, Alicante, Spain, May 22-27, 2020}, volume~73 of
  {\em EPiC Series in Computing}, pages 390--408. EasyChair, 2020.

\bibitem{DBLP:conf/fossacs/LeL23}
Quang~Loc Le and Xuan{-}Bach~D. Le.
\newblock An efficient cyclic entailment procedure in a fragment of separation
  logic.
\newblock In Orna Kupferman and Pawel Sobocinski, editors, {\em Foundations of
  Software Science and Computation Structures - 26th International Conference,
  FoSSaCS 2023, Held as Part of the European Joint Conferences on Theory and
  Practice of Software, {ETAPS} 2023, Paris, France, April 22-27, 2023,
  Proceedings}, volume 13992 of {\em LNCS}, pages 477--497. Springer, 2023.

\bibitem{DBLP:journals/tocl/MathejaPZ23}
Christoph Matheja, Jens Pagel, and Florian Zuleger.
\newblock A decision procedure for guarded separation logic complete entailment
  checking for separation logic with inductive definitions.
\newblock {\em {ACM} Trans. Comput. Log.}, 24(1):1:1--1:76, 2023.
\newblock \href {https://doi.org/10.1145/3534927} {\path{doi:10.1145/3534927}}.

\bibitem{Reynolds02}
John~C. Reynolds.
\newblock Separation logic: {A} logic for shared mutable data structures.
\newblock In {\em 17th {IEEE} Symposium on Logic in Computer Science {(LICS}
  2002), 22-25 July 2002, Copenhagen, Denmark, Proceedings}, pages 55--74.
  {IEEE} Computer Society, 2002.

\bibitem{COQ_REF}
{The Coq Development Team}.
\newblock {\em The Coq Proof Assistant Reference Manual V7.1}.
\newblock INRIA-Rocquencourt, CNRS-ENS Lyon (France), October 2001.
\newblock \texttt{http://coq.inria.fr/doc/main.html}.

\end{thebibliography}

\end{document}